\newlist{abbrv}{itemize}{1}
\setlist[abbrv,1]{label=,labelwidth=1.2in,align=parleft,itemsep=0.1\baselineskip,leftmargin=!}
\newcommand{\tr}{\mathrm{Tr}}
\newcommand{\abs}[1]{\left| #1 \right|}
\newcommand{\scp}[2]{\big\langle #1 , #2 \big\rangle}
\newcommand{\SCP}[2]{\big\langle #1 , #2 \big\rangle}
\newcommand{\bra}[1]{\langle #1 |}
\newcommand{\ket}[1]{| #1 \rangle}
\newcommand{\norm}[1]{\left\| #1 \right\|}
\renewcommand{\Re}{\mathrm{Re}}
\renewcommand{\Im}{\mathrm{Im}}
\newcommand{\id}{\mathbbm{1}}
\newcommand{\vAcq}{\boldsymbol{\hat{A}}_{\kappa}}
\newcommand{\vAp}{\boldsymbol{\hat{A}}^{+}_{\kappa}}
\newcommand{\vAm}{\boldsymbol{\hat{A}}^{-}_{\kappa}}
\newcommand{\vAc}{\boldsymbol{A}_{\kappa}}
\newcommand{\vA}{\boldsymbol{A}}
\newcommand{\vF}{\boldsymbol{F}}
\newcommand{\vJ}{\boldsymbol{J}}
\newcommand{\vJt}{\widetilde{\boldsymbol{J}}}
\newcommand{\vAh}{\hat{\boldsymbol{A}}}
\newcommand{\vAk}{\boldsymbol{A}_{\kappa}}
\newcommand{\vAhk}{\hat{\boldsymbol{A}}_{\kappa}}
\newcommand{\vep}{\boldsymbol{\epsilon}}
\newtheorem{theorem}{Theorem}[section]
\newtheorem{lemma}[theorem]{Lemma}
\newtheorem{remark}[theorem]{Remark}
\newtheorem{definition}[theorem]{Definition}
\newtheorem{assumption}[theorem]{Assumption}
\newtheorem{proposition}[theorem]{Proposition}
\DeclareMathOperator*{\esssup}{ess\,sup}
\DeclareMathOperator\supp{supp}
\begin{document}
\title{Derivation of the Maxwell--Schr\"odinger and Vlasov--Maxwell Equations from Non-Relativistic QED}

\author{Nikolai Leopold\footnote{University of Basel, Department of Mathematics and Computer Science, Spiegelgasse 1, 4051 Basel, Switzerland, E-mail address: {\tt nikolai.leopold@unibas.ch}} }

\date{\today}

\maketitle

\begin{abstract}
\noindent
We study the spinless Pauli--Fierz Hamiltonian in a semiclassical mean-field limit of many fermions. For appropriate initial conditions, we prove, in the trace norm topology of reduced density matrices, that the many-body quantum state converges to a tensor product of a semiclassically structured Slater determinant and a coherent photon state. These evolve according to a fermionic variant of the Maxwell--Schr\"odinger equations. By combining this result with [arXiv:2308.16074] through a suitable regularization of the initial data, we further show that, in the limit of large particle number, the dynamics of the Pauli--Fierz Hamiltonian can be approximately  described by the non-relativistic Vlasov--Maxwell system for extended charges.
\end{abstract}

\section{Introduction}

Light can appear quite differently depending on the physical context. It is modeled as rays in geometrical optics, as waves in classical electromagnetism, and as composed of energy quanta--first postulated  by Einstein in 1905 \cite{E1905} and later named photons--in the framework of quantum electrodynamics (QED). 
Geometrical optics is an approximation of classical electrodynamics, which in turn serves as an approximation of QED. Gaining a profound understanding of how these different descriptions of light are related is an intriguing challenge in mathematical physics.
The aim of this work is to contribute to this undertaking by establishing a link between quantum and classical electrodynamics through asymptotic analysis. Specifically, 
it will be shown that, in certain situations, the  time evolution of a mathematical model of non-relativistic QED can be approximated by two effective dynamics in which the electromagnetic field is described as a classical field satisfying Maxwell's equations.

The article is organized as follows. In the remainder of this section, we introduce the models under consideration and fix the notation. Section~\ref{section:main results} presents the main results and provides a comparison with the literature. The key components needed to prove the main results are introduced in Section~\ref{section:proof of the main results} through several lemmas. Their proofs are provided in Section~\ref{section:properties of the MS equations}--\ref{section:estimates concerning the Vlasov--Maxwell equations}.

\subsection{The Pauli--Fierz Hamiltonian}

As a mathematical model of non-relativistic quantum electrodynamics, we consider the spinless Pauli--Fierz Hamiltonian, which is obtained by the canonical quantization of the Abraham model. For further details on this procedure and the Pauli--Fierz Hamiltonian, we refer  to \cite{S2004}.
We are interested in  a system of $N$ identical fermions interacting with a quantized electromagnetic field. The state of the electrons is described by elements of the Hilbert space $L_{\rm{as}}(\mathbb{R}^{3N})$, which is the subspace of complex-valued, square integrable $N$-particle wave functions that are antisymmetric with respect to the exchange of any pair of particle coordinates. The excitations of the electromagnetic field, known as photons, are represented by the Fock space
$\mathcal{F} \coloneqq \bigoplus_{n \geq 0} \left[ L^2(\mathbb{R}^3) \otimes \mathbb{C}^2 \right]^{\otimes_{\rm{s}}^n}$, 
where the subscript ``s'' indicates symmetry under exchange of variables. The Hilbert space of a single photon, denoted by $\mathfrak{h} \coloneqq L^2(\mathbb{R}^3) \otimes \mathbb{C}^2$ in the following, consists of wave functions $f(k,\lambda)$, where $k \in \mathbb{R}^3$ is the wave vector, and $\lambda = 1,2$ denotes the photon's helicity. The inner product on this space is defined as 
\begin{align}
\scp{f}{g}_{\mathfrak{h}}
&\coloneqq \sum_{\lambda =1,2} \int_{\mathbb{R}^3} \overline{f(k,\lambda)} g(k,\lambda) \, dk .
\end{align}
Finally, the whole system is represented by the Hilbert space
\begin{align}
\mathcal{H}^{(N)}
&\coloneqq L_{as}^2(\mathbb{R}^{3N}) \otimes \mathcal{F} .
\end{align}
It's time evolution is governed by the Schr\"odinger equation
$i \hbar \partial_t \Psi_{N,t} = H_N^{\rm PF} \Psi_{N,t}$
with 
\begin{align}
H_N^{\rm PF} &= \sum_{j=1}^N \frac{1}{2m} \left( - i \hbar \nabla_j - c^{-1} e \kappa * \widehat{\vA}(x_j) \right)^2
+ \frac{e^2}{8\pi} \sum_{\substack{j,k=1\\ j\neq k}}^N  \kappa * \kappa * | \cdot |^{-1} (x_j - x_k)
+ H_f 
\end{align} 
being the spinless Pauli--Fierz Hamiltonian.
Here, $\hbar$ is the reduced Planck constant, $m >0$ is the mass of the electrons, $e$ is the coupling constant to the electromagnetic field, $c$ is the speed of light and $\omega(k) = c \abs{k}$ is the dispersion of the photons. 
$H_f  \coloneqq \sum_{\lambda =1,2} \int_{\mathbb{R}^3} \hbar\,\omega(k) a_{k,\lambda}^* a_{k,\lambda}\,dk $
denotes the energy of the electromagnetic field and 
\begin{align}
\hat{\vA}(x)
&=  \sum_{\lambda=1,2}  \int_{\mathbb{R}^3} 
  c\,\sqrt{\hbar/2 \omega(k)}\,\vep_{\lambda}(k) 
(2 \pi)^{-3/2} \left(  e^{ikx} a_{k,\lambda}  + e^{- ik x} a^*_{k,\lambda} \right)\,dk
\end{align}
is the quantized transverse vector potential. The two real polarization vectors $\big\{ \vep_{\lambda}(k) \big\}_{\lambda = 1,2}$ implement the Coulomb gauge $\nabla\cdot \widehat{\vA}=0$ by satisfying
$\vep_{\lambda}(k) \cdot \vep_{\lambda'}(k) = \delta_{\lambda, \lambda'}$ and 
$k \cdot \vep_{\lambda}(k) = 0$.
The operators $a_{k,\lambda}$ and $a^*_{k,\lambda}$ with $k \in \mathbb{R}^3$ and $\lambda \in \{1, 2\}$ denote the usual annihilation and creation operators. They are operator valued distributions satisfying the canonical commutation relations 
\begin{align}
\label{eq:CCR}
\left[ a_{k,\lambda} , a^*_{k',\lambda'} \right] = \delta_{\lambda, \lambda'} \delta(k - k') ,
\quad 
\left[ a_{k,\lambda} , a_{k',\lambda'} \right] 
= \left[ a^*_{k,\lambda} , a^*_{k',\lambda'} \right] 
= 0,
\end{align}
where $[A,B]:=AB-BA$ is the standard commutator of the operators $A$ and $B$. The real function $\kappa$ describes the density of the electrons. In case its Fourier transform\footnote{Throughout this article we use the notation
 $\mathcal{F}[f](k) = (2 \pi)^{-3/2} \int_{\mathbb{R}^3}  e^{-ikx} f(x)\,dx$ to denote the Fourier transform of a function $f$.} satisfies
\begin{align}
\label{eq:assumption cutoff for self-adjointess of PF Hamiltonian}
\big( \abs{\cdot}^{-1} + \abs{\cdot}^{1/2} \big) \mathcal{F}[\kappa] \in L^2(\mathbb{R}^3)
\end{align}
the Pauli--Fierz Hamiltonian is self-adjoint on $\mathcal{D} \big( H_{N}^{\rm{PF}} \big) = \mathcal{D} \big( - \sum_{j=1}^N  \Delta_{x_j} + H_f \big)$ \cite{H2002,M2017}.

\subsection{Scaling and Assumptions on the Electron Density}

In the following, we consider the Pauli--Fierz Hamiltonian in the semiclassical mean-field regime by setting $e = N^{-1/2}$ and $\hbar = N^{- 1/3}$. 
If the number of photons is of order $N^{4/3}$ and most of the momenta of the electrons are of order $N^{1/3}$, this choice implies  that all terms in the Pauli--Fierz Hamiltonian are of the same order for large $N$, since the annihilation and creation operators scale with the square root of the number of photons.
For notational convenience, we additionally choose $c=1$ and $m=2$ and define $\varepsilon \coloneqq N^{- 1/3}$.
This leads to the microscopic model
\begin{align}
\label{eq:Schroedinger equation Pauli-Fierz}
i \varepsilon \partial_t \Psi_{N,t} = H_N^{\rm PF} \Psi_{N,t} ,
\end{align}
where
\begin{subequations}
\begin{align}
\label{eq:definition Pauli-Fierz Hamiltonian}
H_N^{\rm PF} &= \sum_{j=1}^N  \left( - i \varepsilon \nabla_j - \varepsilon^2 \kappa * \widehat{\vA}(x_j) \right)^2
+ \frac{1}{2 N} \sum_{\substack{i,j=1\\ j\neq i}}^N  K(x_j - x_k)
+ \varepsilon H_f  , 
\\
K(x) &= \frac{1}{4 \pi}  \kappa * \kappa * | \cdot |^{-1}(x) ,
\\ 
\hat{\vA}(x)
&=  (2 \pi)^{-3/2} \sum_{\lambda=1,2}  \int_{\mathbb{R}^3} 
\,
\frac{1}{\sqrt{2 \abs{k}}}
\vep_{\lambda}(k) 
\left(  e^{ikx} a_{k,\lambda}  + e^{- ik x} a^*_{k,\lambda} \right)\, dk ,
\\
H_f &= \sum_{\lambda =1,2} \int_{\mathbb{R}^3} \abs{k} a_{k,\lambda}^* a_{k,\lambda}\, dk .
\end{align} 
\end{subequations}

 Furthermore, we assume the electrons have a finite size, and their density to satisfy the same conditions as in \cite{LS2023}.

\begin{assumption}
\label{assumption:cutoff function}
Let $\kappa: \mathbb{R}^3 \rightarrow \mathbb{R}$ be a real and even charge distribution such that 
\begin{align}
\label{eq:assumption on the cutoff function}
\kappa \in L^1(\mathbb{R}^3) 
\quad \text{and} \quad
\left( - \Delta \right)^{1/2} \kappa \in L^2(\mathbb{R}^3).
\end{align}
\end{assumption}
Note that \eqref{eq:assumption on the cutoff function} implies \eqref{eq:assumption cutoff for self-adjointess of PF Hamiltonian}, ensuring the self-adjointness of the Pauli--Fierz Hamiltonian. 
Assumption \ref{assumption:cutoff function} is slightly more restricted than the conditions necessary to prove the self-adjointness of the Pauli--Fierz Hamiltonian, as we require $\kappa$ to be even and summable, thereby excluding non-even electron densities and those that are not summable. 
Since $\kappa$ is real and even, we have
\begin{align}
\mathcal{F}[\kappa](k) &= \mathcal{F}[\kappa](- k)
\quad \text{and} \quad
\mathcal{F}[\kappa](k) \in \mathbb{R} 
\quad \forall k \in \mathbb{R}^3 .
\end{align}
It is worth noting that the evenness of the density is assumed primarily to simplify the computations and that Assumption \ref{assumption:cutoff function} covers Gaussian electron distributions of the form $\frac{1}{(2 \pi)^{3/2}} e^{- \frac{x^2}{2 \sigma^2}}$ with $\sigma >0$. Moreover, the estimates of our main result, Theorem~\ref{theorem:main theorem}, only require \eqref{eq:assumption cutoff for self-adjointess of PF Hamiltonian}. However, we assume \eqref{eq:assumption on the cutoff function} to rely on the well-posedness result for the Maxwell--Schr\"odinger equations from \cite{LS2023} (see Proposition~\ref{proposition:well-posedness of Maxwell-Schroedinger} below) and to prove Theorem~\ref{theorem:derivation of the Vlasov-Maxwell equations}, which shows the convergence of the Pauli-Fierz dynamics to the Vlasov--Maxwell system with extended charges.

\subsection{The (fermionic) Maxwell--Schr\"odinger equations}

We aim to study the evolution of initial states that are approximately of the product form
$\bigwedge_{j=1}^N \varphi_j \otimes W( \varepsilon^{-2} \alpha) \Omega$.
Here, $\alpha \in \mathfrak{h}$,
$\bigwedge_{j=1}^N \varphi_j$ denotes a Slater determinant of $N$ orthonormal one-particle wave functions $\varphi_1, \ldots, \varphi_N \in L^2(\mathbb{R}^3)$, $\Omega$ denotes the vacuum in $\mathcal{F}$ and $W$ is the unitary Weyl operator
\begin{align}
\label{eq:definition Weyl operator}
W(f) &\coloneqq \exp \Big( \sum_{\lambda = 1,2} \int_{\mathbb{R}^3} \Big( f(k,\lambda) a^*_{k,\lambda} - \overline{f(k,\lambda)} a_{k,\lambda} \Big) \, dk \Big)
\quad \text{with} \; f \in \mathfrak{h} ,
\end{align}
satisfying the shift property
\begin{align}
\label{eq:Weyl operators shifting property}
W^*(f) a_{k,\lambda} W(f) &= a_{k,\lambda} + f(k,\lambda) .
\end{align}

In such a state, the photons are in a coherent state, and the only correlations among the electrons arise from the antisymmetry of the electron wave function. During the time evolution, correlations emerge due to the interaction between the electrons and the electromagnetic field. Assuming the Slater determinant satisfies a certain semiclassical structure, our main result (Theorem~\ref{theorem:main theorem}) shows, however, that the emergence of correlations is weak enough that the time-evolved state, at the level of reduced density matrices and in the limit $N \rightarrow \infty$, remains approximately of product form, i.e.
\begin{align}
\label{eq:time evolved product state}
\Psi_{N,t} \approx \bigwedge_{j=1}^N \varphi_{j,t} \otimes W( \varepsilon^{-2} \alpha_t) \Omega.
\end{align} 
Slater determinants are completely characterized by their one-electron reduced density matrix $p_t = \sum_{j=1}^N \ket{\varphi_{j,t}} \bra{\varphi_{j,t}}$ and the time evolution of the quantities on the right hand side of \eqref{eq:time evolved product state} is determined by the regularized (fermionic) Maxwell--Schr\"odinger system in the Coulomb gauge
\begin{align}
\label{eq:Maxwell-Schroedinger equations}
\begin{cases}
i \varepsilon \partial_t p_t &= \left[ \left( - i \varepsilon \nabla - \kappa * \vA_{\alpha_t} \right)^2  + K * \rho_{p_t} - X_{p_t} ,  p_t \right]  ,
\\
i \partial_t \alpha_t(k,\lambda) &= \abs{k} \alpha_t(k,\lambda) - \sqrt{\frac{4 \pi^3}{\abs{k}}} \mathcal{F}[\kappa](k) \vep_{\lambda}(k) \cdot \mathcal{F}[\vJ_{p_t, \alpha_t}](k)
\end{cases}
\end{align}
with semiclassical parameter $\varepsilon = N^{-1/3}$
and initial condition $(p_t, \alpha_t) \big|_{t=0} = (\sum_{j=1}^N \ket{\varphi_{j}} \bra{\varphi_{j}}, \alpha)$. Here,
\begin{subequations}
\begin{align}
\label{eq:Coulomb potential with cutoff}
K &= \frac{1}{4 \pi}\, \kappa * \kappa * | \cdot |^{-1} ,
\\
\label{eq:definition vector potential}
\vA_{\alpha}(x) &=  \frac{1}{(2 \pi )^{ 3/2}} \sum_{\lambda = 1,2} \int_{\mathbb{R}^3} \frac{1}{\sqrt{2 \abs{k}}} \vep_{\lambda}(k) \left( e^{i k x} \alpha(k,\lambda) + e^{- i k x} \overline{\alpha(k, \lambda)} \right) \, dk, 
\\
\label{eq:definition density}
\rho_{p}(x) &= N^{-1} p(x;x) ,
\\
\label{eq:definition charge current}
\vJ_{p, \alpha}(x) &= - N^{-1} \left\{ i \varepsilon \nabla, p \right\}(x;x) - 2 \rho_{p}(x) \kappa * \vA_{\alpha}(x)  ,
\\
\label{eq:definition exchange term}
X_{p}(x;y) &= N^{-1} K(x-y) p(x;y) 
\end{align}
\end{subequations}
for sufficiently regular $p \in \mathfrak{S}^1(L^2(\mathbb{R}^3))$ and $\alpha \in \mathfrak{h}$.
In \eqref{eq:definition charge current}, we use $\{A,B\}(x;y)$ to denote the kernel of the anticommutator of the operators $A$ and $B$, which is defined by $\{A,B\}:=AB+BA$. Note that \eqref{eq:Maxwell-Schroedinger equations} can be heuristically derived as effective evolution equations for many-body states of product type by observing that a quantum field acting on a coherent state behaves similarly to a classical field, and that the interaction between electrons within a Slater determinant can be approximated by its mean-field potential plus an exchange term.
In the mathematical literature \cite{B2009, NM2005, NM2007, ADM2017, AMS2022}  the Maxwell--Schr\"odinger system is usually considered for a single point particle, i.e. $N=1$, $\varepsilon=1$, $\kappa = \delta(x)$, and $X_{p}=0$. The first equation  of \eqref{eq:Maxwell-Schroedinger equations} is typically written as the Schr\"odinger equation for the one-particle wave function, while the second equation is written in terms of the vector potential \eqref{eq:definition vector potential}, leading to Maxwell's equations in the Coulomb gauge:
\begin{equation}
\nabla \cdot \vA_{\alpha_t} = 0,\quad\quad
\left(\partial_t^2 - \Delta\right)\vA_{\alpha_t} =  - \left( 1 - \nabla \text{div} \Delta^{-1} \right) \kappa * \vJ_{p_t, \alpha_t} .
\end{equation}

\subsection{The non-relativistic Vlasov--Maxwell equations }

The Maxwell--Schr\"odinger equations \eqref{eq:Maxwell-Schroedinger equations} still depend on the number of particles. In particular, the dependence of the semiclassical parameter $\varepsilon$ on $N$ allows the state of the electrons to be approximately described by a function on the phase space $\mathbb{R}^3 \times \mathbb{R}^3$. Here, the connection between trace-class operators $\omega \in \mathfrak{S}^1(L^2(\mathbb{R}^3))$ and phase space functions is established by the Wigner transform
\begin{align}
\mathcal{W}[\omega](x,v) &= \left( \frac{\varepsilon}{2 \pi} \right)^3
\int_{\mathbb{R}^3} \omega \left( x + \frac{\varepsilon y}{2} ; x - \frac{\varepsilon y}{2} \right) e^{- i v y} \, dy
\end{align}
and its inverse, the Weyl quantization
\begin{align}
\mathcal{W}^{-1}[W](x;y) = N \int_{\mathbb{R}^3} W \left( \frac{x+y}{2} , v \right) e^{i v \cdot \frac{x-y}{\varepsilon}} \, dv .
\end{align}
In \cite{LS2023} it was shown for sufficiently regular initial data that the Wigner transform of a solution of the Maxwell--Schr\"odinger equations \eqref{eq:Maxwell-Schroedinger equations} approximately satisfies in the limit $N \rightarrow \infty$ the following transport equation
\begin{align}
\label{eq: Vlasov-Maxwell different style of writing}
\begin{cases}
\partial_t f_t  &= - 2 \left( v - \kappa * \vA_{\alpha_t} \right) \cdot \nabla_x f_t + \vF_{f_t, \alpha_t} \cdot \nabla_v f_t , 
\\
i \partial_t \alpha_t(k,\lambda) &= \abs{k} \alpha_t(k,\lambda) - \sqrt{\frac{4 \pi^3}{\abs{k}}} \mathcal{F}[\kappa](k) \vep_{\lambda}(k) \cdot \mathcal{F}[\vJt_{f_t,\alpha_t}](k)
\end{cases}
\end{align}
with $A_{\alpha_t}$ being defined as in \eqref{eq:definition vector potential}, 
\begin{subequations}
\begin{align}
\label{eq:density of Vlasov equation}
\widetilde{\rho}_{f}(x) &= \int f(x,v) \, dv ,
\\
\label{eq:F-field of Vlasov equation}
\vF_{f, \alpha}(x,v) &= \nabla_x \Big[ K  * \widetilde{\rho}_{f}(x) + \left( \kappa * \vA_{\alpha}(x) \right)^2  - 2 v \cdot \kappa * \vA_{\alpha}(x) \Big] ,
\\
\label{eq:current density of Vlasov equation}
\vJt_{f,\alpha}(x) &= 2 \int \left( v - \kappa * \vA_{\alpha} \right) f(x,v) \, dv ,
\end{align}
\end{subequations}
and initial datum $( \mathcal{W}[p_0] , \alpha_0) \in L^1 \left( \mathbb{R}^3 \times \mathbb{R}^3 \right) \times \left( L^2(\mathbb{R}^3) \otimes \mathbb{C}^2 \right)$. As shown in \cite[Appendix A]{LS2023},  system \eqref{eq: Vlasov-Maxwell different style of writing} with $\kappa(x) = -\delta(x)$  is formally equivalent to the non-relativistic Vlasov--Maxwell system in the Coulomb gauge with $c=e=1$ and $m=2$ which was, for example, considered in \cite{A1986,D1986,DL1989,W1984,W1987}.
For this reason we refer to \eqref{eq: Vlasov-Maxwell different style of writing} as the non-relativistic Vlasov--Maxwell system for extended charges.

\subsection{Notation}

We use the letter $C$ to denote a general positive constant that may vary from one line to another. The dependence on the electron density is tracked via $B_{\kappa} \coloneqq \big\| \big(  |\cdot |^{-1} + 1 \big) \mathcal{F}[\kappa] \big\|_{L^2(\mathbb{R}^3)}$ and $C_{\kappa} \coloneqq \big\| \big(  |\cdot |^{-1} + |\cdot |^{1/2} \big) \mathcal{F}[\kappa] \big\|_{L^2(\mathbb{R}^3)}$. Moreover, let $\left< x \right> = \big( 1 + \abs{x}^2 \big)^{1/2}$ with $x \in \mathbb{R}^d$ and $d \in \mathbb{N}$.
For $a \in \mathbb{R}$ and $b \geq 0$, we define the weighted $L^2$-spaces
$\dot{\mathfrak{h}}_{a} = L^2(\mathbb{R}^3, \abs{k}^{2a} dk) \otimes \mathbb{C}^2$ with norm
$\norm{\alpha}_{\dot{\mathfrak{h}}_{a}} = \Big( \sum_{\lambda=1,2} \int_{\mathbb{R}^3} 
\abs{k}^{2a} \abs{\alpha(k,\lambda)}^2dk \Big)^{1/2} $
and $\mathfrak{h}_{b} = L^2(\mathbb{R}^3, (1 + \abs{k}^2 )^b dk) \otimes \mathbb{C}^2$ with norm
$\norm{\alpha}_{\mathfrak{h}_b} = \Big( \sum_{\lambda=1,2} \int_{\mathbb{R}^3} 
\big( 1 + \abs{k}^2 \big)^b \abs{\alpha(k,\lambda)}^2 \,dk \Big)^{1/2}$. 
Note that $\mathfrak{h}_0 = \mathfrak{h}$.  For $\sigma \in \mathbb{N}_0$, $k \geq 0$ and $1 \leq p \leq \infty$ let $W_{k}^{\sigma,p} \left( \mathbb{R}^{d} \right)$ denote the Sobolev space equipped with the norm 
\begin{align}
\label{eq:definition LP-spaces}
\norm{f}_{W_{k}^{\sigma, p} ( \mathbb{R}^{d} )} 
&=
\begin{cases}
\left( \sum_{\abs{\alpha} \leq \sigma} \norm{\left< \cdot \right>^{k} D^{\alpha} f}_{L^p\left( \mathbb{R}^{d} \right)}^p \right)^{\frac{1}{p}}
\quad 1 \leq &p < \infty ,
\\
\max_{\abs{\alpha} \leq \sigma}  \norm{\left< \cdot \right>^{k} D^{\alpha} f}_{L^{\infty}\left( \mathbb{R}^{d} \right)}
\quad &p = \infty .
\end{cases}
\end{align}
In case $p=2$ we use the shorthand notation $H_k^{\sigma} \left( \mathbb{R}^{d} \right) \coloneqq W_{k}^{\sigma, 2} \left( \mathbb{R}^{d} \right)$ and if the spaces $W_{k}^{\sigma, p} ( \mathbb{R}^{d} )$ and $H_k^{\sigma} \left( \mathbb{R}^{d} \right)$ appear as subscripts we will abbreviate them by $W_{k}^{\sigma, p}$ and $H_k^{\sigma}$. Depending on the context $\| \cdot \|$ and $\scp{\cdot}{\cdot}$ will refer to the norms and scalar products of $\mathcal{H}^{(N)}$, $\mathfrak{h}$ and $L^2(\mathbb{R}^d)$.
Vectors in $\mathbb{R}^{6}$ are written as $z = (x,v)$ so that $D_z^{\alpha} = \left( \partial/ \partial x_1 \right)^{\beta_1} \left( \partial/ \partial v_1 \right)^{\gamma_1} \cdots \left( \partial/ \partial x_3 \right)^{\beta_3} \left( \partial/ \partial v_3 \right)^{\gamma_3}$ with
$\beta = (\beta_i)_{i \in \llbracket 1, 3 \rrbracket} \in \mathbb{N}_0^3$ and $\gamma = (\gamma_i)_{i \in \llbracket 1, 3 \rrbracket} \in \mathbb{N}_0^3$ 
such that $\abs{\alpha} = \sum_{i=1}^3 \beta_i + \gamma_i$. 
For two Banach spaces $A$ and $B$ we denote by $A \cap B$ the Banach space of vectors $f \in A \cap B$ with norm $\norm{f}_{A \cap B} = \norm{f}_A + \norm{f}_B$.
For a reflexive Banach space $(X, \norm{\cdot}_{X})$ and $T>0$ we denote by $L^{\infty}(0,T; X)$ the space of (equivalence classes of) strongly Lebesgue-measurable functions $\alpha: [0,T] \rightarrow X$ with the property that
$\norm{\alpha}_{L_T^{\infty} X} \coloneqq  \esssup_{t \in [0,T]} \norm{\alpha_t}_{X}$ is finite.
Let $\textfrak{S}^{\infty} \left( L^2(\mathbb{R}^3) \right)$ denote the set of all bounded operators on $L^2 (\mathbb{R}^3)$, and let $\textfrak{S}^1 \left( L^2(\mathbb{R}^3) \right)$ denote the set of trace-class operators on $L^2(\mathbb{R}^3)$. More generally, for $p\in[1,\infty)$, we denote by $\textfrak{S}^p(L^2(\mathbb{R}^3))$ the $p$-Schatten space equipped with the norm $\norm{A}_{\textfrak{S}^p}=(\tr \abs{A}^p)^\frac{1}{p}$, where $A$ is an operator, $A^*$ is its adjoint and $\abs{A}=\sqrt{A^*A}$.
For $a \geq 0$ let
\begin{align*}
\textfrak{S}^{a,1}(L^2(\mathbb{R}^3))
&= \left\{  \omega: \omega \in \textfrak{S}^{\infty}(L^2(\mathbb{R}^3)), \omega^* = \omega 
\; \text{and} \; \left( 1 - \Delta \right)^{a/2} \omega\left( 1 - \Delta \right)^{a/2} \in \textfrak{S}^{1}(L^2(\mathbb{R}^3))
\right\}
\end{align*}
with 
$
\norm{\omega}_{\textfrak{S}^{a,1}}
= \norm{\left( 1 - \Delta \right)^{a/2} \omega \left( 1 - \Delta \right)^{a/2}}_{\textfrak{S}^1}.
$
The positive cone of this space is defined as
$\textfrak{S}_{+}^{a,1}(L^2(\mathbb{R}^3)) 
= \left\{ \omega \in \textfrak{S}^{a,1}(L^2(\mathbb{R}^3)) : \omega \geq 0
\right\}$.

\section{Main results}
\label{section:main results}

Throughout this article, we will rely on the following well-posedness result from \cite{LS2023}.

\begin{proposition}
\label{proposition:well-posedness of Maxwell-Schroedinger}
Let $\kappa$ satisfy Assumption~\ref{assumption:cutoff function}.
For all $(p_0, \alpha_0) \in \textfrak{S}_{+}^{2,1} (L^2(\mathbb{R}^3)) \times \mathfrak{h}_{1/2} \cap \dot{\mathfrak{h}}_{-1/2} $ the Cauchy problem for the Maxwell--Schr\"odinger system \eqref{eq:Maxwell-Schroedinger equations} associated with $(p_0, \alpha_0)$   has a unique  $
 C \big( \mathbb{R}_{+} ; \textfrak{S}_{+}^{2,1} ( L^2(\mathbb{R}^3) ) \big) \cap C^1 \big( \mathbb{R}_{+} ; \textfrak{S}^{1} ( L^2(\mathbb{R}^3) )  \big)   \times  C \big( \mathbb{R}_{+} ; \mathfrak{h}_{1/2} \cap \dot{\mathfrak{h}}_{-1/2} \big) \cap C^1 \big( \mathbb{R}_{+} ; \dot{\mathfrak{h}}_{-1/2} \big) 
$ solution. 
The energy 
\begin{align}
\label{eq:Maxwell-Schroedinger with exchange term energy definition}
\mathcal{E}^{\rm{MS}}[ p_t , \alpha_t] &\coloneqq 
\tr \left( p_t \left( - i \varepsilon \nabla - \kappa * \vA_{\alpha_t} \right)^2 \right) + \frac{1}{2} \tr \left( \left( \, K * \rho_{p_t } - X_{p_t } \right) p_t  \right)
 + N \norm{\alpha_t}_{\dot{\mathfrak{h}}_{1/2}}^2
\end{align}
and mass of the system are conserved, i.e. 
\begin{align}
\label{eq:conservation of mass and energy of the MS equations}
\tr  \left( p_t \right) = \tr  \left( p_0 \right)
\quad \text{and} \quad 
\mathcal{E}^{\rm{MS}}[ p_t, \alpha_t] = \mathcal{E}^{\rm{MS}}[ p_0, \alpha_0]
\quad \text{for all} \; t \in  \mathbb{R}_{+} .
\end{align}
Under the additional assumption $\alpha_0 \in \mathfrak{h}_1$, we have $
\alpha \in   C \big( \mathbb{R}_{+} ; \mathfrak{h}_{1} \cap \dot{\mathfrak{h}}_{-1/2} \big) \cap C^1 \big( \mathbb{R}_{+} ; \mathfrak{h} \cap \dot{\mathfrak{h}}_{-1/2} \big) 
$  and
\begin{align}
\label{eq:bound for the h-1 norm of alpha}
\norm{\alpha_t}_{\mathfrak{h}_1}
&\leq \norm{\alpha_0}_{\mathfrak{h}_1}
+ C  C_{\kappa}^3 N^{-1}
\left< N^{-1}  \norm{p_0}_{\mathfrak{S}^1} \right> 
\left(  \mathcal{E}^{\rm{MS}}[p_0,\alpha_0] + C  C_{\kappa}^2 N^{-1} \norm{p_0}_{\mathfrak{S}^1}^2
\right) t .
\end{align}
\end{proposition}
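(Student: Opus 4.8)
This well-posedness statement is established in \cite{LS2023}; I outline the structure of the argument one would carry out. Write the mean-field Hamiltonian $h_{p,\alpha} := \big( - i\varepsilon\nabla - \kappa*\vA_{\alpha}\big)^2 + K*\rho_p - X_p$ and recast \eqref{eq:Maxwell-Schroedinger equations} in mild form: if $U_{p,\alpha}(t,s)$ denotes the unitary two-parameter propagator generated by $t \mapsto h_{p_t,\alpha_t}$, then
\[
p_t = U_{p,\alpha}(t,0)\,p_0\,U_{p,\alpha}(t,0)^* , \qquad
\alpha_t = e^{-i\abs{k}t}\alpha_0 + i\int_0^t e^{-i\abs{k}(t-s)}\Big( \sqrt{\tfrac{4\pi^3}{\abs{k}}}\,\mathcal{F}[\kappa]\,\vep_{\lambda}\cdot\mathcal{F}[\vJ_{p_s,\alpha_s}]\Big)(k)\,ds .
\]
Under Assumption~\ref{assumption:cutoff function} the convolution with $\kappa$ renders $\kappa*\vA_{\alpha}$ and its first spatial derivative bounded, with $\norm{\kappa*\vA_{\alpha}}_{W^{1,\infty}} \lesssim C_{\kappa}\,\norm{\alpha}_{\mathfrak{h}_{1/2}}$ (using $\abs{\cdot}^{-1}\mathcal{F}[\kappa]\in L^2$ for the factor $\abs{k}^{-1/2}\mathcal{F}[\kappa]$), and $K\in L^{\infty}$ with $\nabla K$ bounded, so that $\kappa*\vA_{\alpha_t}$, $K*\rho_{p_t}$ and $X_{p_t}$ are (relatively) bounded perturbations of $-\varepsilon^2\Delta$ with norms controlled by $\norm{\alpha_t}_{\mathfrak{h}_{1/2}\cap\dot{\mathfrak{h}}_{-1/2}}$ and $\norm{p_t}_{\mathfrak{S}^{2,1}}$; this furnishes $U_{p,\alpha}(t,s)$ on $L^2(\mathbb{R}^3)$ and, via a commutator/Gr\"onwall argument, its boundedness on $\mathcal{D}(1-\Delta)$. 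Local existence and uniqueness then follow from a contraction mapping on $C([0,T];\mathfrak{S}_{+}^{2,1})\times C([0,T];\mathfrak{h}_{1/2}\cap\dot{\mathfrak{h}}_{-1/2})$ for small $T$, sending $(p,\alpha)$ to the pair given by the two displayed formulas built from $(p,\alpha)$: all nonlinear dependencies are at most bilinear in the unknowns, hence Lipschitz on bounded balls, and the $C^1$-regularity in the stated target spaces is obtained by differentiating the mild formulas.

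For the conservation laws, mass conservation $\tr p_t = \tr p_0$ is immediate from the conjugation structure of $p_t$, while energy conservation follows by computing $\tfrac{d}{dt}\mathcal{E}^{\mathrm{MS}}[p_t,\alpha_t]$ along the flow: using $i\varepsilon\dot p_t = [h_{p_t,\alpha_t},p_t]$ the Coulomb and exchange contributions cancel in pairs, and the time derivative of the photon energy $N\norm{\alpha_t}_{\dot{\mathfrak{h}}_{1/2}}^2$ is cancelled by the $\dot\alpha_t$-dependence of the kinetic term $\tr\big( p_t(-i\varepsilon\nabla-\kappa*\vA_{\alpha_t})^2\big)$, the field equation $i\dot\alpha_t = \abs{k}\alpha_t - (\text{source})$ supplying the matching cross term. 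A diamagnetic-type inequality then shows that $\mathcal{E}^{\mathrm{MS}}[p_t,\alpha_t]$ and $\tr p_t$ control $N\norm{\alpha_t}_{\dot{\mathfrak{h}}_{1/2}}^2$ and $\tr(p_t(-\varepsilon^2\Delta))$ up to terms of size $C_{\kappa}^2 N^{-1}\norm{p_0}_{\mathfrak{S}^1}^2$, so the energy-level norms of $(p_t,\alpha_t)$ stay bounded on any finite interval. The remaining norms $\norm{\alpha_t}_{\dot{\mathfrak{h}}_{-1/2}}$ and $\norm{p_t}_{\mathfrak{S}^{2,1}}$ are propagated by Gr\"onwall from the mild formulas: multiplication by $\abs{k}^{-1/2}\sqrt{4\pi^3/\abs{k}}\,\mathcal{F}[\kappa]$ is bounded on $L^2$ precisely because $\abs{\cdot}^{-1}\mathcal{F}[\kappa]\in L^2$, i.e.\ $C_{\kappa}<\infty$, and $\norm{\vJ_{p_s,\alpha_s}}_{L^1}$ is already controlled by the energy-level bounds, while the $\mathfrak{S}^{2,1}$-bound obeys a linear Gr\"onwall inequality arising from commuting $(1-\Delta)$ through $U_{p,\alpha}(t,0)$ and estimating $[(1-\Delta)^{1/2},\kappa*\vA_{\alpha_t}]$ and the analogous commutators with $K*\rho_{p_t}$ and $X_{p_t}$. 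This closes the bootstrap and yields the global solution with the asserted regularity; uniqueness is inherited from the local statement.

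To obtain \eqref{eq:bound for the h-1 norm of alpha} under the additional hypothesis $\alpha_0\in\mathfrak{h}_1$, apply $(1+\abs{k}^2)^{1/2}$ to the mild formula for $\alpha_t$. The free part contributes exactly $\norm{\alpha_0}_{\mathfrak{h}_1}$, and the Duhamel integral is bounded, by Cauchy--Schwarz in $k$, by $t\,\sup_{0\le s\le t}\big\| (1+\abs{k}^2)^{1/2}\abs{k}^{-1/2}\sqrt{4\pi^3}\,\mathcal{F}[\kappa]\big\|_{L^2}\,(2\pi)^{-3/2}\norm{\vJ_{p_s,\alpha_s}}_{L^1}$. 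One checks $(1+\abs{k}^2)^{1/2}\abs{k}^{-1/2}\mathcal{F}[\kappa]\in L^2$ with norm $\lesssim C_{\kappa}$ (near the origin writing $\abs{k}^{-1/2}\mathcal{F}[\kappa] = \abs{k}^{1/2}\cdot\abs{\cdot}^{-1}\mathcal{F}[\kappa]$ and using $\abs{\cdot}^{-1}\mathcal{F}[\kappa]\in L^2$, near infinity using $\abs{\cdot}^{1/2}\mathcal{F}[\kappa] = \abs{\cdot}^{-1/2}\cdot\abs{\cdot}\mathcal{F}[\kappa]\in L^2$), while the conservation laws give a uniform bound on $\norm{\vJ_{p_s,\alpha_s}}_{L^1}$ by estimating its kinetic part through $N^{-1}(\tr p_s)^{1/2}\big(\tr(p_s(-\varepsilon^2\Delta))\big)^{1/2}$ and its magnetic part through $\norm{\rho_{p_s}}_{L^1}\norm{\kappa*\vA_{\alpha_s}}_{L^{\infty}}$. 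Tracking the dependence on $C_{\kappa}$, $\mathcal{E}^{\mathrm{MS}}[p_0,\alpha_0]$ and $\norm{p_0}_{\mathfrak{S}^1}$ through these estimates produces exactly the linear-in-$t$ bound \eqref{eq:bound for the h-1 norm of alpha}; the extra regularity $\alpha\in C(\mathbb{R}_{+};\mathfrak{h}_1\cap\dot{\mathfrak{h}}_{-1/2})\cap C^1(\mathbb{R}_{+};\mathfrak{h}\cap\dot{\mathfrak{h}}_{-1/2})$ follows from the same persistence-of-regularity argument applied to the mild formula.

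The genuinely delicate point — the expected main obstacle — is the quasilinear coupling: since $\kappa*\vA_{\alpha_t}$ sits inside the Schr\"odinger operator generating $U_{p,\alpha}$, propagating the $\mathfrak{S}^{2,1}$-regularity of $p_t$ forces one to commute two powers of $(1-\Delta)$ through a time-dependent magnetic propagator, which needs spatial regularity of $\kappa*\vA_{\alpha_t}$ beyond what energy conservation alone supplies — exactly the reason the $\mathfrak{h}_1$-bound \eqref{eq:bound for the h-1 norm of alpha} must be recorded — and it is only the Coulomb gauge together with the smoothing effect of $\kappa$ in Assumption~\ref{assumption:cutoff function} that keeps $\kappa*\vA_{\alpha_t}$ and its derivative in $L^{\infty}$ along the flow. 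Verifying the precise cancellations in $\tfrac{d}{dt}\mathcal{E}^{\mathrm{MS}}$ in the presence of the exchange term $X_{p_t}$ and the $\alpha_t$-dependence of the kinetic term is the other point requiring care.
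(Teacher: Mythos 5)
Your proposal is correct and follows essentially the same route as the paper: the existence/uniqueness/conservation part is delegated to \cite{LS2023} (which the paper also does, citing Proposition II.1 there rather than re-proving it), and your derivation of \eqref{eq:bound for the h-1 norm of alpha} — Duhamel for $\alpha_t$, weighting by $(1+\abs{k}^2)^{1/2}$, bounding the source via $\big\|(\abs{\cdot}^{1/2}+\abs{\cdot}^{-1/2})\mathcal{F}[\kappa]\big\|_{L^2}\lesssim C_{\kappa}$ together with the $L^1$-type control of $\vJ_{p_s,\alpha_s}$ through the conserved mass and energy (splitting into kinetic and magnetic parts exactly as in the paper's estimate of $\norm{\mathcal{F}[\vJ_{p_t,\alpha_t}]}_{L^\infty}$) — matches Section~\ref{subsection:estimate for the h1-norm of alpha}. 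The only cosmetic difference is that the paper works with $\norm{\mathcal{F}[\vJ]}_{L^\infty}$ via the trace formula \eqref{eq:Fourier transform of the vector current} rather than $\norm{\vJ}_{L^1}$, which is equivalent up to constants.
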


\begin{proof}
The first part of the proposition, up to \eqref{eq:conservation of mass and energy of the MS equations}, is  \cite[Proposition II.1]{LS2023}. The proof of the second part is provided in Section~\ref{subsection:estimate for the h1-norm of alpha}.
\end{proof}
We define the number of photon operator as
\begin{align}
\label{eq:definition number operator}
\mathcal{N} &\coloneqq \sum_{\lambda =1,2} \int_{\mathbb{R}^3}  a_{k,\lambda}^* a_{k,\lambda}\, dk
\end{align}
and the one-electron reduced density matrix of $\Psi_N$ on $L^2(\mathbb{R}^3)$ as
\begin{align}
\gamma_{\Psi_N}^{(1,0)} &\coloneqq \tr_{2,\ldots, N} \otimes \tr_{\mathcal{F}} \left( \ket{\Psi_N} \bra{\Psi_N} \right) .
\end{align}
Moreover, for $(p,\alpha) \in \mathfrak{S}^{1,1}(L^2(\mathbb{R}^3)) \times \mathfrak{h}_{1/2}$, $\mathcal{E}^{\rm{MS}}[p,\alpha]$ as given in \eqref{eq:Maxwell-Schroedinger with exchange term energy definition}, and $\Psi_N \in \mathcal{D} \left( (H_N^{\rm PF})^{1/2} \right) \cap \mathcal{D} \left( \mathcal{N}^{1/2} \right)$ , we define  
\begin{subequations}
\begin{align}
a[\Psi_N,p] &\coloneqq  \norm{ \gamma^{(1,0)}_{\Psi_N} - N^{-1} p }_{\mathfrak{S}^1} ,
\\
b[\Psi_N,\alpha] &\coloneqq  \scp{\Psi_N}{W \big( \varepsilon^{-2} \alpha \big) \left( \varepsilon^4 \mathcal{N} + \varepsilon^2 \mathcal{N}^{1/2} \right) \, W^* \big( \varepsilon^{-2} \alpha \big) \Psi_N}_{\mathcal{H}^{(N)}},
\\
c[\Psi_N, p, \alpha] &\coloneqq
N^{-1} \Big|  \scp{\Psi_N}{H_N^{\rm{PF}} \Psi_N}_{\mathcal{H}^{(N)}}
- \mathcal{E}^{\rm{MS}}[p,\alpha] \Big| ,
\\
d[\Psi_N, p, \alpha] &\coloneqq
a[\Psi_N, p] + b[\Psi_N,\alpha]  + c[\Psi_N, p, \alpha] .
\end{align}
\end{subequations}

Our main result is the following
\begin{theorem}
\label{theorem:main theorem}
Let $\kappa$ satisfy Assumption~\ref{assumption:cutoff function}, $\widetilde{C} >0$, $N \in \mathbb{N}$, and $\varepsilon = N^{-1/3}$. Let $\alpha \in \mathfrak{h}_1 \cap \dot{\mathfrak{h}}_{-1/2}$ and $p \in \textfrak{S}_{+}^{2,1} (L^2(\mathbb{R}^3))$ be a rank-$N$ projection such that $p$ and $q = 1 - p$ satisfy
\begin{align}
\label{eq:main result initial semiclassical structure}
\sup_{k \in \mathbb{R}^3}
\left\{ (1 + \abs{k})^{-1} \norm{q e^{ikx} p}_{\mathfrak{S}^1} 
\right\} 
+ \norm{q i \varepsilon \nabla p}_{\mathfrak{S}^1} \leq \widetilde{C} N^{2/3}
\end{align}
and
\begin{align}
\label{eq:main result condition on the energy and velocity}
\abs{N^{-1} \mathcal{E}^{\rm{MS}}[p,\alpha]} +  \norm{\alpha}_{\mathfrak{h}_1} +  \norm{i \varepsilon \nabla p }_{\mathfrak{S}^{\infty}} \leq \widetilde{C} .
\end{align}
Let $\Psi_{N} \in \mathcal{H}^{(N)} \cap \mathcal{D} \left( (H_N^{\rm PF})^{1/2} \right) \cap \mathcal{D} \left( \mathcal{N}^{1/2} \right)$ such that $\norm{\Psi_N} = 1$. Let $(p_t, \alpha_t)$ be the unique solution of \eqref{eq:Maxwell-Schroedinger equations} with initial data $(p,\alpha)$ and let $\Psi_{N,t} = e^{- i \varepsilon^{-1} H_N^{\rm PF} t} \Psi_N$. Then, there exists a constant $C>0$ (independent of $N$ and $\kappa$) such that for any $t \geq 0$
\begin{subequations}
\begin{align}
\norm{ \sqrt{1 - \varepsilon^2 \Delta} \left( \gamma^{(1,0)}_{\Psi_{N,t}} - N^{-1} p_t \right) \sqrt{1 - \varepsilon^2 \Delta} }_{\mathfrak{S}^1} 
&\leq C(t) \sup_{j=1,2} 
\big( d[\Psi_N, p, \alpha] + \varepsilon \big)^{j/2} 
,
\\
\varepsilon^4  \big\langle\Psi_{N,t} ,W \big( \varepsilon^{-2} \alpha_t \big) \left( \mathcal{N} + H_{f} \right) \,  W^* \big( \varepsilon^{-2} \alpha_t \big) \Psi_{N,t} \big\rangle_{\mathcal{H}^{(N)}}
&\leq C(t) \big( d[\Psi_N, p, \alpha] + \varepsilon \big), 
\end{align}
where $C(t) = \exp \Big[ 
\exp \Big[  
 \exp \Big[ C \left< \big\| \big(  |\cdot |^{-1} + |\cdot |^{1/2} \big) \mathcal{F}[\kappa] \big\|_{L^2(\mathbb{R}^3)} \right>^4
 \left< t \right> \Big] \Big] \Big]$.
\end{subequations}

\end{theorem}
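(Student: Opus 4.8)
\emph{Strategy.} The plan is to combine a coherent‑state transformation that centers the photon field with a Gr\"onwall argument for the functional $d[\Psi_N,p,\alpha]$, and then to upgrade the resulting trace‑norm estimate to the weighted estimate in the statement. First I would introduce the fluctuation vector $\xi_{N,t} \coloneqq W^{*}(\varepsilon^{-2}\alpha_t)\Psi_{N,t}$. Using the shift property \eqref{eq:Weyl operators shifting property} and the second equation of \eqref{eq:Maxwell-Schroedinger equations} for $i\partial_t\alpha_t$, the vector $\xi_{N,t}$ solves a Schr\"odinger equation whose generator is $H_N^{\rm PF}$ with $\widehat{\vA}$ replaced by $\widehat{\vA} + \vA_{\alpha_t}$ plus explicit terms produced by $W^{*}(\varepsilon^{-2}\alpha_t)\,\partial_t W(\varepsilon^{-2}\alpha_t)$; moreover $b[\Psi_{N,t},\alpha_t] = \scp{\xi_{N,t}}{(\varepsilon^4\mathcal{N}+\varepsilon^2\mathcal{N}^{1/2})\xi_{N,t}}$. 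A preliminary but essential remark is that $c[\Psi_{N,t},p_t,\alpha_t]$ does not depend on $t$: $\scp{\Psi_{N,t}}{H_N^{\rm PF}\Psi_{N,t}}$ is constant by unitarity of the flow and $\mathcal{E}^{\rm MS}[p_t,\alpha_t]$ is conserved by Proposition~\ref{proposition:well-posedness of Maxwell-Schroedinger}, so only $a(t)+b(t)$ has to be propagated.

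\emph{The Gr\"onwall estimate.} The core of the argument is to establish a differential inequality of the form
\[ \frac{d}{dt}\big(a[\Psi_{N,t},p_t] + b[\Psi_{N,t},\alpha_t]\big) \;\leq\; K(t)\,\big(d[\Psi_N,p,\alpha] + \varepsilon\big), \]
with $K(t)$ of the triple‑exponential type appearing in $C(t)$, after which Gr\"onwall together with $a(0)+b(0)+c(0) \leq d[\Psi_N,p,\alpha]$ yields the bound on $a(t)$ and $b(t)$. Differentiating $a$ via the one‑particle evolution equation for $\gamma^{(1,0)}_{\Psi_{N,t}}$ and the first equation of \eqref{eq:Maxwell-Schroedinger equations} produces a commutator expression in which the mean‑field potential $K*\rho$, the exchange term $X$, the classical field $\kappa*\vA_{\alpha_t}$ and the current coupling cancel to leading order; what remains splits into (i) differences of one‑body quantities bounded by $a(t)$, (ii) field‑fluctuation terms, each carrying a factor $\varepsilon^2\,\kappa*\widehat{\vA}$, controlled using the relative bound of $\widehat{\vA}$ with respect to $H_f$, the constant $C_\kappa$, and $b(t)$, and (iii) genuinely $O(\varepsilon)$ remainders. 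Closing these estimates requires, besides energy conservation, that the semiclassical structure \eqref{eq:main result initial semiclassical structure} propagates along the Maxwell--Schr\"odinger flow, i.e.\ $\sup_{k}(1+\abs{k})^{-1}\norm{q_t e^{ikx}p_t}_{\mathfrak{S}^1} + \norm{q_t\,i\varepsilon\nabla\,p_t}_{\mathfrak{S}^1} \leq K(t) N^{2/3}$ with $q_t = 1-p_t$, together with the regularity bounds for $(p_t,\alpha_t)$, in particular \eqref{eq:bound for the h-1 norm of alpha} and its consequences for $\vA_{\alpha_t}$; these are precisely the auxiliary lemmas collected in Section~\ref{section:proof of the main results} and proven in Sections~\ref{section:properties of the MS equations}--\ref{section:estimates concerning the Vlasov--Maxwell equations}. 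The derivative of $b$ is treated analogously, by estimating the growth of the $\mathcal{N}$‑moments of $\xi_{N,t}$ against the matter--field coupling of the shifted generator. The nested exponentials in $K(t)$ arise because these estimates are driven by the regularity norms of the Maxwell--Schr\"odinger solution, which grow exponentially in $t$ and were themselves bootstrapped from the (at worst linear) bound \eqref{eq:bound for the h-1 norm of alpha}.

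\emph{Upgrading and the photon estimate.} Given $a[\Psi_{N,t},p_t] \leq C(t)(d[\Psi_N,p,\alpha]+\varepsilon)$, the first inequality of the theorem follows from an interpolation lemma: both $\tr\big((1-\varepsilon^2\Delta)\gamma^{(1,0)}_{\Psi_{N,t}}\big)$ and $\tr\big((1-\varepsilon^2\Delta)N^{-1}p_t\big)$ are bounded (by the kinetic parts of the conserved energies, controlled through $\norm{i\varepsilon\nabla p}_{\mathfrak{S}^\infty}$ and $N^{-1}\mathcal{E}^{\rm MS}[p,\alpha]$), and their difference is small by the same input; combining a uniformly bounded $(1-\varepsilon^2\Delta)$‑trace with smallness in $\mathfrak{S}^1$ gives smallness of $\norm{\sqrt{1-\varepsilon^2\Delta}\,(\gamma^{(1,0)}_{\Psi_{N,t}} - N^{-1}p_t)\,\sqrt{1-\varepsilon^2\Delta}}_{\mathfrak{S}^1}$, the loss of a square root accounting for the factor $\sup_{j=1,2}(d[\Psi_N,p,\alpha]+\varepsilon)^{j/2}$. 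For the second inequality, $\varepsilon^4\scp{\xi_{N,t}}{\mathcal{N}\xi_{N,t}} \leq b[\Psi_{N,t},\alpha_t]$ by definition, while $\varepsilon^4\scp{\xi_{N,t}}{H_f\xi_{N,t}} = \varepsilon^3\scp{\xi_{N,t}}{\varepsilon H_f\,\xi_{N,t}}$ is estimated by comparing the field‑energy term $\varepsilon H_f$ of the shifted Pauli--Fierz operator with the field energy $N\norm{\alpha_t}_{\dot{\mathfrak{h}}_{1/2}}^2$ inside $\mathcal{E}^{\rm MS}$ (note $\varepsilon^3 N = 1$), using $\scp{\Psi_{N,t}}{H_N^{\rm PF}\Psi_{N,t}} = \mathcal{E}^{\rm MS}[p,\alpha] + O(Nc)$, the positivity of the kinetic and Coulomb parts, and the cross terms already estimated in the Gr\"onwall step.

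\emph{Main obstacle.} The decisive difficulty is the Gr\"onwall step, and within it the simultaneous control of the number of ``bad'' fermions and of the photon fluctuations across the unbounded coupling $\varepsilon^2\kappa*\widehat{\vA}$: the cross terms mixing the projections $q_t$, the phases $e^{ikx}$ and the creation/annihilation operators have to be organized so that every occurrence of $\widehat{\vA}$ is paired with a power of $\varepsilon$ and absorbed either into $b(t)$ or into an honest $O(\varepsilon)$ error, which forces a rather careful use of the propagated semiclassical structure and of the decay of $\mathcal{F}[\kappa]$ encoded in $C_\kappa$. The exchange term $X_{p_t}$, absent in the point‑particle Maxwell--Schr\"odinger analyses of \cite{B2009,NM2005,NM2007,ADM2017,AMS2022}, adds a further layer, since its estimates need the $\mathfrak{S}^{2,1}$‑regularity of $p_t$ to be available uniformly on compact time intervals.
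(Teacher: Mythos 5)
Your strategy matches the paper's: a Weyl transformation $\xi_{N,t}=W^*(\varepsilon^{-2}\alpha_t)\Psi_{N,t}$ to center the photon field, a Gr\"onwall argument for a combined fermion/photon functional driven by energy conservation and the propagated semiclassical structure, and a final upgrade to the weighted trace norm; so in outline the proposal is correct and follows the paper's route. Two points where your write-up is imprecise but the paper has to be careful: (i) the Gr\"onwall is run not on the trace-norm distance $a$ (whose time derivative is not directly accessible) but on the Pickl-type counting functional $\beta^a=\scp{\Psi_{N,t}}{q_{t,1}\Psi_{N,t}}$ together with $\varepsilon^4\scp{\xi_{N,t}}{\mathcal{N}\xi_{N,t}}$ and the auxiliary $\varepsilon^2\scp{\xi_{N,t}}{\mathcal{N}^{1/2}\xi_{N,t}}$, and the differential inequality must have the form $\dot\beta\leq K\left(\beta(t)+\varepsilon\right)+\norm{i\varepsilon\nabla_1 q_{t,1}\Psi_{N,t}}^2$ rather than a right-hand side depending only on the initial data, as written in your display; (ii) the term $\norm{i\varepsilon\nabla_1 q_{t,1}\Psi_{N,t}}^2$ --- the kinetic energy of the fermions outside the Slater determinant, generated by the minimal coupling and absent in the Nelson-model analysis --- is precisely what forces the energy difference $c$ into the loop: it is controlled by Lemma~\ref{lemma:energy estimates} in terms of $c$, $\beta$ and $O(\varepsilon)$, and it reappears as an input of the upgrading Lemma~\ref{lemma:relation between trace-norm convergence and beta-a}, which proceeds by a $p/q$ decomposition of $\gamma^{(1,0)}_{\Psi_{N,t}}-N^{-1}p_t$ rather than by the general ``bounded weighted trace plus small $\mathfrak{S}^1$ norm'' interpolation you sketch.
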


\begin{remark}
The operators $p$ and $q$ in Theorem~\ref{theorem:main theorem} and in the rest of the article depend on $N$, but we refrain from indicating this dependence to simplify the notation. If we consider sequences of initial data $	(\Psi_{N})_{N \in \mathbb{N}}$ and $(p_N,\alpha_N)_{N \in \mathbb{N}}$ such that the conditions of Theorem~\ref{theorem:main theorem} are satisfied for all $N \in \mathbb{N}$ and $\lim_{N \rightarrow \infty} d[\Psi_N,p_N,\alpha_N] = 0$, then for any time $t >0$,  the corresponding solutions $(p_{N,t},\alpha_{N,t})$ of \eqref{eq:Maxwell-Schroedinger equations} and $\Psi_{N,t} = e^{- i \varepsilon^{-1} H_N^{\rm PF} t} \Psi_N$ satisfy
\begin{align}
\lim_{N \rightarrow \infty} \norm{ \sqrt{1 - \varepsilon^2 \Delta} \left( \gamma^{(1,0)}_{\Psi_{N,t}} - N^{-1} p_{N,t} \right) \sqrt{1 - \varepsilon^2 \Delta} }_{\mathfrak{S}^1} &= 0 ,
\nonumber \\
\nonumber 
\lim_{N \rightarrow \infty} \varepsilon^4  \big\langle\Psi_{N,t} ,W \big( \varepsilon^{-2} \alpha_{N,t} \big) \left( \mathcal{N} + H_{f} \right) \,  W^* \big( \varepsilon^{-2} \alpha_{N,t} \big) \big) \Psi_{N,t} \big\rangle_{\mathcal{H}^{(N)}}
 &= 0 .
\end{align}
\end{remark}

\begin{remark}
If $\Psi_N= \bigwedge_{j=1}^N \varphi_j \otimes W( \varepsilon^{-2} \alpha) \Omega$ with $\{ \varphi_j \}_{j \in \{1,\ldots,N\}}$ being the orthonormal eigenfunctions of the rank-$N$ projection $p = \sum_{j=1}^N \ket{\varphi_j} 
\bra{\varphi_j}$, we have
$d[\Psi_N,p,\alpha] = \varepsilon^4 \| \abs{\cdot}^{- 1/2} \mathcal{F}[\kappa] \|_{L^2}^2$
 because, in this case, $\gamma^{(1,0)}_{\Psi_N} = N^{-1} p$, $b[\Psi_N,\alpha] = 0$, and 
$\scp{\Psi_N}{H_N^{\rm{PF}} \Psi_N}_{\mathcal{H}^{(N)}} = \mathcal{E}^{\rm{MS}}[p,\alpha] + \varepsilon \| \abs{\cdot}^{- 1/2} \mathcal{F}[\kappa] \|_{L^2}^2$.
\end{remark}

\begin{remark}
As explained in \cite[Remark IV.2]{LP2019}, Condition \eqref{eq:main result initial semiclassical structure} is equivalent to the formulation given in \cite{BPS2014}:
\begin{align}
\sup_{k \in \mathbb{R}^3}
\left\{ (1 + \abs{k})^{-1} \norm{\left[ e^{ikx} , p \right]}_{\mathfrak{S}^1} 
\right\} 
+ \norm{\left[ i \varepsilon \nabla , p \right]}_{\mathfrak{S}^1} \leq \widetilde{C} N^{2/3} .
\end{align}
This condition assumes that the density of the system varies on a scale of order one and that the kernel of $p(x;y)$ is concentrated close to the diagonal $x=y$, decaying sufficiently fast for distances $\abs{x-y} \gg \varepsilon$.
In \cite[Section 1]{BPS2014}, it is argued that \eqref{eq:main result initial semiclassical structure}
is a reasonable assumption for minimizers of non-interacting fermionic systems and shown to hold for the ground state of non-interacting fermions on the torus $\mathbb{T} = [- \pi , \pi]^3$, i.e., $p(x;y) = (2 \pi)^{-3} \sum_{\abs{k} \leq c N^{1/3}} e^{i k (x-y)}$ with $c >0$. The article \cite{B2022} demonstrates the validity of \eqref{eq:main result initial semiclassical structure} for the ground state of non-interacting fermions in a harmonic trap and \cite{FM2020} addresses general confining potentials through methods of semiclassical analysis.
\end{remark}

\begin{remark}
The condition
$\norm{i  \varepsilon \nabla p}_{\mathfrak{S}^{\infty}} \leq \widetilde{C}$
from \eqref{eq:main result condition on the energy and velocity} can be interpreted as requiring that the momenta of the initial electrons are at most of order $N^{1/3}$. This condition holds, for example, if $p = \sum_{j=1}^N \ket{\varphi_j} \bra{\varphi_j}$ with $\varphi_j \in L^2(\mathbb{R}^3)$ such that $\id_{\abs{k} \geq \widetilde{C} N^{1/3} } \mathcal{F}[\varphi_j](k) =0$ for all $j \in \{1,2, \ldots,N \}$ because
\begin{align}
\norm{i \varepsilon \nabla p \psi}_{L^2(\mathbb{R}^3)}^2
&= \varepsilon^2 \int_{\mathbb{R}^3} \abs{k}^2 \abs{\mathcal{F}[p \psi](k)}^2 \, dk
\leq \widetilde{C}^2 \norm{ p \psi}_{L^2(\mathbb{R}^3)}^2 
\quad \forall \;  \psi \in L^2(\mathbb{R}^3).
\end{align}
\end{remark}

\subsection{Convegence to the Vlasov--Maxwell system with extended charges}

Combining Theorem~\ref{theorem:main theorem} with \cite[Theorem II.1]{LS2023} let us derive the Vlasov--Maxwell equations as effective evolution equations for the Pauli--Fierz dynamics. Throughout this work we will rely on the following well-posedness result from \cite{LS2023}.

\begin{proposition}
\label{proposition:global solutions VM}
Let $R>0$ and $a, b \in \mathbb{N}$ satisfying $a \geq 5$ and $b \geq 3$. For all $(f_0, \alpha_0) \in H_a^{b}(\mathbb{R}^6) \times \mathfrak{h}_b \cap \dot{\mathfrak{h}}_{-1/2}$ such that $\supp f_0 \subseteq A_R$ with $A_R = \{ (x,v) \in \mathbb{R}^6 , \abs{v} \leq R \}$ system
\eqref{eq: Vlasov-Maxwell different style of writing} has a unique
$L^{\infty} \big( \mathbb{R}_{+}; H_{a}^{b}(\mathbb{R}^6) \big) \cap C \big( \mathbb{R}_{+}; H_{a}^{b-1}(\mathbb{R}^6) \big)  \cap C^1 (\mathbb{R}_{+};  H_{a}^{b-2}(\mathbb{R}^6)) \times  C \big( \mathbb{R}_{+} ; \mathfrak{h}_{b} \cap \dot{\mathfrak{h}}_{-1/2} \big) \cap C^1 \big( \mathbb{R}_{+} ; \mathfrak{h}_{b-1} \cap \dot{\mathfrak{h}}_{-1/2} \big)$ solution.
The $L^p$--norms of the particle distribution (with $p \geq 1$) and the energy
\begin{align}
\label{eq:Vlasov-Maxwell energy definition}
\mathcal{E}^{\rm{VM}}[f, \alpha] &=  
\int_{\mathbb{R}^6} dx \, dv \, f(x,v) \left( v - \kappa * \vA_{\alpha}(x) \right)^2 
+  \frac{1}{2} \int_{\mathbb{R}^6} dx \, dv \,
f(x,v) K * \widetilde{\rho}_{f}(x) 
+  \norm{\alpha}_{\dot{\mathfrak{h}}_{1/2}}^2 
\end{align}
are conserved, i.e. $\norm{f_t}_{L^p(\mathbb{R}^6)} = \norm{f_0}_{L^p(\mathbb{R}^6)}$ and $\mathcal{E}^{\rm{VM}}[f_t, \alpha_t] =  \mathcal{E}^{\rm{VM}}[f_0, \alpha_0]$ for all $t \in \mathbb{R}_{+}$.
\end{proposition}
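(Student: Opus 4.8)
The plan is to prove Proposition~\ref{proposition:global solutions VM} by the classical two-step scheme for kinetic equations: a local-in-time fixed-point construction, followed by a bootstrap to global times powered by the conservation laws, the decisive simplification being that the cutoff $\kappa$ renders every field in \eqref{eq: Vlasov-Maxwell different style of writing} smooth and bounded. For the local theory: the $\alpha$-equation is, for a prescribed current, a linear evolution generated by multiplication with the unitary phase $e^{-i\abs{k}t}$, hence solved by the Duhamel formula $\alpha_t = e^{-i\abs{k}t}\alpha_0 + i\sqrt{4\pi^3}\int_0^t e^{-i\abs{k}(t-s)}\abs{k}^{-1/2}\mathcal{F}[\kappa](k)\,\vep_\lambda(k)\cdot\mathcal{F}[\vJt_{f_s,\alpha_s}](k)\,ds$, with the free phase a unitary group on every $\mathfrak{h}_\sigma$ and $\dot{\mathfrak{h}}_\sigma$; the Vlasov equation is a transport equation with characteristic field $(2(v-\kappa*\vA_{\alpha_t}(x)),\vF_{f_t,\alpha_t}(x,v))$, which is divergence-free by the Coulomb gauge, so $f_t = f_0\circ\mathcal{Z}_{0,t}$ with $\mathcal{Z}$ the volume-preserving flow. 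I would run a Picard iteration for the pair $(f,\alpha)$ in a ball of $C([0,T];H_a^b(\mathbb{R}^6)\times(\mathfrak{h}_b\cap\dot{\mathfrak{h}}_{-1/2}))$ with the constraint that the $v$-support of $f$ stays inside a slightly enlarged $A_R$, prove contraction for small $T$ by measuring the difference of two iterates in the lower norm $H_a^{b-1}\times\mathfrak{h}_{b-1}$ while keeping a uniform bound in the top norm, and recover the top regularity only in $L^\infty_t$; the continuity statements $C(\mathbb{R}_+;H_a^{b-1})\cap C^1(\mathbb{R}_+;H_a^{b-2})$ for $f$ and the corresponding ones for $\alpha$ then follow by inserting the solution into the equations. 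Uniqueness is a Gr\"onwall estimate for the difference of two solutions in the same lower norm.

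The fixed-point argument rests on a handful of nonlinear estimates. First, $\norm{\kappa*\vA_\alpha}_{W^{\sigma,\infty}}\lesssim_\kappa\norm{\alpha}_{\mathfrak{h}_\sigma\cap\dot{\mathfrak{h}}_{-1/2}}$: in Fourier space $\kappa*\vA_\alpha$ carries the weight $\abs{k}^{-1/2}\mathcal{F}[\kappa](k)$, and splitting the $k$-integral at $\abs{k}=1$ and using $\mathcal{F}[\kappa]\in L^\infty\cap L^2$ (from $\kappa\in L^1$ and $(-\Delta)^{1/2}\kappa\in L^2$) together with $\abs{k}^{-1}\mathcal{F}[\kappa]\in L^2$ (automatic in three dimensions) makes it converge, and differentiating $(\kappa*\vA_\alpha)^2$ costs only a Leibniz rule. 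Second, since $f$ keeps compact $v$-support, $\widetilde{\rho}_f=\int f\,dv$ and $\vJt_{f,\alpha}=2\int(v-\kappa*\vA_\alpha)f\,dv$ inherit the spatial $H^b(\mathbb{R}^3)$-regularity of $f\in H_a^b(\mathbb{R}^6)$ (with $b\geq 3$ one gets $\widetilde\rho_f\in H^3(\mathbb{R}^3)\hookrightarrow C^1(\mathbb{R}^3)$, enough for a $C^1$ characteristic field), while $\norm{\vJt_{f,\alpha}}_{L^1}\lesssim(R_T+\norm{\kappa*\vA_\alpha}_{L^\infty})\norm{f}_{L^1}$ with $R_T$ the $v$-support radius, so $\mathcal{F}[\vJt_{f,\alpha}]$ is bounded. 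Third, $\nabla K=\tfrac{1}{4\pi}\kappa*\kappa*\nabla\abs{\cdot}^{-1}$ has an integrable Fourier transform, hence $\norm{\nabla K*\widetilde\rho_f}_{L^\infty}\lesssim_\kappa\norm{f}_{L^1}$. A Kato--Ponce/commutator energy estimate for the transport equation, with the coefficients controlled by these bounds and the smoothness of $\kappa*\vA_{\alpha_t}$, then closes the differential inequality for $\norm{f_t}_{H_a^b}$ by Gr\"onwall, and feeding $\vJt_{f_t,\alpha_t}$ through the Duhamel formula with the same frequency splitting closes $\norm{\alpha_t}_{\mathfrak{h}_b}$ and $\norm{\alpha_t}_{\dot{\mathfrak{h}}_{-1/2}}$ --- here it is the $H^b$-regularity of $f$, not any strong decay of $\kappa$, that supplies the high-frequency control of the field equation.

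For the conservation laws, divergence-freeness of the characteristic field gives a volume-preserving flow, whence $\norm{f_t}_{L^p(\mathbb{R}^6)}=\norm{f_0}_{L^p(\mathbb{R}^6)}$ for every $p\geq 1$; differentiating $\mathcal{E}^{\rm{VM}}[f_t,\alpha_t]$ and inserting both equations, the kinetic, self-interaction and field contributions rearrange and cancel against the coupling term $-2v\cdot\kappa*\vA_{\alpha_t}$ --- the very term tying the Vlasov force to the source of the $\alpha$-equation --- yielding $\tfrac{d}{dt}\mathcal{E}^{\rm{VM}}[f_t,\alpha_t]=0$, legitimate at the regularity of the local theory. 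Globalization is a bootstrap: since $\kappa*\vA_{\alpha_t}$ and $\nabla\kappa*\vA_{\alpha_t}$ stay bounded in $L^\infty$ on $[0,T]$ (by the Duhamel bound, or by the conserved energy, which controls $\norm{\alpha_t}_{\dot{\mathfrak{h}}_{1/2}}$) and $\nabla K*\widetilde\rho_{f_t}$ is bounded by conserved mass, the characteristic ODE gives $\abs{\dot V}\leq C_T(1+\abs V)$, so the $v$-support radius of $f_t$ grows at most exponentially and stays finite on every finite interval; the estimates of the previous paragraph then hold with $T$-dependent constants, Gr\"onwall gives a finite bound for $\norm{f_t}_{H_a^b}$ and $\norm{\alpha_t}_{\mathfrak{h}_b\cap\dot{\mathfrak{h}}_{-1/2}}$ on each $[0,T]$, and iterating the local existence time produces the solution on all of $\mathbb{R}_+$.

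The main obstacle is the $H_a^b$ a priori estimate in the second step: one must bound every coefficient of the transport equation in $W^{b,\infty}$-type norms, which forces a careful bookkeeping of the interplay between the Sobolev regularity of $\alpha$, the weak decay of $\mathcal{F}[\kappa]$ allowed by Assumption~\ref{assumption:cutoff function}, and the derivatives of the quadratic term $(\kappa*\vA_\alpha)^2$ and of the Coulomb-type factor $\nabla K*\widetilde\rho_{f}$; it is precisely here that the compact $v$-support and the full $H^b$-regularity of $f$ must be spent, which is why the exponents $a\geq 5$ and $b\geq 3$ enter. The second, softer, point is verifying that the global bootstrap genuinely closes --- that the (at worst exponentially growing) $v$-support and Sobolev norms never escape to infinity in finite time; this is where the regularization by $\kappa$ is indispensable, since it keeps $\vA_{\alpha_t}$ and $K*\widetilde\rho_{f_t}$ bounded and removes any need for Glassey--Strauss-type arguments.
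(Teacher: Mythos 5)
The paper does not actually prove this proposition: it is imported verbatim from \cite{LS2023} (where it appears as the well-posedness theorem for the regularized Vlasov--Maxwell system), so there is no in-paper argument to compare against. Your outline --- local existence via characteristics for a divergence-free field plus Duhamel for $\alpha$, $L^p$-conservation from the volume-preserving flow, energy conservation by direct differentiation, $H_a^b$ propagation via commutator estimates, and globalization from conserved mass/energy together with the at-most-exponential growth of the velocity support --- is the standard scheme and coincides with the strategy of the cited reference, as one can infer from the ingredients the present paper does import from it in Section~\ref{subsection:propagation estimates Vlasov-Maxwell} (the characteristic flow, the support-growth bound, and the $\left(1-\Delta_z\right)^{k/2}$ commutator energy estimates).
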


Since \cite[Theorem II.1]{LS2023} requires higher regularity properties on the initial data of the Vlasov--Maxwell equations as expected to hold for pure states of fermions, we introduce a regularization of the initial data of the Vlasov--Maxwell equations. More explicitly, we define for  $\sigma \in (0, 1]$
\begin{align}
\label{eq:definition regularizing Gaussian}
\mathscr{G}_{\sigma}(x,v) &\coloneqq (\sigma^2 \pi)^{-3} \exp \left[ -    \frac{x^2 + v^2}{\sigma^2} \right] 
\end{align}
and
\begin{align}
\label{eq:definition general Husimi measure}
m_{p,\sigma} &\coloneqq  \mathscr{G}_{\sigma} * \mathcal{W}[p]  .
\end{align}
Note that for the specific choice $\sigma^2 = \varepsilon$, $m_{p,\sigma}$  is usually referred to as the Husimi measure; see e.g., \cite{FLS2018}.
The fact that \cite[Theorem II.1]{LS2023} also assumes a finite support in the velocity variable requires to cut off the high velocities values by means of a smooth cutoff function. For $L \in (0, \infty)$ let $\eta_L: \mathbb{R}^3 \rightarrow \mathbb{R}$ be given by
\begin{align}
\label{eq:definition smooth cutoff function}
\eta_L  \coloneqq \id_{\abs{\cdot} \leq L +2} * \zeta  ,
\quad \text{where} \quad 
\zeta(v) &\coloneqq 
\begin{cases} I^{-1}
\exp \left[ - \frac{1}{1 - \abs{v}^2 } \right]
 \quad \text{if} \; \abs{v} <1 , \\
0 \quad \text{if} \; \abs{v} \geq 1 
\end{cases}
\end{align}
and $I =  \int_{\mathbb{R}^3}  \exp \left[ - \frac{1}{1 - \abs{u}^2 } \right] \id_{\abs{u}\leq 1} \, du$.
The function $\eta$ satisfies the properties 
\begin{subequations}
\begin{align}
\label{eq:properties of the smooth cutoff function 1}
0 \leq \eta_L &\leq 1 
\quad \text{and for} \; j \in \mathbb{N} : \;
\norm{\eta_L}_{W_0^{j,\infty}(\mathbb{R}^3)} \leq C  ,
\\
\label{eq:properties of the smooth cutoff function 2}
\eta_L(v) &= 1 \quad \text{if} \, \abs{v} \leq L+1, 
\\
\label{eq:properties of the smooth cutoff function 3}
\eta_L(v) &= 0 \quad \text{if} \, \abs{v} \geq L + 3. 
\end{align}
\end{subequations}
The regularized Wigner transform of $p$ is defined as
\begin{align}
\label{eq:definition regularized Wigner transform}
m_{p,\sigma, L}(x,v)  &\coloneqq  
\eta_L(v) \, m_{p,\sigma}(x,v)
= \eta_L(v) \, \left( \mathscr{G}_{\sigma} * \mathcal{W}[p] \right)(x,v)   .
\end{align} 
Note that if $\mathcal{W}[p]$ has  finite support in the velocity variable and $L$ is chosen sufficiently large, then $m_{p,\sigma, L}(x,v)$ is small for large values of $v$. In addition, we modify the initial data of the classical electromagnetic field by cutting off the high frequencies via the cutoff parameter $\Lambda \in [1, \infty )$. Our result is the following.
\begin{theorem}
\label{theorem:derivation of the Vlasov-Maxwell equations}
Let $\kappa$ satisfy Assumption~\ref{assumption:cutoff function}, $\widetilde{C} >0$, $R >0$, $N \in \mathbb{N}$, $\varepsilon = N^{- \frac{1}{3}}$, $\Lambda = \varepsilon^{- \frac{1}{1094}}$, and $\sigma = \Lambda^{-2}$. Let $\alpha \in \mathfrak{h}_1 \cap \dot{\mathfrak{h}}_{-1/2}$
and $p \in \textfrak{S}_{+}^{2,1} (L^2(\mathbb{R}^3))$ be a rank-$N$ projection such that the following conditions are satisfied. Assume that \eqref{eq:main result condition on the energy and velocity} holds, and that $p$ and $q = 1 - p$ satisfy
\begin{align}
\begin{aligned}
\label{eq:Vlasov-Maxwell result more restricted initial semiclassical structure}
\sup_{k \in \mathbb{R}^3}
\left\{  (1 + \abs{k})^{-1}  \norm{\sqrt{1 - \varepsilon^2 \Delta} q e^{ikx} p \sqrt{1 - \varepsilon^2 \Delta}}_{\mathfrak{S}^1} 
\right\} 
&\leq \widetilde{C} N^{2/3} ,
\\
\norm{\sqrt{1 - \varepsilon^2 \Delta} q i \varepsilon \nabla p \sqrt{1 - \varepsilon^2 \Delta}}_{\mathfrak{S}^1}
&\leq \widetilde{C} N^{2/3} .
\end{aligned}
\end{align}
Additionally, assume $p$ is such that its Wigner transform $\mathcal{W}[p]$ satisfies $\norm{\mathcal{W}[p]}_{W_7^{0,2}(\mathbb{R}^6)} \leq \widetilde{C}$ and $ \supp \mathcal{W}[p] \subseteq  \{ (x,v) \in \mathbb{R}^6 , \abs{v} \leq R \}$.
Let $\Psi_{N} \in \mathcal{H}^{(N)} \cap \mathcal{D} \left( (H_N^{\rm PF})^{1/2} \right) \cap \mathcal{D} \left( \mathcal{N}^{1/2} \right)$ such that $\norm{\Psi_N} = 1$ and let $\Psi_{N,t} = e^{- i \varepsilon^{-1} H_N^{\rm PF} t} \Psi_N$. Let $m_{p,\sigma,R}$ be defined as in \eqref{eq:definition regularized Wigner transform} and $(W_{N,t}, \alpha_t)$ be the unique solution to \eqref{eq: Vlasov-Maxwell different style of writing} with regularized initial datum $(m_{p,\sigma,R}, \id_{\abs{\cdot} \leq \Lambda} \alpha)$. Then, there exists a constant $C>0$ (independent of $N$ but dependent on $\kappa$) such that for any $t \geq 0$
\begin{subequations}
\begin{align}
\norm{ \sqrt{1 - \varepsilon^2 \Delta} \left( \gamma^{(1,0)}_{\Psi_{N,t}} - N^{-1}  \mathcal{W}^{-1}[W_{N,t}] \right) \sqrt{1 - \varepsilon^2 \Delta} }_{\mathfrak{S}^1} 
&\leq C(t) \Big( \sup_{j=1,2} 
\big( d[\Psi_N, p, \alpha]  \big)^{\frac{j}{2}}  + \varepsilon^{\frac{1}{2188}} \Big)
,
\\
\varepsilon^4  \big\langle\Psi_{N,t} ,W \big( \varepsilon^{-2} \alpha_t \big) \left( \mathcal{N} + H_f \right) \,  W^* \big( \varepsilon^{-2} \alpha_t \big) \Psi_{N,t} \big\rangle_{\mathcal{H}^{(N)}}
&\leq C(t) \big( d[\Psi_N, p, \alpha] + \varepsilon^{\frac{1}{1094}} \big)
\end{align}
\end{subequations}
with $C(t) =  \exp \left[
\exp \left[ \left< R \right> \exp \left[ \exp \left[ C \left< t \right> \right] \right] \right] \right]$.
\end{theorem}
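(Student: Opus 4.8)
The plan is to chain the two available approximation results: Theorem~\ref{theorem:main theorem}, which passes from the Pauli--Fierz dynamics to the Maxwell--Schr\"odinger system \eqref{eq:Maxwell-Schroedinger equations}, and \cite[Theorem II.1]{LS2023}, which passes from \eqref{eq:Maxwell-Schroedinger equations} to the Vlasov--Maxwell system \eqref{eq: Vlasov-Maxwell different style of writing} with extended charges, the latter applied after the regularization $(m_{p,\sigma,R},\id_{\abs{\cdot}\leq\Lambda}\alpha)$ of the initial datum. First I would verify the hypotheses of Theorem~\ref{theorem:main theorem}: \eqref{eq:main result condition on the energy and velocity} is assumed verbatim, and the weaker semiclassical structure \eqref{eq:main result initial semiclassical structure} follows from \eqref{eq:Vlasov-Maxwell result more restricted initial semiclassical structure} by dropping the operators $\sqrt{1-\varepsilon^2\Delta}\geq 1$. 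Writing $(p_t,\alpha_t^{\mathrm{MS}})$ for the solution of \eqref{eq:Maxwell-Schroedinger equations} with datum $(p,\alpha)$, Theorem~\ref{theorem:main theorem} then controls $\sqrt{1-\varepsilon^2\Delta}\,(\gamma^{(1,0)}_{\Psi_{N,t}}-N^{-1}p_t)\,\sqrt{1-\varepsilon^2\Delta}$ in $\mathfrak{S}^1$ by $C(t)\sup_{j=1,2}(d[\Psi_N,p,\alpha]+\varepsilon)^{j/2}$ and the shifted photon observable $\varepsilon^4\langle\Psi_{N,t},W(\varepsilon^{-2}\alpha_t^{\mathrm{MS}})(\mathcal{N}+H_f)W^*(\varepsilon^{-2}\alpha_t^{\mathrm{MS}})\Psi_{N,t}\rangle$ by $C(t)(d[\Psi_N,p,\alpha]+\varepsilon)$.

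Next I would compare $N^{-1}p_t$ with $N^{-1}\mathcal{W}^{-1}[W_{N,t}]$ through the regularization. Set $(\widetilde{p},\widetilde{\alpha})\coloneqq(\mathcal{W}^{-1}[m_{p,\sigma,R}],\id_{\abs{\cdot}\leq\Lambda}\alpha)$ and let $(\widetilde{p}_t,\widetilde{\alpha}_t)$ be the corresponding solution of \eqref{eq:Maxwell-Schroedinger equations}; one must check admissibility for Proposition~\ref{proposition:well-posedness of Maxwell-Schroedinger}, notably $\widetilde{p}\in\mathfrak{S}^{2,1}_{+}$ --- the nonnegativity following from $\sigma^2\geq\varepsilon$, since Gaussian smoothing at scale $\geq\sqrt{\varepsilon}$ turns $\mathcal{W}[p]$ into the Weyl symbol of a nonnegative operator --- and $\widetilde{\alpha}\in\mathfrak{h}_b\cap\dot{\mathfrak{h}}_{-1/2}$ with norms bounded by fixed powers of $\Lambda$ times $\|\alpha\|_{\mathfrak{h}_1}$. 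A Gr\"onwall-type stability estimate for \eqref{eq:Maxwell-Schroedinger equations} then bounds $N^{-1}(p_t-\widetilde{p}_t)$ and $\alpha_t^{\mathrm{MS}}-\widetilde{\alpha}_t$ by $e^{C(t)}$ times the initial discrepancies, namely the frequency truncation $\|\id_{\abs{\cdot}>\Lambda}\alpha\|_{\mathfrak{h}}\leq\Lambda^{-1}\|\alpha\|_{\mathfrak{h}_1}$ and the phase-space mollification error $N^{-1}(p-\mathcal{W}^{-1}[m_{p,\sigma,R}])$. For the latter, since $\supp\mathcal{W}[p]\subseteq\{\abs{v}\leq R\}\subseteq\{\eta_R=1\}$ we have $\mathcal{W}[p]-m_{p,\sigma,R}=\eta_R\,(\mathcal{W}[p]-\mathscr{G}_{\sigma}*\mathcal{W}[p])$, so only the Gaussian smoothing contributes and the error is a positive power of $\sigma$ times a higher weighted Sobolev norm of $\mathcal{W}[p]$, which is finite thanks to $\|\mathcal{W}[p]\|_{W_7^{0,2}}\leq\widetilde{C}$ together with \eqref{eq:Vlasov-Maxwell result more restricted initial semiclassical structure}; returning to the $\sqrt{1-\varepsilon^2\Delta}$-weighted $\mathfrak{S}^1$ norm uses the boundedness of $\mathcal{W}^{-1}$ from phase-space Sobolev spaces into Schatten spaces. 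Finally, \cite[Theorem II.1]{LS2023} applies to $(\widetilde{p}_t,\widetilde{\alpha}_t)$, whose Wigner transform $m_{p,\sigma,R}\in H_a^b(\mathbb{R}^6)$ is supported in $\{\abs{v}\leq R+3\}$ by \eqref{eq:properties of the smooth cutoff function 3} with norm bounded by a negative power of $\sigma$ times $\widetilde{C}$, and yields that $\mathcal{W}[\widetilde{p}_t]$, equivalently $N^{-1}\widetilde{p}_t=N^{-1}\mathcal{W}^{-1}[\mathcal{W}[\widetilde{p}_t]]$, is close to $N^{-1}\mathcal{W}^{-1}[W_{N,t}]$ up to an error $\varepsilon^{c_1}\sigma^{-c_2}$ with fixed $c_1,c_2>0$ (plus closeness of the photon fields in $\dot{\mathfrak{h}}_{-1/2}$). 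With $\Lambda=\varepsilon^{-1/1094}$ and $\sigma=\Lambda^{-2}$ all these errors are dominated by $\varepsilon^{1/1094}$.

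The proof is then finished by the triangle inequality through $N^{-1}p_t$ and $N^{-1}\widetilde{p}_t$,
\[
\gamma^{(1,0)}_{\Psi_{N,t}}-N^{-1}\mathcal{W}^{-1}[W_{N,t}]
=\big(\gamma^{(1,0)}_{\Psi_{N,t}}-N^{-1}p_t\big)+N^{-1}\big(p_t-\widetilde{p}_t\big)+N^{-1}\big(\widetilde{p}_t-\mathcal{W}^{-1}[W_{N,t}]\big),
\]
so that, conjugating by $\sqrt{1-\varepsilon^2\Delta}$ and taking $\mathfrak{S}^1$ norms, the first summand carries $\sup_{j=1,2}(d[\Psi_N,p,\alpha]+\varepsilon)^{j/2}$ (the $\varepsilon$ being absorbed into $\varepsilon^{1/2188}$) while the last two carry the regularization error, which by the trace-norm estimate of \cite[Theorem II.1]{LS2023} is of order $\varepsilon^{1/2188}$ --- the extra square root relative to the photon estimate reflecting how the phase-space comparison transfers to the weighted Schatten norm. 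For the photon observable I would additionally replace $\alpha_t^{\mathrm{MS}}$ by the Vlasov--Maxwell field $\alpha_t$ (the second component of $(W_{N,t},\alpha_t)$) via the Weyl shift \eqref{eq:Weyl operators shifting property}: writing $W^*(\varepsilon^{-2}\alpha_t)\Psi_{N,t}$ as a phase times $W\big(\varepsilon^{-2}(\alpha_t^{\mathrm{MS}}-\alpha_t)\big)W^*(\varepsilon^{-2}\alpha_t^{\mathrm{MS}})\Psi_{N,t}$ and expanding $\mathcal{N}+H_f$ under this unitary conjugation produces, besides the already-controlled term, linear terms bounded by $\varepsilon^{-2}\|\alpha_t^{\mathrm{MS}}-\alpha_t\|_{\mathfrak{h}_{1/2}}\langle\mathcal{N}+H_f+1\rangle^{1/2}$ and a quadratic term $\varepsilon^{-4}\|\alpha_t^{\mathrm{MS}}-\alpha_t\|_{\mathfrak{h}_{1/2}}^2$; multiplying by $\varepsilon^4$ and inserting the already-proven photon bound with $\alpha_t^{\mathrm{MS}}$ turns these into $C(t)(d[\Psi_N,p,\alpha]+\varepsilon^{1/1094})$. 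Accumulating constants through the two Gr\"onwall arguments yields the stated $C(t)$. The main obstacle I anticipate is the quantitative bookkeeping of the regularization in the second step --- deriving the $H_a^b$ bound on $m_{p,\sigma,R}$ and the weighted trace-norm bound on $N^{-1}(p-\mathcal{W}^{-1}[m_{p,\sigma,R}])$ with explicit, favorable powers of $\sigma$, $\Lambda$ and $\varepsilon$, reconciling the $\varepsilon$-dependent weights $\sqrt{1-\varepsilon^2\Delta}$ with the $\varepsilon$-scaled Wigner transform and the Gaussian at scale $\sigma$, and fixing the exponents so that every contribution is absorbed into $\varepsilon^{1/2188}$ (respectively $\varepsilon^{1/1094}$) --- together with verifying the admissibility, in particular the nonnegativity of $\widetilde{p}$, of the regularized datum for Proposition~\ref{proposition:well-posedness of Maxwell-Schroedinger}.
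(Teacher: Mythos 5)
Your overall strategy---chaining Theorem~\ref{theorem:main theorem} with \cite[Theorem II.1]{LS2023} through the regularized datum $(m_{p,\sigma,R},\id_{\abs{\cdot}\leq\Lambda}\alpha)$ and a triangle inequality, with the photon observable handled by comparing $\widetilde{\alpha}_t$ and $\alpha_t$---is exactly the paper's. The one structural difference is that you insert an extra intermediate object: the Maxwell--Schr\"odinger evolution of the \emph{regularized} datum, so that your chain reads PF $\to$ MS(unregularized) $\to$ MS(regularized) $\to$ VM(regularized). The paper instead goes PF $\to$ MS(unregularized) $\to$ VM(regularized) in one step (Proposition~\ref{proposition:comparison between Maxwell--Schroedinger and Vlasov--Maxwell}), because \cite[Theorem II.1]{LS2023} is already a combined stability-and-semiclassical-limit result: its error bound contains the initial discrepancies $N^{-1}\norm{\sqrt{1-\varepsilon^2\Delta}\,(p-\mathcal{W}^{-1}[m_{p,\sigma,R}])\sqrt{1-\varepsilon^2\Delta}}_{\mathfrak{S}^1}$ and $\norm{\id_{\abs{\cdot}\geq\Lambda}\alpha}_{\mathfrak{h}_{1/2}\cap\dot{\mathfrak{h}}_{-1/2}}$ multiplied by $e^{\mathcal{K}(t)}$, so the mismatch of initial data between the two effective systems is absorbed there and then estimated by Lemma~\ref{lemma:Vlasov-Maxwell distance between initial data and its regularization}.

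Your detour creates two gaps that the paper's route avoids. First, the ``Gr\"onwall-type stability estimate'' comparing two Maxwell--Schr\"odinger solutions with different initial data in the $\varepsilon$-weighted trace norm, with constants that are uniform in $N$ (equivalently in $\varepsilon$) on order-one time scales, is asserted but nowhere proven; it is not in this paper, and obtaining it would itself require semiclassical-structure/commutator arguments of the type in Section~\ref{section:properties of the MS equations}, since a naive Gr\"onwall constant for the nonlinear MS system degenerates as $\varepsilon\to 0$. Second, to evolve the regularized datum under the MS flow you need $\mathcal{W}^{-1}[m_{p,\sigma,R}]\in\mathfrak{S}_{+}^{2,1}(L^2(\mathbb{R}^3))$, in particular operator positivity. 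The positivity proven in the paper (\eqref{eq:positivity of the Husimi measure}) is positivity of the \emph{function} $m_{p,\sigma,R}$ on phase space, not of its Weyl quantization; your anti-Wick argument for $\sigma^2\geq\varepsilon$ does give operator positivity of $\mathcal{W}^{-1}[\mathscr{G}_{\sigma}*\mathcal{W}[p]]$, but the subsequent multiplication by the velocity cutoff $\eta_R(v)$ is not a Toeplitz quantization of a positive symbol and can destroy operator positivity, so this step does not go through as stated. In the paper the regularized datum only ever needs to be admissible for the Vlasov--Maxwell system (Proposition~\ref{proposition:global solutions VM}), which asks for $H_a^b$ regularity and compact velocity support but no operator-theoretic positivity; both obstructions disappear. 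If you replace your two middle legs by a single application of \cite[Theorem II.1]{LS2023} with mismatched initial data, the rest of your argument (including the exponent bookkeeping with $\Lambda=\varepsilon^{-1/1094}$, $\sigma=\Lambda^{-2}$ and the treatment of the photon observable) matches the paper.
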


\begin{remark}
Note that the convergence of Maxwell--Schr\"odinger solutions to the Vlasov--Maxwell solutions with a convergence rate $\varepsilon$ is established in \cite{LS2023}. However, the result requires regularity assumptions on the initial data which are expected to hold for mixed states but not when $p$ is a rank-$N$ projection. The poor rate of convergence in Theorem~\ref{theorem:derivation of the Vlasov-Maxwell equations} results from the regularization of the initial data for the Vlasov--Maxwell equations, which is necessary to relax the assumptions on $\mathcal{W}[p]$ (see proof of Proposition~\ref{proposition:comparison between Maxwell--Schroedinger and Vlasov--Maxwell}).
\end{remark}

\begin{remark}
The one-particle reduced density matrices for the ground states of non-interacting fermionic systems are expected to be of the form (or linear combinations thereof)
$p(x;y) = \varepsilon^{-3} (2 \pi)^{3/2} \varphi \left( \frac{x - y}{\varepsilon} \right) \chi \left( \frac{x+y}{2} \right)$, where $\chi$ determines the spatial density and  $\varphi$ describes the momentum distribution (see \cite{BPS2014}). The Wigner transform for states of this form is given by $\mathcal{W}[p](x,v)= \chi(x)  \mathcal{F}[\varphi](v)$. In this case, the assumptions in Theorem~\ref{theorem:derivation of the Vlasov-Maxwell equations} concerning the Wigner transform are satisfied,  provided that $\mathcal{F}[\varphi]$ has compact support and $\chi$decays sufficiently fast.
\end{remark}

\begin{remark}
\label{remark:remark on the semiclassical structure with Sobolev trace norm}
The conditions in \eqref{eq:Vlasov-Maxwell result more restricted initial semiclassical structure} are expected to hold for electrons with a semiclassical structure and momenta of at most order $N^{1/3}$.
In the case where $p$ is of the form $p = \sum_{j=1}^N \ket{\varphi_j} \bra{\varphi_j}$ with $\varphi_j \in L^2(\mathbb{R}^3)$ such that $\id_{\abs{k} \geq \widetilde{C} N^{1/3} } \mathcal{F}[\varphi_j](k) =0$ for all $j \in \{1,2, \ldots,N \}$, \eqref{eq:main result initial semiclassical structure} implies \eqref{eq:Vlasov-Maxwell result more restricted initial semiclassical structure} with $\widetilde{C}$ replaced by $ 4 \big( 1 + \widetilde{C} \big)^3$. This is proven in Section~\ref{subsection:Proof of Remark:remark on the semiclassical structure with Sobolev trace norm}.
\end{remark}

\subsection{Comparison with the Literature}

\label{subsection:comparison with the literature}

The research domain of this work is the rigorous derivation of effective evolution equations, a field that began in the 1970s, amongst others, with seminal works \cite{BH1977,GV1979a,GV1979b, H1974,L1975,S1981} by Braun, Ginibre, Hepp, Landford, Spohn, and Velo, and has since evolved into an active area of study.
To the best of our knowledge, the present work provides the first derivations of the Maxwell--Schr\"odinger and Vlasov--Maxwell equations from the Pauli--Fierz dynamics in the fermionic semiclassical mean-field regime. The most closely related works are \cite{BPS2014,LP2019,PP2016,LP2020}. In \cite{LP2020} (see also \cite{FL2023} for a slight improvement of the result) the Maxwell--Schr\"odinger equations were derived in a mean-field limit for a large number of  bosonic particles  initially assumed to be in a Bose--Einstein condensate. Here, as the particle number $N$ increases, the charge scales as $e = N^{- 1/2}$, while all other parameters remain fixed. This scenario is simpler as the semiclassical parameter remains constant, leading to a version of the Maxwell--Schr\"odinger equations that does not depend on $N$. In this case, $p$ is replaced by a function in $L^2(\mathbb{R}^3)$, which describes the evolution of the condensate. 
The work \cite{LP2019} derives a statement similar to that of the present article for the Nelson model with ultraviolet cutoff, leading to a fermionic variant of the Schr\"odinger--Klein--Gordon equations. The linear field coupling in the Nelson model simplifies the analysis and removes the need to control the kinetic energy of the particles outside the Slater determinant state. The scaling used in \cite{LP2019} and in the present work can be seen as the second-quantized analogue of the fermionic mean-field model considered in \cite{BPS2014}. The latter work was the first to use the propagation of semiclassical structure as a crucial element, which is also essential for deriving our results. In \cite{PP2016}, the Hartree-Fock equations were derived in the same setting as in \cite{BPS2014} and, additionally, in another scaling regime. The analysis in \cite{PP2016} introduced a first-quantization method, upon which the present work builds. It is therefore closely related from a methodological point of view.
The works \cite{AFH2022,K2009} are also highly relevant, as they demonstrate the emergence of classical electromagnetism by deriving the Abraham model from the Pauli--Fierz Hamiltonian in a limit where both the particles and the field approximately behave classically. Additionally, note the article \cite{S2010}, which derives the Maxwell--Schr\"odinger equations in a non-rigorous fashion by neglecting certain terms in the Pauli--Fierz Hamiltonian. 
Within the broader context of deriving effective equations from non-relativistic particle-quantum field models we would like to mention the following works on bosonic mean-field limits \cite{AF2014,AF2017,F2013,FL2023,FLLM2023,FLMP2023,LP2018,LP2020}, the strong coupling limit of the polaron \cite{FG2017,FS2014,G2017,LMRSS2021,LRSS2021,M2021}, partially classical limits \cite{BCFF2024,CF2018,CFO2023,GNV2006}, and other scaling regimes \cite{CM2024,D1979,T2002}.
Moreover, we note that the derivation of effective equation for fermionic systems with two-body interactions in the semiclassical mean-field regime began with the works \cite{NS1981,S1981}, which proved the convergence of the dynamics to the Vlasov equation. Convergence to the Hartree(-Fock) equations was initially proven for short  times and analytic potentials in \cite{EESY2004} and later extended \cite{BPS2014} to hold for all (semiclassical) times and for potentials that are at least twice differentiable. This result was further extended to relativistic dispersion relations \cite{BPS2014b}, mixed states with zero \cite{BJPSS2016} and nonvanishing pairing matrix \cite{MPS2024}, extended Fermi gases \cite{FPS2023,FPS2024},  and more singular interactions \cite{CLS2021,PRSS2017}. 
A norm approximation for the time-evolved state  of a homogeneous Fermi gas was provided in \cite{BNPSS2022}. Derivations of the Hartree-Fock equations in other scaling regimes we achieved in \cite{BBPPT2016,BGGM2003,
BGGM2004,FK2011,PP2016,P2017}, and more recent derivations of the Vlasov equation from many-fermion quantum dynamics were obtained in \cite{CLL2021,CLL2022}. 
For results related to the derivation of the Vlasov--Maxwell equations with extended charges from the Maxwell--Schr\"odinger equations, we refer to \cite[Section II.1]{LS2023}.

\section{Proof of the main results}

\label{section:proof of the main results}

\subsection{Proof of Theorem~\ref{theorem:main theorem}}

In this subsection, we introduce a functional and state five preliminary lemmas from which the proof of Theorem~\ref{theorem:main theorem} follows easily. The proofs of these lemmas are provided in later sections.
\begin{definition}
Let $N \in \mathbb{N}$, $\Psi_{N} \in  \mathcal{H}^{(N)}$, $\alpha \in \mathfrak{h}$, and let $p \in \mathfrak{S}^1(L^2(\mathbb{R}^3))$ be a rank-$N$ projection. 
\begin{itemize}
\item 
For $m \in \{1, \ldots, N\}$,  let $p_m: L^2(\mathbb{R}^{3N}) \rightarrow L^2(\mathbb{R}^{3N})$ and $q_m: L^2(\mathbb{R}^{3N}) \rightarrow L^2(\mathbb{R}^{3N})$ be given by
\begin{align}
(p_m f)(x_1, \ldots, x_N) &\coloneqq \int_{\mathbb{R}^3} p(x_m,y)  f(x_1, \ldots,x_{m-1},y, x_{m+1}, \ldots, x_N) \, dy
\end{align}
with $f \in L^2(\mathbb{R}^{3N})$, and $q_m \coloneqq \id_{L^2(\mathbb{R}^{3N})} - p_m$. We define $\beta^{a}: \mathcal{H}^{(N)} \times  \mathfrak{S}^1(L^2(\mathbb{R}^3)) \rightarrow [0, \infty)$ by 
\begin{align}
\beta^{a} [\Psi_{N}, p ]  &\coloneqq
\scp{\Psi_{N}}{q_1 \otimes \id_{\mathcal{F}} \Psi_{N}}  .
\end{align}
In the following, we omit the tensor product with the identity and use the shorthand notations $p_m$ and $q_m$ to represent $p_m \otimes \id_{\mathcal{F}}$ and $q_m \otimes \id_{\mathcal{F}}$, respectively. If the  rank-$N$ projection depends on time, we use the notations $p_{t,m}$ and $q_{t,m}$ to indicate this time dependence.

\item
Let $W$ and $\mathcal{N}$ be given as in \eqref{eq:definition Weyl operator} and \eqref{eq:definition number operator}. We define 
$\beta^{b,1}: \mathcal{H}^{(N)} \cap \mathcal{D} \big( \mathcal{N}^{1/2} \big) \times \mathfrak{h} \rightarrow [0,\infty)$ 
and
$\beta^{b,2}: \mathcal{H}^{(N)} \cap \mathcal{D} \big( \mathcal{N}^{1/4} \big) \times \mathfrak{h} \rightarrow [0,\infty)$ 
by
\begin{subequations}
\begin{align}
\beta^{b,1} [ \Psi_{N}, \alpha ]  &\coloneqq \varepsilon^4 \scp{\Psi_N}{W \big( \varepsilon^{-2} \alpha \big) \mathcal{N} \, W^* \big( \varepsilon^{-2} \alpha \big) \Psi_N} , 
\\
\beta^{b,2} [ \Psi_{N}, \alpha ] 
&\coloneqq \varepsilon^2 \scp{\Psi_N}{W \big( \varepsilon^{-2} \alpha \big) \mathcal{N}^{1/2} \, W^* \big( \varepsilon^{-2} \alpha \big) \Psi_N} .
\end{align} 
\end{subequations}

\item Moreover, let
$\beta: \mathcal{H}^{(N)} \cap \mathcal{D} \big( \mathcal{N}^{1/2} \big) \times \mathfrak{S}^1(L^2(\mathbb{R}^3)) \times \mathfrak{h}) \rightarrow [0, \infty)$ be given by
\begin{align}
\beta  [ \Psi_{N}, p , \alpha ]
&\coloneqq \beta^{a} [\Psi_{N}, p]
+  
\beta^{b,1} [\Psi_{N}, \alpha ]
+
\beta^{b,2} [ \Psi_{N}, \alpha ].
\end{align}

\end{itemize}
\end{definition}

The functional $\beta^a[\Psi_N,p]$ counts the relative number of fermions whose states belong to the kernel of the projection $p$. The functional $\beta^{b,1}[\Psi_N,\alpha]$ can be written as 
\begin{align}
\beta^{b,1} [\Psi_N, \alpha] = \sum_{\lambda =1,2} \int_{\mathbb{R}^3} \norm{\left( \varepsilon^2 a_{k,\lambda} - \alpha(k,\lambda) \right) \Psi_N}^2 \, dk .
\end{align}
It measures the relative number of excitations with respect to the coherent state $W \big( \varepsilon^{-2} \alpha \big) \Omega$. 
The functional $\beta^{b,2}$ is introduced primarily for technical reason, as it will be used in the energy estimates of Lemma~\ref{lemma:energy estimates} below.
For solutions of the Schr\"odinger equation \eqref{eq:Schroedinger equation Pauli-Fierz} and the Maxwell--Schr\"odinger equations \eqref{eq:Maxwell-Schroedinger equations}, we use the shorthand notations
\begin{align*}
\beta(t) = \beta  [ \Psi_{N,t}, p_t , \alpha_t ] ,
\;  \;
\beta^a(t) = \beta^a  [ \Psi_{N,t}, p_t ] ,
\;  \;
\beta^{b,1}(t) = \beta^{b,1}  [ \Psi_{N,t}, \alpha_t ] ,
\;  \;
\beta^{b,2}(t) = \beta^{b,2}  [ \Psi_{N,t}, \alpha_t ] .
\end{align*}
Note that for the considered initial states, $\Psi_{N} \in \mathcal{D} \left( (H_N^{\rm PF})^{1/2} \right) \cap \mathcal{D} \left( \mathcal{N}^{1/2} \right)$ and $(p, \alpha) \in \textfrak{S}_{+}^{2,1} (L^2(\mathbb{R}^3)) \times \mathfrak{h}_{1} \cap \dot{\mathfrak{h}}_{-1/2} $ the functional is well-defined for all $t \geq 0$ due Proposition~\ref{proposition:well-posedness of Maxwell-Schroedinger}
and the following lemma about the invariance of the many-body domain during the Pauli--Fierz time evolution.
\begin{lemma}
Let $N \in \mathbb{N}$ and let $\kappa$ satisfy $\big( \abs{\cdot}^{-1} + \abs{\cdot}^{1/2} \big) \mathcal{F}[\kappa] \in L^2(\mathbb{R}^3)$. Let $H_N^{\rm PF}$ be defined as in \eqref{eq:definition Pauli-Fierz Hamiltonian}. Then, 
$e^{- i \varepsilon^{-1} H_N^{\rm PF} t} \mathcal{D} \left( (H_N^{\rm PF})^{1/2} \right) \cap \mathcal{D} \left( \mathcal{N}^{1/2} \right) = \mathcal{D} \left( (H_N^{\rm PF})^{1/2} \right) \cap \mathcal{D} \left( \mathcal{N}^{1/2} \right)$. 
\end{lemma}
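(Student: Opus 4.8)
The plan is to show domain invariance by a standard approximation-and-commutator argument, using that $H_N^{\rm PF}$ is self-adjoint on $\mathcal{D}(-\sum_j\Delta_{x_j} + H_f)$ and that the relevant form domains are preserved by the unitary group. First I would reduce the claim to showing that $\mathcal{N}^{1/2}$ remains finite along the evolution on the smaller domain, since the invariance of $\mathcal{D}((H_N^{\rm PF})^{1/2})$ under $e^{-i\varepsilon^{-1}H_N^{\rm PF}t}$ is immediate from the spectral theorem. Concretely, for $\Psi_N$ in the intersection it suffices to bound $\|\mathcal{N}^{1/2}\Psi_{N,t}\|$ uniformly on compact time intervals; by the group property and time-reversal symmetry it then follows that the intersection is exactly invariant (not merely mapped into itself).

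The key step is a Grönwall-type estimate for $t\mapsto \langle \Psi_{N,t}, (\mathcal{N}+1)\Psi_{N,t}\rangle$. Formally,
\begin{align*}
\frac{d}{dt}\langle \Psi_{N,t}, (\mathcal{N}+1)\Psi_{N,t}\rangle
= \tfrac{i}{\varepsilon}\langle \Psi_{N,t}, [H_N^{\rm PF}, \mathcal{N}]\Psi_{N,t}\rangle,
\end{align*}
and since $\mathcal{N}$ commutes with $H_f$ and with the Coulomb term, the only contribution comes from the minimal-coupling term, where $[\,\mathcal{N}, \widehat{\vA}(x_j)\,]$ produces again terms linear in the creation/annihilation operators smeared against $\kappa*|k|^{-1/2}$. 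Using the standard bounds $\|\widehat{\vA}(x_j)(\mathcal{N}+1)^{-1/2}\|\le C C_\kappa$ and $\|\,i\varepsilon\nabla_j (H_N^{\rm PF}+C)^{-1/2}\|\le C$, together with $H_N^{\rm PF}+C\ge 0$ and its conservation along the flow, one controls the right-hand side by $C(\langle \Psi_{N,t},(\mathcal{N}+1)\Psi_{N,t}\rangle + \langle\Psi_{N,t},(H_N^{\rm PF}+C)\Psi_{N,t}\rangle)$; the energy term is constant, so Grönwall closes. To make this rigorous rather than formal, I would run the computation with a regularized number operator $\mathcal{N}_\delta = \mathcal{N}(1+\delta\mathcal{N})^{-1}$, which is bounded, so that all manipulations are justified on $\mathcal{D}((H_N^{\rm PF})^{1/2})$, derive the differential inequality with $\delta$-independent constants, and then let $\delta\to0$ by monotone convergence.

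The main obstacle is the rigorous justification of differentiating $\langle\Psi_{N,t},\mathcal{N}_\delta\Psi_{N,t}\rangle$ and of the operator-bound estimates on the commutator, since $\widehat{\vA}(x_j)$ and $i\varepsilon\nabla_j$ are unbounded and a priori $\Psi_{N,t}$ only lies in the form domain of $H_N^{\rm PF}$, not its operator domain. This is handled by the regularization above plus a density argument: one first proves the bound for $\Psi_N \in \mathcal{D}(H_N^{\rm PF}) \cap \mathcal{D}(\mathcal{N})$ (a core on which everything is classical), obtains a bound on $\|\mathcal{N}^{1/2}\Psi_{N,t}\|$ depending only on $\|(H_N^{\rm PF}+C)^{1/2}\Psi_N\|$ and $\|\mathcal{N}^{1/2}\Psi_N\|$, and then extends to all of $\mathcal{D}((H_N^{\rm PF})^{1/2})\cap\mathcal{D}(\mathcal{N}^{1/2})$ by approximating in the graph norm of $(H_N^{\rm PF}+C)^{1/2} + (\mathcal{N}+1)^{1/2}$ and using weak lower semicontinuity of $\Psi\mapsto\|\mathcal{N}^{1/2}\Psi\|$. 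Since these a priori bounds are exactly the type of estimates needed in Lemma~\ref{lemma:energy estimates} and are used throughout the paper, I expect the authors defer the detailed computation to a later section and here only record the qualitative invariance statement.
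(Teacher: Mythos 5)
Your outline is essentially the argument the paper relies on: the paper itself gives no proof here, referring instead to \cite[Lemma III.2]{FL2023} and \cite[Proposition 3.11]{AFH2022}, and those references carry out exactly your scheme (invariance of $\mathcal{D}((H_N^{\rm PF})^{1/2})$ by the spectral theorem, a Gr\"onwall bound on $\langle \Psi_{N,t},(\mathcal{N}+1)\Psi_{N,t}\rangle$ obtained via the regularized number operator $\mathcal{N}_\delta$, commutator estimates on the interaction, energy conservation, and a density argument). One small inaccuracy: for Pauli--Fierz the bound $\|i\varepsilon\nabla_j(H_N^{\rm PF}+C)^{-1/2}\|\leq C$ does not hold as stated, since controlling $-\varepsilon^2\Delta_j$ by the energy costs an additional $\varepsilon^4 C_\kappa^2(\mathcal{N}+1)$ from the $\widehat{\vA}$ terms; this is harmless because that extra term is precisely the quantity being Gr\"onwalled, but the estimate should be phrased as $-\varepsilon^2\Delta_j\lesssim H_N^{\rm PF}+C+\varepsilon^4 C_\kappa^2(\mathcal{N}+1)$ in form sense, and the contribution of $[\mathcal{N},\widehat{\vA}^2(x_j)]$ (quadratic in $a,a^*$) should be recorded alongside the linear one.
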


\begin{proof}
The lemma is proven in complete analogy to \cite[Lemma III.2]{FL2023}, which is obtained by slightly adapting the proof of \cite[Proposition 3.11]{AFH2022}.
\end{proof}

Within the proof of the main result, the functional $\beta(t)$ will be used to measure the distance between the time evolved many-body state and the product state \eqref{eq:time evolved product state}. This will be achieved through a Gr\"onwall estimate, which demonstrates that the value of the functional does not grow too rapidly during the time evolution. One of the key components of the Gr\"onwall estimate is the following lemma, which bounds the growth of the functional over time by its initial value and the quantities on the right-hand side. The proof is given in Section~\ref{subsection:Estimating the growth of correlations}.
\begin{lemma}
\label{lemma:estimating the time derivative of the functional}
Let $\kappa$ satisfy Assumption~\ref{assumption:cutoff function}. Let $\Psi_{N} \in \mathcal{D} \left( (H_N^{\rm PF})^{1/2} \right) \cap \mathcal{D} \left( \mathcal{N}^{1/2} \right)$ and let $(p,\alpha) \in \textfrak{S}_{+}^{2,1} (L^2(\mathbb{R}^3)) \times \mathfrak{h}_{1} \cap \dot{\mathfrak{h}}_{-1/2}$ with $p$ being a rank-$N$ projection. Let $(p_t, \alpha_t)$ be the unique solution of \eqref{eq:Maxwell-Schroedinger equations} with initial data $(p,\alpha)$, and let $\Psi_{N,t} = e^{- i \varepsilon^{-1} H_N^{\rm PF} t} \Psi_N$.
Then, there exists a constant $C>0$ such that
\begin{align}
\label{eq:Gronwall estimate final bound}
\beta(t)  &\leq \beta(0) 
+  \int_0^t 
\bigg[   C   \left<  C_{\kappa}^2 \right> \left< \norm{\alpha_s}_{\dot{\mathfrak{h}}_{1/2}}^2 
+ \norm{i \varepsilon \nabla p_s}_{\mathfrak{S}^{\infty}}^2 
\right>
\left< \varepsilon^{-1} N^{-1} \Xi[N,p_s]  \right>
\left( \beta(s) + \varepsilon \right)
\nonumber \\
&\qquad \qquad  \qquad
+ \norm{i \varepsilon \nabla_1 q_{s,1} \Psi_{N,s}}^2
\bigg] \, ds   
\end{align}
with
\begin{align}
\label{eq:definition function for semiclassical structure}
\Xi[N,p] \coloneqq  \sup_{k \in \mathbb{R}^3}
\left\{ (1 + \abs{k})^{-1} \norm{q e^{ikx} p}_{\mathfrak{S}^1} 
\right\} 
+ \norm{q i \varepsilon \nabla p}_{\mathfrak{S}^1} .
\end{align}
\end{lemma}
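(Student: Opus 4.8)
The plan is to differentiate $\beta(t) = \beta^a(t) + \beta^{b,1}(t) + \beta^{b,2}(t)$ in time, bound each term, and then integrate. First I would write $\frac{d}{dt}\beta^a(t) = \frac{d}{dt}\scp{\Psi_{N,t}}{q_{t,1}\Psi_{N,t}}$. This has two contributions: one from the explicit time dependence of $p_t$ through the Maxwell--Schr\"odinger flow, and one from commuting $q_{t,1}$ through the Pauli--Fierz generator $\varepsilon^{-1}H_N^{\rm PF}$. Using that $p_t$ solves the first equation of \eqref{eq:Maxwell-Schroedinger equations}, the time derivative of $q_{t,1}$ is given by a commutator with the Maxwell--Schr\"odinger one-body Hamiltonian $h^{\rm MS}_t = (-i\varepsilon\nabla - \kappa*\vA_{\alpha_t})^2 + K*\rho_{p_t} - X_{p_t}$; when combined with the commutator arising from $H_N^{\rm PF}$, the free and mean-field parts largely cancel, leaving only error terms driven by (i) the difference between the quantized field $\widehat{\vA}$ and the classical field $\vA_{\alpha_t}$, controlled by $\beta^{b,\cdot}$, (ii) the difference between the pair interaction and its mean-field/exchange approximation, controlled by $\beta^a$ via the standard fermionic $p_m/q_m$ combinatorics (as in \cite{PP2016,BPS2014}), and (iii) the kinetic energy of the particle outside the Slater determinant, which is exactly the term $\norm{i\varepsilon\nabla_1 q_{s,1}\Psi_{N,s}}^2$ appearing on the right-hand side.

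For $\beta^{b,1}$ and $\beta^{b,2}$, I would use the shift property \eqref{eq:Weyl operators shifting property} to move to the interaction picture around the coherent state: writing $\xi_{N,t} = W^*(\varepsilon^{-2}\alpha_t)\Psi_{N,t}$, the functionals become $\beta^{b,1}(t) = \varepsilon^4\scp{\xi_{N,t}}{\mathcal{N}\xi_{N,t}}$ and similarly for $\beta^{b,2}$ with $\mathcal{N}^{1/2}$. Differentiating, one picks up the commutator of $\mathcal{N}$ (resp.\ $\mathcal{N}^{1/2}$) with the Weyl-conjugated Hamiltonian; the free field part $H_f$ commutes with $\mathcal{N}$, the second equation of \eqref{eq:Maxwell-Schroedinger equations} is designed so that the linear-in-$a^\#$ source term cancels, and what remains is a field-matter coupling term that is again controlled by $\beta^a$ (the current produced by particles outside the Slater determinant) together with lower-order terms controlled by $\beta^{b,1}, \beta^{b,2}$ themselves. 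The number operator bounds on the domain from the invariance lemma, together with $\varepsilon = N^{-1/3}$ and the smallness of the coupling $\varepsilon^2\kappa*\widehat{\vA}$, make the remaining commutators of the correct order; the $\varepsilon$ on the right-hand side of \eqref{eq:Gronwall estimate final bound} absorbs the residual vacuum/commutator contributions that do not vanish identically (cf.\ the $\varepsilon^4\||\cdot|^{-1/2}\mathcal{F}[\kappa]\|^2$ appearing in the energy identity in the second Remark).

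Collecting the three estimates, the derivative $\dot\beta(t)$ is bounded by a coefficient of the form $C\langle C_\kappa^2\rangle\langle\norm{\alpha_t}_{\dot{\mathfrak h}_{1/2}}^2 + \norm{i\varepsilon\nabla p_t}_{\mathfrak S^\infty}^2\rangle\langle\varepsilon^{-1}N^{-1}\Xi[N,p_t]\rangle$ times $(\beta(t)+\varepsilon)$, plus the kinetic term $\norm{i\varepsilon\nabla_1 q_{t,1}\Psi_{N,t}}^2$; the factor $\varepsilon^{-1}N^{-1}\Xi[N,p_t]$ enters precisely because each off-diagonal insertion of $e^{ikx}$ or $i\varepsilon\nabla$ between $p_t$ and $q_t$ in the fermionic combinatorics costs an $\mathfrak S^1$-norm that is renormalized by the $N^{-1}$ from the mean-field scaling and the $\varepsilon^{-1}$ from the field coupling prefactor. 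Integrating from $0$ to $t$ yields \eqref{eq:Gronwall estimate final bound}. The main obstacle is the careful bookkeeping in step (ii)--(iii): one must commute $q_{t,1}$ past the two-body interaction $\frac{1}{2N}\sum_{i\neq j}K(x_i-x_j)$ and past the field-coupling cross term $-2\varepsilon\sum_j(-i\varepsilon\nabla_j)\cdot\kappa*\widehat{\vA}(x_j)$, insert resolutions of identity $1 = p_{t,m}+q_{t,m}$, exploit the symmetry of $\Psi_{N,t}$ to reduce to a bounded number of representative terms, and show that every term is either a genuine commutator (hence controlled by $\beta^a$ with the right power of $\Xi$) or is the isolated kinetic term $\norm{i\varepsilon\nabla_1 q_{t,1}\Psi_{N,t}}^2$ that cannot be absorbed at this stage and is therefore carried along to be handled later by the energy estimates (Lemma~\ref{lemma:energy estimates}). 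Keeping the $\kappa$-dependence explicit through $C_\kappa$ and $B_\kappa$, and making sure no term secretly requires more than $(\abs{\cdot}^{-1}+\abs{\cdot}^{1/2})\mathcal{F}[\kappa]\in L^2$, is the delicate part.
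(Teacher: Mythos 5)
Your proposal follows essentially the same route as the paper: the functional is split into $\beta^a$, $\beta^{b,1}$, $\beta^{b,2}$; the first is controlled via the commutator of $q_{t,1}$ with $H_N^{\rm PF}-H_{p,\alpha}(t)$, insertion of $\id = p_{t,m}+q_{t,m}$, and the fermionic combinatorics weighted by $\varepsilon^{-1}N^{-1}\Xi[N,p_s]$; the latter two are controlled via the fluctuation dynamics around $W(\varepsilon^{-2}\alpha_t)\Omega$, with the linear-in-$a^{\#}$ source cancelled by the second Maxwell--Schr\"odinger equation. Two bookkeeping points differ from the paper: there the kinetic term $\norm{i\varepsilon\nabla_1 q_{s,1}\Psi_{N,s}}^2$ arises from the $q\cdots q$ contributions in the $\beta^{b,1}$ and $\beta^{b,2}$ estimates rather than from $\beta^a$, and the formal differentiation of the unbounded $\mathcal{N}$ and $\mathcal{N}^{1/2}$ is made rigorous through the regularization $\mathcal{N}_{\delta}=\mathcal{N}e^{-\delta\mathcal{N}}$ and a limiting argument.
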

The previous estimate allows us to conclude the smallness of $\beta(t)$ for times of order one, provided this is initially true, and we can ensure that $\norm{i \varepsilon \nabla_1 q_{s,1} \Psi_{N,s}}^2$ is small, and that both $\norm{i \varepsilon \nabla p_s}_{\mathfrak{S}^{\infty}}$ and $\varepsilon^{-1} N^{-1} \Xi[N,p_s]$ are of order one for all $s \in [0,t]$. Note that $\norm{\alpha_s}_{\dot{\mathfrak{h}}_{1/2}}$ is of order one because of \eqref{eq:bound for the h-1 norm of alpha}.

The quantity $\norm{i \varepsilon \nabla_1 q_{s,1} \Psi_{N,s}}^2$ measures the kinetic energy per particle of the fraction of fermions whose states belong to the kernel of the projection $p$. Both this quantity and the field energy of the relative number of excitations with respect to the coherent state can be controlled by  the functional $\beta$ and the distance between the many-body energy and the energy functional of the Maxwell--Schr\"odinger system, provided $\norm{i \varepsilon \nabla p}_{\mathfrak{S}^{\infty}}$ and $\varepsilon^{-1} N^{-1} \Xi[N,p]$ are of order one.
\begin{lemma}
\label{lemma:energy estimates}
Let $N \in \mathbb{N}$, $\Psi_N \in \mathcal{D}(H_N^{\rm{PF}})$ such that $\norm{\Psi_N} = 1$, $p \in \mathfrak{S}_{+}^{2,1}(L^2(\mathbb{R}^3))$ be a rank-$N$ projection, and $\alpha \in \mathfrak{h}_1$. Moreover, let $\mathcal{E}^{\rm{MS}}$ and $\Xi$ be defined as in  \eqref{eq:Maxwell-Schroedinger with exchange term energy definition} and \eqref{eq:definition function for semiclassical structure}. 
Then, there exits a constant $C>0$ such that
\begin{align}
&\norm{i \varepsilon \nabla_1 q_1 \Psi_N}^2 + \varepsilon^4 
\scp{\Psi_N}{W \big( \varepsilon^{-2} \alpha \big) H_f W^* \big( \varepsilon^{-2} \alpha \big) \Psi_N}
\nonumber \\
&\quad \leq 
2 N^{-1} \Big|   \scp{\Psi_N}{H_N^{\rm{PF}} \Psi_N}
- \mathcal{E}^{\rm{MS}}[p,\alpha] \Big|
\nonumber \\
&\qquad 
+ C \left< C_{\kappa}^2 \right>
\left< \norm{\alpha}_{\mathfrak{h}_1} 
+ \norm{\alpha}_{\dot{\mathfrak{h}}_{1/2}}^2 + \norm{i \varepsilon \nabla p}_{\mathfrak{S}^{\infty}}^2 \right>
\left( \beta[\Psi_N,p,\alpha] + \varepsilon^2 
+ N^{-1} \Xi[N,p]
\right) .
\end{align}
\end{lemma}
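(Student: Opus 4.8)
The statement is an a priori energy bound of the kind familiar from coherent-state mean-field arguments: the gap between the microscopic energy per particle and the Maxwell--Schr\"odinger energy controls both the kinetic energy per particle of the fermions occupying the range of $q$ and the field energy of the photonic fluctuations. The plan is to establish the lower bound
\[
 N^{-1}\scp{\Psi_N}{H_N^{\rm PF}\Psi_N} - N^{-1}\mathcal{E}^{\rm MS}[p,\alpha]
 \;\geq\; \tfrac12\norm{i\varepsilon\nabla_1 q_1\Psi_N}^2 + \tfrac12\,\varepsilon^4\scp{\Psi_N}{W(\varepsilon^{-2}\alpha)\,H_f\,W^*(\varepsilon^{-2}\alpha)\Psi_N} - \mathcal{R},
\]
with $\mathcal{R}$ bounded by the second term on the right-hand side of the claim; the factor $\tfrac12$ is then absorbed by the factor $2$ in the statement.

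First I would conjugate by the Weyl operator, writing $\widetilde{\Psi}_N := W^*(\varepsilon^{-2}\alpha)\Psi_N$. Since $W(\varepsilon^{-2}\alpha)$ acts only on $\mathcal{F}$ it commutes with $i\varepsilon\nabla_1 q_1$, so $\norm{i\varepsilon\nabla_1 q_1\Psi_N} = \norm{i\varepsilon\nabla_1 q_1\widetilde{\Psi}_N}$, $\norm{q_1\widetilde{\Psi}_N}^2 = \beta^a[\Psi_N,p]$, and the fluctuation field energy equals $\varepsilon^4\scp{\widetilde{\Psi}_N}{H_f\widetilde{\Psi}_N} = \sum_{\lambda}\int_{\mathbb{R}^3}\abs{k}\,\norm{(\varepsilon^2 a_{k,\lambda}-\alpha(k,\lambda))\Psi_N}^2\,dk$. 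By the shift property \eqref{eq:Weyl operators shifting property}, $W^*(\varepsilon^{-2}\alpha)\,\widehat{\vA}(x)\,W(\varepsilon^{-2}\alpha) = \widehat{\vA}(x) + \varepsilon^{-2}\vA_{\alpha}(x)$, so $W^*(\varepsilon^{-2}\alpha)H_N^{\rm PF}W(\varepsilon^{-2}\alpha)$ has kinetic operator $\sum_j(-i\varepsilon\nabla_j - \kappa*\vA_{\alpha}(x_j) - \varepsilon^2\kappa*\widehat{\vA}(x_j))^2$, the unchanged interaction, and $\varepsilon H_f$ replaced by $\varepsilon H_f$ plus a term linear in the creation/annihilation operators plus the constant $\varepsilon^{-3}\norm{\alpha}_{\dot{\mathfrak{h}}_{1/2}}^2 = N\norm{\alpha}_{\dot{\mathfrak{h}}_{1/2}}^2$, which cancels the field part of $\mathcal{E}^{\rm MS}[p,\alpha]$.

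Next I would treat the kinetic term. Using the Coulomb gauge $\nabla\cdot\widehat{\vA}=0$ (hence $\nabla\cdot\vA_{\alpha}=0$) and setting $D_j := -i\varepsilon\nabla_j - \kappa*\vA_{\alpha}(x_j)$, one has $(-i\varepsilon\nabla_j - \kappa*\vA_{\alpha}(x_j) - \varepsilon^2\kappa*\widehat{\vA}(x_j))^2 = D_j^2 - 2\varepsilon^2\,\kappa*\widehat{\vA}(x_j)\cdot D_j + \varepsilon^4(\kappa*\widehat{\vA}(x_j))^2$, the last summand being kept (it is non-negative). By antisymmetry $\scp{\widetilde{\Psi}_N}{\sum_j D_j^2\widetilde{\Psi}_N} = N\scp{\widetilde{\Psi}_N}{D_1^2\widetilde{\Psi}_N}$, and $D_1^2 = D_1 q_1 D_1 + D_1 p_1 D_1$ splits this into two non-negative pieces. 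From $N\scp{\widetilde{\Psi}_N}{D_1 q_1 D_1\widetilde{\Psi}_N} = N\norm{q_1 D_1\widetilde{\Psi}_N}^2$ I would extract $\tfrac12\norm{i\varepsilon\nabla_1 q_1\Psi_N}^2$: commute $q_1$ through $D_1$ — the commutator $[q_1,D_1] = [p_1, -i\varepsilon\nabla_1] - [p_1, \kappa*\vA_{\alpha}(x_1)]$ has operator norm $\leq C\langle C_{\kappa}\rangle\langle\norm{\alpha}_{\dot{\mathfrak{h}}_{1/2}}\rangle\,\Xi[N,p]$ by the trace-norm bounds defining $\Xi$ (writing $\kappa*\vA_{\alpha}$ as a Fourier superposition of $e^{ikx}$ with weight summable against $(1+\abs{k})$) — and then split $\norm{i\varepsilon\nabla_1 q_1\widetilde{\Psi}_N}$ off $\norm{D_1 q_1\widetilde{\Psi}_N}$ using $\norm{\kappa*\vA_{\alpha}}_{L^\infty}\leq C C_{\kappa}\norm{\alpha}_{\dot{\mathfrak{h}}_{1/2}}$ and $\norm{q_1\widetilde{\Psi}_N}^2 = \beta^a$. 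For the piece $N\scp{\widetilde{\Psi}_N}{D_1 p_1 D_1\widetilde{\Psi}_N} = \tr((p\,N\gamma^{(1,0)}_{\Psi_N}\,p)\,D^2)$ I would compare with the Maxwell--Schr\"odinger term $\tr(pD^2) = \tr(p\,D^2\,p)$: the difference is $\tr(p\,\Delta\,p\,D^2)$ with $\Delta := \gamma^{(1,0)}_{\Psi_N} - N^{-1}p$, and the Pauli principle $N\gamma^{(1,0)}_{\Psi_N}\leq\id$ forces $p\Delta p\leq 0$, so $|\tr(p\Delta p\,D^2)| = \norm{D(-p\Delta p)^{1/2}}_{\mathfrak{S}^2}^2 \leq \norm{Dp}_{\mathfrak{S}^\infty}^2\,\tr(-p\Delta p) \leq C\langle\norm{i\varepsilon\nabla p}_{\mathfrak{S}^\infty}^2 + C_{\kappa}^2\norm{\alpha}_{\dot{\mathfrak{h}}_{1/2}}^2\rangle\,\beta^a[\Psi_N,p]$, which is crucially \emph{linear} in $\beta^a$ (using $\tr(-p\Delta p) = \beta^a$). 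The interaction is handled by the standard mean-field-plus-exchange estimate: expanding $\tfrac{1}{2N}\sum_{i\neq j}K(x_i-x_j)$ against the two-particle reduced density matrix, comparing with $\tfrac12\tr((K*\rho_{p} - X_{p})p)$, and bounding the difference by $C B_{\kappa}^2\,\beta^a[\Psi_N,p]$ via the combinatorics of $p_m, q_m$ (again using the Pauli principle) and $\norm{K}_{L^\infty}\leq C B_{\kappa}^2\leq C C_{\kappa}^2$, which holds because $\mathcal{F}[K]\geq 0$ and $\mathcal{F}[K](k)$ is proportional to $\mathcal{F}[\kappa](k)^2\abs{k}^{-2}$.

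The main obstacle is the bookkeeping of the remaining terms linear in the photon operators: the paramagnetic coupling $-2\varepsilon^2\sum_j\kappa*\widehat{\vA}(x_j)\cdot D_j$, the residual term $\varepsilon^{-1}\sum_{\lambda}\int_{\mathbb{R}^3}\abs{k}(\overline{\alpha(k,\lambda)}\,a_{k,\lambda} + \alpha(k,\lambda)\,a^*_{k,\lambda})\,dk$ left over by the Weyl shift of $\varepsilon H_f$, and the non-negative diamagnetic term $\varepsilon^4\sum_j(\kappa*\widehat{\vA}(x_j))^2$. Splitting $\widehat{\vA}=\widehat{\vA}^+ + \widehat{\vA}^-$ and using the standard bounds $\norm{\kappa*\widehat{\vA}^{\mp}(x)\Phi}\leq C C_{\kappa}\norm{(\mathcal{N}+1)^{1/2}\Phi}$ — for which the regularity $\big\lVert\abs{\cdot}^{-1/2}\mathcal{F}[\kappa]\big\rVert_{L^2}\leq C_{\kappa}$ is needed to make the form factors square-integrable — together with $\varepsilon^2\norm{\mathcal{N}^{1/2}\widetilde{\Psi}_N} = \beta^{b,1}[\Psi_N,\alpha]^{1/2}$, the quantity $\beta^{b,2}[\Psi_N,\alpha] = \varepsilon^2\scp{\widetilde{\Psi}_N}{\mathcal{N}^{1/2}\widetilde{\Psi}_N}$ (which is where $\beta^{b,2}$ enters, in the single-operator contractions), and Cauchy--Schwarz plus completing the square in the photon variables, I expect to obtain a lower bound of these terms by $-\tfrac12\,\varepsilon^4\scp{\widetilde{\Psi}_N}{H_f\widetilde{\Psi}_N}$ up to errors of the form $C\langle C_{\kappa}^2\rangle\langle\norm{\alpha}_{\mathfrak{h}_1} + \norm{\alpha}_{\dot{\mathfrak{h}}_{1/2}}^2 + \norm{i\varepsilon\nabla p}_{\mathfrak{S}^\infty}^2\rangle(\beta[\Psi_N,p,\alpha] + \varepsilon^2)$ — the $\varepsilon^2$ coming from the $+1$ in $(\mathcal{N}+1)^{1/2}$, the weight $\norm{\alpha}_{\mathfrak{h}_1}$ from the factor $\abs{k}$ in the residual term, and $\norm{i\varepsilon\nabla p}_{\mathfrak{S}^\infty}$ from estimates of $\norm{D_1\widetilde{\Psi}_N}$. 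The delicate point is that the cancellations among these linear-in-field terms must be exploited so that only half the fluctuation field energy is absorbed and no term of order one survives unmultiplied by $\beta$, since a naive Cauchy--Schwarz would leave such a term over. Collecting the kinetic, interaction and field contributions and rearranging then yields the displayed lower bound, hence the claim.
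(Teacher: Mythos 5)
Your architecture is essentially the paper's: Weyl conjugation, insertion of $p_1,q_1$ into the kinetic term, the Pauli principle (in the paper, the equivalent $P^{\chi}/Q^{\chi}$ combinatorics of Lemma~\ref{lemma:diagonalisation of the p-p term} and Lemma~\ref{lemma:estimates of the p-p term}) to get errors \emph{linear} in $\beta^a$, standard mean-field estimates for the interaction, and $\beta^{b,1},\beta^{b,2}$ for the photon terms. But two of your steps would fail as written. The most serious concerns the residual $\varepsilon^{2}\big(a(\abs{\cdot}\alpha)+a^{*}(\abs{\cdot}\alpha)\big)$ produced by the Weyl shift of $\varepsilon H_f$: lower-bounding it by $-\tfrac12\varepsilon^{4}\scp{\widetilde{\Psi}_N}{H_f\widetilde{\Psi}_N}$ via completion of the square necessarily leaves a remainder of order $\norm{\alpha}_{\dot{\mathfrak{h}}_{1/2}}^{2}$, i.e.\ an $O(1)$ term not multiplied by $\beta$ --- precisely the failure you flag but do not resolve. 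The paper never borrows from $H_f$ at all; it treats the field-energy difference exactly and bounds the linear residual by $C\norm{\alpha}_{\mathfrak{h}_1}\big(\beta^{b,2}[\Psi_N,\alpha]+\varepsilon^{2}\big)$ by distributing $(\mathcal{N}+1)^{\pm 1/4}$ over the two slots of the inner product; this is the sole reason $\beta^{b,2}$ appears in the functional. You name the right tool but attach it to the wrong mechanism, and without the correct one the claimed estimate is not reached.

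Second, the assertion that $[q_1,D_1]$ (with $D=-i\varepsilon\nabla-\kappa*\vA_{\alpha}$) has \emph{operator} norm $\lesssim \Xi[N,p]$ is vacuous: $\Xi[N,p]$ is of size $N^{2/3}$, and trace norms only dominate operator norms. The workable route is $[q,D]=qDp-pDq$, where the $pDq$ part acts on $q_1\Psi_N$ and is controlled by $\sqrt{\beta^a}$, while the $qDp$ part must be converted through the antisymmetry estimate \eqref{eq:technical estimates using antisymmetry 2} into $N^{-1}\norm{qDp}_{\mathfrak{S}^2}^{2}\le N^{-1}\norm{Dp}_{\mathfrak{S}^{\infty}}\norm{qDp}_{\mathfrak{S}^1}$; this is exactly how the contribution $N^{-1}\Xi[N,p]$ enters the final bound. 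Relatedly, the identity $N\scp{\widetilde{\Psi}_N}{D_1p_1D_1\widetilde{\Psi}_N}=\tr\big(pN\gamma^{(1,0)}_{\Psi_N}pD^{2}\big)$ is false; the correct quantity is $N\tr\big(pD\gamma^{(1,0)}_{\Psi_N}Dp\big)$, and only after inserting $1=p+q$ around $\gamma^{(1,0)}_{\Psi_N}$ does one isolate the Pauli-principle term $\tr\big(pDp\,\Delta\,pDp\big)$ with $p\Delta p\le 0$, the cross terms again requiring the antisymmetry and semiclassical trace-norm bounds. These defects are repairable along the paper's lines, but as stated they are gaps rather than details.
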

The proof of Lemma~\ref{lemma:energy estimates} is given in Section~\ref{subsection:energy estimates}.
Showing that $\varepsilon^{-1} N^{-1} \Xi[N,p_t]$ remains of order one under the assumption that this is initially true, is usually referred to as the propagation of the semiclassical structure. Its importance for the derivation of fermionic mean-field equations was first noted in \cite{BPS2014}. The proof that this also holds for the model under consideration relies on the assumption that $\norm{i \varepsilon \nabla p_t }_{\mathfrak{S}^{\infty}}$ is of order one, a condition that is established by the following lemma. The proof is provided in Section~\ref{subsection:proof of the propagation of the semiclassical structure}.
\begin{lemma}
\label{lemma: semiclassical structure}
Let $(p, \alpha) \in \textfrak{S}_{+}^{2,1} (L^2(\mathbb{R}^3)) \times \mathfrak{h}_{1} \cap \dot{\mathfrak{h}}_{-1/2} $
with $p$ being a rank-$N$ projection.
Let $(p_t, \alpha_t)$ be the unique solution of \eqref{eq:Maxwell-Schroedinger equations} with initial data $(p,\alpha)$ and let $\Xi[N,p_t]$ be defined as in \eqref{eq:definition function for semiclassical structure}. Then, there exists a constant $C>0$ such that the estimates
\begin{align}
\label{eq:momentum estimate}
\norm{i \varepsilon \nabla p_t }_{\mathfrak{S}^{\infty}}
&\leq
\left( \norm{i \varepsilon \nabla p}_{\mathfrak{S}^{\infty}}
+  C \big[ \kappa, p, \alpha \big] t \right)  \exp \left[ C \left[ \kappa, p, \alpha \right]  t \right]
\end{align}
and
\begin{align}
\label{eq: semiclassical structure}
\Xi[N,p_t] 
&\leq 
\Xi[N,p] 
\exp \left[  \left< C_{\kappa} \right>^2 \left<
 \norm{\alpha}_{\mathfrak{h}_1}^2
+ \norm{i \varepsilon \nabla p}_{\mathfrak{S}^{\infty}}  \right>
 \exp \left[ C[ \kappa, p, \alpha] t \right] \right]  
\end{align}
hold with
$C \big[ \kappa , p, \alpha \big]
= C \left< C_{\kappa} \right>^2
\left<   N^{-1} \mathcal{E}^{\rm{MS}}[p,\alpha] + C C_{\kappa}^2
\right>$.
\end{lemma}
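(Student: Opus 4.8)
The plan is to run Gr\"onwall arguments for the two quantities along the Maxwell--Schr\"odinger flow, following the strategy introduced in \cite{BPS2014} and its first-quantized reformulation in \cite{PP2016}. Write $h_t \coloneqq \big( -i\varepsilon\nabla - \kappa*\vA_{\alpha_t} \big)^2 + K*\rho_{p_t} - X_{p_t}$ for the mean-field generator and $D_t \coloneqq -i\varepsilon\nabla - \kappa*\vA_{\alpha_t}$ for the kinetic momentum, and let $\mathcal{U}_t$ be the unitary two-parameter propagator solving $i\varepsilon\partial_t\mathcal{U}_t = h_t\mathcal{U}_t$, $\mathcal{U}_0 = \id$, so that $p_t = \mathcal{U}_t\,p\,\mathcal{U}_t^*$. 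The mechanism is that for a (sufficiently regular) operator $A$ one has $i\varepsilon\,\partial_t\big( \mathcal{U}_t^*[A,p_t]\mathcal{U}_t \big) = \mathcal{U}_t^*\big[ [A,h_t], p_t \big]\mathcal{U}_t$; that is, the ``transport part'' $[h_t,\cdot\,]$ of the time derivative is absorbed into the conjugation by $\mathcal{U}_t$, so that by unitarity it only remains to integrate in time the ``source'' $\norm{[[A,h_t],p_t]}_{\mathfrak{S}^p}$. Since the constant in \eqref{eq: semiclassical structure} depends on $\norm{i\varepsilon\nabla p_t}_{\mathfrak{S}^\infty}$, I would first establish \eqref{eq:momentum estimate} and then feed it into the proof of \eqref{eq: semiclassical structure}.

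For \eqref{eq:momentum estimate} take $A = i\varepsilon\nabla$. The commutator $[i\varepsilon\nabla, h_t]$ equals $\varepsilon$ times a first-order differential operator whose coefficients are (first and second) derivatives of $K*\rho_{p_t}$ and of $\kappa*\vA_{\alpha_t}$, and whose occurrence of $D_t$, once applied to $p_t$, is controlled by $\norm{i\varepsilon\nabla p_t}_{\mathfrak{S}^\infty} + \norm{\kappa*\vA_{\alpha_t}}_{L^\infty}$ (its commutators with the multiplication coefficients costing a further $\varepsilon$). The exchange term produces a contribution of order $\varepsilon N^{-1}$ after writing $X_{p_t} = N^{-1}(2\pi)^{-3/2}\int \mathcal{F}[K](\ell)\, e^{i\ell x} p_t e^{-i\ell x}\, d\ell$, which is legitimate since $\mathcal{F}[K]\in L^1(\mathbb{R}^3)$. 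Bounding the relevant Sobolev norms of $K$ and $\kappa*\vA_{\alpha_t}$ by weighted $L^2$-norms of $\mathcal{F}[\kappa]$ times $\norm{\alpha_t}_{\dot{\mathfrak{h}}_{1/2}}$, and using the conservation of mass and energy from Proposition~\ref{proposition:well-posedness of Maxwell-Schroedinger} to bound $\norm{\alpha_t}_{\dot{\mathfrak{h}}_{1/2}}^2$ by $N^{-1}\mathcal{E}^{\mathrm{MS}}[p,\alpha]$, the Duhamel identity becomes a linear integral inequality for $\norm{i\varepsilon\nabla p_t}_{\mathfrak{S}^\infty}$ whose Gr\"onwall resolution is \eqref{eq:momentum estimate}.

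For \eqref{eq: semiclassical structure} note first that $\tfrac12\norm{[e^{ikx},p_t]}_{\mathfrak{S}^1} \le \sup_{\pm}\norm{q_t e^{\pm ikx}p_t}_{\mathfrak{S}^1} \le \norm{[e^{ikx},p_t]}_{\mathfrak{S}^1}$ (and likewise for $i\varepsilon\nabla$), so it is equivalent to propagate the two building blocks $\norm{[e^{ikx},p_t]}_{\mathfrak{S}^1}$ and $\norm{[i\varepsilon\nabla,p_t]}_{\mathfrak{S}^1}$ of $\Xi[N,p_t]$. Since $[e^{ikx},K*\rho_{p_t}] = 0$, the source for the first one reduces to the kinetic and exchange parts; here one uses the algebraic identity $[e^{ikx},D_t^2] = -\varepsilon\,\big\{ e^{ikx}, k\cdot D_t \big\}$, which by the Leibniz rule $[\{A,B\},C] = \{A,[B,C]\} + \{[A,C],B\}$ gives $\big[[e^{ikx},D_t^2],p_t\big] = -\varepsilon\big( \big\{ e^{ikx}, [k\cdot D_t, p_t] \big\} + \big\{ [e^{ikx},p_t], k\cdot D_t \big\} \big)$, with $\norm{[k\cdot D_t,p_t]}_{\mathfrak{S}^1} \le \abs{k}\,\norm{[D_t,p_t]}_{\mathfrak{S}^1}$ and $\norm{[D_t,p_t]}_{\mathfrak{S}^1} \le \norm{[i\varepsilon\nabla,p_t]}_{\mathfrak{S}^1} + \norm{[\kappa*\vA_{\alpha_t},p_t]}_{\mathfrak{S}^1}$. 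In all these terms the vector-potential contributions are handled by expanding $\kappa*\vA_{\alpha_t}$ in plane waves, so that e.g.\ $\norm{[\kappa*\vA_{\alpha_t},p_t]}_{\mathfrak{S}^1} \le C\int \abs{\mathcal{F}[\kappa*\vA_{\alpha_t}](\ell)}\,\norm{[e^{i\ell x},p_t]}_{\mathfrak{S}^1}\,d\ell \le C\,C_\kappa\,\norm{\alpha_t}_{\mathfrak{h}_1}\,\Xi[N,p_t]$; the finiteness of the $\ell$-integral is precisely where the $\abs{\cdot}^{-1/2}$ part of $C_\kappa$ (mirroring the photon dispersion factor $\abs{\ell}^{-1/2}$ in $\vA$) enters. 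The unbounded operator $k\cdot D_t$ appearing in $\{[e^{ikx},p_t],k\cdot D_t\}$ is moved through the projections $p_t$ and $q_t = 1-p_t$ so that it only acts either on $p_t$ (bounded in $\mathfrak{S}^\infty$ by $\abs{k}$ times \eqref{eq:momentum estimate}) or inside a commutator $[k\cdot D_t,p_t]$ (trace class and $O(\abs{k})$); combined with the elementary observation that $(1+\abs{k})^{-1}\norm{q_t e^{ikx}p_t}_{\mathfrak{S}^1} \le \varepsilon\,\tr p_t$ for $\abs{k}\gtrsim\varepsilon^{-1}$, this keeps the Gr\"onwall rate independent of $k$. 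The exchange term again contributes $O(\varepsilon^{-1}N^{-1})\Xi[N,p_t]$, and the source for $\norm{[i\varepsilon\nabla,p_t]}_{\mathfrak{S}^1}$ is treated analogously, the commutators $[i\varepsilon\nabla, K*\rho_{p_t}]$ and $[i\varepsilon\nabla, D_t^2]$ being expanded in plane waves with $\ell$-weights made integrable by $C_\kappa$. Summing over the two building blocks and inserting \eqref{eq:momentum estimate} and \eqref{eq:bound for the h-1 norm of alpha} yields $\Xi[N,p_t] \le \Xi[N,p] + C\int_0^t \langle C_\kappa\rangle^2\big\langle \norm{\alpha_s}_{\mathfrak{h}_1}^2 + \norm{i\varepsilon\nabla p_s}_{\mathfrak{S}^\infty} \big\rangle\,\Xi[N,p_s]\,ds$, and Gr\"onwall gives \eqref{eq: semiclassical structure}.

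The main obstacle is the kinetic/vector-potential part of the source for \eqref{eq: semiclassical structure}: the commutator $[e^{ikx},D_t^2]$ unavoidably produces the unbounded operator $k\cdot D_t$, and the naive estimate, pairing $\norm{q_t e^{ikx}p_t}_{\mathfrak{S}^1} \le (1+\abs{k})\Xi[N,p_t]$ with $\norm{(k\cdot D_t)p_t}_{\mathfrak{S}^\infty} \le \abs{k}\big(\norm{i\varepsilon\nabla p_t}_{\mathfrak{S}^\infty}+\norm{\kappa*\vA_{\alpha_t}}_{L^\infty}\big)$, loses an extra power of $\abs{k}$ and would make the Gr\"onwall rate $k$-dependent and hence useless. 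Getting the $k$-uniform bound requires the cancellation encoded in $[e^{ikx},D_t^2] = -\varepsilon\{e^{ikx},k\cdot D_t\}$ (which avoids a spurious $\varepsilon^2\abs{k}^2 e^{ikx}$ term), a systematic use of the projection identities to route every $D_t$ either onto $p_t$ or into the commutator $[D_t,p_t]$, the $\mathfrak{S}^\infty$-bound from \eqref{eq:momentum estimate}, and the trivial large-$\abs{k}$ control of $(1+\abs{k})^{-1}\norm{q_t e^{ikx}p_t}_{\mathfrak{S}^1}$. The vector-potential terms must moreover be estimated uniformly in $\varepsilon$ via Cauchy--Schwarz in the plane-wave representation, which is why the weights $C_\kappa$ and $\norm{\alpha_t}_{\mathfrak{h}_1}$ (rather than only $\norm{\alpha_t}_{\mathfrak{h}}$) are needed; the exchange term $X_{p_t}$ is everywhere of lower order in $N$ and contributes nothing to the leading-order Gr\"onwall constant.
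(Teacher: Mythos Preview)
Your treatment of \eqref{eq:momentum estimate} and of the $\norm{q_t i\varepsilon\nabla p_t}_{\mathfrak{S}^1}$ half of $\Xi$ is essentially the paper's argument: single conjugation by the mean-field propagator works there because $[i\varepsilon\nabla,-\varepsilon^2\Delta]=0$, so the kinetic source involves only bounded multiplication operators (derivatives of $\kappa*\vA_{\alpha_t}$ and of $K*\rho_{p_t}$), and the exchange term is handled exactly as you describe.

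The gap is in the $e^{ikx}$ half. Conjugating $[e^{ikx},p_t]$ by the \emph{single} propagator $\mathcal{U}_t$ produces the source $\varepsilon^{-1}[[e^{ikx},h_t],p_t]$, whose kinetic part is $[\{e^{ikx},k\cdot D_t\},p_t]$. Using $p[B,p]p=q[B,p]q=0$ this equals $q_t\{e^{ikx},k\cdot D_t\}p_t$ minus its adjoint with $k\mapsto -k$, and since $(k\cdot D_t)e^{ikx}=e^{ikx}(k\cdot D_t)+\varepsilon|k|^2 e^{ikx}$ one computes
\[
q_t\{e^{ikx},k\cdot D_t\}p_t \;=\; 2\,q_t e^{ikx}(k\cdot D_t)p_t \;+\; \varepsilon|k|^2\,q_t e^{ikx}p_t .
\]
Inserting $p_t+q_t$ in the first term gives, in trace norm, a contribution $|k|\,\norm{D_t p_t}_{\mathfrak{S}^\infty}\,\norm{q_t e^{ikx}p_t}_{\mathfrak{S}^1}$ plus good pieces of size $|k|\,\norm{[D_t,p_t]}_{\mathfrak{S}^1}$. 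After dividing by $(1+|k|)$, both the $|k|\norm{D_tp_t}_{\mathfrak{S}^\infty}\norm{q_te^{ikx}p_t}_{\mathfrak{S}^1}$ and the $\varepsilon|k|^2\norm{q_te^{ikx}p_t}_{\mathfrak{S}^1}$ terms yield a Gr\"onwall rate for $f_k(t)\coloneqq(1+|k|)^{-1}\norm{q_te^{ikx}p_t}_{\mathfrak{S}^1}$ that is proportional to $|k|$; at $|k|\sim\varepsilon^{-1}$ this rate is $\sim\varepsilon^{-1}$ and the estimate blows up. Your claim that $k\cdot D_t$ can always be routed ``either onto $p_t$ or into $[k\cdot D_t,p_t]$'' is precisely what fails: commuting $k\cdot D_t$ through $e^{ikx}$, which is unavoidable in the piece $q_t(k\cdot D_t)q_t e^{ikx}p_t$, generates the $\varepsilon|k|^2$ term, and the ``diagonal'' piece $q_te^{ikx}p_t(k\cdot D_t)p_t$ carries an extra $|k|$ that is not compensated. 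The large-$|k|$ trivial bound you invoke is static and does not reduce the Gr\"onwall rate.

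The paper's remedy (which is already in \cite{BPS2014}) is to conjugate $q_te^{ikx}p_t$ \emph{asymmetrically} by two $k$-dependent propagators $U_\pm$ with generators $H_{p,\alpha}(t)\pm\varepsilon k\cdot\boldsymbol{P}_{\alpha_t}$. The algebraic identity
\[
q_t\big(\boldsymbol{P}_{\alpha_t}e^{ikx}+e^{ikx}\boldsymbol{P}_{\alpha_t}\big)p_t
=\boldsymbol{P}_{\alpha_t}\,q_te^{ikx}p_t + q_te^{ikx}p_t\,\boldsymbol{P}_{\alpha_t}
+\big(\text{terms in }q_t\boldsymbol{P}_{\alpha_t}p_t,\;p_t\boldsymbol{P}_{\alpha_t}q_t\big)
\]
lets one absorb the anticommutator $\{\varepsilon k\cdot\boldsymbol{P}_{\alpha_t},\,q_te^{ikx}p_t\}$ into $U_\pm$, leaving a source controlled by $|k|\,\norm{q_t\boldsymbol{P}_{\alpha_t}p_t}_{\mathfrak{S}^1}\lesssim |k|\,\Xi[N,p_t]$ and the exchange contribution. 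After $(1+|k|)^{-1}$ this is uniformly $O(\Xi)$, and Gr\"onwall closes. This two-propagator device is the missing ingredient in your plan.
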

The previous lemmas allow us to control the growth of the quantities $\beta(t)$, $\norm{i \varepsilon \nabla_1 q_{t,1} \Psi_{N,t}}^2$, and $\varepsilon^4 
\scp{\Psi_{N,t}}{W \big( N^{2/3} \alpha_t \big) H_f W^* \big( N^{2/3} \alpha_t \big) \Psi_{N,t}}$ during the time evolution. 
It is more common, however, to measure the distance between the electron state and the projection $p_t$  using the one-electron reduced density matrix. The relationship between this notion of proximity and the quantities mentioned above is addressed in the following lemma.

\begin{lemma}
\label{lemma:relation between trace-norm convergence and beta-a}
Let $N \in \mathbb{N}$, $\Psi_{N} \in  \mathcal{H}^{(N)}$ such that $\norm{\Psi_N} = 1$ and let $p \in \mathfrak{S}^1(L^2(\mathbb{R}^3))$ be a rank-$N$ projection. Then,
\begin{align}
\label{eq:relation between trace-norm convergence and beta-a}
2 \beta^a[\Psi_N,p] \leq \norm{\gamma^{(1,0)}_{\Psi_N} - N^{-1} p }_{\mathfrak{S}^1} \leq \sqrt{8 \beta^a[\Psi_N,p]} .
\end{align}
There exists a constant $C >0$ such that
\begin{align}
\label{eq:relation between Sobolev trace-norm convergence and beta-a plus kinetic energy of the particles outside the Slater determinant}
&\norm{\sqrt{1 - \varepsilon^2 \Delta} \left( \gamma^{(1,0)}_{\Psi_N} - N^{-1} p \right) \sqrt{1 - \varepsilon^2 \Delta}}_{\mathfrak{S}^1}
\nonumber \\
&\leq  C \left<  \norm{i \varepsilon \nabla p}_{\mathfrak{S}^{\infty}} \right>^2
\sup_{j=1,2} \left( \beta^a[\Psi_N,p] + \norm{i \varepsilon \nabla_1 q_1 \Psi_N}^2 \right)^{j/2} 
\end{align}
holds for all $\Psi_{N} \in  \mathcal{H}^{(N)} \cap \mathcal{D} \left( (H_N^{\rm PF})^{1/2} \right)$ such that $\norm{\Psi_N} = 1$ and for all rank-$N$ projections $p \in \mathfrak{S}^{1,1}(L^2(\mathbb{R}^3))$.
\end{lemma}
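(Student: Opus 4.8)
The plan is to prove both inequalities by decomposing $\omega := \gamma^{(1,0)}_{\Psi_N} - N^{-1} p$ into its four blocks $p\omega p,\ p\omega q,\ q\omega p,\ q\omega q$ with $q = 1-p$, and estimating each separately; since $pq=0$ these are $p\omega p = p\gamma p - N^{-1}p$, $p\omega q = p\gamma q$, $q\omega p = q\gamma p$, and $q\omega q = q\gamma q$, where I abbreviate $\gamma = \gamma^{(1,0)}_{\Psi_N}$ (all the quantities below are finite under the stated hypotheses because $p\in\mathfrak{S}^{1,1}$ and $\Psi_N$ has finite kinetic energy). Three elementary facts drive everything. First, taking partial traces in the definition of $\beta^a$ gives $\beta^a[\Psi_N,p] = \tr(q\gamma)$, so $\tr(q\gamma q) = \beta^a$ and $\tr(p\gamma p) = 1 - \beta^a$. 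Second, the Pauli principle gives $0 \le \gamma \le N^{-1}$, whence $p\omega p \le 0$; combined with $\tr(p\omega p) = -\beta^a$ this gives $\norm{p\omega p}_{\mathfrak{S}^1} = \beta^a$. Third, $p\gamma p, q\gamma q \ge 0$, so factoring $\gamma = \gamma^{1/2}\gamma^{1/2}$ and applying $\norm{AB}_{\mathfrak{S}^1}\le\norm{A}_{\mathfrak{S}^2}\norm{B}_{\mathfrak{S}^2}$ gives $\norm{p\omega q}_{\mathfrak{S}^1} \le \sqrt{\tr(p\gamma p)}\,\sqrt{\tr(q\gamma q)} = \sqrt{(1-\beta^a)\beta^a}$, and similarly for $\norm{q\omega p}_{\mathfrak{S}^1}$.

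For \eqref{eq:relation between trace-norm convergence and beta-a}, the lower bound comes from duality, $\norm{\omega}_{\mathfrak{S}^1} \ge \abs{\tr((q-p)\omega)}$, together with the short computation $\tr((q-p)\omega) = \big(\tr(q\gamma) - \tr(p\gamma)\big) - \tr((q-p)N^{-1}p) = \big(\beta^a-(1-\beta^a)\big) + 1 = 2\beta^a$ (using $\tr(qp)=0$, $\tr(p)=N$); the upper bound follows by summing the four block estimates, $\norm{\omega}_{\mathfrak{S}^1} \le 2\beta^a + 2\sqrt{(1-\beta^a)\beta^a} = 2\sqrt{\beta^a}\big(\sqrt{\beta^a}+\sqrt{1-\beta^a}\big) \le 2\sqrt2\,\sqrt{\beta^a}$, the final step being Cauchy--Schwarz.

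For \eqref{eq:relation between Sobolev trace-norm convergence and beta-a plus kinetic energy of the particles outside the Slater determinant} I set $h := \sqrt{1-\varepsilon^2\Delta}$ and run the same decomposition on $h\omega h$. The $q$-block: $\norm{hq\gamma q h}_{\mathfrak{S}^1} = \tr(h^2 q\gamma q) = \beta^a + \tr\big((-\varepsilon^2\Delta)q\gamma q\big) = \beta^a + \norm{i\varepsilon\nabla_1 q_1\Psi_N}^2 =: Y$, where the identity $\norm{i\varepsilon\nabla_1 q_1\Psi_N}^2 = \tr\big((-\varepsilon^2\Delta)q\gamma q\big)$ is again obtained by taking partial traces. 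The $p$-block: $hp\omega ph \le 0$ by conjugation, so $\norm{hp\omega ph}_{\mathfrak{S}^1} = -\tr(h^2 p\omega p) = \beta^a + \tr\big((-\varepsilon^2\Delta)T\big)$ with $T := N^{-1}p - p\gamma p \ge 0$ and $\tr(T)=\beta^a$; since $\mathrm{range}\,T \subseteq \mathrm{range}\,p$ we have $T^{1/2} = pT^{1/2}$, hence $i\varepsilon\nabla T^{1/2} = (i\varepsilon\nabla p)T^{1/2}$ and $\tr\big((-\varepsilon^2\Delta)T\big) = \norm{i\varepsilon\nabla T^{1/2}}_{\mathfrak{S}^2}^2 \le \norm{i\varepsilon\nabla p}_{\mathfrak{S}^\infty}^2\,\tr(T) = \norm{i\varepsilon\nabla p}_{\mathfrak{S}^\infty}^2\,\beta^a$, giving $\norm{hp\omega ph}_{\mathfrak{S}^1} \le \langle\norm{i\varepsilon\nabla p}_{\mathfrak{S}^\infty}\rangle^2\beta^a$. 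The off-diagonal blocks: $\norm{hp\omega q h}_{\mathfrak{S}^1} \le \norm{hp\gamma^{1/2}}_{\mathfrak{S}^2}\norm{\gamma^{1/2}q h}_{\mathfrak{S}^2}$, with $\norm{\gamma^{1/2}q h}_{\mathfrak{S}^2}^2 = \tr(h^2 q\gamma q) = Y$ and $\norm{hp\gamma^{1/2}}_{\mathfrak{S}^2}^2 = \tr(h^2 p\gamma p) \le (1-\beta^a) + N^{-1}\norm{i\varepsilon\nabla p}_{\mathfrak{S}^2}^2 \le \langle\norm{i\varepsilon\nabla p}_{\mathfrak{S}^\infty}\rangle^2$ (bounding $\tr\big((-\varepsilon^2\Delta)p\gamma p\big) \le N^{-1}\tr\big((-\varepsilon^2\Delta)p\big)$ and using $\mathrm{rank}\,p = N$), and similarly for $\norm{hq\omega p h}_{\mathfrak{S}^1}$. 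Summing, $\norm{h\omega h}_{\mathfrak{S}^1} \le \langle\norm{i\varepsilon\nabla p}_{\mathfrak{S}^\infty}\rangle^2\beta^a + 2\langle\norm{i\varepsilon\nabla p}_{\mathfrak{S}^\infty}\rangle\sqrt Y + Y$, and since $\beta^a \le Y$, both $Y$ and $\sqrt Y$ are $\le \sup_{j=1,2}Y^{j/2}$, and $\langle\norm{i\varepsilon\nabla p}_{\mathfrak{S}^\infty}\rangle \ge 1$, this is $\le 4\langle\norm{i\varepsilon\nabla p}_{\mathfrak{S}^\infty}\rangle^2\sup_{j=1,2}\big(\beta^a[\Psi_N,p] + \norm{i\varepsilon\nabla_1 q_1\Psi_N}^2\big)^{j/2}$, which is the claim.

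The one step that genuinely needs care is the $p$-block bound $\tr\big((-\varepsilon^2\Delta)(N^{-1}p - p\gamma p)\big) \le \norm{i\varepsilon\nabla p}_{\mathfrak{S}^\infty}^2\,\beta^a$: the crude estimate $0 \le N^{-1}p - p\gamma p \le N^{-1}p$ only yields $\norm{i\varepsilon\nabla p}_{\mathfrak{S}^\infty}^2$, which does not vanish as $\beta^a \to 0$, so instead one must extract the small trace $\tr(N^{-1}p - p\gamma p) = \beta^a$ by passing to the square root of this positive operator and pushing the derivative through the range of $p$. Everything else is routine manipulation of Schatten norms and cyclicity of the trace.
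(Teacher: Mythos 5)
Your argument is correct, and for the Sobolev--trace-norm bound \eqref{eq:relation between Sobolev trace-norm convergence and beta-a plus kinetic energy of the particles outside the Slater determinant} it takes a genuinely different route from the paper. You work entirely at the level of the one-particle reduced density matrix: you split $\gamma^{(1,0)}_{\Psi_N}-N^{-1}p$ into its four $p/q$ blocks, use the Pauli bound $0\le \gamma^{(1,0)}_{\Psi_N}\le N^{-1}$ to make the $pp$-block a non-positive operator whose trace norm is $-\tr(\,\cdot\,)=\beta^a$, extract the kinetic contribution of that block via the square root $T^{1/2}=pT^{1/2}$ of $T=N^{-1}p-p\gamma^{(1,0)}_{\Psi_N}p$ (which is indeed the one step that needs care, and you handle it correctly), and control the off-diagonal blocks by factoring $\gamma^{(1,0)}_{\Psi_N}=\gamma^{1/2}\gamma^{1/2}$ and Cauchy--Schwarz in $\mathfrak{S}^2$. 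The paper instead dualizes the $\mathfrak{S}^1$-norm against bounded operators $A$, inserts $1=p_1+q_1$ in the many-body expectation $\scp{\Psi_N}{A_1\sqrt{1-\varepsilon^2\Delta_1}(\cdots)\Psi_N}$, bounds the terms containing a $q_1$ by Cauchy--Schwarz against $\norm{\sqrt{1-\varepsilon^2\Delta_1}\,q_1\Psi_N}$, and treats the $p$--$p$ term minus its trace with Lemma~\ref{lemma:estimates of the p-p term} (the $P_J^{\varphi},Q_J^{\varphi}$ machinery built on antisymmetry), which there plays the role your Pauli-principle inequality plays. Both routes yield the same structure of constants; yours is more self-contained for this particular lemma and applies to any fermionic one-particle density matrix without reference to the many-body state beyond $\beta^a$ and the kinetic energy of $q_1\Psi_N$, while the paper's reuses first-quantized estimates it needs elsewhere anyway. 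For the elementary inequality \eqref{eq:relation between trace-norm convergence and beta-a} your block argument coincides with the standard proof the paper cites.

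One minor remark: when you write $\tr\big((-\varepsilon^2\Delta)p\gamma^{(1,0)}_{\Psi_N}p\big)\le N^{-1}\tr\big((-\varepsilon^2\Delta)p\big)\le\norm{i\varepsilon\nabla p}_{\mathfrak{S}^\infty}^2$ you implicitly use $\norm{i\varepsilon\nabla p}_{\mathfrak{S}^2}^2\le N\,\norm{i\varepsilon\nabla p}_{\mathfrak{S}^{\infty}}^2$ via $\operatorname{rank}(i\varepsilon\nabla p)\le N$; this is fine, but the shorter bound $\norm{\sqrt{1-\varepsilon^2\Delta}\,p\,\gamma^{1/2}}_{\mathfrak{S}^2}\le\norm{\sqrt{1-\varepsilon^2\Delta}\,p}_{\mathfrak{S}^\infty}\norm{\gamma^{1/2}}_{\mathfrak{S}^2}\le\left<\norm{i\varepsilon\nabla p}_{\mathfrak{S}^\infty}\right>$ avoids the rank count altogether.
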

Inequality~\eqref{eq:relation between trace-norm convergence and beta-a} is a standard result. Its proof can be found, for example, in \cite[Section 3.1]{PP2016}. The proof of \eqref{eq:relation between Sobolev trace-norm convergence and beta-a plus kinetic energy of the particles outside the Slater determinant} is postponed to Section~\ref{section:Proof of Lemma with relation between trace norm convergence and beta-a}. Note that a bosonic variant of \eqref{eq:relation between Sobolev trace-norm convergence and beta-a plus kinetic energy of the particles outside the Slater determinant} was proven in \cite{MPP2019} and applied in \cite{LP2018}. 
Finally, we have collected all ingredients needed to prove the main result.

\begin{proof}[Proof of Theorem~\ref{theorem:main theorem}]
Combining the Lemmas \ref{lemma:estimating the time derivative of the functional} and \ref{lemma:energy estimates} with the conservation of the energies let us obtain
\begin{align}
\beta(t) - \beta(0)
&\leq  C \left< C_{\kappa}^2 \right> \int_0^t 
\left< 
\norm{\alpha_s}_{\mathfrak{h}_1} 
+ \norm{\alpha_s}_{\dot{\mathfrak{h}}_{1/2}}^2 
+ \norm{i \varepsilon \nabla p_s}_{\mathfrak{S}^{\infty}}^2 
\right>
\left<\varepsilon^{-1} N^{-1} \Xi[N,p_s] \right>
\left( \beta(s) + \varepsilon \right)
\, ds  
\nonumber \\
&\quad + 2 N^{-1} \Big|   \scp{\Psi_{N,0}}{H_N^{\rm{PF}} \Psi_{N,0}}
- \mathcal{E}^{\rm{MS}}[p_0,\alpha_0] \Big| t
\end{align}
By means of Gr\"onwall's Lemma,  we get
\begin{align}
\beta(t)
&\leq  
\exp \bigg[ C \left< C_{\kappa}^2 \right>  \left< t \right>
\sup_{s \in [0,t]} \Big\{ \left< \norm{\alpha_s}_{\mathfrak{h}_1} 
+ \norm{\alpha_s}_{\dot{\mathfrak{h}}_{1/2}}^2 
+ \norm{i \varepsilon \nabla p_s}_{\mathfrak{S}^{\infty}}^2 
\right>
\left< \varepsilon^{-1} N^{-1} \Xi[N,p_s] \right>
\Big\}
\bigg] 
\nonumber \\
&\quad \times \left( \beta(0)
+  N^{-1} \Big| \scp{\Psi_{N,0}}{H_N^{\rm{PF}} \Psi_{N,0}}
- \mathcal{E}^{\rm{MS}}[p_0,\alpha_0] \Big| t + \varepsilon \right)
\end{align}
Using Lemma~\ref{lemma:energy estimates} again we obtain
\begin{align}
&\beta(t)
+ \norm{i \varepsilon \nabla_1 q_{t,1} \Psi_{N,t}}^2 + \varepsilon^4 
\scp{\Psi_{N,t}}{W \big( \varepsilon^{-2} \alpha_t \big) H_f W^* \big( \varepsilon^{-2} \alpha_t \big) \Psi_{N,t}}
\nonumber \\
&\quad \leq  
\exp \bigg[ C \left< C_{\kappa}^2 \right>  \left< t \right>
\sup_{s \in [0,t]} \Big\{ \left< \norm{\alpha_s}_{\mathfrak{h}_1} 
+ \norm{\alpha_s}_{\dot{\mathfrak{h}}_{1/2}}^2 
+ \norm{i \varepsilon \nabla p_s}_{\mathfrak{S}^{\infty}}^2 
\right>
\left< \varepsilon^{-1} N^{-1} \Xi[N,p_s] \right>
\Big\}
\bigg] 
\nonumber \\
&\quad \times \left( \beta(0)
+  N^{-1} \Big| \scp{\Psi_{N,0}}{H_N^{\rm{PF}} \Psi_{N,0}}
- \mathcal{E}^{\rm{MS}}[p_0,\alpha_0] \Big|  + \varepsilon \right) .
\end{align}
Note that
\begin{align}
&\sup_{s \in [0,t]} \Big\{ \left< \norm{\alpha_s}_{\mathfrak{h}_1} 
+ \norm{\alpha_s}_{\dot{\mathfrak{h}}_{1/2}}^2 
+ \norm{i \varepsilon \nabla p_s}_{\mathfrak{S}^{\infty}}^2 
\right>
\left< \varepsilon^{-1} N^{-1} \Xi[N,p_s] \right>
\Big\}
\nonumber \\
&\quad \leq 
\left< \varepsilon^{-1} N^{-1} \Xi[N,p_0]
\right>
\exp \left[ C   \left< \norm{\alpha_0}_{\mathfrak{h}_1}^2
+ \norm{i \varepsilon \nabla p_0}_{\mathfrak{S}^{\infty}}  \right> \exp \left[
C[\kappa, p_0, \alpha_0] \left< t \right> \right]  \right]
\end{align}
with $C \big[ \kappa , p, \alpha \big]
= C \left< C_{\kappa} \right>^2
\left<   N^{-1} \mathcal{E}^{\rm{MS}}[p,\alpha] + C C_{\kappa}^2
\right>$
because of \eqref{eq:bound for the h-1 norm of alpha}
and the estimates of Lemma~\ref{lemma: semiclassical structure}
Hence,
\begin{align}
&\beta(t)
+ \norm{i \varepsilon \nabla_1 q_{t,1} \Psi_{N,t}}^2 + \varepsilon^4 
\scp{\Psi_{N,t}}{W \big( \varepsilon^{-2} \alpha_t \big) H_f W^* \big(\varepsilon^{-2} \alpha_t \big) \Psi_{N,t}}
\nonumber \\
&\quad \leq 
\exp \left[ 
\left< \varepsilon^{-1} N^{-1} \Xi[N,p_0]  \right>
\exp \left[  C \left< \norm{\alpha_0}_{\mathfrak{h}_1}^2
+ \norm{i \varepsilon \nabla p_0}_{\mathfrak{S}^{\infty}}  \right>
 \exp \left[ C[\kappa, p_0, \alpha_0] \left< t \right> \right] \right] \right]  
\nonumber \\
&\qquad \times \left( \beta(0)
+  N^{-1} \Big| \scp{\Psi_{N,0}}{H_N^{\rm{PF}} \Psi_{N,0}}
- \mathcal{E}^{\rm{MS}}[p_0,\alpha_0] \Big|  + \varepsilon \right) .
\end{align}
Theorem~\ref{theorem:main theorem} then follows by means of Lemma~\ref{lemma:relation between trace-norm convergence and beta-a}.

\end{proof}

\subsection{Proof of Theorem~\ref{theorem:derivation of the Vlasov-Maxwell equations}}

In the following, we prove Theorem~\ref{theorem:derivation of the Vlasov-Maxwell equations} by using a combination of Theorem~\ref{theorem:main theorem} and \cite[Theorem II.1]{LS2023}. The estimates in \cite{LS2023} do not track the explicit dependence on the density. For this reason, we also refrain from tracking this dependence in the estimates that lead to Theorem~\ref{theorem:derivation of the Vlasov-Maxwell equations}. Within this section and Section~\ref{section:estimates concerning the Vlasov--Maxwell equations} we use the letter $C$ to denote are generic constant that depends on the choice of $\kappa$. 
Since \cite[Theorem II.1]{LS2023} requires higher regularity properties as expected to hold for pure states of fermions,  it is essential to regularize in Theorem~\ref{theorem:derivation of the Vlasov-Maxwell equations} the initial data of the Vlasov--Maxwell system. The  comparison between the initial data obtained by the Wigner transform and its regularized version is established by the following lemma, which is proven in Section~\ref{subsection:Vlasov-Maxwell distance between initial data and its regularization}.

\begin{lemma}
\label{lemma:Vlasov-Maxwell distance between initial data and its regularization}
Let $R >0$, $\widetilde{C} >0$, $N \in \mathbb{N}$, $\varepsilon = N^{- 1/3}$, $\varepsilon^{1/2} \leq \sigma \leq 1$, and $\Lambda >1$. Let $\alpha \in \mathfrak{h}_1 \cap \dot{\mathfrak{h}}_{- 1/2}$,  $p \in \mathfrak{S}^{1,2}(L^2(\mathbb{R}^3))$ be a rank-$N$ projection such that \eqref{eq:Vlasov-Maxwell result more restricted initial semiclassical structure} holds and such that the Wigner transform $\mathcal{W}[p]$ of $p$ satisfies $ \supp \mathcal{W}[p]  \subseteq  \{ (x,v) \in \mathbb{R}^6 , \abs{v} \leq R \}$ and
$ \norm{\mathcal{W}[p]}_{W_6^{0,2}(\mathbb{R}^6)}  \leq \widetilde{C}$. Moreover, let $m_{p,\sigma,R}$ be defined as in \eqref{eq:definition regularized Wigner transform}. Then, there exists $C >0$ such that
\begin{subequations}
\begin{align}
\label{eq:distance between p and regularized version in Sobolev-trace norm}
\norm{ \sqrt{1 - \varepsilon^2 \Delta}
\left( \mathcal{W}^{-1}[m_{p, \sigma,R}] - p \right)  \sqrt{1 - \varepsilon^2 \Delta} }_{\mathfrak{S}^1}
&\leq C N   \sigma   ,
\\
\label{eq:distance between alpha and its cutoff version}
\norm{\alpha - \id_{\abs{\cdot} \leq \Lambda} \alpha}_{\mathfrak{h}_{1/2} \cap \dot{\mathfrak{h}}_{- 1/2}}
= \norm{\id_{\abs{\cdot} \geq \Lambda} \alpha}_{\mathfrak{h}_{1/2} \cap \dot{\mathfrak{h}}_{- 1/2}}
&\leq C \Lambda^{- 1/2} 
\norm{\alpha}_{\mathfrak{h}_1} ,
\\
\label{eq:positivity of the Husimi measure}
m_{p,\sigma,R} \geq 0 \quad \text{on} \, \, \mathbb{R}^6 .
\end{align}
\end{subequations}
\end{lemma}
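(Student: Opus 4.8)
The plan is to prove the three estimates separately; \eqref{eq:distance between alpha and its cutoff version} and \eqref{eq:positivity of the Husimi measure} are soft, and essentially all the work is in \eqref{eq:distance between p and regularized version in Sobolev-trace norm}. For \eqref{eq:distance between alpha and its cutoff version} I would write out the weighted $L^2$-norms and use that on $\{\abs{k}\geq\Lambda\}$ with $\Lambda\geq 1$ one has $(1+\abs{k}^2)^{1/2}\leq\Lambda^{-1}(1+\abs{k}^2)$ and $\abs{k}^{-1}\leq\Lambda^{-1}(1+\abs{k}^2)$, so that $\norm{\id_{\abs{\cdot}\geq\Lambda}\alpha}_{\mathfrak{h}_{1/2}\cap\dot{\mathfrak{h}}_{-1/2}}^2\leq C\Lambda^{-1}\norm{\alpha}_{\mathfrak{h}_1}^2$, and take square roots. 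For \eqref{eq:positivity of the Husimi measure} I would use the Gaussian semigroup identity $\mathscr{G}_\sigma=\mathscr{G}_{\sqrt{\varepsilon}}\ast\mathscr{G}_{\sqrt{\sigma^2-\varepsilon}}$, which is legitimate because $\sigma^2\geq\varepsilon$ by hypothesis: then $\mathscr{G}_{\sqrt{\varepsilon}}\ast\mathcal{W}[p]$ is, up to a positive normalisation, the Husimi function $z\mapsto\scp{\pi_\varepsilon(z)\Omega}{p\,\pi_\varepsilon(z)\Omega}$ of $p$ with respect to the $\varepsilon$-coherent states, hence $\geq 0$ since $p\geq 0$; convolving with the nonnegative Gaussian $\mathscr{G}_{\sqrt{\sigma^2-\varepsilon}}$ and multiplying by $\eta_R\geq 0$ preserves nonnegativity, so $m_{p,\sigma,R}\geq 0$.

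For \eqref{eq:distance between p and regularized version in Sobolev-trace norm} I first split $m_{p,\sigma,R}-\mathcal{W}[p]=(\eta_R-1)(\mathscr{G}_\sigma\ast\mathcal{W}[p])+(\mathscr{G}_\sigma\ast\mathcal{W}[p]-\mathcal{W}[p])$. The first summand is supported in $\{\abs{v}\geq R+1\}$ while $\mathcal{W}[p]$ is supported in $\{\abs{v}\leq R\}$, so the Gaussian in the convolution is evaluated only at velocity-separation $\geq 1$; putting all derivatives on $\eta_R$ and $\mathscr{G}_\sigma$ (none on $\mathcal{W}[p]$) and using $\norm{\mathcal{W}[p]}_{W_6^{0,2}}\leq\widetilde{C}$ shows that this term, and all its weighted Sobolev norms, are smaller than any power of $\sigma$; feeding it into the standard Weyl-calculus trace-norm bound $\norm{\sqrt{1-\varepsilon^2\Delta}\,\mathcal{W}^{-1}[h]\,\sqrt{1-\varepsilon^2\Delta}}_{\mathfrak{S}^1}\leq CN\sum_{\abs{\alpha}\leq M}\norm{\left<z\right>^{\ell}D_z^\alpha h}_{L^2(\mathbb{R}^6)}$ (which holds with an $\varepsilon$-independent constant) then shows it contributes $\ll N\sigma$. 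For the mollification term I use the phase-space translation covariance of the Weyl quantization: with $\tau_{a,b}$ the unitary implementing translation by $(a,b)$ in phase space, so that $(\tau_{a,b}p\tau_{a,b}^{\ast})(x;y)=e^{ib\cdot(x-y)/\varepsilon}p(x-a;y-a)$ and $\mathcal{W}^{-1}[\mathcal{W}[p](\cdot-(a,b))]=\tau_{a,b}p\tau_{a,b}^{\ast}$, one has
\[
\mathcal{W}^{-1}\big[\mathscr{G}_\sigma\ast\mathcal{W}[p]-\mathcal{W}[p]\big]=\int_{\mathbb{R}^6}\mathscr{G}_\sigma(a,b)\big(\tau_{a,b}p\tau_{a,b}^{\ast}-p\big)\,da\,db .
\]

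Now $\tau_{sa,sb}$ is a one-parameter group with anti-self-adjoint generator $G_{a,b}=\tfrac{i}{\varepsilon}(b\cdot\hat x-a\cdot\hat p)$ commuting with itself, so Taylor's theorem with integral remainder gives $\tau_{a,b}p\tau_{a,b}^{\ast}-p=\int_0^1\tau_{sa,sb}[G_{a,b},p]\tau_{sa,sb}^{\ast}\,ds$ with $[G_{a,b},p]=\tfrac{i}{\varepsilon}\big(b\cdot[\hat x,p]-a\cdot[\hat p,p]\big)$; conjugating $\sqrt{1-\varepsilon^2\Delta}$ through $\tau_{sa,sb}$ turns it into $\sqrt{1+(-i\varepsilon\nabla+sb)^2}\leq\sqrt{2}\left<sb\right>\sqrt{1-\varepsilon^2\Delta}$, so
\[
\norm{\sqrt{1-\varepsilon^2\Delta}\big(\tau_{a,b}p\tau_{a,b}^{\ast}-p\big)\sqrt{1-\varepsilon^2\Delta}}_{\mathfrak{S}^1}\leq 2\left<b\right>^2\Big(\tfrac{\abs{b}}{\varepsilon}\norm{\sqrt{1-\varepsilon^2\Delta}[\hat x,p]\sqrt{1-\varepsilon^2\Delta}}_{\mathfrak{S}^1}+\tfrac{\abs{a}}{\varepsilon}\norm{\sqrt{1-\varepsilon^2\Delta}[\hat p,p]\sqrt{1-\varepsilon^2\Delta}}_{\mathfrak{S}^1}\Big).
\]
(Since $\mathscr{G}_\sigma$ is even one may instead subtract the linear term, which integrates to zero, and work with the second commutators $[G_{a,b},[G_{a,b},p]]$, which would give the stronger bound $CN\sigma^2$; either route suffices.) The momentum commutator is controlled directly by the semiclassical structure: $[\hat p,p]=q\hat p\,p-(q\hat p\,p)^{\ast}$ with $q\hat p\,p=-q\,i\varepsilon\nabla\,p$, so by \eqref{eq:Vlasov-Maxwell result more restricted initial semiclassical structure} and $\varepsilon=N^{-1/3}$, $\tfrac1\varepsilon\norm{\sqrt{1-\varepsilon^2\Delta}[\hat p,p]\sqrt{1-\varepsilon^2\Delta}}_{\mathfrak{S}^1}\leq\tfrac{2}{\varepsilon}\widetilde{C}N^{2/3}=2\widetilde{C}N$. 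The position commutator $[\hat x,p]=q\,x\,p-(q\,x\,p)^{\ast}$ requires the bound $\norm{\sqrt{1-\varepsilon^2\Delta}\,q\,x_j\,p\,\sqrt{1-\varepsilon^2\Delta}}_{\mathfrak{S}^1}\leq CN^{2/3}$ discussed below; granting it, $\tfrac1\varepsilon\norm{\sqrt{1-\varepsilon^2\Delta}[\hat x,p]\sqrt{1-\varepsilon^2\Delta}}_{\mathfrak{S}^1}\leq CN$ as well, so the above integrand is $\leq CN\left<b\right>^2(\abs{a}+\abs{b})$, and integrating against $\mathscr{G}_\sigma$ (using $\int_{\mathbb{R}^6}\mathscr{G}_\sigma(a,b)\left<b\right>^2(\abs{a}+\abs{b})\,da\,db\leq C\sigma$ for $\sigma\leq 1$) yields $\norm{\sqrt{1-\varepsilon^2\Delta}\big(\mathcal{W}^{-1}[m_{p,\sigma,R}]-p\big)\sqrt{1-\varepsilon^2\Delta}}_{\mathfrak{S}^1}\leq CN\sigma$.

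The \emph{hard part} is the sharp position-commutator bound $\norm{\sqrt{1-\varepsilon^2\Delta}\,q\,x_j\,p\,\sqrt{1-\varepsilon^2\Delta}}_{\mathfrak{S}^1}\leq CN^{2/3}$; it expresses that the kernel of $p$ is concentrated at distance $\lesssim\varepsilon$ from the diagonal in a weighted-trace sense, so that $q$ times multiplication by $x-y$ costs a factor $\varepsilon$. A crude estimate via rank and moments only gives $CN$: indeed $\norm{q\,x_j\,p}_{\mathfrak{S}^1}\leq\sqrt{N}\,\norm{x_j p}_{\mathfrak{S}^2}=\sqrt{N}\,(\tr(x_j^2 p))^{1/2}$ and $\tr(x_j^2 p)=N\int x_j^2\rho_p\leq CN$, where the last bound follows from $\norm{\mathcal{W}[p]}_{W_6^{0,2}}\leq\widetilde{C}$ together with $\supp\mathcal{W}[p]\subseteq\{\abs{v}\leq R\}$ (the finite velocity support is precisely what lets the weight-$6$ $L^2$-bound on $\mathcal{W}[p]$ control the $L^1$ position moments of $\rho_p$). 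To gain the extra factor $\varepsilon=N^{-1/3}$ one exploits the semiclassical structure \eqref{eq:Vlasov-Maxwell result more restricted initial semiclassical structure}: the map $k\mapsto\sqrt{1-\varepsilon^2\Delta}\,q\,e^{ikx_j}p\,\sqrt{1-\varepsilon^2\Delta}$ vanishes at $k=0$ and has $\mathfrak{S}^1$-norm $\leq(1+\abs{k})\widetilde{C}N^{2/3}$, and one combines this with a uniform-in-$k$ bound on its higher $k$-derivatives — which is where $\norm{i\varepsilon\nabla p}_{\mathfrak{S}^\infty}\leq\widetilde{C}$ from \eqref{eq:main result condition on the energy and velocity}, the moment bounds above, and the finite velocity support enter — to pin down the derivative at $k=0$ with the gain $\varepsilon$. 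Keeping the two factors $\sqrt{1-\varepsilon^2\Delta}$ adjacent to $p$ throughout, so as never to confront the unbounded operator $\sqrt{1-\varepsilon^2\Delta}\,q$ on its own, is the principal bookkeeping difficulty; everything else in the argument is routine.
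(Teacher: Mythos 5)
Your treatment of \eqref{eq:distance between alpha and its cutoff version}, of \eqref{eq:positivity of the Husimi measure} (the paper proves positivity by the explicit Cartwright computation rather than via the coherent-state/semigroup factorization $\mathscr{G}_{\sigma}=\mathscr{G}_{\sqrt{\varepsilon}}*\mathscr{G}_{\sqrt{\sigma^2-\varepsilon}}$, but your argument is the standard equivalent and uses $\sigma^2\geq\varepsilon$ in the same place), and of the velocity-cutoff term $\mathcal{W}^{-1}[(1-\eta_R)m_{p,\sigma}]$ all match the paper in substance and are fine.

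The gap is in the mollification term of \eqref{eq:distance between p and regularized version in Sobolev-trace norm}. You run Duhamel in \emph{both} phase-space directions, with generator $G_{a,b}=\tfrac{i}{\varepsilon}(b\cdot\hat x-a\cdot\hat p)$, and this forces you to control $\varepsilon^{-1}\norm{\sqrt{1-\varepsilon^2\Delta}\,[\hat x,p]\,\sqrt{1-\varepsilon^2\Delta}}_{\mathfrak{S}^1}$, i.e.\ to prove $\norm{\sqrt{1-\varepsilon^2\Delta}\,q\,x_j\,p\,\sqrt{1-\varepsilon^2\Delta}}_{\mathfrak{S}^1}\leq CN^{2/3}$. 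You correctly flag this as the hard part, but the interpolation you sketch cannot close it: hypothesis \eqref{eq:Vlasov-Maxwell result more restricted initial semiclassical structure} gives $\norm{\sqrt{1-\varepsilon^2\Delta}\,q\,e^{ikx}p\,\sqrt{1-\varepsilon^2\Delta}}_{\mathfrak{S}^1}\leq\widetilde{C}(1+\abs{k})N^{2/3}$, so the difference quotient at $k=0$ is only bounded by $\widetilde{C}(1+\abs{k})N^{2/3}/\abs{k}$, which diverges as $k\to0$; the function does \emph{not} vanish linearly at $k=0$ under the stated assumptions, and bounding its derivative there by the values plus \emph{higher} derivatives is circular (the second derivative is $-q\,x_j^2\,p$, which is worse). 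So the key lemma your argument rests on is not available from the hypotheses of the statement. The paper's proof is structured precisely to avoid it: it writes $\mathcal{W}^{-1}[m_{p,\sigma}](x;y)=\int\mathscr{G}_{\sigma}(z,\xi)\,(e^{i\xi\cdot\hat x/\varepsilon}p\,e^{-i\xi\cdot\hat x/\varepsilon})(x-z;y-z)\,dz\,d\xi$ and treats the momentum shift as a \emph{finite difference}, $e^{i\xi\cdot\hat x/\varepsilon}p\,e^{-i\xi\cdot\hat x/\varepsilon}-p=e^{i\xi\cdot\hat x/\varepsilon}[e^{-i\xi\cdot\hat x/\varepsilon},p]$, estimated directly by the hypothesis with $k=\xi/\varepsilon$, giving $CN\varepsilon(1+\varepsilon^{-1}\abs{\xi})^2$ per unit of Gaussian mass and hence $CN(\varepsilon+\sigma+\sigma^2)\leq CN\sigma$ after integration (this is where $\varepsilon\leq\sigma^2$ enters a second time). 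Only the \emph{spatial} translation is expanded by Duhamel, with generator $\nabla$, for which the hypothesis supplies $\norm{\sqrt{1-\varepsilon^2\Delta}\,q\,\nabla p\,\sqrt{1-\varepsilon^2\Delta}}_{\mathfrak{S}^1}\leq\varepsilon^{-1}\widetilde{C}N^{2/3}=\widetilde{C}N$ directly. To repair your proof, replace the momentum half of your phase-space Duhamel by this finite-difference estimate; the rest of your argument then goes through.
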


In addition, we will rely on the following result regarding the propagation of regularity for solutions of the Vlasov--Maxwell equations, which is shown in Section~\ref{subsection:propagation estimates Vlasov-Maxwell}.

\begin{lemma}
\label{lemma:propagation estimates Vlasov-Maxwell}
Let $R>0$ and $a, b \in \mathbb{N}$ satisfying $a \geq 5$ and $b \geq 3$. Let $(f_t,\alpha_t)$ be the unique solution of the Vlasov--Maxwell equations \eqref{eq: Vlasov-Maxwell different style of writing} with initial condition
 $(f_0, \alpha_0) \in H_a^{b}(\mathbb{R}^6) \times \mathfrak{h}_b \cap \dot{\mathfrak{h}}_{-1/2}$ such that $\supp f_0 \subseteq  \{ (x,v) \in \mathbb{R}^6 , \abs{v} \leq R \}$. For $k \in \mathbb{N}$ satisfying $k \leq b$, we have
\begin{align}
\norm{f_t}^2_{L_t^{\infty} H_a^{k}(\mathbb{R}^6)} + \norm{\alpha}^4_{L_t^{\infty} \mathfrak{h}_{k}}
&\leq  C e^{\left< R \right> e^{C \left< t \right>  C(t) }}
\bigg[ 1 + 
\sum_{j=0}^{k} \left(  \norm{f_0}^2_{ H_a^{k-j}(\mathbb{R}^6)} + \norm{\alpha_0}^4_{ \mathfrak{h}_{k-j} \, \cap \,  \dot{\mathfrak{h}}_{-1/2})} \right)^{3^j}
\bigg] ,
\end{align}
where $C(s) =
e^{C \left( 1 + \mathcal{E}^{\rm{VM}}[f_0, \alpha_0] 
+ C \norm{f_0}_{L^1(\mathbb{R}^6)}^2  \right) \left< s \right>}
\Big[ 1 + 
\norm{f_0}_{W_a^{0,2}(\mathbb{R}^6)}^2
+ \norm{\alpha_0}_{\mathfrak{h}_1 \, \cap \,  \dot{\mathfrak{h}}_{-1/2}}^2 \Big]$ and $C$ is a constant depending on $\kappa$.
\end{lemma}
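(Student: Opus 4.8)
The plan is to propagate the weighted Sobolev norm of $f$ along the characteristic flow of the transport equation, to treat the field equation by Duhamel's formula, and to close the resulting coupled bound by an induction on the regularity level $k$ together with a Grönwall argument; Proposition~\ref{proposition:global solutions VM} already supplies a global $H_a^b(\mathbb{R}^6)\times\mathfrak{h}_b\cap\dot{\mathfrak{h}}_{-1/2}$ solution, so every quantity below is finite for all $t$ and only its growth has to be estimated.

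First I would collect the $t$-uniform low-order information. Conservation of $\norm{f_t}_{L^p(\mathbb{R}^6)}$ and of $\mathcal{E}^{\mathrm{VM}}[f_t,\alpha_t]$ controls $\norm{\alpha_t}_{\dot{\mathfrak{h}}_{1/2}}$ and $\widetilde{\rho}_{f_t}$ in $L^1\cap L^\infty$; under Assumption~\ref{assumption:cutoff function} the kernel $K=\tfrac1{4\pi}\kappa*\kappa*\abs{\cdot}^{-1}$ lies in $W^{4,\infty}(\mathbb{R}^3)$ (its Fourier multiplier is $\sim\mathcal{F}[\kappa]^2\abs{k}^{-2}$ and $\abs{k}^2\mathcal{F}[\kappa]^2\in L^1$), and $\kappa*\vA_{\alpha_t}$ with a few $x$-derivatives is bounded in $L^\infty$ by $\norm{\alpha_t}_{\dot{\mathfrak{h}}_{1/2}\cap\dot{\mathfrak{h}}_{-1/2}}$. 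Inserting these into the characteristic system $\dot X=2(V-\kappa*\vA_{\alpha_t}(X))$, $\dot V=\vF_{f_t,\alpha_t}(X,V)$ — whose right-hand side is affine in $V$ with coefficients already under control — a Grönwall estimate yields $\supp f_t\subseteq\{\,\abs{v}\le R(t)\,\}$ with $R(t)\le\langle R\rangle\,e^{C\langle t\rangle C(t)}$, $C(t)$ as in the statement. One also records that the phase-space velocity field $\mathcal{V}$ is divergence-free, $\nabla_z\cdot\mathcal{V}=-2\,\mathrm{div}_x(\kappa*\vA_{\alpha_t})=0$, by the Coulomb gauge, so its flow is measure-preserving.

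For the level-$k$ bound the decisive idea is to differentiate the flow rather than the transport equation. Writing $f_t=f_0\circ Z_t$ with $Z_t$ the backward characteristic map, using $\abs{D_zZ_t}\le\exp\int_0^t\norm{D_z\mathcal{V}_s}_{L^\infty}$, the unit Jacobian (so that $\langle z\rangle^a$ transports to $\langle Z_t(z)\rangle^a$ up to a $\langle t\rangle$-factor) and Faà di Bruno, one gets $\norm{f_t}_{H_a^k}\lesssim A_k(t)\big(\norm{f_0}_{H_a^k}+\norm{D_z^kZ_t}_{L^\infty}\big)+(\text{strictly lower regularity})$, with $A_k$ built only from Step~1 quantities. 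In the linear ODE for $D_z^kZ_t$ the top-order field norm $\norm{\alpha_t}_{\dot{\mathfrak{h}}_k}$ — which is how $D_z^k\mathcal{V}\sim D_x^{k+1}(\kappa*\vA_{\alpha_t})$ enters — appears only as an additive source, all coefficients and the other sources being of lower order; moreover, because $K\in W^{4,\infty}$, the term $D_x^{k+1}(K*\widetilde{\rho}_{f_t})$ of $D_z^k\mathcal{V}$ must put three derivatives on $\widetilde{\rho}_{f_t}$ once $k\ge4$ and is hence of order $k-3$. In parallel, Duhamel's formula $\alpha_t=e^{-it\abs{k}}\alpha_0-i\int_0^te^{-i(t-s)\abs{k}}S_s\,ds$ with $S_s=-\sqrt{4\pi^3/\abs{k}}\,\mathcal{F}[\kappa]\,\vep_\lambda\!\cdot\mathcal{F}[\vJt_{f_s,\alpha_s}]$ and the isometry of $e^{-it\abs{k}}$ on $\mathfrak{h}_k$ give $\norm{\alpha_t}_{\mathfrak{h}_k}\le\norm{\alpha_0}_{\mathfrak{h}_k}+\int_0^t\norm{S_s}_{\mathfrak{h}_k}\,ds$ with $\norm{S_s}_{\mathfrak{h}_k}\lesssim\norm{\vJt_{f_s,\alpha_s}}_{\dot{H}^k}+\norm{\vJt_{f_s,\alpha_s}}_{L^1}\lesssim C(s)\norm{f_s}_{H_a^k}+(\text{lower order})$, using $\norm{\kappa*\vA_{\alpha_s}}_{W^{k,\infty}}\lesssim\norm{\alpha_s}_{\mathfrak{h}_{k-1}\cap\dot{\mathfrak{h}}_{-1/2}}$ and the bounded velocity support. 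Setting $Y(t)=\norm{D_z^kZ_t}_{L^\infty}+\sup_{s\le t}\norm{\alpha_s}_{\dot{\mathfrak{h}}_k}$, these estimates close into a \emph{linear} Volterra inequality $Y(t)\lesssim C(t)\big[\text{data}_k+(\text{lower-order data})+\int_0^tY(s)\,ds\big]$, whose Grönwall resolution produces the nested exponential in $R$ and $t$.

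The induction runs upward from the base $k\le3$, where $\mathcal{V}\in W^{k,\infty}$ with norm controlled purely by Step~1 since $K\in W^{4,\infty}$ absorbs all the $x$-derivatives; at level $k$ the lower-regularity terms are estimated by the inductive hypothesis, and feeding those bounds into the level-$k$ estimate — where each level is coupled to the lower ones through products in the nonlinear terms — raises the polynomial degree in the data, which accounts for the powers $3^j$ attached to $\norm{f_0}^2_{H_a^{k-j}}+\norm{\alpha_0}^4_{\mathfrak{h}_{k-j}\cap\dot{\mathfrak{h}}_{-1/2}}$; taking the supremum over $t\in[0,T]$ and raising to the stated powers finishes the proof. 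The main obstacle is exactly closing this coupling: a naive energy estimate on the transport equation puts $\norm{\alpha_t}_{\dot{\mathfrak{h}}_k}$ as a \emph{coefficient} in front of $\norm{f_t}^2_{H_a^k}$, which together with $\tfrac{d}{dt}\norm{\alpha_t}_{\mathfrak{h}_k}\lesssim\norm{f_t}_{H_a^k}$ is a genuinely quadratic (Riccati-type) system without a global bound; the Lagrangian formulation, the divergence-free structure $\nabla_z\cdot\mathcal{V}=0$ from the Coulomb gauge, and the finite smoothness of $K$ together demote the field's top-order norm to an additive source and restore a linear Grönwall structure. The remaining effort is bookkeeping the weight $\langle z\rangle^a$ — using $a\ge5$ to pass between $L^2$ and the velocity $L^1$ and moment norms — and the growth of the velocity support through all the nonlinear terms.
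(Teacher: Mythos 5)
Your proposal is correct in spirit but takes a genuinely different route from the paper. The paper runs an Eulerian weighted energy estimate: it differentiates $\norm{(1-\Delta_z)^{k/2}\left<\cdot\right>^a f_t}_{L^2}^2$ along the flow of the transport operator $T_{f,\alpha}$ and organizes everything through the commutators $[\left<\cdot\right>^a,T_{f,\alpha}]$ and $[(1-\Delta_z)^{k/2},T_{f,\alpha}]$, which automatically produce sums of the form $\sum_{j}(\norm{f_s}^2_{H_a^{j-2}}+\norm{\alpha_s}^4_{\mathfrak{h}_j})\norm{f_s}^2_{H_a^{k-j}}$; the top-order field norm is thereby paired with the conserved $\norm{f_s}_{W_a^{0,2}}$, Gr\"onwall closes the $j=0,1$ terms, and the recursion $X_k\lesssim 1+\mathrm{data}_k+X_{k-1}^3$ (with $X_k=\norm{f}^2_{L_t^\infty H_a^k}+\norm{\alpha}^4_{L_t^\infty\mathfrak{h}_k}$, and the $\mathfrak{h}_k$ norm of $\alpha$ controlled by Duhamel in terms of $\norm{f}_{L_t^\infty H_5^{k-1}}$) yields the $3^j$ powers. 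You instead differentiate the characteristic flow and use Fa\`a di Bruno, which is workable and arguably more geometric, but your stated motivation is off: the ``naive'' energy estimate is \emph{not} Riccati-type, precisely because the commutator structure (not the Lagrangian formulation) is what demotes $\norm{\alpha_t}_{\mathfrak{h}_k}$ to a source paired with low-order $f$-norms. Both routes share the same inputs (support propagation, Duhamel for $\alpha$, induction on $k$), so what your approach buys is mainly a cleaner conceptual picture at the cost of Fa\`a di Bruno bookkeeping; the paper's approach avoids composing with the flow at high order.

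Two concrete points need repair. First, the estimate $\norm{S_s}_{\mathfrak{h}_k}\lesssim\norm{\vJt_{f_s,\alpha_s}}_{\dot H^k}+\norm{\vJt_{f_s,\alpha_s}}_{L^1}$ is not available: controlling the high-frequency part of the source by $\norm{\vJt}_{L^1}$ alone would require $(1+\abs{k}^2)^{k/2}\abs{k}^{-1/2}\mathcal{F}[\kappa]\in L^2$, which Assumption~\ref{assumption:cutoff function} does not give for $k\geq 2$. One must instead transfer $k-1$ powers of $(1+\abs{k})$ onto $\mathcal{F}[\vJt]$ and keep only $(1+\abs{k})\abs{k}^{-1/2}\mathcal{F}[\kappa]\in L^2$, which is exactly the paper's bound $\norm{\alpha_t}_{\mathfrak{h}_k}\lesssim\norm{\alpha_0}_{\mathfrak{h}_k}+t(1+\norm{\alpha}_{L_t^\infty\mathfrak{h}_{k-2}})\norm{f}_{L_t^\infty H_5^{k-1}}$; using the sharper version also keeps the Gr\"onwall coefficient in your Volterra inequality free of $\norm{f_0}_{H_a^1}$, which is needed for the exponent in the final bound to depend only on the zeroth-order data as stated. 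Second, the powers $3^j$ are asserted but not derived: in your scheme the exponent structure comes from the specific products (lower-order $f$ squared times $\mathfrak{h}$-norms of $\alpha$ to the fourth, and the square of the current estimate) that appear when the lower-order terms are fed back in, and without tracking these one only obtains \emph{some} polynomial in the lower-order data rather than the precise right-hand side of the lemma. Neither issue breaks the architecture, but both must be fixed to prove the stated inequality.
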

Note that regularity estimates in the spirit of Lemma~\ref{lemma:propagation estimates Vlasov-Maxwell} for the solutions of the Vlasov equation were obtained in \cite{BPSS2016}.
Theorem~\cite[Theorem II.1]{LS2023} and the previous two Lemmas provide the following comparison between the solutions of the \eqref{eq:Maxwell-Schroedinger equations} and \eqref{eq: Vlasov-Maxwell different style of writing}.

\begin{proposition}
\label{proposition:comparison between Maxwell--Schroedinger and Vlasov--Maxwell}
Let $\kappa$ satisfy Assumption~\ref{assumption:cutoff function}, $\widetilde{C} > 0$, $R>0$, $N \in \mathbb{N}$,  $\varepsilon = N^{- \frac{1}{3}}$, $\Lambda = \varepsilon^{- \frac{1}{1094}}$, and $\sigma = \Lambda^{-2}$.
Let $\alpha \in \mathfrak{h}_1 \cap \dot{\mathfrak{h}}_{-1/2}$
and let $p \in \mathfrak{S}_{+}^{2,1}(L^2(\mathbb{R}^3))$ be a rank-$N$ projection such that \eqref{eq:main result condition on the energy and velocity} and  \eqref{eq:Vlasov-Maxwell result more restricted initial semiclassical structure} hold, and such that its Wigner transform $\mathcal{W}[p]$ satisfies  $\norm{\mathcal{W}[p]}_{W_7^{0,2}(\mathbb{R}^6)} \leq \widetilde{C}$ and $ \supp \mathcal{W}[p] \subseteq  \{ (x,v) \in \mathbb{R}^6 , \abs{v} \leq R \}$.

Let $(\widetilde{p}_t, \widetilde{\alpha}_t)$ be the unique solution of the Maxwell--Schr\"odinger equations~\eqref{eq:Maxwell-Schroedinger equations} with initial datum $(p,\alpha)$. Moreover, let $m_{p,\sigma,R}$ be defined as in \eqref{eq:definition regularized Wigner transform}  and $(W_{N,t},\alpha_t)$ be the unique solution of the Vlasov--Maxwell equations~\eqref{eq: Vlasov-Maxwell different style of writing} with regularized initial datum $(m_{p,\sigma,R}, \id_{\abs{\cdot} \leq \Lambda} \alpha)$. Then, there exists a constant $C > 0$ depending on $\kappa$, $N^{-1} \mathcal{E}^{\rm{MS}}[p, \alpha]$, $\norm{\mathcal{W}[p]}_{W_7^{0,2}(\mathbb{R}^6)}$, and $\norm{\alpha}_{\mathfrak{h}_1 \, \cap \,  \dot{\mathfrak{h}}_{-1/2}}$ such that
\begin{align}
& N^{-1} \norm{\sqrt{1 -  \varepsilon^2 \Delta} \left( \widetilde{p}_t - \mathcal{W}^{-1}[W_{N,t}] \right)  \sqrt{1 -  \varepsilon^2 \Delta}}_{\mathfrak{S}^1} + \norm{\widetilde{\alpha}_t - \alpha_t}_{\mathfrak{h}_{1/2} \cap \, \dot{\mathfrak{h}}_{- 1/2}}
\nonumber \\
&\leq 
\varepsilon^{\frac{1}{2188}} \exp \left[
\exp \left[ \left< R \right> \exp \left[ \exp \left[ C \left< t \right> \right] \right] \right] \right] .
\end{align}
\end{proposition}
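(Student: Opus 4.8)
The plan is to estimate the required distance by inserting, as an intermediary, the Maxwell--Schr\"odinger flow $(\widehat{p}_t,\widehat{\alpha}_t)$ issued from the \emph{regularized} data $(\widehat{p}_0,\widehat{\alpha}_0) := \big(\mathcal{W}^{-1}[m_{p,\sigma,R}],\, \id_{\abs{\cdot}\le\Lambda}\alpha\big)$. One first checks that $\widehat{p}_0$ is a self-adjoint element of $\mathfrak{S}^{2,1}(L^2(\mathbb{R}^3))$, with norm controlled by a negative power of $\sigma$, using the smoothness of $m_{p,\sigma,R}$, its compact support in $v$ (it vanishes for $\abs{v}\ge R+3$), the polynomial decay inherited from $\mathcal{W}[p]\in W_7^{0,2}$, and standard mapping properties of the Weyl quantization; positivity of $\widehat{p}_0$ plays no role either in solving \eqref{eq:Maxwell-Schroedinger equations} or in \cite[Theorem II.1]{LS2023}, whose arguments are Gr\"onwall comparisons insensitive to the sign of the operator. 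Since $\mathcal{W}[\widehat{p}_0]=m_{p,\sigma,R}$, the triangle inequality reduces the proposition to estimating separately (i) the distance between $(\widetilde{p}_t,\widetilde{\alpha}_t)$ and $(\widehat{p}_t,\widehat{\alpha}_t)$ --- two Maxwell--Schr\"odinger solutions with different data --- and (ii) the distance between $(\widehat{p}_t,\widehat{\alpha}_t)$ and $(\mathcal{W}^{-1}[W_{N,t}],\alpha_t)$, which is exactly the situation covered by \cite[Theorem II.1]{LS2023}.

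For (i) I would invoke the Lipschitz dependence of Maxwell--Schr\"odinger solutions on their initial data, measured in the norms $N^{-1}\norm{\sqrt{1-\varepsilon^2\Delta}\,\cdot\,\sqrt{1-\varepsilon^2\Delta}}_{\mathfrak{S}^1}$ and $\norm{\,\cdot\,}_{\mathfrak{h}_{1/2}\cap\dot{\mathfrak{h}}_{-1/2}}$; this follows from the same Gr\"onwall/energy machinery underlying Proposition~\ref{proposition:well-posedness of Maxwell-Schroedinger} and Lemma~\ref{lemma: semiclassical structure}, the a priori control of $\norm{\alpha_s}_{\mathfrak{h}_1}$, $\norm{\alpha_s}_{\dot{\mathfrak{h}}_{1/2}}$ and $\norm{i\varepsilon\nabla p_s}_{\mathfrak{S}^\infty}$ being available for both flows via \eqref{eq:bound for the h-1 norm of alpha} and \eqref{eq:momentum estimate} (for $(\widehat{p}_t,\widehat{\alpha}_t)$ because $\widehat{p}_0$ inherits a bounded momentum from the $v$-support of $m_{p,\sigma,R}$). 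This bounds (i) by $e^{C(t)}$ times the distance of the initial data, which by \eqref{eq:distance between p and regularized version in Sobolev-trace norm} and \eqref{eq:distance between alpha and its cutoff version} of Lemma~\ref{lemma:Vlasov-Maxwell distance between initial data and its regularization} satisfies
\begin{align*}
&N^{-1}\norm{\sqrt{1-\varepsilon^2\Delta}\big(\mathcal{W}^{-1}[m_{p,\sigma,R}]-p\big)\sqrt{1-\varepsilon^2\Delta}}_{\mathfrak{S}^1} + \norm{\id_{\abs{\cdot}\le\Lambda}\alpha-\alpha}_{\mathfrak{h}_{1/2}\cap\dot{\mathfrak{h}}_{-1/2}}
\\
&\qquad \le C\sigma + C\Lambda^{-1/2}\norm{\alpha}_{\mathfrak{h}_1} = C\varepsilon^{1/547} + C\varepsilon^{1/2188}\norm{\alpha}_{\mathfrak{h}_1},
\end{align*}
using $\Lambda=\varepsilon^{-1/1094}$ and $\sigma=\Lambda^{-2}$. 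The dominant term $C\varepsilon^{1/2188}$ is precisely the rate in the statement.

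For (ii) I would apply \cite[Theorem II.1]{LS2023} with Maxwell--Schr\"odinger data $(\widehat{p}_0,\widehat{\alpha}_0)$ and Vlasov--Maxwell data $(\mathcal{W}[\widehat{p}_0],\widehat{\alpha}_0)=(m_{p,\sigma,R},\id_{\abs{\cdot}\le\Lambda}\alpha)$; this is legitimate since, by \eqref{eq:positivity of the Husimi measure} and the smoothing/support properties, $m_{p,\sigma,R}\ge 0$ lies in $H_a^b(\mathbb{R}^6)$ for the required $a\ge 5$, $b\ge 3$ with support in $\{\abs{v}\le R+3\}$, while $\id_{\abs{\cdot}\le\Lambda}\alpha\in\mathfrak{h}_b\cap\dot{\mathfrak{h}}_{-1/2}$ for every $b$. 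That theorem bounds (ii) by $\varepsilon$ times a constant polynomial in the regularity norms of the Vlasov--Maxwell solution on $[0,t]$; those are controlled by Lemma~\ref{lemma:propagation estimates Vlasov-Maxwell}, which dominates $\norm{W_{N,t}}_{H_a^b}$ and $\norm{\alpha_t}_{\mathfrak{h}_b}$ by a factor that is a (triple exponential in $\langle t\rangle$) times powers of the initial regularity norms, the latter being estimated through the Gaussian--smoothing and frequency--cutoff inequalities
\begin{align*}
\norm{m_{p,\sigma,R}}_{H_a^b(\mathbb{R}^6)} \le C\,\sigma^{-b}\,\norm{\mathcal{W}[p]}_{W_a^{0,2}(\mathbb{R}^6)}, \qquad \norm{\id_{\abs{\cdot}\le\Lambda}\alpha}_{\mathfrak{h}_b} \le C\,\Lambda^{\,b-1}\,\norm{\alpha}_{\mathfrak{h}_1},
\end{align*}
together with the uniform bounds $\norm{m_{p,\sigma,R}}_{L^1}\le C$ and $\abs{\mathcal{E}^{\rm{VM}}[m_{p,\sigma,R},\id_{\abs{\cdot}\le\Lambda}\alpha]}\le C$, which follow from the support of and the $W_7^{0,2}$-bound on $\mathcal{W}[p]$, the bound on $\norm{\alpha}_{\mathfrak{h}_1}$, and the relation between $\int v^2\, m_{p,\sigma,R}$, $N^{-1}\mathcal{E}^{\rm{MS}}[p,\alpha]$ and $R$ --- this is where the dependence of the final constant on $\kappa$, $N^{-1}\mathcal{E}^{\rm{MS}}[p,\alpha]$, $\norm{\mathcal{W}[p]}_{W_7^{0,2}}$ and $\norm{\alpha}_{\mathfrak{h}_1\cap\dot{\mathfrak{h}}_{-1/2}}$ enters. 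Collecting exponents, (ii) is bounded by $\varepsilon\,\sigma^{-P}=\varepsilon^{\,1-P/547}$ times a quadruple-exponential factor, with $P$ an explicit power depending only on $a$, $b$, the $3^{j}$-type growth in Lemma~\ref{lemma:propagation estimates Vlasov-Maxwell}, and the polynomial degree of \cite[Theorem II.1]{LS2023}.

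Finally I would add (i) and (ii). The choice $\Lambda=\varepsilon^{-1/1094}$ is engineered so that $1-P/547\ge\tfrac{1}{2188}$, which forces (ii) into the rate coming from (i); together with the fact that the nested exponentials produced by Lemma~\ref{lemma:propagation estimates Vlasov-Maxwell} and by the Gr\"onwall arguments for the Maxwell--Schr\"odinger flow and for \cite[Theorem II.1]{LS2023} combine into $\exp[\exp[\langle R\rangle\exp[\exp[C\langle t\rangle]]]]$, this yields the claimed bound. The main obstacle will be the quantitative exponent bookkeeping in step (ii): one must verify that no combination of the powers generated by differentiating the $\sigma$-Gaussian, by the frequency cutoff, by the $3^{j}$-growth of Lemma~\ref{lemma:propagation estimates Vlasov-Maxwell}, and by \cite[Theorem II.1]{LS2023} exceeds the budget $P\le 546$ dictated by the weight $\tfrac{1}{1094}$. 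A secondary technical point is supplying the Lipschitz stability of the Maxwell--Schr\"odinger flow issued from the merely self-adjoint operator $\widehat{p}_0=\mathcal{W}^{-1}[m_{p,\sigma,R}]$, in exactly the weighted trace norm appearing in the statement, and checking that \cite[Theorem II.1]{LS2023} applies with such a (non-positive) initial operator.
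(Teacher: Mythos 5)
Your architecture differs from the paper's in one structural respect: you insert an intermediate Maxwell--Schr\"odinger flow $(\widehat{p}_t,\widehat{\alpha}_t)$ launched from the regularized data $(\mathcal{W}^{-1}[m_{p,\sigma,R}],\id_{\abs{\cdot}\le\Lambda}\alpha)$ and split the estimate into an MS-stability step plus a matched-data MS-to-VM comparison. The paper does not do this, because \cite[Theorem II.1]{LS2023} is already formulated for \emph{mismatched} initial data: it bounds $N^{-1}\norm{\sqrt{1-\varepsilon^2\Delta}(\widetilde{p}_t-\mathcal{W}^{-1}[W_{N,t}])\sqrt{1-\varepsilon^2\Delta}}_{\mathfrak{S}^1}+\norm{\widetilde{\alpha}_t-\alpha_t}_{\mathfrak{h}_{1/2}\cap\dot{\mathfrak{h}}_{-1/2}}$ directly by $\big[N^{-1}\norm{\sqrt{1-\varepsilon^2\Delta}(p-\mathcal{W}^{-1}[m_{p,\sigma,R}])\sqrt{1-\varepsilon^2\Delta}}_{\mathfrak{S}^1}+\norm{\id_{\abs{\cdot}\ge\Lambda}\alpha}+\varepsilon\widetilde{K}(t)\big]e^{\mathcal{K}(t)}$, so a single application with MS data $(p,\alpha)$ and VM data $(m_{p,\sigma,R},\id_{\abs{\cdot}\le\Lambda}\alpha)$, combined with Lemma~\ref{lemma:Vlasov-Maxwell distance between initial data and its regularization} and Lemma~\ref{lemma:propagation estimates Vlasov-Maxwell}, finishes the proof. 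Your exponent bookkeeping and the identification of $\varepsilon^{1/2188}$ as the dominant rate agree with the paper's.

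The detour, however, creates genuine gaps that your write-up acknowledges only in passing. First, $\widehat{p}_0=\mathcal{W}^{-1}[m_{p,\sigma,R}]$ is not covered by the well-posedness theory you would need: Proposition~\ref{proposition:well-posedness of Maxwell-Schroedinger} requires $p_0\in\mathfrak{S}_{+}^{2,1}$, and essentially every estimate in Sections IV--V (energy conservation, \eqref{eq:bound for the h-1 norm of alpha}, Lemma~\ref{lemma: semiclassical structure}) uses that $p_t$ is a positive rank-$N$ projection with $\tr p_t=N$. The nonnegativity of the \emph{function} $m_{p,\sigma,R}$ established in \eqref{eq:positivity of the Husimi measure} does not make its Weyl quantization a positive \emph{operator}, and it is certainly not a projection; your assertion that "positivity plays no role" is exactly the claim that would have to be proved. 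Second, the Lipschitz stability of the MS flow with respect to initial data, in the specific norms $N^{-1}\norm{\sqrt{1-\varepsilon^2\Delta}\,\cdot\,\sqrt{1-\varepsilon^2\Delta}}_{\mathfrak{S}^1}$ and $\mathfrak{h}_{1/2}\cap\dot{\mathfrak{h}}_{-1/2}$ and with a Gr\"onwall constant compatible with the quadruple exponential, is nowhere available in the paper or quoted from \cite{LS2023}; it is a nontrivial estimate on the difference of two nonlinear flows, not a corollary of uniqueness. Third, applying \cite[Theorem II.1]{LS2023} in your step (ii) with MS datum $\widehat{p}_0$ would require re-verifying for $\widehat{p}_0$ the semiclassical-structure hypotheses \eqref{eq:Vlasov-Maxwell result more restricted initial semiclassical structure}, which are assumed only for $p$. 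All three issues disappear if you follow the paper and apply the mismatched-data comparison theorem once.
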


Combining the previous proposition and Theorem~\ref{theorem:main theorem} finally let us obtain Theorem~\ref{theorem:derivation of the Vlasov-Maxwell equations}.

\begin{proof}[Proof of Theorem~\ref{theorem:derivation of the Vlasov-Maxwell equations}]
Let $\Psi_{N,t} = e^{- i \varepsilon^{-1} H_N^{\rm{PF}} t} \Psi_N$, $(\widetilde{p}_t, \widetilde{\alpha}_t)$ be the unique solution of the Maxwell--Schr\"odinger equations~\eqref{eq:Maxwell-Schroedinger equations} with initial datum $(p,\alpha)$, and $(W_{N,t},\alpha_t)$ be the unique solution of the Vlasov--Maxwell equations~\eqref{eq: Vlasov-Maxwell different style of writing} with regularized initial datum $(m_{p,\sigma,R}, \id_{\abs{\cdot} \leq \Lambda} \alpha)$.
By  the triangular inequality we obtain 
\begin{align}
&\norm{ \sqrt{1 - \varepsilon^2 \Delta} \left( \gamma^{(1,0)}_{\Psi_{N,t}} - N^{-1}  \mathcal{W}^{-1}[W_{N,t}] \right) \sqrt{1 - \varepsilon^2 \Delta} }_{\mathfrak{S}^1} 
\nonumber \\
&\quad \leq 
\norm{ \sqrt{1 - \varepsilon^2 \Delta} \left( \gamma^{(1,0)}_{\Psi_{N,t}} - N^{-1}  \widetilde{p}_t \right) \sqrt{1 - \varepsilon^2 \Delta} }_{\mathfrak{S}^1} 
\nonumber \\
&\qquad
+ N^{-1}  \norm{ \sqrt{1 - \varepsilon^2 \Delta} \left( \widetilde{p}_t - \mathcal{W}^{-1}[W_{N,t}] \right) \sqrt{1 - \varepsilon^2 \Delta} }_{\mathfrak{S}^1} .
\end{align}
Using \eqref{eq:Weyl operators shifting property}, $\norm{\Psi_{N,t}} = 1$ and the triangular inequality we get
\begin{align}
&\varepsilon^4  \big\langle \Psi_{N,t} ,W \big( \varepsilon^{-2} \alpha_t \big) \left( \mathcal{N} + H_f \right) \,  W^* \big( \varepsilon^{-2} \alpha_t \big) \Psi_{N,t} \big\rangle
\nonumber \\
&\quad =
\sum_{\lambda =1,2} \int_{\mathbb{R}^3} (1 + \abs{k})
\norm{\left( \varepsilon^2 a_{k,\lambda} - \alpha_t(k,\lambda) \right) \Psi_{N,t}}^2 \, dk
\nonumber \\
&\quad \leq \sum_{\lambda =1,2} \int_{\mathbb{R}^3} (1 + \abs{k})
\norm{\left( \varepsilon^2 a_{k,\lambda} - \widetilde{\alpha}_t(k,\lambda) \right) \Psi_{N,t}}^2 \, dk
\nonumber \\
&\qquad + \sum_{\lambda =1,2} \int_{\mathbb{R}^3} (1 + \abs{k})
\abs{\alpha_t(k,\lambda) - \widetilde{\alpha}_t(k,\lambda)}^2 \, dk
\nonumber \\
&\quad \leq 
\varepsilon^4  \big\langle \Psi_{N,t} ,W \big( \varepsilon^{-2} \widetilde{\alpha}_t \big) \left( \mathcal{N} + H_f \right) \,  W^* \big( \varepsilon^{-2} \widetilde{\alpha}_t \big) \Psi_{N,t} \big\rangle
+ 2 \norm{\widetilde{\alpha}_t - \alpha_t}^2_{\mathfrak{h}_{1/2}} .
\end{align}
Estimating the terms on the right hand sides of the expression above by means of Theorem~\ref{theorem:main theorem} and Proposition~\ref{proposition:comparison between Maxwell--Schroedinger and Vlasov--Maxwell} shows the claim.
\end{proof}

\section{Properties of the Maxwell--Schr\"odinger Solutions}
\label{section:properties of the MS equations}

We first collect some estimates from \cite{LS2023}, which will be useful in the following, and then prove inequality~\eqref{eq:bound for the h-1 norm of alpha} and Lemma~\ref{lemma: semiclassical structure}.

\begin{lemma}
There exists a constant $C>0$ such that, if $\alpha \in \mathfrak{h}_{1/2}$ and $p \in \mathfrak{S}^{1}(L^2(\mathbb{R}^3))$ is a rank-$N$ projection, the following holds:
\begin{subequations}
\begin{align}
\label{eq:L-1 norm of the first moment of the potential in terms of kappa}
\norm{K}_{L^{\infty}} 
&\leq 
\norm{(1 + \abs{\cdot} ) \mathcal{F}[K]}_{L^1} 
\leq C \big\| (1 + \abs{\cdot}^{-1}) \mathcal{F}[\kappa] \big\|_{L^2}^2 ,
\\
\label{eq:estimate for A L-infty to h-1-2}
\norm{\kappa * \vA_{\alpha}}_{L^{\infty}} 
&\leq \big\| \abs{\cdot}^{-1} \mathcal{F}[\kappa] \big\|_{L^2} \norm{\alpha}_{\dot{\mathfrak{h}}_{1/2}} ,
\\
\label{eq:estimate vector potential W-1-infty norm}
\norm{\kappa * \vA_{\alpha}}_{W^{1,\infty}_0(\mathbb{R}^3)}
&\leq C \big\| (1 + \abs{\cdot}^{-1}) \mathcal{F}[\kappa] \big\|_{L^2} \norm{\alpha}_{\dot{\mathfrak{h}}_{1/2}} ,
\\
\label{eq:estimate for mean-field potential}
\norm{K * \rho_{p}}_{W_0^{2,\infty}(\mathbb{R}^3)} &\leq C  N^{-1} \big\| ( 1 + \abs{\cdot}^{-1}) \mathcal{F}[\kappa] \big\|_{L^2}^2  \norm{p}_{\mathfrak{S}^1} , \\
\label{eq:estimate for exchange term}
\norm{X_p}_{\mathfrak{S}^{\infty}} &\leq C N^{-1}  \big\| ( 1 + \abs{\cdot}^{-1}) \mathcal{F}[\kappa] \big\|_{L^2}^2 \norm{p}_{\mathfrak{S}^1} , \\
\label{eq:estimate for field energy}
\norm{\alpha}_{\dot{\mathfrak{h}}_{1/2}}^2
&\leq N^{-1} \mathcal{E}^{\rm{MS}}[p,\alpha] + C N^{-2}  \big\| ( 1 + \abs{\cdot}^{-1}) \mathcal{F}[\kappa] \big\|_{L^2}^2 \norm{p}_{\mathfrak{S}^1}^2 , 
\\
\label{eq:estimate for kinetic energy}
\tr \left( - \varepsilon^2 \Delta p \right)
&\leq C \left< N^{-1}  \big\| (1 + \abs{\cdot}^{-1}) \mathcal{F}[\kappa] \big\|_{L^2}^2  \norm{p}_{\mathfrak{S}^1} \right>
\nonumber \\
&\quad \times
\left(  \mathcal{E}^{\rm{MS}}[p,\alpha] + C N^{-1}  \big\| ( 1 + \abs{\cdot}^{-1}) \mathcal{F}[\kappa] \big\|_{L^2}^2 \norm{p}_{\mathfrak{S}^1}^2
\right) .
\end{align}
\end{subequations}
\end{lemma}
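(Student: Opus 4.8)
Since these bounds are quoted from \cite{LS2023}, I only sketch the route I would take. Everything rests on two Fourier-space identities. First, from $K=\tfrac{1}{4\pi}\kappa*\kappa*\abs{\cdot}^{-1}$ and $\mathcal{F}[\abs{\cdot}^{-1}](k)=\sqrt{2/\pi}\,\abs{k}^{-2}$ one obtains, up to an explicit constant, $\mathcal{F}[K](k)=(2\pi)^{3/2}\abs{k}^{-2}\mathcal{F}[\kappa](k)^{2}=(2\pi)^{3/2}\big(\abs{k}^{-1}\mathcal{F}[\kappa](k)\big)^{2}$, hence $\norm{(1+\abs{\cdot})^{m}\mathcal{F}[K]}_{L^{1}}=C\int(1+\abs{k})^{m}\abs{k}^{-2}\abs{\mathcal{F}[\kappa](k)}^{2}\,dk$, which the elementary pointwise bound $(1+\abs{k})^{m}\abs{k}^{-2}\le C(1+\abs{k}^{-1})^{2}$ (for $m\in\{0,1,2\}$) turns into $C B_{\kappa}^{2}$. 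Second, unfolding \eqref{eq:definition vector potential} and using that $\kappa$ is real and even gives $(\kappa*\vA_{\alpha})(x)=\sum_{\lambda}\int\frac{\mathcal{F}[\kappa](k)}{\sqrt{2\abs{k}}}\vep_{\lambda}(k)\big(e^{ikx}\alpha(k,\lambda)+e^{-ikx}\overline{\alpha(k,\lambda)}\big)\,dk$.

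With these at hand, \eqref{eq:L-1 norm of the first moment of the potential in terms of kappa} is immediate from $\norm{K}_{L^{\infty}}\le C\norm{\mathcal{F}[K]}_{L^{1}}$. For \eqref{eq:estimate for A L-infty to h-1-2} and \eqref{eq:estimate vector potential W-1-infty norm} I would use $\norm{g}_{W_{0}^{\sigma,\infty}}\le C\norm{(1+\abs{\cdot})^{\sigma}\mathcal{F}[g]}_{L^{1}}$ and, in the resulting $k$-integral, pair $\abs{k}^{-1}\mathcal{F}[\kappa](k)$ (respectively $\mathcal{F}[\kappa](k)$, after the factor $\abs{k}$ produced by one derivative) against $\abs{k}^{1/2}\abs{\alpha(k,\lambda)}$ by Cauchy--Schwarz; the first factor integrates to $\norm{\abs{\cdot}^{-1}\mathcal{F}[\kappa]}_{L^{2}}$ (resp.\ $\norm{\mathcal{F}[\kappa]}_{L^{2}}$) and the second to $\norm{\alpha}_{\dot{\mathfrak{h}}_{1/2}}$. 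For \eqref{eq:estimate for mean-field potential} I would bound $\norm{K*\rho_{p}}_{W_{0}^{2,\infty}}\le C\norm{(1+\abs{\cdot})^{2}\mathcal{F}[K]}_{L^{1}}\norm{\mathcal{F}[\rho_{p}]}_{L^{\infty}}$ with $\norm{\mathcal{F}[\rho_{p}]}_{L^{\infty}}\le C\norm{\rho_{p}}_{L^{1}}=CN^{-1}\norm{p}_{\mathfrak{S}^{1}}$, using that the diagonal of $p$ is nonnegative and integrates to $\tr p$ for $p$ in the regularity class considered. For the exchange term \eqref{eq:estimate for exchange term} the key step is to write $K(x-y)=(2\pi)^{-3/2}\int\mathcal{F}[K](k)e^{ik(x-y)}\,dk$, which exhibits $X_{p}$ as the operator-norm convergent integral $N^{-1}(2\pi)^{-3/2}\int\mathcal{F}[K](k)\,e^{ikx}pe^{-ikx}\,dk$; since multiplication by $e^{ikx}$ is unitary, $\norm{e^{ikx}pe^{-ikx}}_{\mathfrak{S}^{\infty}}=\norm{p}_{\mathfrak{S}^{\infty}}\le N^{-1}\norm{p}_{\mathfrak{S}^{1}}$ for a rank-$N$ projection, and integrating $\norm{\mathcal{F}[K]}_{L^{1}}\le C B_{\kappa}^{2}$ concludes.

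For the energetic estimates I would work directly from the definition \eqref{eq:Maxwell-Schroedinger with exchange term energy definition}. Solving it for $N\norm{\alpha}_{\dot{\mathfrak{h}}_{1/2}}^{2}$, discarding the manifestly nonnegative magnetic term $\tr\!\big(p(-i\varepsilon\nabla-\kappa*\vA_{\alpha})^{2}\big)\ge 0$, and estimating $\abs{\tr\!\big((K*\rho_{p}-X_{p})p\big)}\le\big(\norm{K*\rho_{p}}_{L^{\infty}}+\norm{X_{p}}_{\mathfrak{S}^{\infty}}\big)\norm{p}_{\mathfrak{S}^{1}}$ through \eqref{eq:estimate for mean-field potential}--\eqref{eq:estimate for exchange term} gives \eqref{eq:estimate for field energy}. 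The kinetic bound \eqref{eq:estimate for kinetic energy} is the step I expect to be the main obstacle. Expanding the magnetic Laplacian and using the Coulomb-gauge identity $\nabla\cdot(\kappa*\vA_{\alpha})=0$ one gets $\tr(-\varepsilon^{2}\Delta p)=\tr\!\big(p(-i\varepsilon\nabla-\kappa*\vA_{\alpha})^{2}\big)-2i\varepsilon\,\tr\!\big(p\,(\kappa*\vA_{\alpha})\cdot\nabla\big)-\tr\!\big(p(\kappa*\vA_{\alpha})^{2}\big)$; the first term is controlled as for \eqref{eq:estimate for field energy}, the last (nonnegative) term is dropped, and the cross term is rewritten via cyclicity of the trace and $p=p^{2}$ as a Hilbert--Schmidt pairing of $(i\varepsilon\nabla)p$ with $p\,(\kappa*\vA_{\alpha})$, hence bounded by $2\norm{\kappa*\vA_{\alpha}}_{L^{\infty}}\norm{p}_{\mathfrak{S}^{1}}^{1/2}\,\tr(-\varepsilon^{2}\Delta p)^{1/2}$ (using $\norm{(i\varepsilon\nabla)p}_{\mathfrak{S}^{2}}^{2}=\tr(-\varepsilon^{2}\Delta p)$ and $\norm{p}_{\mathfrak{S}^{2}}^{2}=\norm{p}_{\mathfrak{S}^{1}}$). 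A Young inequality then absorbs $\tfrac12\tr(-\varepsilon^{2}\Delta p)$ into the left-hand side, and inserting \eqref{eq:estimate for A L-infty to h-1-2} and \eqref{eq:estimate for field energy} to replace $\norm{\kappa*\vA_{\alpha}}_{L^{\infty}}^{2}$ and $\norm{\alpha}_{\dot{\mathfrak{h}}_{1/2}}^{2}$ by $N^{-1}\mathcal{E}^{\rm{MS}}[p,\alpha]$ together with terms of order $N^{-1}B_{\kappa}^{2}\norm{p}_{\mathfrak{S}^{1}}^{2}$, and collecting powers of $B_{\kappa}$ and of $N^{-1}\norm{p}_{\mathfrak{S}^{1}}$, reproduces the stated form. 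The subtle points are that this last absorption is self-referential --- it closes only because its coefficient can itself be re-expressed through $\mathcal{E}^{\rm{MS}}$ via \eqref{eq:estimate for field energy} --- and that one must ensure $\rho_{p}$ is a genuine $L^{1}$ function, which is exactly where the trace-class regularity of $p$ enters.
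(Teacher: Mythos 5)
Your proposal is correct and follows essentially the same route as the paper for every item: the Fourier identity $\mathcal{F}[K](k)=C\,|k|^{-2}\mathcal{F}[\kappa](k)^2$, Cauchy--Schwarz in $k$ for the vector potential, Young's inequality for $K*\rho_p$, positivity of the magnetic kinetic term for \eqref{eq:estimate for field energy}, and the expansion of $(-i\varepsilon\nabla-\kappa*\vA_\alpha)^2$ with absorption of half the kinetic term for \eqref{eq:estimate for kinetic energy}. The one genuine deviation is the exchange term: the paper bounds $\norm{X_p}_{\mathfrak{S}^{\infty}}\le N^{-1}\norm{K}_{L^\infty}\norm{p}_{\mathfrak{S}^2}^2$ via a Schur-type kernel estimate and then uses $\norm{p}_{\mathfrak{S}^2}^2=\norm{p}_{\mathfrak{S}^1}$, whereas you represent $X_p$ as the Bochner integral $N^{-1}(2\pi)^{-3/2}\int\mathcal{F}[K](k)\,e^{ik\hat{x}}p\,e^{-ik\hat{x}}\,dk$ and use $\norm{p}_{\mathfrak{S}^\infty}=1$; both are valid, and your version in fact yields the slightly stronger bound $CN^{-1}B_\kappa^2$ rather than $CN^{-1}B_\kappa^2\norm{p}_{\mathfrak{S}^1}$. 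Your worry that the kinetic-energy absorption is ``self-referential'' is unfounded: \eqref{eq:estimate for field energy} is established independently beforehand, so substituting it into the coefficient is a linear chain of implications, exactly as in the paper.
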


\begin{proof}
Inequality~\eqref{eq:L-1 norm of the first moment of the potential in terms of kappa} is a direct consequence of
\begin{align}
\label{eq:Fourier transform of the potential}
\mathcal{F}[K](k) &= \mathcal{F}[\kappa * \kappa * \abs{\cdot}^{-1}](k)
= C \mathcal{F}[\kappa](k)^2 \abs{k}^{-2} .
\end{align}
The estimates \eqref{eq:estimate for A L-infty to h-1-2} and \eqref{eq:estimate vector potential W-1-infty norm} are obtained by means of 
\begin{align}
\kappa * \vA_{\alpha}(x) &=  \sum_{\lambda = 1,2} \int_{\mathbb{R}^3}  \mathcal{F}[\kappa](k) \frac{1}{\sqrt{2 \abs{k}}} \vep_{\lambda}(k) \left( e^{i k x} \alpha(k,\lambda) + e^{- i k x} \overline{\alpha(k, \lambda)} \right) \, dk
\end{align}
and the Cauchy--Schwarz inequality. By applying \eqref{eq:Fourier transform of the potential} and Young's inequality, we obtain
\begin{align}
\norm{K * \rho_p}_{W_0^{2,\infty}(\mathbb{R}^3)}
&\leq \norm{K}_{W_0^{2,\infty}(\mathbb{R}^3)} \norm{\rho_p}_{L^1}
\leq N^{-1} \norm{\left< \cdot \right>^2 \mathcal{F}[K]}_{L^1(\mathbb{R}^3)} \norm{p}_{\mathfrak{S}^1}
\leq C  N^{-1} B_{\kappa}^2  \norm{p}_{\mathfrak{S}^1} .
\end{align}
Note that
\begin{align}
\abs{X_{p} \psi(x)}
&\leq N^{-1} \int_{\mathbb{R}^3} \abs{K(x-y)}
\abs{p(x;y)} \abs{\psi(y)} \, dy
\leq N^{-1} \norm{K}_{L^{\infty}} \norm{\psi}_{L^2} \left( \int_{\mathbb{R}^3} \abs{p(x;y)}^2 \, dy \right)^{1/2}
\end{align}
where $\psi \in L^2(\mathbb{R}^3)$. This implies that
$\norm{X_p}_{\mathfrak{S}^{\infty}}
\leq N^{-1} \norm{K}_{L^{\infty}}
\norm{p}_{\mathfrak{S}^2}^2$.
Together with $\norm{p}_{\mathfrak{S}^2}^2 = \norm{p}_{\mathfrak{S}^1}$ and  \eqref{eq:L-1 norm of the first moment of the potential in terms of kappa}, this proves \eqref{eq:estimate for exchange term}.
Estimate \eqref{eq:estimate for field energy} is a direct consequence of \eqref{eq:estimate for mean-field potential}, \eqref{eq:estimate for exchange term}, and the positivity of the operator $\left( - i \varepsilon \nabla - \kappa * \vA_{\alpha} \right)^2$. Using \eqref{eq:estimate vector potential W-1-infty norm}, \eqref{eq:estimate for mean-field potential}, \eqref{eq:estimate for exchange term}, \eqref{eq:estimate for field energy}, and $\left( - i \varepsilon \nabla - \kappa * \vA_{\alpha} \right)^2 \geq  - \varepsilon^2 \Delta + 2 i \varepsilon  \kappa * \vA_{\alpha} \cdot \nabla$, we obtain
\begin{align}
\tr \left( - \varepsilon^2 \Delta p \right)
&\leq  \tr \left(\left( - i \varepsilon \nabla - \kappa * \vA_{\alpha} \right)^2 p \right)
+ 2 \abs{\tr \left( i \varepsilon  \kappa \vA_{\alpha} \cdot \nabla p \right)}
\nonumber \\
&\leq  \tr \left(\left( - i \varepsilon \nabla - \kappa * \vA_{\alpha} \right)^2 p \right)
+ 2 \norm{\kappa * \vA_{\alpha}}_{L^{\infty}} \norm{i \varepsilon \sqrt{p}}_{\mathfrak{S}^2}  \norm{\sqrt{p}}_{\mathfrak{S}^2} 
\nonumber \\
&\leq  \tr \left(\left( - i \varepsilon \nabla - \kappa * \vA_{\alpha} \right)^2 p \right)
+ \frac{1}{2} \tr \left( - \varepsilon^2 \Delta p \right) +
 C B_{\kappa}^2 \norm{\sqrt{p}}_{\mathfrak{S}^2}^2 \norm{\alpha}_{\dot{\mathfrak{h}}_{1/2}}^2 ,
\end{align} 
leading to
\begin{align}
\tr \left( - \varepsilon^2 \Delta p \right)
&\leq  2 \, \tr \left(\left( - i \varepsilon \nabla - \kappa * \vA_{\alpha} \right)^2 p \right)
+ C B_{\kappa}^2 \norm{\sqrt{p}}_{\mathfrak{S}^2}^2 \norm{\alpha}_{\dot{\mathfrak{h}}_{1/2}}^2
\nonumber \\
&\leq 2 \, \mathcal{E}^{\rm{MS}}[p,\alpha]
+ C  N^{-1} B_{\kappa}^2  \norm{p}_{\mathfrak{S}^1}^2
+
C B_{\kappa}^2 \norm{p}_{\mathfrak{S}^1} 
\left( N^{-1} \mathcal{E}^{\rm{MS}}[p,\alpha] + C N^{-2}  B_{\kappa}^2 \norm{p}_{\mathfrak{S}^1}^2
\right)
\nonumber \\
&\leq C \left( 1 + C N^{-1}  B_{\kappa}^2  \norm{p}_{\mathfrak{S}^1} 
\right)
\left(  \mathcal{E}^{\rm{MS}}[p,\alpha] + C N^{-1}  B_{\kappa}^2 \norm{p}_{\mathfrak{S}^1}^2
\right) .
\end{align} 

\end{proof}

The following lemma provides necessary estimates for the propagation of the semiclassical structure.

\begin{lemma}
There exists a constant $C>0$ such that, if $\alpha \in \mathfrak{h}_{1}$,  $p \in \mathfrak{S}^{1}(L^2(\mathbb{R}^3))$ is a rank-$N$ projection, and $q = \id_{L^2(\mathbb{R}^3)} - p$, the following estimates hold:
\begin{subequations}
\begin{align}
\label{eq:estimate for qAp}
\norm{q  \kappa * \vA_{\alpha}p}_{\mathfrak{S}^1}
&\leq  C \sup_{k \in \mathbb{R}^3}
\Big\{ \left(1 + \abs{k} \right)^{-1} \norm{q e^{ikx} p}_{\mathfrak{S}^1} 
\Big\} 
\big\| (1 + \abs{\cdot}^{-1}) \mathcal{F}[\kappa] \big\|_{L^2}
\norm{\alpha}_{\dot{\mathfrak{h}}_{1/2}} , 
\\
\label{eq:estimate for q nabla-Ap}
\norm{q \left[ i \nabla , \kappa * \vA_{\alpha} \right] p}_{\mathfrak{S}^1}
&\leq  C  \sup_{k \in \mathbb{R}^3}
\Big\{ \left(1 + \abs{k} \right)^{-1} \norm{q e^{ikx} p}_{\mathfrak{S}^1} 
\Big\} 
\big\| \abs{\cdot}^{1/2} \mathcal{F}[\kappa] \big\|_{L^2}
\norm{\alpha}_{\mathfrak{h}_{1}} ,
\end{align}
\end{subequations}
and
\begin{align}
\label{eq:semiclassical structure for exponential times magnetic gradient}
&\sup_{k \in \mathbb{R}^3} \Big\{ (1 + \abs{k})^{-1} \norm{q e^{ikx} \left(  i \varepsilon \nabla + \kappa * \vA_{\alpha} \right) p}_{\mathfrak{S}^1}  \Big\}
\nonumber \\
&\quad \leq 
C
\left(  \big\| ( 1 + \abs{\cdot}^{-1} ) \mathcal{F}[\kappa] \big\|_{L^2}  \norm{\alpha}_{\dot{\mathfrak{h}}_{1/2}} 
+\norm{i \varepsilon \nabla p}_{\mathfrak{S}^{\infty}}
\right)
\sup_{k \in \mathbb{R}^3}
\Big\{ \left(1 + \abs{k} \right)^{-1} \norm{q e^{ikx} p}_{\mathfrak{S}^1} 
\Big\} 
\nonumber \\
&\qquad 
+ \norm{q  i \varepsilon \nabla p}_{\mathfrak{S}^1} .
\end{align}
\end{lemma}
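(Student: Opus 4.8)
The plan is to split the operator inside the trace norm by the triangle inequality into the two contributions coming from $i\varepsilon\nabla$ and from $\kappa*\vA_{\alpha}$, and in each of them to insert the resolution of the identity $\id = p + q$ immediately to the right of $e^{ikx}$. After this insertion I distribute the $\mathfrak{S}^1$-norm with H\"older's inequality $\norm{AB}_{\mathfrak{S}^1}\le\norm{A}_{\mathfrak{S}^\infty}\norm{B}_{\mathfrak{S}^1}$, arranging the grouping so that the trace-class factor is always one of the three controlled quantities $\norm{q e^{ikx}p}_{\mathfrak{S}^1}$, $\norm{q\,i\varepsilon\nabla p}_{\mathfrak{S}^1}$, or $\norm{q(\kappa*\vA_{\alpha})p}_{\mathfrak{S}^1}$, while the operator-norm factor is one of the bounded pieces $p$, $q$, $e^{ikx}$, $i\varepsilon\nabla p$, or $\kappa*\vA_{\alpha}$. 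Finally I use $(1+\abs{k})^{-1}\le 1$ to drop the weight from the $k$-independent terms before taking the supremum.

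Concretely, for the first contribution I write $q e^{ikx} i\varepsilon\nabla p = q e^{ikx} p\, i\varepsilon\nabla p + q e^{ikx} q\, i\varepsilon\nabla p$. Grouping the first summand as $(q e^{ikx}p)(i\varepsilon\nabla p)$ gives the bound $\norm{q e^{ikx}p}_{\mathfrak{S}^1}\norm{i\varepsilon\nabla p}_{\mathfrak{S}^\infty}$, and grouping the second as $(q e^{ikx})(q\,i\varepsilon\nabla p)$ gives $\norm{q e^{ikx}}_{\mathfrak{S}^\infty}\norm{q\,i\varepsilon\nabla p}_{\mathfrak{S}^1}\le\norm{q\,i\varepsilon\nabla p}_{\mathfrak{S}^1}$, since $q$ is an orthogonal projection and $e^{ikx}$ is unitary so $\norm{q e^{ikx}}_{\mathfrak{S}^\infty}\le 1$. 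Dividing by $1+\abs{k}$ and taking the supremum, the first summand contributes $\norm{i\varepsilon\nabla p}_{\mathfrak{S}^\infty}\sup_{k}\{(1+\abs{k})^{-1}\norm{q e^{ikx}p}_{\mathfrak{S}^1}\}$ and the second contributes $\norm{q\,i\varepsilon\nabla p}_{\mathfrak{S}^1}$.

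For the magnetic contribution I proceed identically: $q e^{ikx}(\kappa*\vA_{\alpha})p = q e^{ikx}p\,(\kappa*\vA_{\alpha})p + q e^{ikx}q\,(\kappa*\vA_{\alpha})p$, where the first summand is $\le\norm{q e^{ikx}p}_{\mathfrak{S}^1}\norm{\kappa*\vA_{\alpha}}_{L^\infty}$ and the $L^\infty$-norm is bounded by $\big\|\abs{\cdot}^{-1}\mathcal{F}[\kappa]\big\|_{L^2}\norm{\alpha}_{\dot{\mathfrak{h}}_{1/2}}$ via \eqref{eq:estimate for A L-infty to h-1-2}, while the second summand is $\le\norm{q e^{ikx}q}_{\mathfrak{S}^\infty}\norm{q(\kappa*\vA_{\alpha})p}_{\mathfrak{S}^1}\le\norm{q(\kappa*\vA_{\alpha})p}_{\mathfrak{S}^1}$, which is controlled by \eqref{eq:estimate for qAp}. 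Adding the four resulting terms and using $\big\|\abs{\cdot}^{-1}\mathcal{F}[\kappa]\big\|_{L^2}\le\big\|(1+\abs{\cdot}^{-1})\mathcal{F}[\kappa]\big\|_{L^2}$ yields exactly the claimed estimate. The one place where care is genuinely required — and the step I would flag as the potential pitfall — is the side on which $\id = p+q$ is inserted: one must keep a factor $q$ adjacent to the non-trace-class operator ($i\varepsilon\nabla$ or $\kappa*\vA_{\alpha}$). Inserting the identity on the other side would isolate $e^{ikx}p$ as the trace-class factor and produce a term of size $\norm{e^{ikx}p}_{\mathfrak{S}^1}=\norm{p}_{\mathfrak{S}^1}=N$, which is far too large; it is precisely the semiclassical commutator structure encoded in $\norm{q e^{ikx}p}_{\mathfrak{S}^1}$ that makes the good grouping available. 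Beyond this bookkeeping the argument is a routine combination of the triangle and H\"older inequalities with the previously established bounds \eqref{eq:estimate for A L-infty to h-1-2} and \eqref{eq:estimate for qAp}.
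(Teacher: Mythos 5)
There is a genuine gap: the lemma you are asked to prove consists of \emph{three} estimates, and your argument only establishes the third one, \eqref{eq:semiclassical structure for exponential times magnetic gradient}. You invoke \eqref{eq:estimate for qAp} as a ``previously established bound'' on the same footing as \eqref{eq:estimate for A L-infty to h-1-2}, but \eqref{eq:estimate for qAp} is part of the statement under proof, and \eqref{eq:estimate for q nabla-Ap} is never mentioned at all. Since your proof of the third estimate genuinely needs \eqref{eq:estimate for qAp} to control the term $q e^{ikx} q\,(\kappa*\vA_\alpha)p$, the argument as written is circular in form and, more importantly, incomplete in substance.

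The missing idea is the plane-wave decomposition of the smeared vector potential: writing
\begin{align}
\kappa * \vA_{\alpha}(x) =  \sum_{\lambda = 1,2} \int_{\mathbb{R}^3}  \mathcal{F}[\kappa](k) \frac{1}{\sqrt{2 \abs{k}}} \vep_{\lambda}(k) \left( e^{i k x} \alpha(k,\lambda) + e^{- i k x} \overline{\alpha(k, \lambda)} \right) dk ,
\nonumber
\end{align}
the triangle inequality for the $\mathfrak{S}^1$-norm gives
$\norm{q\,\kappa*\vA_\alpha\,p}_{\mathfrak{S}^1} \leq \sup_{k}\big\{(1+\abs{k})^{-1}\norm{q e^{ikx}p}_{\mathfrak{S}^1}\big\}\sum_{\lambda}\int (1+\abs{k})\abs{k}^{-1/2}\abs{\mathcal{F}[\kappa](k)}\abs{\alpha(k,\lambda)}\,dk$,
and Cauchy--Schwarz in $k$ (splitting the integrand as $(1+\abs{k}^{-1})\abs{\mathcal{F}[\kappa](k)}\cdot\abs{k}^{1/2}\abs{\alpha(k,\lambda)}$) yields exactly \eqref{eq:estimate for qAp}. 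The commutator $\left[i\nabla, \kappa*\vA_\alpha\right]$ produces an extra factor of $k$ from $\left[i\nabla, e^{ikx}\right]$, so the same argument with weight $\abs{k}^{1/2}(1+\abs{k})$ gives \eqref{eq:estimate for q nabla-Ap} with $\big\|\abs{\cdot}^{1/2}\mathcal{F}[\kappa]\big\|_{L^2}\norm{\alpha}_{\mathfrak{h}_1}$. Once these two are in place, your treatment of \eqref{eq:semiclassical structure for exponential times magnetic gradient} -- inserting $\id = p+q$ to the right of $e^{ikx}$, keeping a $q$ adjacent to the unbounded factor, and applying H\"older together with \eqref{eq:estimate for A L-infty to h-1-2} and \eqref{eq:estimate for qAp} -- is correct and coincides with the paper's argument.
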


\begin{proof}
The inequalities \eqref{eq:estimate for qAp} and \eqref{eq:estimate for q nabla-Ap} are obtained by applying the Cauchy--Schwarz inequality to the expression
\begin{align}
\norm{q  \kappa * \vA_{\alpha}p}_{\mathfrak{S}^1}
&\leq \sup_{k \in \mathbb{R}^3}
\Big\{ \left(1 + \abs{k} \right)^{-1} \norm{q e^{ikx} p}_{\mathfrak{S}^1} 
\Big\} 
\int_{\mathbb{R}^3}
\abs{k}^{-1/2}  \left(1 + \abs{k} \right) \abs{\mathcal{F}[\kappa](k)} \abs{\alpha(k,\lambda)}   \, dk
\end{align}
and
\begin{align}
\norm{q \left[ i \nabla , \kappa * \vA_{\alpha} \right] p}_{\mathfrak{S}^1}
&\leq \sup_{k \in \mathbb{R}^3}
\Big\{ \left(1 + \abs{k} \right)^{-1} \norm{q e^{ikx} p}_{\mathfrak{S}^1} 
\Big\} 
\int_{\mathbb{R}^3}
\abs{k}^{1/2}  \left(1 + \abs{k} \right) \abs{\mathcal{F}[\kappa](k)} \abs{\alpha(k,\lambda)}  
\, dk.
\end{align}
Inserting $\id_{L^2(\mathbb{R}^3)} = p + q$ and using \eqref{eq:estimate for A L-infty to h-1-2} as well as  \eqref{eq:estimate for qAp}, we obtain
\begin{align}
&\norm{q e^{ikx} \left(  i \varepsilon \nabla + \kappa * \vA_{\alpha} \right) p}_{\mathfrak{S}^1}
\nonumber \\
&\quad \leq \norm{q e^{ikx} p \left(  i \varepsilon \nabla + \kappa * \vA_{\alpha} \right) p}_{\mathfrak{S}^1}
+ \norm{q e^{ikx} q \left(  i \varepsilon \nabla + \kappa * \vA_{\alpha} \right) p}_{\mathfrak{S}^1}
\nonumber \\
&\quad \leq \left( \norm{\kappa * \vA_{\alpha}}_{\mathfrak{S}^{\infty}}
+ \norm{i \varepsilon \nabla p_t}_{\mathfrak{S}^{\infty}} \right)
 \norm{q e^{ikx} p}_{\mathfrak{S}^1}
+ \norm{q  i \varepsilon \nabla p}_{\mathfrak{S}^1}
+ \norm{q  \kappa * \vA_{\alpha}  p}_{\mathfrak{S}^1}
\nonumber \\
&\quad \leq \left(  \big\|\abs{\cdot}^{-1} \mathcal{F}[\kappa] \big\|_{L^2} \norm{\alpha}_{\dot{\mathfrak{h}}_{1/2}} 
+  \norm{i \varepsilon \nabla p}_{\mathfrak{S}^{\infty}} \right)
\norm{q e^{ikx} p}_{\mathfrak{S}^1}
+ \norm{q  i \varepsilon \nabla p}_{\mathfrak{S}^1} 
\nonumber \\
&\qquad + 
C B_{\kappa} \sup_{l \in \mathbb{R}^3}
\Big\{ \left(1 + \abs{l} \right)^{-1} \norm{q e^{ilx} p}_{\mathfrak{S}^1} 
\Big\} 
\norm{\alpha}_{\dot{\mathfrak{h}}_{1/2}} 
\nonumber \\
&\quad \leq C
\left(  B_{\kappa}  \norm{\alpha}_{\dot{\mathfrak{h}}_{1/2}} 
+\norm{i \varepsilon \nabla p}_{\mathfrak{S}^{\infty}}
\right)
\bigg[ 
\norm{q e^{ikx} p}_{\mathfrak{S}^1}
+  \sup_{l \in \mathbb{R}^3}
\Big\{ \left(1 + \abs{l} \right)^{-1} \norm{q e^{ilx} p}_{\mathfrak{S}^1} 
\Big\}  \bigg]
\nonumber \\
&\qquad 
+ \norm{q  i \varepsilon \nabla p}_{\mathfrak{S}^1}  ,
\end{align}
from which \eqref{eq:semiclassical structure for exponential times magnetic gradient} follows easily.
\end{proof}

\subsection{Proofs of Inequality \eqref{eq:bound for the h-1 norm of alpha}}

\label{subsection:estimate for the h1-norm of alpha}

\begin{proof}[\unskip\nopunct]

By Duhamel's formula we have that 
\begin{align}
\label{eq:Duhamel expansion of the mode function}
\alpha_t(k,\lambda) = e^{- i \abs{k} t} \alpha_0(k,\lambda)
+ i e^{- i \abs{k} t} \int_0^t 
e^{i \abs{k} s} \sqrt{\frac{4 \pi^3}{\abs{k}}} \mathcal{F}[\kappa](k) \vep_{\lambda}(k) \mathcal{F}[\vJ_{p_s,\alpha_s}](k) \, ds
\end{align}
holds in $\dot{\mathfrak{h}}_{-1/2}$.
Note that if $\alpha \in \dot{\mathfrak{h}}_{-1/2}$ and $\beta \in \mathfrak{h}_{1}$ satisfy $\norm{\alpha - \beta}_{\dot{\mathfrak{h}}_{-1/2}} = 0$, then it implies $\alpha \in \mathfrak{h}_{1}$ and 
$\norm{\alpha - \beta}_{\mathfrak{h}_{1}} = 0$. To prove that $\alpha \in C \left( \mathbb{R}_{+};\mathfrak{h}_{1} \right) \cap C^1 \left( \mathbb{R}_{+};\mathfrak{h} \right)$, it is therefore sufficient to show the right hand side of \eqref{eq:Duhamel expansion of the mode function} is an element of $C \left( \mathbb{R}_{+};\mathfrak{h}_{1} \right) \cap C^1 \left( \mathbb{R}_{+};\mathfrak{h} \right)$. This holds for $t \mapsto e^{-i \abs{\cdot} t} \alpha_0$ because $\alpha_0 \in \mathfrak{h}_1$.
Using $ (p,\alpha) \in
 C \big( \mathbb{R}_{+} ; \textfrak{S}_{+}^{2,1} ( L^2(\mathbb{R}^3) ) \big) \times  C \big( \mathbb{R}_{+} ; \mathfrak{h}_{1/2} \big)$ and the estimate \cite[(III.28)]{LS2023}, we have 
that $\mathcal{F}[\vJ_{p_t,\alpha_t}] \in L^{\infty}(\mathbb{R}^3)$ and 
\begin{align}
\lim_{h \rightarrow 0} \norm{\mathcal{F}[\vJ_{p_{t+h},\alpha_{t+h}}] - \mathcal{F}[\vJ_{p_t,\alpha_t}]}_{L^{\infty}(\mathbb{R}^3)} =0
\end{align}
for all $t \in \mathbb{R}_{+}$. 
This allows us to estimate
\begin{align}
& \lim_{h \rightarrow 0} \norm{\abs{\cdot}^{-1/2} \mathcal{F}[\kappa] \vep \left( \mathcal{F}[\vJ_{p_{t+h},\alpha_{t+h}}] - \mathcal{F}[\vJ_{p_t,\alpha_t}] \right)}_{\mathfrak{h}_1}
\nonumber \\
&\quad \leq C \big\| ( \abs{\cdot}^{1/2} + \abs{\cdot}^{-1/2} ) \mathcal{F}[\kappa] \big\|_{L^2(\mathbb{R}^3)}
\norm{\mathcal{F}[\vJ_{p_{t+h},\alpha_{t+h}}] - \mathcal{F}[\vJ_{p_t,\alpha_t}]}_{L^{\infty}(\mathbb{R}^3)} =0
\end{align}
and conclude that the integrand of the right-hand side of \eqref{eq:Duhamel expansion of the mode function} is a $\mathfrak{h}_1$-valued continuous function. Consequently, the integral can be defined as a Riemann integral, ensuring that 
\begin{align}
t \mapsto \int_0^t 
e^{i \abs{k} s} \sqrt{\frac{4 \pi^3}{\abs{k}}} \mathcal{F}[\kappa](k) \vep_{\lambda}(k) \mathcal{F}[\vJ_{p_s,\alpha_s}](k) \, ds
\end{align}
is a $C^1\left( \mathbb{R}_+; \mathfrak{h}_1 \right)$ function. This proves that the right-hand side of \eqref{eq:Duhamel expansion of the mode function}, and therefore that $\alpha$, belongs to $C \left( \mathbb{R}_{+};\mathfrak{h}_{1} \right) \cap C^1 \left( \mathbb{R}_{+};\mathfrak{h} \right)$. 
Using \eqref{eq:estimate for A L-infty to h-1-2}, \eqref{eq:estimate for field energy}, \eqref{eq:estimate for kinetic energy}, and
\begin{align}
\label{eq:Fourier transform of the vector current}
\mathcal{F}[ \vJ_{p_t, \alpha_t} ](k)
&= - (2 \pi)^{-3/2} N^{-1}  \tr \left( \left( \left\{ e^{- i k \cdot} , i \varepsilon \nabla \right\} + 2 e^{- i k \cdot} \vAk(\cdot,t) \right) p_t \right) 
\end{align}
we estimate 
\begin{align}
\norm{\mathcal{F}[ \vJ_{p_t, \alpha_t} ]}_{L^{\infty}(\mathbb{R}^3)}
&\leq N^{-1} \norm{i \varepsilon \nabla \sqrt{p_t}}_{\mathfrak{S}^2} \norm{\sqrt{p_t}}_{\mathfrak{S}^2}
+ N^{-1} \norm{\vAk(\cdot,t)}_{L^{\infty}} \norm{p_t}_{\mathfrak{S}^1}
\nonumber \\
&\leq N^{-1} \tr \left( - \varepsilon^2 \Delta p_t \right)
+ \norm{\alpha_t}_{\dot{\mathfrak{h}}_{1/2}}^2  + N^{-2} \big\| \abs{\cdot}^{-1} \mathcal{F}[\kappa] \big\|_{L^2}^2  \norm{p_t}_{\mathfrak{S}^1}^2
\nonumber \\
&\leq 
C  C_{\kappa}^2 N^{-1}
\left< N^{-1}  \norm{p}_{\mathfrak{S}^1} \right> 
\left(  \mathcal{E}^{\rm{MS}}[p,\alpha] + C  C_{\kappa}^2 N^{-1} \norm{p}_{\mathfrak{S}^1}^2
\right)
.
\end{align}
Together with \eqref{eq:Duhamel expansion of the mode function} and \eqref{eq:conservation of mass and energy of the MS equations}, this gives
\begin{align}
\norm{\alpha_t}_{\mathfrak{h}_1}
&\leq \norm{\alpha_0}_{\mathfrak{h}_1}
+ C \int_0^t  \big\| ( \abs{\cdot}^{1/2} + \abs{\cdot}^{-1/2} ) \mathcal{F}[\kappa] \big\|_{L^2(\mathbb{R}^3)} \norm{\mathcal{F}[\vJ_{p_s,\alpha_s}]}_{L^{\infty}(\mathbb{R}^3)} \, ds 
\nonumber \\
&\leq \norm{\alpha_0}_{\mathfrak{h}_1}
+ C  C_{\kappa}^3 N^{-1}
\left< N^{-1}  \norm{p}_{\mathfrak{S}^1} \right> 
\left(  \mathcal{E}^{\rm{MS}}[p,\alpha] + C  C_{\kappa}^2 N^{-1} \norm{p}_{\mathfrak{S}^1}^2
\right) t .
\end{align}
\end{proof}

\subsection{Proof of Lemma~\ref{lemma: semiclassical structure}}

\label{subsection:proof of the propagation of the semiclassical structure}

\begin{proof}[\unskip\nopunct]
In order to prove the estimates of the lemma in a rigorous fashion, we define a regularized version of the gradient by
\begin{align}
\label{eq:definition of the regularized gradient}
i \nabla_{\leq n} &=  \frac{i \nabla}{(1 - \Delta /n^2)^{1/2}}
\quad \text{with}
 \; n \in \mathbb{N} .
\end{align} 
Note that $i \nabla_{\leq n}$ is a bounded and symmetric operator satisfying $\big\| i \nabla_{\leq n} \big\|_{\mathfrak{S}^{\infty}} \leq n$. Moreover,
\begin{align}
\label{eq:difference between regularized and normal gradient acting on L-2 functions}
\lim_{n \rightarrow \infty} \big\| \big( i \nabla - i \nabla_{\leq n} \big) \psi \big\|_{L^2(\mathbb{R}^3)} 
&= 
\lim_{n \rightarrow \infty}
\Big( \int_{\mathbb{R}^3} \abs{k}^2 \Big( 1 - \frac{1}{(1 + \abs{k}^2/n^2 )^{1/2}} \Big)^{2} \abs{\mathcal{F}[\psi](k)}^2 \, dk \Big)^{1/2} 
= 0 
\end{align}
for $\psi \in H^1(\mathbb{R}^3)$ due to monotone convergence. 
In the following, let $\psi \in H^2(\mathbb{R}^3)$ and $(p_t, \alpha_t) \in \textfrak{S}_{+}^{2,1} (L^2(\mathbb{R}^3)) \times \mathfrak{h}_{1} \cap \dot{\mathfrak{h}}_{-1/2} $ such that $p_t^2 = p_t$ and $\tr \left( p_t \right) = N$. Moreover, let 
\begin{align}
\label{eq:definition mean-field Hamiltonian}
H_{p,\alpha}(t) &= \left( - i \varepsilon \nabla - \kappa * \vA_{\alpha_t} \right)^2 + K * \rho_{p_t} - X_{p_t} .
\end{align}
Then,
\begin{subequations}
\begin{align}
&\lim_{n \rightarrow \infty} \norm{\left[ (i \nabla - i \nabla_{\leq n}) , (H_{p,\alpha}(t) + \varepsilon^2 \Delta) \right] \psi}_{L^2}
\nonumber \\
\label{eq: commutator between difference of regularized and normal gradient with mean field Hamiltonian 1}
&\quad \leq \lim_{n \rightarrow \infty} \norm{\kappa * \vA_{\alpha_t}  (i \nabla - i \nabla_{\leq n})  i \varepsilon \nabla \psi}_{L^2}
\\
\label{eq: commutator between difference of regularized and normal gradient with mean field Hamiltonian 2}
&\qquad + \lim_{n \rightarrow \infty}
\norm{\big( (\kappa * \vA_{\alpha_t})^2 
+ K * \rho_{p_t} - X_{p_t} \big)
(i \nabla - i \nabla_{\leq n})   \psi}_{L^2}
\\
\label{eq: commutator between difference of regularized and normal gradient with mean field Hamiltonian 3}
&\qquad + \lim_{n \rightarrow \infty}
\norm{(i \nabla - i \nabla_{\leq n})  
\big( 2 \kappa * \vA_{\alpha_t} i \varepsilon \nabla + (\kappa * \vA_{\alpha_t})^2 
+ K * \rho_{p_t} - X_{p_t} \big) \psi}_{L^2} .
\end{align}
\end{subequations}
Using $\kappa * \vA_{\alpha_t} \in W_0^{1,\infty}(\mathbb{R}^3)$, $K * \rho_{p_t} \in W_0^{2,\infty}(\mathbb{R}^3)$ and $X_{p_t} \in \mathfrak{S}^{\infty}(L^2(\mathbb{R}^3))$, which follow from \eqref{eq:estimate vector potential W-1-infty norm}, \eqref{eq:estimate for mean-field potential} and \eqref{eq:estimate for exchange term}, we obtain
\begin{align}
\eqref{eq: commutator between difference of regularized and normal gradient with mean field Hamiltonian 1}
+ \eqref{eq: commutator between difference of regularized and normal gradient with mean field Hamiltonian 2}
&\leq \left( 1 + \norm{\kappa * \vA_{\alpha_t}}_{L^{\infty}}^2 +  \norm{K * \rho_{p_t}}_{L^{\infty}}
+ \norm{X_{p_t}}_{\mathfrak{S}^{\infty}}
\right)
\nonumber \\
&\quad \times 
\lim_{n \rightarrow \infty}
\left( \norm{(i \nabla - i \nabla_{\leq n})  \psi}_{L^2}
+ \norm{ (i \nabla - i \nabla_{\leq n})  i \varepsilon \nabla \psi}_{L^2} \right)
\nonumber \\
&= 0 .
\end{align}
With the regularity properties of $\kappa * \vA_{\alpha_t}$ and $K * \rho_{p_t}$ from above and the estimate (see \cite[(II.25) and (III.26)]{LS2023}) 
\begin{align}
\norm{\left( 1 - \Delta \right)^{1/2} X_{p_s} \left(1 - \Delta \right)^{- 1/2}}_{\mathfrak{S}^{\infty}} \leq C N \norm{p_s}_{\mathfrak{S}^{1,1}}
\end{align}
it is straightforward to check that $\big( 2 \kappa * \vA_{\alpha_t} i \varepsilon \nabla + (\kappa * \vA_{\alpha_t})^2 
+ K * \rho_{p_t} - X_{p_t} \big) \psi \in H^1(\mathbb{R}^3)$, which allows us conclude that
$\eqref{eq: commutator between difference of regularized and normal gradient with mean field Hamiltonian 3} = 0
$ and
\begin{align}
\label{eq: commutator between difference of regularized and normal gradient with mean field Hamiltonian}
\lim_{n \rightarrow \infty} \norm{\left[ (i \nabla - i \nabla_{\leq n}) , (H_{p,\alpha}(t) + \varepsilon^2 \Delta) \right] \psi}_{L^2}
= 0 .
\end{align}
Since $p_t \in \textfrak{S}_{+}^{2,1} (L^2(\mathbb{R}^3))$ is a rank-$N$ projection it
can be written as $p_t = \sum_{l=1}^N \ket{\varphi_l} \bra{\varphi_l}$ with $\varphi_l \in H^2(\mathbb{R}^3)$ and $\norm{\varphi_l}_{L^2(\mathbb{R}^3)} = 1$ for all $l \in \{1, 2, \ldots, N\}$. Hence, 
\begin{align}
\big\| \big( i \nabla - i \nabla_{\leq n} \big) p_t \big\|_{\mathfrak{S}^{\infty}}
&=
\sup_{\psi \in L^2(\mathbb{R}^3), \norm{\psi}_{L^2}= 1} \big\| \big( i \nabla - i \nabla_{\leq n} \big) p_t \psi \big\|_{L^2(\mathbb{R}^3)}
\nonumber \\
&\leq \sup_{\psi \in L^2(\mathbb{R}^3), \norm{\psi}_{L^2}= 1} \sum_{l=1}^N  \big\| \big( i \nabla - i \nabla_{\leq n} \big) \varphi_l \big\|_{L^2(\mathbb{R}^3)} \norm{\varphi_l}_{L^2(\mathbb{R}^3)}
\norm{\psi}_{L^2(\mathbb{R}^3)}
\nonumber \\
&\leq N \sup_{l \in \{1,2, \ldots, N\}} \big\| \big( i \nabla - i \nabla_{\leq n} \big) \varphi_l \big\|_{L^2(\mathbb{R}^3)} .
\end{align}
Thus if we take limit $n \rightarrow \infty$ we get $\lim_{n \rightarrow \infty} \big\| \big( i \nabla - i \nabla_{\leq n} \big) p \big\|_{\mathfrak{S}^{\infty}} = 0$ by means of \eqref{eq:difference between regularized and normal gradient acting on L-2 functions}. Together with the projection property of $p_t$ we get
\begin{align}
\label{eq:limit of the regularized gradient acting on p-t}
\lim_{n \rightarrow \infty} \big\| i  \nabla_{\leq n}  p_t \big\|_{\mathfrak{S}^{\infty}} 
&= \big\|  i  \nabla  p_t \big\|_{\mathfrak{S}^{\infty}}
\quad \text{and} \quad 
\lim_{n \rightarrow \infty} \big\| q_t i  \nabla_{\leq n}  p_t \big\|_{\mathfrak{S}^{1}}
= \big\| q_t i  \nabla  p_t \big\|_{\mathfrak{S}^{1}}
\end{align}
with $q_t = \id_{L^2(\mathbb{R}^3)} - p_t$.  Similar arguments and \eqref{eq: commutator between difference of regularized and normal gradient with mean field Hamiltonian} lead to
\begin{align}
\label{eq:limit of the commutator between the regularized gradient and the mean field Hamiltonian acting on p-t}
\lim_{n \rightarrow \infty} \norm{\left[  i \left( \nabla -  \nabla_{\leq n} \right) , H_{p,\alpha}(t)  \right] p_t}_{\mathfrak{S}^{\infty}}
&= 
0
\quad \text{and} \quad 
\lim_{n \rightarrow \infty} \norm{ q_t\left[  i \left(  \nabla -  \nabla_{\leq n} \right) , H_{p,\alpha}(t)  \right] p_t}_{\mathfrak{S}^{1}}
= 
0 .
\end{align}

\noindent
\textbf{Inequality \eqref{eq:momentum estimate}:} 
Let $(p_t,\alpha_t)$ be the unique solution of \eqref{eq:Maxwell-Schroedinger equations} as specified in Lemma~\ref{lemma: semiclassical structure}, and let $H_{p,\alpha}(t)$ be defined as in \eqref{eq:definition mean-field Hamiltonian}.
There exists a two-parameter family $\{ U_{p,\alpha}(t,s) \}$ of unitary operators on $L^2(\mathbb{R}^3)$, as discussed in \cite[Chapter V.3]{LS2023}, such that $U_{p,\alpha}(t;s) H^2(\mathbb{R}^3) \subseteq H^2(\mathbb{R}^3)$. Furthermore $\psi_s(t) = - i \varepsilon^{-1} U_{p,\alpha}(t) \psi$ with $\psi \in H^2(\mathbb{R}^3)$ is strongly continuous differentiable and satisfies
$\frac{d}{dt} \psi_s(t) = - i \varepsilon^{-1} H_{p,\alpha}(t) \psi_s(t)$, $\psi_s(s) = \psi$.
We use the notation $U_{p,\alpha}(t) = U_{p,\alpha}(t;0)$  and calculate
\begin{align}
\frac{d}{dt} \left( U^*_{p,\alpha}(t) i \varepsilon \nabla_{\leq n} \, p_t  U_{p,\alpha}(t) \right)
&= i \varepsilon^{-1}  U^*_{p,\alpha}(t) \left[ H_{p,\alpha}(t), i \varepsilon \nabla_{\leq n} \, p_t \right] U_{p,\alpha}(t) 
\nonumber \\
&\quad - i \varepsilon^{-1}  U^*_{p,\alpha}(t) i \varepsilon \nabla_{\leq n} \left[ H_{p,\alpha}(t), p_t \right] U_{p,\alpha}(t)
\nonumber \\
&= i \varepsilon^{-1}  U^*_{p,\alpha}(t) \left[ H_{p,\alpha}(t) + \varepsilon^2 \Delta, i \varepsilon \nabla_{\leq n}  \right] p_t  U_{p,\alpha}(t) .
\end{align}
Integrating with respect to time and taking the trace norm yields
\begin{align}
\norm{i \varepsilon \nabla_{\leq n} p_t  }_{\mathfrak{S}^{\infty}}
&\leq   \norm{i \varepsilon \nabla_{\leq n} p}_{\mathfrak{S}^{\infty}}
+ \int_0^t  \norm{\left[  \nabla_{\leq n} , H_{p,\alpha}(s) + \varepsilon^2 \Delta \right] p_s}_{\mathfrak{S}^{\infty}} \, ds .
\end{align}
Together with \eqref{eq:limit of the regularized gradient acting on p-t} and 
\eqref{eq:limit of the commutator between the regularized gradient and the mean field Hamiltonian acting on p-t}, this leads to 
\begin{subequations}
\begin{align}
&\norm{ i \varepsilon \nabla p_t}_{\mathfrak{S}^{\infty}}
- \norm{ i \varepsilon \nabla p}_{\mathfrak{S}^{\infty}}
\nonumber \\
&\quad =
\lim_{n \rightarrow \infty} 
\left( \norm{ i \varepsilon \nabla_{\leq n} p_t}_{\mathfrak{S}^{\infty}}
- \norm{ i \varepsilon \nabla_{\leq n} p}_{\mathfrak{S}^{\infty}}
\right)
\nonumber \\
&\quad \leq 
\int_0^t \left( 
\norm{\left[  i \nabla  , H_{p,\alpha}(s) + \varepsilon^2 \Delta \right] p_s  }_{\mathfrak{S}^{\infty}} 
+ \lim_{n \rightarrow \infty}
\norm{\left[ i \left(  \nabla -  \nabla_{\leq n} \right) , H_{p,\alpha}(s) + \varepsilon^2 \Delta \right] p_s  }_{\mathfrak{S}^{\infty}} \right)
\, ds 
\nonumber \\
\label{eq:bound for regularized derivative with p 1}
&\quad \leq 
2 \int_0^t
\norm{\left[\nabla , \kappa * \vA_{\alpha_s} \right] \cdot i \varepsilon \nabla p_s  }_{\mathfrak{S}^{\infty}} 
\, ds 
\\
\label{eq:bound for regularized derivative with p 2}
&\qquad + 
 \int_0^t
\Big( \norm{\left[ \nabla , \kappa * \vA_{\alpha_s} \right] \cdot \kappa * \vA_{\alpha_s} p_s  }_{\mathfrak{S}^{\infty}} 
+  \norm{\kappa * \vA_{\alpha_s} \cdot\left[ \nabla , \kappa * \vA_{\alpha_s} \right]  p_s  }_{\mathfrak{S}^{\infty}} \Big) \, ds
\\
\label{eq:bound for regularized derivative with p 3}
&\qquad +   \int_0^t
\norm{\left[ \nabla , K * \rho_{p_s} \right]  p_s  }_{\mathfrak{S}^{\infty}} 
\, ds 
\\
\label{eq:bound for regularized derivative with p 4}
&\qquad + \int_0^t
\norm{\left[ \nabla , X_{p_s} \right] p_s }_{\mathfrak{S}^{\infty}} 
\, ds  .
\end{align}
\end{subequations}
Using \eqref{eq:estimate vector potential W-1-infty norm} and \eqref{eq:estimate for mean-field potential}, we estimate:
\begin{subequations}
\begin{align}
\abs{\eqref{eq:bound for regularized derivative with p 1}}
&\leq 2  \int_0^t  \norm{\kappa * \vA_{\alpha_s}}_{W^{1,\infty}_0(\mathbb{R}^3)} 
\norm{i \varepsilon \nabla p_s}_{\mathfrak{S}^{\infty}} \, ds 
\leq C C_{\kappa} \int_0^t
\norm{\alpha_s}_{\dot{\mathfrak{h}}_{1/2}} \norm{i \varepsilon \nabla p_s}_{\mathfrak{S}^{\infty}} \, ds  ,
\\
\abs{\eqref{eq:bound for regularized derivative with p 2}}
&\leq 
2  \int_0^t 
\norm{ \kappa * \vA_{\alpha_s} }_{W^{1,\infty}_0(\mathbb{R}^3)}^2  \norm{ p_s}_{\mathfrak{S}^{\infty}} \, ds 
\leq C C_{\kappa}  \int_0^t
\norm{\alpha_s}^2_{\dot{\mathfrak{h}}_{1/2}}  \, ds ,
\\
\abs{\eqref{eq:bound for regularized derivative with p 3}}
&\leq   \int_0^t  \norm{ K * \rho_{p_s} }_{W^{1,\infty}_0(\mathbb{R}^3)}
\norm{p_s}_{\mathfrak{S}^{\infty}}
 \, ds 
\leq 
C C_{\kappa}^2  t .
\end{align}
\end{subequations}
By means of 
\begin{align}
K(x-y) &= \frac{1}{(2\pi)^{3/2}} \int_{\mathbb{R}^3} e^{ik (x-y)} \mathcal{F}[K](k) \, dk
\end{align}
we get
\begin{align}
\label{eq:explicite expression for commutator between laplacian and exchange term}
\left[ i  \nabla,  X_{p_s} 
\right]
&= 
\frac{1}{(2 \pi)^{3/2} N} 
\int_{\mathbb{R}^3} \mathcal{F}[K](k) \left[ i  \nabla, e^{i k \hat{x}} p_s e^{- i k \hat{x}} \right] \, dk
\nonumber \\
&= 
\frac{1}{(2 \pi)^{3/2} N} 
\int_{\mathbb{R}^3} \mathcal{F}[K](k) e^{i k \hat{x}} \left[ i  \nabla, p_s \right] e^{- i k \hat{x}}  \, dk
\nonumber \\
&\qquad +
\frac{1}{(2 \pi)^{3/2} N} 
\int_{\mathbb{R}^3} \mathcal{F}[K](k)
\left(  \left[ i  \nabla, e^{i k \hat{x}} \right] p_s e^{- i k \hat{x}} 
+  e^{i k \hat{x}} p_s \left[ i  \nabla, e^{- i k \hat{x}}  \right]  
\right)
\, dk
\nonumber \\
&= 
\frac{1}{(2 \pi)^{3/2} N} 
\int_{\mathbb{R}^3} \mathcal{F}[K](k) e^{i k \hat{x}} \left[ i  \nabla, p_s \right] e^{- i k \hat{x}}  \, dk .
\end{align}
Using the previous equality, \eqref{eq:Fourier transform of the potential}, and $N^{-1} \varepsilon^{-1} \leq 1$, we obtain
\begin{align}
\abs{\eqref{eq:bound for regularized derivative with p 4}}
&\leq N^{-1}  \int_0^t
\int_{\mathbb{R}^3} \abs{\mathcal{F}[K](k)} 
\norm{e^{ik x} \left[ \nabla , p_s \right] e^{- ik x} p_s}_{\mathfrak{S}^{\infty}}  \, dk \, ds
\nonumber \\
&\leq N^{-1} \int_0^t
\int_{\mathbb{R}^3} \abs{\mathcal{F}[K](k)} 
\left(
\norm{\left[ \nabla , e^{- ik x} \right] p_s}_{\mathfrak{S}^{\infty}} 
+ 2 \norm{ \nabla p_s}_{\mathfrak{S}^{\infty}} 
\right)  \, dk \, ds
\nonumber \\
&\leq C N^{-1}  
\big\| (\abs{\cdot}^{-1/2} + \abs{\cdot}^{-1}) \mathcal{F}[\kappa]
 \big\|_{L^2}^2
\int_0^t
\left( \norm{p_s}_{\mathfrak{S}^{\infty}} 
+ \norm{ \nabla p_s}_{\mathfrak{S}^{\infty}}  \right) \, ds
\nonumber \\
&\leq C C_{\kappa}^2
\Big( t
+ \int_0^t \norm{i \varepsilon \nabla p_s}_{\mathfrak{S}^{\infty}}
\, ds \Big) .
\end{align}
Collecting the estimates and using 
\eqref{eq:estimate for field energy}, $\norm{p_s}_{\mathfrak{S}^1} = \tr \left( p_s \right) = N$, and \eqref{eq:conservation of mass and energy of the MS equations},  we obtain
\begin{align}
\norm{i \varepsilon \nabla p_t }_{\mathfrak{S}^{\infty}}
&\leq
\norm{i \varepsilon \nabla p}_{\mathfrak{S}^{\infty}}
+ C \left< C_{\kappa} \right>^2 \int_0^t \left< \norm{\alpha_s}_{\dot{\mathfrak{h}}_{1/2}}^2 \right>
\left< \norm{i \varepsilon \nabla p_s }_{\mathfrak{S}^{\infty}} \right> \, ds
\nonumber \\
&\leq  \norm{i \varepsilon \nabla p}_{\mathfrak{S}^{\infty}}
+ C \left< C_{\kappa} \right>^2
\left(   N^{-1} \mathcal{E}^{\rm{MS}}[p,\alpha] + C C_{\kappa}^2
\right) \left( t + 
\int_0^t
\norm{i \varepsilon \nabla p_s }_{\mathfrak{S}^{\infty}}  \, ds
\right) .
\end{align}
By applying Gr\"onwall's lemma we, finally, obtain \eqref{eq:momentum estimate}.

\noindent
\textbf{Semiclassical structure:}
Let $\nabla_{\leq n}$ with $n \in \mathbb{N}$, $H_{p,\alpha}(t)$, and $U_{p,\alpha}(t)$ be defined as above.  We calculate
\begin{align}
\frac{d}{dt} \big( U^*_{p,\alpha}(t) q_t i \varepsilon \nabla_{\leq n} p_t U_{p,\alpha}(t) \big)
&= i \varepsilon^{-1} U^*_{p,\alpha}(t)
\big[ H_{p,\alpha}(t) , q_t i \varepsilon \nabla_{\leq n} p_t \big]  U_{p,\alpha}(t)
\nonumber \\
&\quad 
- i \varepsilon^{-1} U^*_{p,\alpha}(t)
\left[ H_{p,\alpha}(t), q_t \right] i \varepsilon \nabla_{\leq n} p_t U_{p,\alpha}(t)
\nonumber \\
&\quad 
- i \varepsilon^{-1} U^*_{p,\alpha}(t)
q_t i \varepsilon \nabla_{\leq n} \left[ H_{p,\alpha}(t), p_t \right] U_{p,\alpha}(t)
\nonumber \\
&= i  U^*_{p,\alpha}(t)
q_t \big[ H_{p,\alpha}(t) + \varepsilon \Delta ,  i  \nabla_{\leq n}  \big]  p_t U_{p,\alpha}(t)   ,
\end{align} 
which leads to
\begin{align}
\norm{ q_t i \varepsilon \nabla_{\leq n} p_t}_{\mathfrak{S}^1}
- \norm{ q_0 i \varepsilon \nabla_{\leq n} p_0}_{\mathfrak{S}^1}
&\leq 
2 \int_0^t
\norm{q_s \big[ H_{p,\alpha}(s) ,  i \nabla_{\leq n}  \big] p_s  }_{\mathfrak{S}^1}
\, ds .
\end{align}
Taking the limit $n \rightarrow \infty$ and using \eqref{eq:limit of the regularized gradient acting on p-t} as well as
\eqref{eq:limit of the commutator between the regularized gradient and the mean field Hamiltonian acting on p-t} let us obtain
\begin{subequations}
\begin{align}
&\norm{ q_t i \varepsilon \nabla p_t}_{\mathfrak{S}^1}
- \norm{ q i \varepsilon \nabla p}_{\mathfrak{S}^1}
\nonumber \\
&\quad \leq 
2 \int_0^t
\norm{q_s \big[ H_{p,\alpha}(s) ,  i \nabla  \big] p_s  }_{\mathfrak{S}^1}
\, ds
\nonumber \\
\label{eq:semiclassical structure estimate for nabla 1}
&\quad \leq 
2 \int_0^t
\norm{q_s \left[ i \nabla , \kappa * \vA_{\alpha_s} \right] \cdot i \varepsilon \nabla p_s  }_{\mathfrak{S}^1}
\, ds 
\\
\label{eq:semiclassical structure estimate for nabla 2}
&\qquad +  
 \int_0^t
\Big( \norm{q_s \left[ i \nabla , \kappa * \vA_{\alpha_s} \right] \cdot \kappa * \vA_{\alpha_s} p_s  }_{\mathfrak{S}^1} 
+  \norm{q_s \kappa * \vA_{\alpha_s} \cdot\left[ i \nabla , \kappa * \vA_{\alpha_s} \right]  p_s  }_{\mathfrak{S}^1} \Big) \, ds
\\
\label{eq:semiclassical structure estimate for nabla 3}
&\qquad +  
\int_0^t \norm{q_s \left[ i \nabla , K * \rho_{p_s} \right]  p_s  }_{\mathfrak{S}^1}
\, ds 
\\
\label{eq:semiclassical structure estimate for nabla 4}
&\qquad + 
\int_0^t
\norm{q_s \left[ i  \nabla , X_{p_s} \right] p_s }_{\mathfrak{S}^{1}} 
\, ds  .
\end{align}
\end{subequations}
By means of the identity $\id_{L^2(\mathbb{R}^3)} = p_s + q_s$, \eqref{eq:estimate vector potential W-1-infty norm}, and \eqref{eq:estimate for q nabla-Ap}, we get
\begin{align}
\eqref{eq:semiclassical structure estimate for nabla 1} 
&\leq 
2  \int_0^t 
\Big( 
\norm{q_s \left[ i  \nabla , \kappa * \vA_{\alpha_s} \right] p_s}_{\mathfrak{S}^1}  \norm{ i \varepsilon \nabla p_s  }_{\mathfrak{S}^{\infty}}
+ \norm{\kappa * \vA_{\alpha_s} }_{W^{1,\infty}_0(\mathbb{R}^3)}  \norm{q_s i \varepsilon \nabla p_s  }_{\mathfrak{S}^{1}}
\Big)
\, ds 
\nonumber \\
&\leq C C_{\kappa}
\int_0^t \norm{\alpha_s}_{\mathfrak{h}_1}  \Big( 
\norm{ i \varepsilon \nabla p_s  }_{\mathfrak{S}^{\infty}} \sup_{k \in \mathbb{R}^3}
\Big\{ \left(1 + \abs{k} \right)^{-1} \norm{q_s e^{ikx} p_s}_{\mathfrak{S}^1} 
\Big\} 
+  \norm{q_s i \varepsilon \nabla p_s  }_{\mathfrak{S}^{1}}
\Big) \, ds .
\end{align}
Similarly, 
\begin{align}
\eqref{eq:semiclassical structure estimate for nabla 2} 
&\leq 
2  \int_0^t 
\norm{\kappa * \vA_{\alpha_s}}_{W^{1,\infty}_0(\mathbb{R}^3)} 
\Big( 
\norm{q_s \left[ i  \nabla , \kappa * \vA_{\alpha_s} \right] p_s}_{\mathfrak{S}^1}  
+ \norm{q_s \kappa * \vA_{\alpha_s} p_s  }_{\mathfrak{S}^{1}}
\Big)
\, ds 
\nonumber \\
&\leq C C_{\kappa}^2
\int_0^t \norm{\alpha_s}_{\mathfrak{h}_1}^2  
\sup_{k \in \mathbb{R}^3}
\Big\{ \left(1 + \abs{k} \right)^{-1} \norm{q_s e^{ikx} p_s}_{\mathfrak{S}^1} 
\Big\}  \, ds
\end{align}
by using \eqref{eq:estimate vector potential W-1-infty norm}, \eqref{eq:estimate for qAp} and \eqref{eq:estimate for q nabla-Ap}. Due to \eqref{eq:Fourier transform of the potential} and the fact that
$\norm{\mathcal{F}[\rho_{p_s}]}_{L^{\infty}}
\leq \norm{\rho_{p_s}}_{L^{1}} = 1$
we further obtain
\begin{align}
\eqref{eq:semiclassical structure estimate for nabla 3} 
&\leq 
\int_0^t \int_{\mathbb{R}^3} \abs{k}
\abs{\mathcal{F}[K * \rho_{p_s}](k)}
\norm{q_s e^{ikx} p_s}_{\mathfrak{S}^1} \, dk \, ds
\nonumber \\
&\leq 
\int_0^t \int_{\mathbb{R}^3} \abs{k}
\abs{\mathcal{F}[K](k)} \abs{\mathcal{F}[\rho_{p_s}](k)}
\norm{q_s e^{ikx} p_s}_{\mathfrak{S}^1} \, dk \, ds
\nonumber \\
&\leq C 
\int_0^t  \sup_{k \in \mathbb{R}^3}
\Big\{ \left(1 + \abs{k} \right)^{-1} \norm{q_s e^{ikx} p_s}_{\mathfrak{S}^1} 
\Big\} \, ds
\int_{\mathbb{R}^3} 
\left(1 + \abs{k} \right) \abs{k}^{-1}
\abs{\mathcal{F}[\kappa](k)}^2 \, dk 
\nonumber \\
&\leq  C C_{\kappa}^2
\int_0^t  \sup_{k \in \mathbb{R}^3}
\Big\{ \left(1 + \abs{k} \right)^{-1} \norm{q_s e^{ikx} p_s}_{\mathfrak{S}^1} 
\Big\} \, ds .
\end{align}
Next, we insert \eqref{eq:explicite expression for commutator between laplacian and exchange term} into \eqref{eq:semiclassical structure estimate for nabla 4} and use the identity $\id_{L^2(\mathbb{R}^3)} = p_s + q_s $ as well as \eqref{eq:L-1 norm of the first moment of the potential in terms of kappa}, to get
\begin{align}
\eqref{eq:semiclassical structure estimate for nabla 4}
&\leq  N^{-1} 
\int_0^t  \int_{\mathbb{R}^3} \abs{\mathcal{F}[K](k)} \norm{ q_s e^{i k x} \left[ i \nabla, p_s \right] e^{- i k x} p_s}_{\mathfrak{S}^1}  \, dk \, ds 
\nonumber \\
&\leq  N^{-1} 
\int_0^t  \int_{\mathbb{R}^3} \abs{\mathcal{F}[K](k)} \norm{ q_s e^{i k x} \left( q_s i  \nabla  p_s
- p_s i  \nabla  q_s \right) e^{- i k x} p_s}_{\mathfrak{S}^1}  \, dk \, ds 
\nonumber \\
&\leq C C_{\kappa}^2 \int_0^t   \norm{q_s i \varepsilon \nabla  p_s}_{\mathfrak{S}^1}   
\, ds  .
\end{align}
Collecting the estimates leads to
\begin{align}
\label{eq:semiclassical structure estimate for q gradient p}
\norm{ q_t i \varepsilon \nabla p_t}_{\mathfrak{S}^1}
&\leq  \norm{ q i \varepsilon \nabla p}_{\mathfrak{S}^1}
+ \int_0^t C \left< C_{\kappa} \right>^2
\left< \norm{\alpha_s}_{\mathfrak{h}_1}^2 \right> \left< \norm{i \varepsilon \nabla p_s}_{\mathfrak{S}^{\infty}}  \right> 
\nonumber \\
&\quad
+
\Big( \norm{q_s i \varepsilon \nabla  p_s}_{\mathfrak{S}^1}   
+ \sup_{k \in \mathbb{R}^3}
\Big\{ \left(1 + \abs{k} \right)^{-1} \norm{q_s e^{ikx} p_s}_{\mathfrak{S}^1} 
\Big\} \Big) \, ds  .
\end{align}
Using the notation $\boldsymbol{P}_{\alpha_t} = \left( - i \varepsilon \nabla - \kappa * \vA_{\alpha_t} \right)$, we calculate
\begin{align}
i \varepsilon \partial_t \big( q_t e^{ikx} p_t \big)
&=  \left[ H_{p,\alpha}(t), q_t \right] e^{ikx} p_t
+ q_t e^{ikx} \left[ H_{p,\alpha}(t), p_t \right]
\nonumber \\
&= \big[ H_{p,\alpha}(t) , q_t e^{ikx} p_t \big] 
- q_t \big[ H_{p,\alpha}(t) , e^{ikx} \big] p_t 
\nonumber \\
&= \big[ H_{p,\alpha}(t) , q_t e^{ikx} p_t \big] 
+ q_t \big[ X_{p_t} , e^{ikx} \big] p_t  
+
\varepsilon k \cdot q_t
\big(  \boldsymbol{P}_{\alpha_t} e^{ikx} 
+ e^{ikx} \boldsymbol{P}_{\alpha_t} \big) p_t .
\end{align}
Inserting the identity $\id_{L^2(\mathbb{R}^3)} = p_t + q_t$ gives
\begin{align}
& q_t \big(  \boldsymbol{P}_{\alpha_t} e^{ikx} 
+ e^{ikx} \boldsymbol{P}_{\alpha_t} \big) p_t
\nonumber \\
&\quad = 
q_t \boldsymbol{P}_{\alpha_t} p_t e^{ikx} p_t
+
q_t \boldsymbol{P}_{\alpha_t} q_t e^{ikx} p_t
+ q_t e^{ikx} p_t \boldsymbol{P}_{\alpha_t} p_t 
+ q_t e^{ikx} q_t \boldsymbol{P}_{\alpha_t} p_t 
\nonumber \\
&\quad = 
q_t \boldsymbol{P}_{\alpha_t} p_t e^{ikx} p_t
+
\boldsymbol{P}_{\alpha_t} q_t e^{ikx} p_t
-
p_t \boldsymbol{P}_{\alpha_t} q_t e^{ikx} p_t
+ q_t e^{ikx} p_t \boldsymbol{P}_{\alpha_t} 
- q_t e^{ikx} p_t \boldsymbol{P}_{\alpha_t} q_t 
+ q_t e^{ikx} q_t \boldsymbol{P}_{\alpha_t} p_t 
\nonumber \\
&\quad = 
\boldsymbol{P}_{\alpha_t} q_t e^{ikx} p_t
+ q_t e^{ikx} p_t \boldsymbol{P}_{\alpha_t} 
+ \big( q_t \boldsymbol{P}_{\alpha_t}  p_t - p_t \boldsymbol{P}_{\alpha_t} q_t \big) e^{ikx} p_t
+ q_t e^{ikx} \big( q_t \boldsymbol{P}_{\alpha_t}  p_t - p_t \boldsymbol{P}_{\alpha_t} q_t \big) 
\end{align}
and 
\begin{align}
i \varepsilon \partial_t \big( q_t e^{ikx} p_t \big)
&= \big( H_{p,\alpha}(t) + \boldsymbol{P}_{\alpha_t} \big)   q_t e^{ikx} p_t 
-  q_t e^{ikx} p_t \big( H_{p,\alpha}(t) - \boldsymbol{P}_{\alpha_t} \big)
+ q_t \big[ X_{p_t} , e^{ikx} \big] p_t  
\nonumber \\
&\quad +
\varepsilon k \cdot \Big(
\big( q_t \boldsymbol{P}_{\alpha_t}  p_t - p_t \boldsymbol{P}_{\alpha_t} q_t \big) e^{ikx} p_t
+ q_t e^{ikx} \big( q_t \boldsymbol{P}_{\alpha_t}  p_t - p_t \boldsymbol{P}_{\alpha_t} q_t \big) 
\Big)
\end{align}
Next, we define the unitary propagators $U_{+}$ and $U_{-}$ given by
\begin{align}
i \varepsilon \partial_t U_{+}(t;s) \varphi &=  \big( H_{p,\alpha}(t) + \boldsymbol{P}_{\alpha_t} \big) U_{+}(t;s) \varphi
\quad \text{and} \quad 
i \varepsilon \partial_t U_{-}(t;s) \varphi &=  \big( H_{p,\alpha}(t) - \boldsymbol{P}_{\alpha_t} \big) U_{-}(t;s) \varphi
\end{align}
for all $\varphi \in H^2(\mathbb{R}^3)$ with initial conditions
$U_{+}(s;s) = U_{-}(s;s) = 1$.
This leads to
\begin{align}
i \varepsilon \partial_t \big( U_{+}^*(t;0) q_t e^{ikx} p_t U_{-}(t;0) \big)
&= 
U_{+}^*(t;0) q_t \big[ X_{p_t} , e^{ikx} \big] p_t   U_{-}(t;0)
\nonumber \\
&\quad +
\varepsilon k \cdot U_{+}^*(t;0) \Big(
\big( q_t \boldsymbol{P}_{\alpha_t}  p_t - p_t \boldsymbol{P}_{\alpha_t} q_t \big) e^{ikx} p_t
\nonumber \\
&\qquad \qquad \qquad   \qquad 
+ q_t e^{ikx} \big( q_t \boldsymbol{P}_{\alpha_t}  p_t - p_t \boldsymbol{P}_{\alpha_t} q_t \big) 
\Big) U_{-}(t;0)
\end{align}
Integrating in time and taking the trace norm allows us to obtain
\begin{subequations}
\begin{align}
\label{eq:semiclassical structure estimate for e-ikx 1}
\norm{q_t e^{ikx} p_t}_{\mathfrak{S}^1}
- \norm{q_0 e^{ikx} p_0}_{\mathfrak{S}^1}
&\leq 
\varepsilon^{-1} \int_0^t \norm{q_s \big[ X_{p_s} , e^{ikx} \big] p_s}_{\mathfrak{S}^1} \, ds 
\\
\label{eq:semiclassical structure estimate for e-ikx 2}
&\quad + 2 \abs{k}
\int_0^t \norm{
q_s \boldsymbol{P}_{\alpha_s}  p_s - p_s \boldsymbol{P}_{\alpha_s} q_s}_{\mathfrak{S}^1} \, ds .
\end{align}
\end{subequations}
By means of
\begin{align}
\left[ X_{p_s} , e^{ikx} 
\right]
&= \frac{1}{(2 \pi)^{3/2} N} 
\int_{\mathbb{R}^3} \mathcal{F}[K](l) e^{i l \hat{x}} \left[ p_s, e^{i k \hat{x}} \right] e^{- i l \hat{x}} \, dl 
\nonumber \\
&= \frac{1}{(2 \pi)^{3/2} N} 
\int_{\mathbb{R}^3} \mathcal{F}[K](l) e^{i l \hat{x}} \left( p_s e^{i k \hat{x}} q_s - q_s e^{i k \hat{x}} p_s \right) e^{- i l \hat{x}} \, dl 
\end{align}
and \eqref{eq:L-1 norm of the first moment of the potential in terms of kappa}, we have
\begin{align}
\abs{\eqref{eq:semiclassical structure estimate for e-ikx 1} }
&\leq \varepsilon^{-1} N^{-1} 
C_{\kappa}^2
 \int_0^t   \sup_{k \in \mathbb{R}^3}
\Big\{ \left(1 + \abs{k} \right)^{-1} \norm{q_s e^{ikx} p_s}_{\mathfrak{S}^1} 
\Big\}  \, ds .
\end{align}
Through the application of \eqref{eq:estimate for qAp},
the second term is bounded by
\begin{align}
\abs{\eqref{eq:semiclassical structure estimate for e-ikx 2} }
&\leq  4 \abs{k}
\int_0^t \norm{
q_s \boldsymbol{P}_{\alpha_s}  p_s }_{\mathfrak{S}^1} \, ds
\nonumber \\
&\leq  4 \abs{k}
\int_0^t
\Big( 
\norm{ q_s i \varepsilon \nabla  p_s }_{\mathfrak{S}^1} 
+ \norm{ q_s\kappa * \vA_{\alpha_s}  p_s }_{\mathfrak{S}^1} 
\Big)
\, ds
\nonumber \\
&\leq  C  C_{\kappa} \abs{k}
\int_0^t
 \sup_{k \in \mathbb{R}^3}
\Big\{ \left(1 + \abs{k} \right)^{-1} \norm{q_s e^{ikx} p_s}_{\mathfrak{S}^1} 
\Big\}  \norm{\alpha_s}_{\dot{\mathfrak{h}}_{1/2}}
\, ds
+ 4 \abs{k}
\int_0^t
\norm{ q_s i \varepsilon \nabla  p_s }_{\mathfrak{S}^1} 
\, ds
\nonumber \\
&\leq  C \left< C_{\kappa} \right>  \abs{k} 
\int_0^t
\left< \norm{\alpha_s}_{\dot{\mathfrak{h}}_{1/2}} \right>
\bigg[
 \sup_{k \in \mathbb{R}^3}
\Big\{ \left(1 + \abs{k} \right)^{-1} \norm{q_s e^{ikx} p_s}_{\mathfrak{S}^1} 
\Big\}  
+
\norm{ q_s i \varepsilon \nabla  p_s }_{\mathfrak{S}^1}  \bigg]
\, ds .
\end{align}
In total, we get
\begin{align}
&\sup_{k \in \mathbb{R}^3}
\Big\{ \left(1 + \abs{k} \right)^{-1} \norm{q_t e^{ikx} p_t}_{\mathfrak{S}^1} 
\Big\}
- \sup_{k \in \mathbb{R}^3}
\Big\{ \left(1 + \abs{k} \right)^{-1} \norm{q_0 e^{ikx} p_0}_{\mathfrak{S}^1} 
\Big\}  
\nonumber \\
&\quad \leq 
C  \left< C_{\kappa} \right> 
\int_0^t
\left< \norm{\alpha_s}_{\dot{\mathfrak{h}}_{1/2}} \right>
\bigg[
\sup_{k \in \mathbb{R}^3}
\Big\{ \left(1 + \abs{k} \right)^{-1} \norm{q_s e^{ikx} p_s}_{\mathfrak{S}^1} 
\Big\} 
+
\norm{ q_s i \varepsilon \nabla  p_s }_{\mathfrak{S}^1}  \bigg]
\, ds .
\end{align}
Combining this estimate with \eqref{eq:semiclassical structure estimate for q gradient p} and applying Gr\"onwall's lemma yields
\begin{align}
\Xi[N,p_t] 
&\leq  \Xi[N,p] 
+ C \left< C_{\kappa} \right>^2 \int_0^t 
\left< \norm{\alpha_s}_{\mathfrak{h}_1}^2 \right> \left< \norm{i \varepsilon \nabla p_s}_{\mathfrak{S}^{\infty}}  \right> 
\Xi[N,p_s]  \, ds 
\end{align}
and
\begin{align}
\Xi[N,p_t]
&\leq \Xi[N,p] \exp \left[ C \left< C_{\kappa} \right>^2 \int_0^t 
\left< \norm{\alpha_s}_{\mathfrak{h}_1}^2 \right> \left< \norm{i \varepsilon \nabla p_s}_{\mathfrak{S}^{\infty}}  \right>  \, ds \right]  ,
\end{align}
with $\Xi[N,p]$ defined as in \eqref{eq:definition function for semiclassical structure}.
Since
\begin{align}
C \left< C_{\kappa} \right>^2 \int_0^t 
\left< \norm{\alpha_s}_{\mathfrak{h}_1}^2 \right> \left< \norm{i \varepsilon \nabla p_s}_{\mathfrak{S}^{\infty}}  \right>  \, ds
&\leq \left< C_{\kappa} \right>^2 \left<
 \norm{\alpha}_{\mathfrak{h}_1}^2
+ \norm{i \varepsilon \nabla p}_{\mathfrak{S}^{\infty}}  \right>
\exp \left[ C[\kappa, p, \alpha] t \right]
\end{align}
which holds due to \eqref{eq:bound for the h-1 norm of alpha} and \eqref{eq:momentum estimate}, this establishes \eqref{eq: semiclassical structure}.
\end{proof}

\section{Many-body estimates}

Within this section, we provide the necessary many-body estimates to prove the main result. We first present general estimates that will be needed later. Afterwards, we prove Inequality~\eqref{eq:relation between Sobolev trace-norm convergence and beta-a plus kinetic energy of the particles outside the Slater determinant}, Lemma~\ref{lemma:energy estimates}, and Lemma~\ref{lemma:energy estimates}.

\subsection{Preliminary estimates}

We begin by stating three technical lemmas, which  previously appeared, among others, in the works \cite{BBPPT2016,PP2016}.
Note that these estimates rely crucially on the antisymmetry of the many-body wave function and do not hold for bosonic systems.

\begin{lemma}\label{lemma:technical estimates using antisymmetry}
Let $N \in \mathbb{N}$ and $j \in \{1, \ldots, N\}$. Let $A_1 = A \otimes \id_{L^2(\mathbb{R}^{3(N-1)})} \otimes \id_{\mathcal{F}}$ with $A \in \mathfrak{S}^1(L^2(\mathbb{R}^3)$ and
$\Psi_{N} , \Psi_{N}' \in L^2(\mathbb{R}^{3N}) \otimes \mathcal{F}$ antisymmetric in $x_1$ and all other electron variables except $x_{l_1}, \ldots, x_{l_j}$. Then,
\begin{align}
\label{eq:technical estimates using antisymmetry 1}
\abs{\SCP{\Psi_{N}}{A_1 \Psi_{N}'}} &\leq (N-j)^{-1} \norm{A}_{\mathfrak{S}^1} \big\|\Psi_{N}\big\|  \big\|\Psi_{N}'\big\|.
\end{align}
The previous inequality implies
\begin{align}
\label{eq:technical estimates using antisymmetry 2}
\norm{A_1 \Psi_N}^2 &\leq (N-j)^{-1} \norm{A}_{\mathfrak{S}^2}^2 \norm{\Psi_N}^2 
\quad \text{for any} \; A \in \mathfrak{S}^2(L^2(\mathbb{R}^3)).
\end{align}
\end{lemma}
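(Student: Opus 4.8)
The plan is to reduce \eqref{eq:technical estimates using antisymmetry 1} to a one-particle Pauli exclusion bound and then obtain \eqref{eq:technical estimates using antisymmetry 2} as a formal consequence. First I would write a singular value decomposition $A = \sum_k \lambda_k \ketbra{u_k}{v_k}$ with $\lambda_k \geq 0$, orthonormal families $\{u_k\}_k$ and $\{v_k\}_k$ in $L^2(\mathbb{R}^3)$, and $\sum_k \lambda_k = \norm{A}_{\mathfrak{S}^1}$; correspondingly $A_1 = \sum_k \lambda_k \big(\ketbra{u_k}{v_k}\big)_1$ with the series converging in operator norm. Using $\ketbra{u_k}{v_k} = \ketbr{u_k}\,\ketbra{u_k}{v_k}\,\ketbr{v_k}$, the fact that $\big(\ketbra{u_k}{v_k}\big)_1$ has operator norm at most one, and the Cauchy--Schwarz inequality, I obtain
\begin{align*}
\abs{\SCP{\Psi_N}{A_1 \Psi_N'}}
\;\leq\; \sum_k \lambda_k \, \norm{\big(\ketbr{u_k}\big)_1 \Psi_N}\,\norm{\big(\ketbr{v_k}\big)_1 \Psi_N'} ,
\end{align*}
so that everything reduces to controlling $\norm{\big(\ketbr{u}\big)_1 \Phi}$ for a normalized $u$ and a $\Phi$ with the stated antisymmetry.

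The key step, which I expect to be the main obstacle to phrase cleanly, is the occupation bound: if $\Phi \in L^2(\mathbb{R}^{3N}) \otimes \mathcal{F}$ is antisymmetric in the $N-j$ electron variables indexed by $\{1,\dots,N\}\setminus\{l_1,\dots,l_j\}$ (a set that contains $1$ by hypothesis), then $\norm{\big(\ketbr{u}\big)_1 \Phi} \leq (N-j)^{-1/2}\norm{\Phi}$. To prove this I would introduce the mode-occupation operator $n_u = \sum_{i\notin\{l_1,\dots,l_j\}} \big(\ketbr{u}\big)_i$ and observe that on the relevant antisymmetric subspace $\big(\ketbr{u}\big)_i\big(\ketbr{u}\big)_{i'} = 0$ for $i\neq i'$ -- two fermions cannot sit in the same one-particle state -- so that $n_u^2 = n_u$ there, whence $\norm{n_u\Phi}^2 = \SCP{\Phi}{n_u\Phi} \leq \norm{n_u\Phi}\,\norm{\Phi}$ and thus $\norm{n_u\Phi}\leq\norm{\Phi}$. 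Since antisymmetry makes $\SCP{\Phi}{\big(\ketbr{u}\big)_i\Phi}$ independent of $i$ over the block, it follows that $\norm{\big(\ketbr{u}\big)_1\Phi}^2 = \SCP{\Phi}{\big(\ketbr{u}\big)_1\Phi} = (N-j)^{-1}\SCP{\Phi}{n_u\Phi} = (N-j)^{-1}\norm{n_u\Phi}^2 \leq (N-j)^{-1}\norm{\Phi}^2$. Feeding this (with $u=u_k$, $\Phi=\Psi_N$ and $u=v_k$, $\Phi=\Psi_N'$) into the display above gives $\abs{\SCP{\Psi_N}{A_1\Psi_N'}} \leq (N-j)^{-1}\big(\sum_k\lambda_k\big)\norm{\Psi_N}\norm{\Psi_N'}$, which is \eqref{eq:technical estimates using antisymmetry 1}.

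For \eqref{eq:technical estimates using antisymmetry 2} I would take $A\in\mathfrak{S}^2(L^2(\mathbb{R}^3))$ and use $(A_1)^*A_1 = (A^*A)_1$ together with $A^*A\in\mathfrak{S}^1(L^2(\mathbb{R}^3))$ and $\norm{A^*A}_{\mathfrak{S}^1} = \norm{A}_{\mathfrak{S}^2}^2$; applying \eqref{eq:technical estimates using antisymmetry 1} with $\Psi_N'=\Psi_N$ and $A$ replaced by $A^*A$ then yields $\norm{A_1\Psi_N}^2 = \SCP{\Psi_N}{(A^*A)_1\Psi_N} \leq (N-j)^{-1}\norm{A}_{\mathfrak{S}^2}^2\norm{\Psi_N}^2$. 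The only genuinely fermionic ingredient is the identity $n_u^2=n_u$ on the antisymmetric subspace; the rest is bookkeeping, and the one point that needs care is counting exactly $N-j$ mutually antisymmetrized electron variables (those not among $x_{l_1},\dots,x_{l_j}$, together with $x_1$).
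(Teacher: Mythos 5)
Your proof is correct. The paper itself only refers to \cite[Lemma V.2]{LP2019} for \eqref{eq:technical estimates using antisymmetry 1} and obtains \eqref{eq:technical estimates using antisymmetry 2} exactly as you do, by applying the first bound to $A_1^*A_1=(A^*A)_1$; your singular-value decomposition combined with the Pauli-principle occupation bound $\|(\ketbr{u})_1\Phi\|^2\leq (N-j)^{-1}\|\Phi\|^2$ is the standard argument behind the cited lemma, so the approach is essentially the same.
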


\begin{proof}
For the proof of \eqref{eq:technical estimates using antisymmetry 1}, we refer to \cite[Lemma V.2]{LP2019}. The second inequality is obtained by applying the first one to the operator $A_1^2$.
\end{proof}

\begin{lemma}
\label{lemma:diagonalisation of the p-p term}
Let $O: L^2({\mathbb{R}^3}) \rightarrow L^2(\mathbb{R}^3)$ be a self adjoint operator and $p$ be a rank-$N$ projection such that $\tr \left( p O p \right)$. Then, there exists orthonormal functions $\chi_1, \ldots, \chi_N$ in the range of $p$ and real values $\lambda_1, \ldots, \lambda_N$ such that
\begin{subequations}
\begin{align}
\label{eq:properties of the projection operator 1}
p &= \sum_{j=1}^N \ket{\chi_j} \bra{\chi_j}
\quad , \quad 
p O p = \sum_{j=1}^N \lambda_j \ket{\chi_j} \bra{\chi_j}
\\
\label{eq:properties of the projection operator 2}
\lambda_j &= \scp{\chi_j}{O \chi_j}
\quad \text{and} \quad
\sum_{j=1}^N \lambda_j = \tr \left( p O p \right) = \tr \left( O p \right) .
\end{align}
\end{subequations}
Let $J \subseteq \{1,2, \ldots, N \}$ and let $L^2_{J,\rm{as}}(\mathbb{R}^{3N})$ be the linear subspace of $L_{\rm{as}}^2(\mathbb{R}^{3N})$ consisting of wave functions that are antisymmetric in the variables $\{ x_j \}_{j \in J}$. For $m \in \{1, \ldots, N \}$ and $\varphi \in L^2(\mathbb{R}^3)$, let $\ket{\varphi} \bra{\varphi}_m : L^2(\mathbb{R}^{3N}) \rightarrow L^2(\mathbb{R}^{3N})$ 
be given by
\begin{align}
(p_m f)(x_1, \ldots, x_N) &\coloneqq \varphi(x_m) \int_{\mathbb{R}^3} \overline{\varphi(y)}  f(x_1, \ldots,x_{m-1},y, x_{m+1}, \ldots, x_N) \, dy
\end{align}
with $f \in L^2(\mathbb{R}^{3N})$. Define
\begin{align}
P_{J}^{\varphi} &= \sum_{m \in J} \ket{\varphi} \bra{\varphi}_m \otimes \id_{\mathcal{F}}
\quad \text{and} \quad
Q_{J}^{\varphi} = 1 - P_{J}^{\varphi} .
\end{align}
In case $J = \{1, 2, \ldots, N \}$, we use the shorthand notations  $P^{\varphi}$ and $Q^{\varphi}$. The operators
$P_{J}^{\varphi}$ and $Q_{J}^{\varphi}$ are projectors on $L_{J,\rm{as}}^2(\mathbb{R}^{3N})$, and on $L_{J,\rm{as}}^2(\mathbb{R}^{3N})$ it holds that
\begin{align}
\sum_{m \in J} p_m O_m p_m
&= \sum_{m \in J } \sum_{n=1}^N \lambda_n \ket{\chi_{n}} \bra{\chi_n}_{m}  \otimes \id_{\mathcal{F}}
= \sum_{n=1}^N \lambda_n P_{J}^{\chi_n} 
\end{align}
and 
\begin{align}
\label{eq:relation for the sum over all Q-j}
\sum_{n =1}^N Q_{J}^{\chi_n} = (N- \abs{J}) +  \sum_{m \in J} q_m .
\end{align}
\end{lemma}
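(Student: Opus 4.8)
\emph{Sketch of proof.} The plan is to dispatch the three assertions in turn. First I would establish the spectral decomposition of $pOp$: since $p$ is a rank-$N$ projection, $pOp$ annihilates $\mathrm{ran}(p)^{\perp}$ and acts as a self-adjoint operator on the $N$-dimensional space $\mathrm{ran}(p)$, so the finite-dimensional spectral theorem produces an orthonormal basis $\chi_1,\dots,\chi_N$ of $\mathrm{ran}(p)$ and real eigenvalues $\lambda_1,\dots,\lambda_N$ with $pOp\,\chi_j=\lambda_j\chi_j$. Then $p\chi_j=\chi_j$ gives $\scp{\chi_j}{O\chi_j}=\scp{\chi_j}{pOp\,\chi_j}=\lambda_j$, while $\{\chi_j\}_{j=1}^N$ being an orthonormal basis of $\mathrm{ran}(p)$ immediately yields $p=\sum_j\ket{\chi_j}\bra{\chi_j}$ and $pOp=\sum_j\lambda_j\ket{\chi_j}\bra{\chi_j}$; the identity $\sum_j\lambda_j=\tr(pOp)=\tr(Op)$ then follows from $p^2=p$, cyclicity of the trace, and the standing trace-class hypothesis on $pOp$. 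This settles \eqref{eq:properties of the projection operator 1}--\eqref{eq:properties of the projection operator 2}.

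Next I would verify that $P_J^\varphi$ and $Q_J^\varphi$ are orthogonal projections on $L^2_{J,\rm{as}}(\mathbb{R}^{3N})$. Self-adjointness is clear since each $\ket{\varphi}\bra{\varphi}_m$ is a self-adjoint projection on $L^2(\mathbb{R}^{3N})$. To see that $P_J^\varphi$ preserves $L^2_{J,\rm{as}}(\mathbb{R}^{3N})$, I would conjugate by an arbitrary transposition $\tau_{ab}$ of variables with $a,b\in J$: as $\tau_{ab}$ commutes with $\ket{\varphi}\bra{\varphi}_m$ for $m\notin\{a,b\}$ and interchanges $\ket{\varphi}\bra{\varphi}_a$ with $\ket{\varphi}\bra{\varphi}_b$, one obtains $\tau_{ab}P_J^\varphi=P_J^\varphi\tau_{ab}$, so $P_J^\varphi$ (and hence $Q_J^\varphi$) maps functions antisymmetric in $\{x_j\}_{j\in J}$ to functions of the same type. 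For the projection property the decisive observation --- and the only part of the argument that is not pure bookkeeping --- is that $\ket{\varphi}\bra{\varphi}_m\ket{\varphi}\bra{\varphi}_{m'}\Psi=0$ for $m\neq m'$ in $J$ and $\Psi\in L^2_{J,\rm{as}}(\mathbb{R}^{3N})$: writing out the composition one meets a double integral $\int\!\!\int\overline{\varphi(y)}\,\overline{\varphi(y')}\,\Psi(\dots,y,\dots,y',\dots)\,dy\,dy'$ with $y,y'$ inserted in slots $m$ and $m'$, and the substitution $y\leftrightarrow y'$ flips the sign of $\Psi$ while fixing $\overline{\varphi(y)}\,\overline{\varphi(y')}$, so the integral equals its negative and vanishes. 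Consequently the off-diagonal terms in $(P_J^\varphi)^2=\sum_{m,m'\in J}\ket{\varphi}\bra{\varphi}_m\ket{\varphi}\bra{\varphi}_{m'}$ drop out on $L^2_{J,\rm{as}}(\mathbb{R}^{3N})$ and $(P_J^\varphi)^2=\sum_{m\in J}\ket{\varphi}\bra{\varphi}_m=P_J^\varphi$.

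Finally I would read off the two displayed identities from the definitions. Since $p_mO_mp_m$ acts on the $m$-th electron coordinate precisely as $pOp$, Step~1 gives $p_mO_mp_m=\sum_{n=1}^N\lambda_n\ket{\chi_n}\bra{\chi_n}_m$ as operators on $L^2(\mathbb{R}^{3N})\otimes\mathcal{F}$; summing over $m\in J$ and invoking the definition $P_J^{\chi_n}=\sum_{m\in J}\ket{\chi_n}\bra{\chi_n}_m\otimes\id_{\mathcal{F}}$ yields $\sum_{m\in J}p_mO_mp_m=\sum_{n=1}^N\lambda_nP_J^{\chi_n}$. For \eqref{eq:relation for the sum over all Q-j} I would use $Q_J^{\chi_n}=1-\sum_{m\in J}\ket{\chi_n}\bra{\chi_n}_m$, sum over $n$, and apply $\sum_{n=1}^N\ket{\chi_n}\bra{\chi_n}_m=p_m$ together with $p_m=1-q_m$ to get $\sum_{n=1}^NQ_J^{\chi_n}=N-\sum_{m\in J}p_m=(N-\abs{J})+\sum_{m\in J}q_m$. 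The main obstacle, as indicated above, is the vanishing of $\ket{\varphi}\bra{\varphi}_m\ket{\varphi}\bra{\varphi}_{m'}$ on the antisymmetric subspace --- this is where the Pauli principle enters and where the two-variable structure has to be handled carefully; the remainder reduces to the finite-rank spectral theorem and routine manipulation of the defining formulas.
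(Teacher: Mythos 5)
Your proposal is correct and follows essentially the same route as the paper: finite-dimensional spectral theorem on $\mathrm{ran}(p)$ for \eqref{eq:properties of the projection operator 1}--\eqref{eq:properties of the projection operator 2}, the antisymmetry-induced vanishing of the cross terms $\ket{\varphi}\bra{\varphi}_m\ket{\varphi}\bra{\varphi}_{m'}$ for the projection property, and direct summation for the two operator identities. Your explicit swap argument for the cross terms merely spells out what the paper compresses into ``due to the antisymmetry of $\Psi_{N,J}$'', so there is nothing to add.
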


\begin{proof}
The existence of orthonormal functions $\chi_1, \ldots, \chi_N$ in the range of $p$ and real values $\lambda_1, \ldots, \lambda_N$ such that $p = \sum_{j=1}^N \ket{\chi_j} \bra{\chi_j}$ and $p O p = \sum_{j=1}^N \lambda_j \ket{\chi_j} \bra{\chi_j}$ follows directly from the spectral theorem and the projection property of $p$. Moreover, note that
$\scp{\chi_j}{O \chi_j}
= \scp{\chi_j}{p O p \chi_j} = \lambda_j$
and
\begin{align}
\tr \left( O p \right)
&= \tr \left( p O p \right)
= \sum_{j=1}^N \scp{\chi_j}{O \chi_j}
= \sum_{j=1}^N \lambda_j .
\end{align}
Let $\Psi_{N,J} \in L_{J,\rm{as}}^2(\mathbb{R}^{3N})$. Due to the antisymmetry of $\Psi_{N,J}$, we have
\begin{align}
\left( P_J^{\varphi} \right)^2 \Psi_{N,J}
&= P_J^{\varphi} \Psi_{N,J}
\end{align}
and
\begin{align}
\left( Q_J^{\varphi} \right)^2 \Psi_{N,J}
&= \left( 1 - P_J^{\varphi} \right)^2 \Psi_{N,J}
= \left( 1 -  P_J^{\varphi}  \right) \Psi_{N,J}
= Q_J^{\varphi} \Psi_{N,J} .
\end{align}
Thus, $P_J^{\varphi}$ and $Q_J^{\varphi}$ are projectors on $L_{J,\rm{as}}^2(\mathbb{R}^{3N})$. The relations
\begin{align}
\sum_{m \in J} p_m O_m p_m
&= \sum_{m \in J } \sum_{n=1}^N \lambda_n \ket{\chi_{n}} \bra{\chi_n}_{m} \otimes \id_{\mathcal{F}}
= \sum_{n=1}^N \lambda_n P_{J}^{\chi_n} 
\end{align}
are directly derived from the definitions of the objects involved. The final equality of the lemma follows from
\begin{align}
\sum_{n =1}^N Q_{J}^{\chi_n} 
&= N - \sum_{n =1}^N \sum_{m \in J} \ket{\chi_n} \bra{\chi_n}_m \otimes \id_{\mathcal{F}}
= N -  \sum_{m \in J}  p_m
= (N- \abs{J} ) +  \sum_{m \in J} \left( 1 -  p_m \right)
\nonumber \\
&= (N-\abs{J}) +  \sum_{m \in J} q_m .
\end{align}
\end{proof}
The previous lemma allows us to derive the following result.
\begin{lemma}
\label{lemma:estimates of the p-p term}
Let $O \in \mathfrak{S}^{\infty}(L^2(\mathbb{R}^3))$ be a self-adjoint operator, $J \subseteq \{1,2, \ldots, N \}$, and $j \in J$. Let $L^2_{J,\rm{as}}(\mathbb{R}^{3N})$ denote the linear subspace of $L_{\rm{as}}^2(\mathbb{R}^{3N})$ consisting of wave functions that are antisymmetric with respect to the variables $\{ x_j \}_{j \in J}$. Suppose $\Psi_{N,J}, \Psi'_{N,J} \in L^2_{J,\rm{as}}(\mathbb{R}^{3N})$. Then,
\begin{align}
\label{eq:estimates of the p-p term 1}
&\Big| \scp{\Psi_{N,J}}{\Big(  \sum_{j \in J} p_j O_j p_j  - \tr \left( O p \right) \Big) \Psi'_{N,J} } \Big|
\nonumber \\
&\quad \leq 
\norm{pO p}_{\mathfrak{S}^{\infty}}
\scp{\Psi_{N,J}}{\big( (N-\abs{J}) + \abs{J} q_j \big) \Psi_{N,J} }^{1/2}
\scp{\Psi'_{N,J}}{\big( (N-\abs{J}) + \abs{J} q_j \big) \Psi'_{N,J} }^{1/2} .
\end{align}
For $\Psi_N, \Psi'_N \in L_{\rm{as}}^2(\mathbb{R}^{3N})$ we also have
\begin{align}
\label{eq:estimates of the p-p term 2}
&\Big| \scp{\Psi_{N}}{\Big( \sum_{j=2}^N p_{j}  
O_j p_{j} - \tr \big( O p \big)
\Big) q_{1} e^{ik x_1} p_{1}  \Psi'_{N}} \Big|
\nonumber \\
&\quad \leq  \norm{p O p}_{\mathfrak{S}^{\infty}}
\norm{q e^{ikx} p}_{\mathfrak{S}^1}
\left( \norm{q_1 \Psi_N} + N^{-1/2} \norm{\Psi_N} \right)
\left( \norm{q_1 \Psi'_N} + N^{-1/2} \norm{\Psi'_N} \right) . 
\end{align}
\end{lemma}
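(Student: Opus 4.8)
The plan is to derive \eqref{eq:estimates of the p-p term 1} and \eqref{eq:estimates of the p-p term 2} in turn, both from the spectral decomposition supplied by Lemma~\ref{lemma:diagonalisation of the p-p term}. For \eqref{eq:estimates of the p-p term 1} I would write $p=\sum_{n=1}^N\ketbr{\chi_n}$ and $pOp=\sum_{n=1}^N\lambda_n\ketbr{\chi_n}$ with $\lambda_n=\scp{\chi_n}{O\chi_n}\in\mathbb{R}$, $\abs{\lambda_n}\le\norm{pOp}_{\mathfrak{S}^{\infty}}$ and $\sum_n\lambda_n=\tr(pOp)=\tr(Op)$. On $L^2_{J,\rm{as}}(\mathbb{R}^{3N})$ this turns $\sum_{j\in J}p_j O_j p_j-\tr(Op)$ into $\sum_n\lambda_n(P_J^{\chi_n}-\id)=-\sum_n\lambda_n Q_J^{\chi_n}$; since the $Q_J^{\chi_n}$ are self-adjoint projections on $L^2_{J,\rm{as}}(\mathbb{R}^{3N})$, applying Cauchy--Schwarz inside the scalar product and then in $n$ bounds the left-hand side by $\norm{pOp}_{\mathfrak{S}^{\infty}}\big(\sum_n\norm{Q_J^{\chi_n}\Psi_{N,J}}^2\big)^{1/2}\big(\sum_n\norm{Q_J^{\chi_n}\Psi'_{N,J}}^2\big)^{1/2}$. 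Finally \eqref{eq:relation for the sum over all Q-j} rewrites $\sum_n\norm{Q_J^{\chi_n}\Psi_{N,J}}^2$ as $\scp{\Psi_{N,J}}{\big((N-\abs{J})+\sum_{m\in J}q_m\big)\Psi_{N,J}}$, and the antisymmetry of $\Psi_{N,J}$ in $\{x_m\}_{m\in J}$ (which makes $\scp{\Psi_{N,J}}{q_m\Psi_{N,J}}$ the same for every $m\in J$) replaces $\sum_{m\in J}q_m$ by $\abs{J}q_j$.

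For \eqref{eq:estimates of the p-p term 2} I would take $J=\{2,\dots,N\}$. Since $q_1e^{ikx_1}p_1$ acts only on the first variable, both $\Psi_N$ and $q_1e^{ikx_1}p_1\Psi'_N$ lie in $L^2_{J,\rm{as}}(\mathbb{R}^{3N})$, so the identity of the previous paragraph applies and the left-hand side of \eqref{eq:estimates of the p-p term 2} equals $-\sum_{n=1}^N\lambda_n\scp{\Psi_N}{Q_J^{\chi_n}A_1\Psi'_N}$, where $A:=q e^{ikx}p$ is trace class (because $p$ has rank $N$) and $A_1$ denotes its action on the first variable. The key manoeuvre is to expand $A=\sum_\mu s_\mu\ketbra{u_\mu}{v_\mu}$ in a singular value decomposition, with $\sum_\mu s_\mu=\norm{q e^{ikx}p}_{\mathfrak{S}^1}$ and $\{u_\mu\},\{v_\mu\}$ orthonormal, and to contract the first variable: with $\psi_\mu(x_2,\dots,x_N):=\int_{\mathbb{R}^3}\overline{u_\mu(y)}\Psi_N(y,x_2,\dots,x_N)\,dy$ and $\psi'_\mu$ defined analogously from $v_\mu$ and $\Psi'_N$ (both antisymmetric in $x_2,\dots,x_N$), and with $Q_J^{\chi_n}$ restricting on the last $N-1$ variables to the projection $\widehat Q_n:=\id-\sum_{m=2}^N\ketbr{\chi_n}_m$ of the $(N-1)$-particle antisymmetric space (for which $\sum_{n=1}^N\widehat Q_n=\id+\sum_{m=2}^N q_m$, by the argument that proves \eqref{eq:relation for the sum over all Q-j}), one arrives at the representation $-\sum_{n,\mu}\lambda_n s_\mu\scp{\widehat Q_n\psi_\mu}{\widehat Q_n\psi'_\mu}$.

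From here I would bound $\abs{\lambda_n}\le\norm{pOp}_{\mathfrak{S}^{\infty}}$ and use Cauchy--Schwarz in $n$, which produces for each fixed $\mu$ the factor $\big(\sum_n\norm{\widehat Q_n\psi_\mu}^2\big)^{1/2}=\big(\norm{\psi_\mu}^2+(N-1)\norm{q_1\psi_\mu}^2\big)^{1/2}$. The two norms are controlled by \eqref{eq:technical estimates using antisymmetry 2} applied to the rank-one Hilbert--Schmidt operator $\ketbr{u_\mu}$: with $\Psi_N$, which is antisymmetric in all variables, it gives $\norm{\psi_\mu}^2=\norm{\ketbr{u_\mu}_1\Psi_N}^2\le N^{-1}\norm{\Psi_N}^2$; with $q_2\Psi_N$, which is antisymmetric in all variables except $x_2$, it gives $\norm{q_1\psi_\mu}^2=\norm{\ketbr{u_\mu}_1q_2\Psi_N}^2\le(N-1)^{-1}\norm{q_2\Psi_N}^2=(N-1)^{-1}\norm{q_1\Psi_N}^2$. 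Hence $\big(\sum_n\norm{\widehat Q_n\psi_\mu}^2\big)^{1/2}\le\norm{q_1\Psi_N}+N^{-1/2}\norm{\Psi_N}$ independently of $\mu$, and similarly for $\psi'_\mu$; summing over $\mu$ then replaces $\sum_\mu s_\mu$ by $\norm{q e^{ikx}p}_{\mathfrak{S}^1}$ and yields exactly \eqref{eq:estimates of the p-p term 2}.

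I expect the delicate point to be the bookkeeping of the powers of $N$: the bound must carry $\norm{q e^{ikx}p}_{\mathfrak{S}^1}$ to the first power, together with the $N^{-1/2}$-weighted combination of $\norm{q_1\Psi}$ and $\norm{\Psi}$. Simply feeding $q_1e^{ikx_1}p_1\Psi'_N$ into \eqref{eq:estimates of the p-p term 1} would only give $\sqrt{N}\,\norm{q e^{ikx}p}_{\mathfrak{S}^2}\ge\norm{q e^{ikx}p}_{\mathfrak{S}^1}$, which is too weak; this is what forces performing the singular value decomposition and the contraction of the first variable \emph{before} invoking the antisymmetry estimate, and what forces using \eqref{eq:technical estimates using antisymmetry 2} twice — once for the ``$p$-part'' $\norm{\psi_\mu}$ with the full gain $N^{-1}$, and once for the ``$q$-part'' $\norm{q_1\psi_\mu}$, where one crucially uses that $q_2\Psi_N$ is still antisymmetric in all but one of its variables, so the gain $(N-j)^{-1}$ with $j=1$ remains available.
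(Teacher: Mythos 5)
Your proof is correct and follows essentially the same route as the paper's: spectral decomposition of $pOp$ via Lemma~\ref{lemma:diagonalisation of the p-p term}, rewriting the operator as $-\sum_n\lambda_n Q_J^{\chi_n}$, Cauchy--Schwarz in $n$, and \eqref{eq:relation for the sum over all Q-j}, then for the second bound the singular value decomposition of $q e^{ikx}p$ combined with the antisymmetry gain. The only (cosmetic) difference is that you contract the first variable and invoke \eqref{eq:technical estimates using antisymmetry 2}, whereas the paper stays on the $N$-particle space and applies \eqref{eq:technical estimates using antisymmetry 1} to $\ketbr{\phi_l}_1$ directly; the resulting estimates are identical.
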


\begin{proof}
Using Lemma~\ref{lemma:diagonalisation of the p-p term} we obtain the spectral decomposition $\{ \chi_n , \lambda_{n} \}_{n \in \mathbb{N}}$ of the operator $p O p$, which gives rise to the following estimates
\begin{align}
&\abs{\scp{\Psi_{N,J}}{\Big(  \sum_{j \in J} p_j O_j p_j  - \tr \left( O p \right) \Big) \Psi'_{N,J} } }
\nonumber \\
&\quad = \abs{ \scp{\Psi_{N,J}}{\sum_{n=1}^N  \lambda_n \left( 1 - P_{J}^{\chi_n} \right) \Psi'_{N,J}} }
\nonumber \\
&\quad = \abs{ \scp{\Psi_{N,J}}{\sum_{n=1}^N  \lambda_n Q_{J}^{\chi_n}  \Psi'_{N,J}} }
\nonumber \\
&\quad \leq  \sup_{j \in \{1,\ldots, N \} }
\big\{ \abs{\lambda_j} \big\} \sum_{n=1}^N   \norm{Q_{J}^{\chi_n}  \Psi_{N,J}}
\norm{Q_{J}^{\chi_n}  \Psi'_{N,J}}
\nonumber \\
&\quad \leq
\norm{p O p}_{\mathfrak{S}^{\infty}}
\scp{\Psi_{N,J}}{\sum_{n=1}^N Q_{J}^{\chi_n} \Psi_{N,J} }^{1/2}
\scp{\Psi'_{N,J}}{\sum_{n=1}^N Q_{J}^{\chi_n} \Psi'_{N,J} }^{1/2}
\nonumber \\
&\quad \leq
\norm{p O p}_{\mathfrak{S}^{\infty}}
\scp{\Psi_{N,J}}{\big( (N-\abs{J}) + \sum_{m \in J} q_m \big) \Psi_{N,J} }^{1/2}
\scp{\Psi'_{N,J}}{\big( (N-\abs{J}) + \sum_{m \in J} q_m \big) \Psi'_{N,J} }^{1/2}
\nonumber \\
&\quad \leq
\norm{p O p}_{\mathfrak{S}^{\infty}}
\scp{\Psi_{N,J}}{\big( (N-\abs{J}) + \abs{J} q_j \big) \Psi_{N,J} }^{1/2}
\scp{\Psi'_{N,J}}{\big( (N-\abs{J}) + \abs{J} q_j \big) \Psi'_{N,J} }^{1/2} .
\end{align} 
Here, we have used that 
\begin{align}
\sup_{j \in \{1,\ldots, N \} }
\big\{ \abs{\lambda_j} \big\}
= \sup_{j \in \{1,\ldots, N \} }
\big\{ \abs{\scp{\chi_j}{O \chi_j}} \big\}
= \sup_{j \in \{1,\ldots, N \} }
\big\{ \abs{\scp{\chi_j}{p O p \chi_j}} \big\}
\leq \norm{pO p}_{\mathfrak{S}^{\infty}} .
\end{align}
Let $\widetilde{J} = \{ 2,3, \ldots , N \}$.  Using the spectral decomposition $\{ \chi_n , \lambda_{n} \}_{n \in \mathbb{N}}$ of the operator $p O p$ again we get
\begin{align}
\scp{\Psi_{N}}{\Big( \sum_{j=2}^N p_{j}  
O_j p_{j} - \tr \big( O p \big)
\Big) q_{1} e^{ik x_1} p_{1}  \Psi'_{N}}
&=
\scp{\Psi_{N}}{\sum_{n=1}^N   \lambda_n Q_{\widetilde{J}}^{\chi_n}  q_1 e^{ik x_1} p_{1}  \Psi'_{N}}  .
\end{align}
Note that $Q_{\widetilde{J}}^{\chi_n}$ is a projection on $L_{\widetilde{J},\rm{as}}^2(\mathbb{R}^{3N})$ and 
$\left[ Q_{\widetilde{J}}^{\chi_n} , q_1 e^{ik x_1} p_{1} \right] =0$. Using the singular value decomposition, we have $q_1 e^{i k x_1} p_1 = \sum_{l \in \mathbb{N}} \mu_l \ket{\phi_l} \bra{\widetilde{\phi}_l}_{1}$,
where $\mu_l \geq 0$
for all $l \in \mathbb{N}$ and $\sum_{l \in \mathbb{N}} \mu_l = \norm{q e^{ikx} p}_{\mathfrak{S}^1}$.
Combining this with
$\sup_{n \in \{1,\ldots, N \} }
\big\{ \abs{\lambda_n} \big\}
\leq \norm{pO p}_{\mathfrak{S}^{\infty}}$,   we obtain
\begin{align}
&\Big|\scp{\Psi_{N}}{\Big( \sum_{j=2}^N p_{j}  
O_j p_{j} - \tr \big( O p \big)
\Big) q_{1} e^{ik x_1} p_{1}  \Psi'_{N}} \Big|
\nonumber \\
&\quad =
\Big| \scp{\Psi_{N}}{\sum_{n=1}^N   \lambda_n Q_{\widetilde{J}}^{\chi_n}  \sum_{l \in \mathbb{N}} \mu_l \ket{\phi_l} \bra{\widetilde{\phi}_l}_{1} Q_{\widetilde{J}}^{\chi_n} \Psi'_{N}} \Big|
\nonumber \\
&\quad \leq 
\sum_{l \in \mathbb{N}} \mu_l \sum_{n=1}^N \abs{\lambda_n}
\abs{\scp{Q_{\widetilde{J}}^{\chi_n} \Psi_{N}}{ \ket{\phi_l} \bra{\widetilde{\phi}_l}_{1} Q_{\widetilde{J}}^{\chi_n} \Psi'_{N}}}
\nonumber \\
&\quad \leq 
\sum_{l \in \mathbb{N}} \mu_l \sum_{n=1}^N \abs{\lambda_n}
\norm{\bra{\phi_l}_1 Q_{\widetilde{J}}^{\chi_n} \Psi_{N}}_{L^2(\mathbb{R}^{3(N-1)} \otimes \mathcal{F} )}
\norm{\bra{\widetilde{\phi}_l}_{1} Q_{\widetilde{J}}^{\chi_n} \Psi'_{N}}_{L^2(\mathbb{R}^{3(N-1)} \otimes \mathcal{F} )}
\nonumber \\
&\quad \leq \norm{p O p}_{\mathfrak{S}^{\infty}}
\sum_{l \in \mathbb{N}} \mu_l \sum_{n=1}^N
\norm{\bra{\phi_l}_1 Q_{\widetilde{J}}^{\chi_n} \Psi_{N}}_{L^2(\mathbb{R}^{3(N-1)} \otimes \mathcal{F} )}
\norm{\bra{\widetilde{\phi}_l}_{1} Q_{\widetilde{J}}^{\chi_n} \Psi'_{N}}_{L^2(\mathbb{R}^{3(N-1)} \otimes \mathcal{F} )}
 .
\end{align}
Using Lemma~\ref{lemma:diagonalisation of the p-p term}, \eqref{eq:technical estimates using antisymmetry 1} 
and $\norm{\ket{\phi_l} \bra{\phi_l}}_{\mathfrak{S}^1} = \norm{\phi_l}_{L^2}^2 = 1$, we obtain
\begin{align}
\sum_{n=1}^N \norm{\bra{\phi_l}_1 Q_{\widetilde{J}}^{\chi_n} \Psi_{N}}^2_{L^2(\mathbb{R}^{3(N-1)} \otimes \mathcal{F} )}
&= \sum_{n=1}^N \scp{Q_{\widetilde{J}}^{\chi_n} \Psi_{N}}{\ket{\phi_l} \bra{\phi_l}_1 Q_{\widetilde{J}}^{\chi_n} \Psi_{N}}
\nonumber \\
&=  \scp{\Psi_{N}}{\ket{\phi_l} \bra{\phi_l}_1 \sum_{n=1}^N Q_{\widetilde{J}}^{\chi_n} \Psi_{N}}
\nonumber \\
&=  \scp{\Psi_{N}}{\ket{\phi_l} \bra{\phi_l}_1 \Big( 1 + \sum_{j=2}^N q_j \Big) \Psi_{N}}
\nonumber \\
&=  \scp{\Psi_{N}}{\ket{\phi_l} \bra{\phi_l}_1 \Psi_{N}}
+ (N-1) \scp{q_2 \Psi_{N}}{\ket{\phi_l} \bra{\phi_l}_1 q_2 \Psi_{N}}
\nonumber \\
&\leq  \norm{\ket{\phi_l} \bra{\phi_l}}_{\mathfrak{S}^1}
\left( N^{-1} \norm{\Psi_N}^2 + \norm{q_2 \Psi_N}^2 \right)
\nonumber \\
&\leq   N^{-1} \norm{\Psi_N}^2 + \norm{q_1 \Psi_N}^2 .
\end{align}
This gives and the Cauchy--Schwarz inequality lead to
\begin{align}
&\Big| \scp{\Psi_{N}}{\Big( \sum_{j=2}^N p_{j}  
O_j p_{j} - \tr \big( O p \big)
\Big) q_{1} e^{ik x_1} p_{1}  \Psi_{N}} \Big|
\nonumber \\
&\quad \leq  \norm{p O p}_{\mathfrak{S}^{\infty}}
\sum_{l \in \mathbb{N}} \mu_l
\left( \norm{q_1 \Psi_N} + N^{-1/2} \norm{\Psi_N} \right)
\left( \norm{q_1 \Psi'_N} + N^{-1/2} \norm{\Psi'_N} \right) .
\end{align}
Since $\sum_{l \in \mathbb{N}} \mu_l = \norm{q e^{ikx} p}_{\mathfrak{S}^1}$, this proves the claim.

\end{proof}

Next, we provide more details about operators in Fock space and derive estimates for the quantized vector potential, which will become important later. We introduce the usual bosonic annihilation and creation operators
\begin{align}
a(f) = \sum_{\lambda =1,2} \int_{\mathbb{R}^3} \overline{f(k,\lambda)} a_{k,\lambda}
\quad \text{and} \quad 
a^*(f) = \sum_{\lambda =1,2} \int_{\mathbb{R}^3} f(k,\lambda) a^*_{k,\lambda}
\quad \text{with} \; f \in \mathfrak{h} .
\end{align}
They satisfy the estimates (see, e.g. \cite[Section 2]{RS2009})
\begin{align}
\label{eq:standard estimate annihilation and creation operators}
\norm{a(f) \Psi_N} &\leq \norm{f}_{\mathfrak{h}} \norm{\mathcal{N}^{1/2} \Psi_N}
\quad \text{and} \quad
\norm{a^*(f) \Psi_N} \leq \norm{f}_{\mathfrak{h}} \norm{\left( \mathcal{N} + 1 \right)^{1/2} \Psi_N} .
\end{align}
Moreover, note that the number operator $\mathcal{N}$, defined in \eqref{eq:definition number operator}, satisfies
\begin{align}
\label{eq:shifting property number operator}
\left( \mathcal{N} + 1 \right)^{m} a_{k,\lambda} = a_{k,\lambda} \mathcal{N} 
\quad \text{and} \quad 
 a^*_{k,\lambda} = a^*_{k,\lambda} \left( \mathcal{N} + 1 \right)^{m}
 \quad \text{for all} \; m \in \mathbb{R} ,
\end{align}
as can be verified by the operator's action on elements of Fock space. 
In addition, we define 
\begin{align}
\label{eq:definition of G}
\boldsymbol{G}_x(k,\lambda) &=  \mathcal{F}[\kappa](k) \frac{1}{\sqrt{2 \abs{k} } } \vep_{\lambda}(k) e^{-ikx} ,
\end{align}
which allows us to write the quantized vector potential with ultraviolet cutoff as 
\begin{align}
\label{eq:definition quantized vector potential with cutoff shorthand notation}
\vAcq(x)
&\coloneqq
(\kappa * \hat{\vA})(x)
= \sum_{\lambda=1,2} \int_{\mathbb{R}^3}  
\frac{\mathcal{F}[\kappa](k)}{\sqrt{2 \abs{k}}} \vep_{\lambda}(k) \left(  e^{ikx} a_{k,\lambda}  + e^{- ik x} a^*_{k,\lambda} \right)\, dk
= a(\boldsymbol{G}_x)
+ a^*(\boldsymbol{G}_x) .
\end{align}
Here, $a(\boldsymbol{G}_x)$ is shorthand notation for $\big( a(G^1_x), a(G^2_x), a(G^3_x) \big)$.
In the following, it will be convenient to split the vector potential into its positive and negative frequency parts
\begin{align}
\label{eq:splitting of the vector potential}
\vAp(x) &\coloneqq a(\boldsymbol{G}_x) ,  
\quad 
\vAm(x) \coloneqq a^*(\boldsymbol{G}_x) .
\end{align}
From \eqref{eq:standard estimate annihilation and creation operators}, we obtain
\begin{align}
\begin{aligned}
\label{eq:Bounds for vector potential}
\sup_{x \in \mathbb{R}^3} \norm{ \vAp(x) \Psi_{N}} &\leq \big\|\abs{\cdot}^{-1/2} \mathcal{F}[\kappa] \big\|_{L^2(\mathbb{R}^3)} \norm{\mathcal{N}^{1/2} \Psi_N} ,
\\
\sup_{x \in \mathbb{R}^3} \norm{ \vAm(x) \Psi_{N}} &\leq \big\|\abs{\cdot}^{-1/2} \mathcal{F}[\kappa] \big\|_{L^2(\mathbb{R}^3)} \norm{\left( \mathcal{N} + 1\right)^{1/2} \Psi_N} .
\end{aligned}
\end{align}
Using the antisymmetry of the wave function we, moreover, obtain the following estimates.
\begin{lemma}
Let $\Psi_{N}, \Psi_{N}' \in L^2(\mathbb{R}^{3N}) \otimes \mathcal{F}$ antisymmetric in $x_1$ and all other electron variables except $x_{l_1}, \ldots, x_{l_j}$. Let $O_1 = O \otimes \id_{L^2(\mathbb{R}^{3(N-1)})} \otimes \id_{\mathcal{F}}$ with $O: L^2({\mathbb{R}^3}) \rightarrow L^2(\mathbb{R}^3)$
and let 
\begin{align}
\mathcal{K} 
&= (N-j)^{-1} \sup_{k \in \mathbb{R}^3} \Big\{ (1 + \abs{k})^{-1} \norm{q e^{ikx} O p}_{\mathfrak{S}^1}  \Big\}  \norm{(\abs{\cdot}^{1/2} + \abs{\cdot}^{-1/2}) \mathcal{F}[\kappa]  }_{L^2}
\end{align} 
For $m \in \mathbb{R}$ we have
\begin{subequations}
\begin{align}
\label{eq:estimate vector potential positive part with p and q within scalar product}
\abs{\scp{\Psi_N}{q_{1}  \vAp(x_1) O_1 p_{1} \Psi_N'} }
&\leq C \mathcal{K}  
\norm{\left( \mathcal{N} + 1 \right)^{m} \Psi_N}
\norm{\left( \mathcal{N} + 1 \right)^{-m} \mathcal{N}^{1/2} \Psi_N'}
,
\\
\label{eq:estimate vector potential negative part with p and q within scalar product}
\abs{\scp{\Psi_N}{q_{1}  \vAm(x_1) O_1 p_{1} \Psi_N'} }
&\leq C \mathcal{K} 
\norm{\left( \mathcal{N} + 1 \right)^{m} \mathcal{N}^{1/2} \Psi_N}
\norm{\left( \mathcal{N} + 1 \right)^{-m} \Psi_N'} ,
\\
\label{eq:estimate vector potential with p and q within scalar product}
\abs{\scp{\Psi_N}{q_{1}  \vAhk(x_1) O_1 p_{1} \Psi_N'} }
&\leq C \mathcal{K} 
\norm{\left( \mathcal{N} + 1 \right)^{m} \Psi_N}
\norm{\left( \mathcal{N} + 1 \right)^{-m} \left( \mathcal{N} + 1 \right)^{1/2} \Psi_N'} ,
\\
\label{eq:estimate vector potential squared with p and q within scalar product}
\abs{\scp{\Psi_N}{q_{1}  \vAhk^2(x_1) p_{1} \Psi_N'} }
&\leq C \norm{(\abs{\cdot}^{1/2} + \abs{\cdot}^{-1/2}) \mathcal{F}[\kappa] }_{L^2}^2  \norm{\left( \mathcal{N} + 1 \right)^{m} \Psi_N}
\norm{\left( \mathcal{N} + 1 \right)^{-m +1}  \Psi_N'} 
\nonumber \\
&\quad \times
(N-j)^{-1} \sup_{k \in \mathbb{R}^3} \Big\{ (1 + \abs{k})^{-1} \norm{q e^{ikx}  p}_{\mathfrak{S}^1}  \Big\}  .
\end{align}
\end{subequations}
\end{lemma}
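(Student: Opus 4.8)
The plan is to treat all four estimates by the same mechanism: rewrite the ultraviolet-regularized vector potential in terms of the form factors $\boldsymbol{G}_{x}$ of \eqref{eq:definition of G}, move the plane waves $e^{\pm i k x_{1}}$ next to the one-particle operators $q$, $O$, $p$ so that trace-class one-particle operators of the type $\big( q e^{\pm i k x} O p \big)_{1}$ (or $\big( q e^{\pm i (k \pm k') x} p \big)_{1}$ for \eqref{eq:estimate vector potential squared with p and q within scalar product}) appear, and then invoke the antisymmetry bound \eqref{eq:technical estimates using antisymmetry 1} with these operators. The photon integrations are afterwards disposed of by Cauchy--Schwarz, using $\norm{q e^{i k x} O p}_{\mathfrak{S}^{1}} \leq (1 + \abs{k}) \sup_{l}\{ (1+\abs{l})^{-1} \norm{q e^{i l x} O p}_{\mathfrak{S}^{1}} \}$, the elementary identity $(1+\abs{k})^{2}/\abs{k} = (\abs{k}^{1/2} + \abs{k}^{-1/2})^{2}$, and $\abs{\vep_{\lambda}(k)} = 1$. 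Before applying \eqref{eq:technical estimates using antisymmetry 1} the bosonic operators are commuted, via \eqref{eq:CCR}, to the side of the inner product on which they act as annihilation operators, and the number-operator weights are distributed so that each $a_{k,\lambda}$ absorbs exactly half a power of $\mathcal{N}$; here one uses the shift relations \eqref{eq:shifting property number operator} together with the sector-wise bounds $\sum_{\lambda} \int \norm{(\mathcal{N}+1)^{\sigma} a_{k,\lambda} \Phi}^{2}\, dk \leq C \norm{(\mathcal{N}+1)^{\sigma + 1/2} \Phi}^{2}$ and $\sum_{\lambda, \lambda'} \int \norm{(\mathcal{N}+1)^{\sigma} a_{k,\lambda} a_{k',\lambda'} \Phi}^{2}\, dk\, dk' \leq C \norm{(\mathcal{N}+1)^{\sigma+1} \Phi}^{2}$, which follow at once by examining the action on the $n$-photon sectors.

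For \eqref{eq:estimate vector potential positive part with p and q within scalar product} I would write $q_{1} \vAp(x_{1}) O_{1} p_{1} = \sum_{\lambda} \int dk\, \mathcal{F}[\kappa](k) (2\abs{k})^{-1/2} \vep_{\lambda}(k)\, \big( q e^{i k x} O p \big)_{1}\, a_{k,\lambda}$, insert $(\mathcal{N}+1)^{m}(\mathcal{N}+1)^{-m}$, commute $\big( q e^{i k x} O p \big)_{1}$ past $\mathcal{N}$, and apply \eqref{eq:technical estimates using antisymmetry 1} with $\Psi = (\mathcal{N}+1)^{m} \Psi_{N}$ and $\Psi' = (\mathcal{N}+1)^{-m} a_{k,\lambda} \Psi_{N}'$; summing over $\lambda$, integrating in $k$, and using Cauchy--Schwarz with the first sector bound gives the claim. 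Estimate \eqref{eq:estimate vector potential negative part with p and q within scalar product} is obtained in the same way after using $\scp{\Psi_{N}}{\big(q e^{-i k x} O p\big)_{1} a_{k,\lambda}^{*} \Psi_{N}'} = \scp{a_{k,\lambda}\Psi_{N}}{\big(q e^{-i k x} O p\big)_{1} \Psi_{N}'}$ and moving the weight $(\mathcal{N}+1)^{m}$ onto $a_{k,\lambda}\Psi_{N}$ (note $\sup_{l}\{(1+\abs{l})^{-1} \norm{q e^{-i l x} O p}_{\mathfrak{S}^{1}}\} = \sup_{l}\{(1+\abs{l})^{-1} \norm{q e^{i l x} O p}_{\mathfrak{S}^{1}}\}$). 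Estimate \eqref{eq:estimate vector potential with p and q within scalar product} then follows by splitting $\vAhk = \vAp + \vAm$: the $\vAp$-contribution is controlled by \eqref{eq:estimate vector potential positive part with p and q within scalar product} together with $\mathcal{N}^{1/2} \leq (\mathcal{N}+1)^{1/2}$, and the $\vAm$-contribution by the argument for \eqref{eq:estimate vector potential negative part with p and q within scalar product} with the weights split as $(\mathcal{N}+1)^{m-1/2}$ on $a_{k,\lambda}\Psi_{N}$ and $(\mathcal{N}+1)^{1/2-m}$ on $\Psi_{N}'$.

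For \eqref{eq:estimate vector potential squared with p and q within scalar product} I would first expand $\vAhk^{2}(x_{1})$ into the four products of $\vAp(x_{1})$ and $\vAm(x_{1})$ and normal-order the $\vAp(x_{1}) \cdot \vAm(x_{1})$ piece via $\vAp(x_{1}) \cdot \vAm(x_{1}) = \vAm(x_{1}) \cdot \vAp(x_{1}) + \norm{\boldsymbol{G}_{x_{1}}}^{2}_{\mathfrak{h}}$; the scalar $\norm{\boldsymbol{G}_{x_{1}}}^{2}_{\mathfrak{h}}$ is independent of $x_{1}$ and equals $\norm{\abs{\cdot}^{-1/2} \mathcal{F}[\kappa]}_{L^{2}}^{2}$, so the normal-ordering term contributes $\norm{\abs{\cdot}^{-1/2} \mathcal{F}[\kappa]}_{L^{2}}^{2}\, q_{1} p_{1}$, which is handled by \eqref{eq:technical estimates using antisymmetry 1} after noting $\norm{q p}_{\mathfrak{S}^{1}} \leq \sup_{l}\{(1+\abs{l})^{-1}\norm{q e^{i l x} p}_{\mathfrak{S}^{1}}\}$ (take $l=0$). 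Each of the remaining normally ordered pieces ($\vAp \cdot \vAp$, $\vAm \cdot \vAm$, $\vAm \cdot \vAp$) is then treated exactly like the terms above: pull out $\big( q e^{\pm i (k \pm k') x} p \big)_{1}$, commute all creation operators to the side of the inner product where they become annihilations, split one power of $\mathcal{N}$ between the two operators while leaving $(\mathcal{N}+1)^{m}$ on $\Psi_{N}$ and $(\mathcal{N}+1)^{-m+1}$ on $\Psi_{N}'$, apply \eqref{eq:technical estimates using antisymmetry 1}, bound $\norm{q e^{\pm i (k \pm k') x} p}_{\mathfrak{S}^{1}} \leq (1+\abs{k})(1+\abs{k'}) \sup_{l}\{(1+\abs{l})^{-1}\norm{q e^{i l x} p}_{\mathfrak{S}^{1}}\}$, and finish with Cauchy--Schwarz in $(k,k')$ and the second sector bound.

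The main obstacle is the bookkeeping of the number-operator powers: the split must be chosen so that, after Cauchy--Schwarz, every annihilation operator converts precisely half a power of $\mathcal{N}$ into the integrable weight $\abs{\mathcal{F}[\kappa](k)} (1+\abs{k}) (2\abs{k})^{-1/2}$, while the announced powers of $(\mathcal{N}+1)$ stay on $\Psi_{N}$ and $\Psi_{N}'$. The implicit constant $C$ is allowed to depend on the (fixed) parameter $m$, a dependence arising only from comparing $n^{m}$ with $(n+1)^{m}$ on photon sectors, and for \eqref{eq:estimate vector potential squared with p and q within scalar product} one uses $0 \leq m \leq 1$ when absorbing the scalar normal-ordering term. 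Apart from this, the only genuinely non-mechanical step is the normal ordering in \eqref{eq:estimate vector potential squared with p and q within scalar product}: it is exactly what removes the divergence one would otherwise encounter from integrating $\norm{a_{k,\lambda}^{*} \Phi}^{2}$ over $k$.
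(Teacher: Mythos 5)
Your proposal is correct and follows essentially the same route as the paper: decomposition of $\vAcq$ into positive and negative frequency parts, shifting of the $(\mathcal{N}+1)$-weights via \eqref{eq:shifting property number operator}, the antisymmetry bound \eqref{eq:technical estimates using antisymmetry 1} applied to $q e^{ikx} O p$, and Cauchy--Schwarz in $k$ absorbing half a power of $\mathcal{N}$ per annihilation operator, with normal ordering for the squared term. The only cosmetic differences are that the paper disposes of the normal-ordering constant by simply noting $q_1 p_1 = 0$ (so no condition on $m$ is needed there), and bounds $\norm{q e^{ikx} e^{ik'x} p}_{\mathfrak{S}^1}$ by inserting $\id = p+q$ rather than by your product bound $(1+\abs{k})(1+\abs{k'})$; both variants yield the stated estimates.
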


\begin{proof}
Using \eqref{eq:shifting property number operator}, \eqref{eq:technical estimates using antisymmetry 1}
and the Cauchy--Schwarz inequality let us obtain
\begin{align}
&\abs{\scp{\Psi_N}{q_{1}  \vAp(x_1) O_1 p_{1} \Psi_N'} }
\nonumber \\
&\quad \leq \sum_{\lambda = 1,2} \int_{\mathbb{R}^3} \abs{k}^{-1/2} \abs{\mathcal{F}[\kappa](k)}
\abs{\scp{\left( \mathcal{N} + 2 \right)^{m} \Psi_N}{q_{1} e^{ik x_1}  O_1 p_{1} a_{k,\lambda}  \left( \mathcal{N} + 1 \right)^{-m}  \Psi_N'}} \, dk
\nonumber \\
&\quad \leq (N-j)^{-1}  \sum_{\lambda = 1,2} \int_{\mathbb{R}^3} \abs{k}^{-1/2} \abs{\mathcal{F}[\kappa](k)}
\norm{q e^{ik \cdot}  O p}_{\mathfrak{S}^1}
\nonumber \\
&\qquad \qquad \times
\norm{\left( \mathcal{N} + 2 \right)^{m} \Psi_N} \norm{ a_{k,\lambda}  \left( \mathcal{N} + 1 \right)^{-m} \Psi_N'} \, dk
\nonumber \\
&\leq C (N-j)^{-1} 
\sup_{k \in \mathbb{R}^3} \Big\{ (1 + \abs{k})^{-1} \norm{q e^{ikx} O p}_{\mathfrak{S}^1}  \Big\}
\sum_{\lambda = 1,2} \int_{\mathbb{R}^3} \abs{k}^{-1/2} (1 + \abs{k}) \abs{\mathcal{F}[\kappa](k)}
\nonumber \\
&\qquad \qquad \times
\norm{\left( \mathcal{N} + 2 \right)^{m} \Psi_N} \norm{ a_{k,\lambda}  \left( \mathcal{N} + 1 \right)^{-m}  \Psi_N'} \, dk
\nonumber \\
&\leq C\mathcal{K}
\norm{\left( \mathcal{N} + 1 \right)^{m} \Psi_N} \norm{  \left( \mathcal{N} + 1 \right)^{-m} \mathcal{N}^{1/2} \Psi_N'} \, dk .
\end{align}
Inequality \eqref{eq:estimate vector potential negative part with p and q within scalar product} is obtained by similar means and \eqref{eq:estimate vector potential with p and q within scalar product} is a direct consequence of $\vAk = \vAp + \vAm$, \eqref{eq:estimate vector potential positive part with p and q within scalar product} and \eqref{eq:estimate vector potential negative part with p and q within scalar product} with $m \mapsto m - \frac{1}{2}$.
Using \eqref{eq:CCR} we get
\begin{align}
\vAh^2(x_1)
&=  (\vAp(x_1))^2  + (\vAm(x_1))^2 + \vAm(x_1) \vAp(x_1) + \vAp(x_1) \vAm(x_1) 
\nonumber \\
&=  (\vAp(x_1))^2  + (\vAm(x_1))^2 + 2 \vAm(x_1) \vAp(x_1) + \big\| \abs{\cdot}^{-1/2} \mathcal{F}[\kappa] \big\|_{L^2}^2 .
\end{align}
Together with $q_1p_1 = 0$ this leads to
\begin{align}
\label{eq:estimate for vector potential squared between q and p proof}
\abs{\scp{\Psi_N}{q_{1}  \vAhk^2(x_1) p_{1} \Psi_N'} }
&\leq \abs{\scp{\Psi_N}{q_{1}  (\vAp(x_1))^2 p_{1} \Psi_N'} }
+ \abs{\scp{\Psi_N}{q_{1}  (\vAm(x_1))^2 p_{1} \Psi_N'} }
\nonumber \\
&\quad
+ 2 \abs{\scp{\Psi_N}{q_{1}  \vAm(x_1) \vAp(x_1) p_{1} \Psi_N'} } .
\end{align}
By means of \eqref{eq:shifting property number operator}, \eqref{eq:technical estimates using antisymmetry 1},
and the Cauchy--Schwarz inequality we get
\begin{align}
&\abs{\scp{\Psi_N}{q_{1}  (\vAp(x_1))^2 p_{1} \Psi_N'}}
\nonumber \\
&\quad \leq
\scp{\left( \mathcal{N} + 3 \right)^{m} \Psi_N}{q_{1}  (\vAp(x_1))^2 p_{1} \left( \mathcal{N} + 1 \right)^{-m}  \Psi_N'}
\nonumber \\
&\quad \leq \sum_{\lambda, \lambda' = 1,2} \int_{\mathbb{R}^3 \times \mathbb{R}^3} \abs{k}^{-1/2} \abs{\mathcal{F}[\kappa](k)}
\abs{k'}^{-1/2} \abs{\mathcal{F}[\kappa](k')}
\nonumber \\
&\qquad \qquad \times
\abs{\scp{\left( \mathcal{N} + 3 \right)^{m} \Psi_N}{q_{1} e^{ik x_1} e^{ik' x_1}  p_{1} a_{k,\lambda} a_{k',\lambda'}  \left( \mathcal{N} + 1 \right)^{-m}  \Psi_N'}} \, dk \, dk'
\nonumber \\
&\leq (N-j)^{-1}
 \sum_{\lambda, \lambda' = 1,2} \int_{\mathbb{R}^3 \times \mathbb{R}^3} \abs{k}^{-1/2} \abs{\mathcal{F}[\kappa](k)}
\abs{k'}^{-1/2} \abs{\mathcal{F}[\kappa](k')}
\norm{q e^{ik \cdot} e^{ik' \cdot}  p}_{\mathfrak{S}^1}
\nonumber \\
&\qquad \qquad \times
\norm{\left( \mathcal{N} + 3 \right)^{m} \Psi_N} \norm{ a_{k,\lambda} a_{k',\lambda'}  \left( \mathcal{N} + 1 \right)^{-m}  \Psi_N'} \, dk \, dk'
\nonumber \\
&\leq (N-j)^{-1}
\sup_{k, k' \in \mathbb{R}^3} \Big\{ (1 + \abs{k})^{-1} (1 + \abs{k'})^{-1} \norm{q e^{ikx } e^{ik'x} p}_{\mathfrak{S}^1}  \Big\}
\nonumber \\
&\qquad \qquad \times
 \sum_{\lambda, \lambda' = 1,2} \int_{\mathbb{R}^3 \times \mathbb{R}^3} \abs{k}^{-1/2}  (1 + \abs{k}) \abs{\mathcal{F}[\kappa](k)}
\abs{k'}^{-1/2} 
 (1 + \abs{k'}) \abs{\mathcal{F}[\kappa](k')}
\nonumber \\
&\qquad \qquad \times
\norm{\left( \mathcal{N} + 3 \right)^{m} \Psi_N} \norm{ a_{k,\lambda} a_{k',\lambda'}  \left( \mathcal{N} + 1 \right)^{-m}  \Psi_N'} \, dk \, dk'
\nonumber \\
&\leq C (N-j)^{-1}
\sup_{k, k' \in \mathbb{R}^3} \Big\{ (1 + \abs{k})^{-1} (1 + \abs{k'})^{-1} \norm{q e^{ikx } e^{ik'x} p}_{\mathfrak{S}^1}  \Big\} \norm{(\abs{\cdot}^{1/2} + \abs{\cdot}^{-1/2}) \mathcal{F}[\kappa] }_{L^2}^2 
\nonumber \\
&\qquad \qquad \times 
\norm{\left( \mathcal{N} + 1 \right)^{m} \Psi_N} \norm{  \left( \mathcal{N} + 1 \right)^{-m}  \left( \mathcal{N} + 1 \right) \Psi_N'} .
\end{align}
Inserting the identity $\id_{L^2(\mathbb{R}^3)} = p + q$ gives
$\norm{q e^{ikx } e^{ik'x} p}_{\mathfrak{S}^1}
\leq \norm{q e^{ikx } p}_{\mathfrak{S}^1}
+ \norm{q e^{ik'x} p}_{\mathfrak{S}^1}$,
which leads to
\begin{align}
\abs{\scp{\Psi_N}{q_{1}  (\vAp(x_1))^2 p_{1} \Psi_N'}}
&\leq C  \norm{(\abs{\cdot}^{1/2} + \abs{\cdot}^{-1/2}) \mathcal{F}[\kappa] }_{L^2}^2 
\norm{\left( \mathcal{N} + 1 \right)^{m} \Psi_N} \norm{  \left( \mathcal{N} + 1 \right)^{-m +1}  \Psi_N'} 
\nonumber \\
&\qquad \qquad \times 
(N-j)^{-1}
\sup_{k \in \mathbb{R}^3} \Big\{ (1 + \abs{k})^{-1}  \norm{q e^{ikx } p}_{\mathfrak{S}^1}  \Big\} .
\end{align}
The two remaining terms on the right-hand side of \label{eq:estimate for vector potential squared between q and p proof} are estimated by almost the same means. The main difference is that the annihilation and creation operators must be distributed differently in the scalar product and $m$ chosen accordingly. In total this shows \eqref{eq:estimate vector potential with p and q within scalar product}.
\end{proof}

\subsection{Proof of inequality~\eqref{eq:relation between Sobolev trace-norm convergence and beta-a plus kinetic energy of the particles outside the Slater determinant}}

\label{section:Proof of Lemma with relation between trace norm convergence and beta-a}

\begin{proof}[\unskip\nopunct]

Note that 
\begin{align}
&\norm{\sqrt{1 - \varepsilon^2 \Delta} \left( \gamma^{(1,0)}_{\Psi_N} - N^{-1} p \right) \sqrt{1 - \varepsilon^2 \Delta}}_{\mathfrak{S}^1}
\nonumber \\
&\quad = \sup_{\substack{A \in \mathfrak{S}^{\infty}(L^2(\mathbb{R}^3)) \\ \norm{A}_{\mathfrak{S}^{\infty} = 1}}}
\abs{ \tr \left(  A \sqrt{1 - \varepsilon^2 \Delta} \left( \gamma^{(1,0)}_{\Psi_N} - N^{-1} p \right) \sqrt{1 - \varepsilon^2 \Delta} \right)
}
\end{align}
holds because the space of bounded operators is the dual space of trace-class operators. Using the notation
$A_1 = A \otimes \id_{L^2(\mathbb{R}^{3(N-1)})} \otimes \id_{\mathcal{F}}$, the identity $1 = p_1 + q_1$, and $\norm{\Psi_N} = 1$, we obtain
\begin{subequations}
\begin{align}
&\tr \left(  A \sqrt{1 - \varepsilon^2 \Delta} \left( \gamma^{(1,0)}_{\Psi_N} - N^{-1} p \right) \sqrt{1 - \varepsilon^2 \Delta} \right)
\nonumber \\
\label{eq:Sobolev trace norm estimate 1}
&\quad =
\tr \left(  A \sqrt{1 - \varepsilon^2 \Delta} \left(  q \gamma^{(1,0)}_{\Psi_N} q
+ q \gamma^{(1,0)}_{\Psi_N} p
+ p \gamma^{(1,0)}_{\Psi_N} q
 \right) \sqrt{1 - \varepsilon^2 \Delta} \right)
\\
\label{eq:Sobolev trace norm estimate 2}
&\qquad +
\scp{\Psi_N}{ p_1 \sqrt{1 - \varepsilon^2 \Delta_1} A_1 \sqrt{1 - \varepsilon^2 \Delta_1} p_1 \Psi_N } 
\\
\label{eq:Sobolev trace norm estimate 3}
&\qquad - N^{-1}
\scp{\Psi_N}{\tr \left( \sqrt{1 - \varepsilon^2 \Delta} A \sqrt{1 - \varepsilon^2 \Delta} p \right) \Psi_N } .
\end{align}
\end{subequations}
By applying the Cauchy--Schwarz inequality and using the projection property of $p$, we obtain
\begin{align}
\abs{\eqref{eq:Sobolev trace norm estimate 1}}
&\leq \norm{\sqrt{1 - \varepsilon^2 \Delta_1} q_1 \Psi_N}^2
+ 2 \norm{\sqrt{1 - \varepsilon^2 \Delta} p}_{\mathfrak{S}^{\infty}}  \norm{\sqrt{1 - \varepsilon^2 \Delta_1} q_1 \Psi_N}
\nonumber \\
&\leq C \left( 1 + \norm{i \varepsilon \nabla p}_{\mathfrak{S}^{\infty}} \right)
\sup_{j=1,2} \left( \norm{q_1 \Psi_N}^2 + \norm{i \varepsilon \nabla_1 q_1 \Psi_N}^2 \right)^{j/2} .
\end{align}
If we use the antisymmetry of the many-body wave function and apply Lemma~\ref{lemma:estimates of the p-p term}
we get
\begin{align}
\abs{\eqref{eq:Sobolev trace norm estimate 2} + \eqref{eq:Sobolev trace norm estimate 3} }
&\leq \norm{p \sqrt{1 - \varepsilon^2 \Delta} A \sqrt{1 - \varepsilon^2 \Delta} p}_{\mathfrak{S}^{\infty}} \norm{q_1 \Psi_N}^2
\leq  C \Big( 1 + \norm{i \varepsilon \nabla p}_{\mathfrak{S}^{\infty}}^2 \Big) \norm{q_1 \Psi_N}^2.
\end{align}
Altogether this proves the claim.
\end{proof}

\subsection{Proof of Lemma~\ref{lemma:energy estimates}}
\label{subsection:energy estimates}

\begin{proof}[\unskip\nopunct]
Throughout the proof, we will use the shorthand notations
$\vAk = \kappa * \vA_{\alpha}$, $\xi_N = W^*(\varepsilon^{-2} \alpha) \Psi_N$, and \eqref{eq:definition quantized vector potential with cutoff shorthand notation}. By the antisymmetry of the many-body wave function, we have
\begin{subequations}
\begin{align}
&N^{-1} \big( \scp{\Psi_N}{H_N^{\rm{PF}} \Psi_N}
- \mathcal{E}^{\rm{MS}}[p,\alpha] 
\big)
\nonumber \\
\label{eq:energy estimates 1}
&\quad = 
\scp{\Psi_N}{\big( - i \varepsilon \nabla_1 - \varepsilon^2 \vAhk(x_1) \big)^2 \Psi_N} 
- N^{-1} \tr \big( p \left( - i \varepsilon \nabla - \kappa * \vA_{\alpha} \right)^2 \big)
\\
\label{eq:energy estimates 2}
&\qquad + \varepsilon^4 
\scp{\Psi_N}{H_f \Psi_N}
- \norm{\alpha}^2_{
\dot{\mathfrak{h}}_{1/2}} 
\\
\label{eq:energy estimates 3}
&\qquad +
\frac{1}{2N} \left( (N-1)  \scp{\Psi_N}{K(x_1 - x_2) \Psi_N} - 
\tr \left( \left( K * \rho_{p} - X_{p} \right) p \right) \right) .
\end{align}
\end{subequations}
Next, each term will be treated separately.\\

\noindent
\textbf{The term \eqref{eq:energy estimates 1}:} 
We start with
\begin{subequations}
\begin{align}
&\scp{\Psi_N}{q_1 \big( - i \varepsilon \nabla_1 - \varepsilon^2 \vAhk(x_1) \big)^2 q_1 \Psi_N} 
-
\eqref{eq:energy estimates 1}
\nonumber \\
&\quad \leq 
2
\abs{\scp{\Psi_N}{q_1 \big( - i \varepsilon \nabla_1 - \varepsilon^2 \vAhk(x_1) \big)^2 p_1 \Psi_N} }
\nonumber \\
&\qquad + 
\abs{\scp{\Psi_N}{p_1 \big( - i \varepsilon \nabla_1 - \varepsilon^2 \vAhk(x_1) \big)^2 p_1 \Psi_N} 
- N^{-1} \tr \big( p \left( - i \varepsilon \nabla - \kappa * \vA_{\alpha} \right)^2 \big) }
\nonumber \\
\label{eq:energy estimates 1 a}
&\quad \leq 
2
\abs{\scp{\Psi_N}{q_1 (- \varepsilon^2 \Delta_1) p_1 \Psi_N} }
\\
\label{eq:energy estimates 1 b}
&\qquad + 
4 \varepsilon^2
\abs{\scp{\Psi_N}{q_1  \vAhk(x_1) \cdot i \varepsilon \nabla_1 p_1 \Psi_N} }
\\
\label{eq:energy estimates 1 c}
&\qquad + 
2 \varepsilon^4
\abs{\scp{\Psi_N}{q_1  \vAhk^2(x_1)  p_1 \Psi_N} }
\\
\label{eq:energy estimates 1 d}
&\qquad + 
\abs{\scp{\Psi_N}{p_1 \big( - i \varepsilon \nabla_1 - \varepsilon^2 \vAhk(x_1) \big)^2 p_1 \Psi_N} 
- N^{-1} \tr \big( p \left( - i \varepsilon \nabla - \kappa * \vA_{\alpha} \right)^2 \big) } .
\end{align}
\end{subequations}
By means of $\id_{\mathcal{H}^{(N)}} = p_1 + q_1$,  
$\norm{q i \varepsilon \nabla p}_{\mathfrak{S}^2}^2
= \tr \left(  q i \varepsilon \nabla p^2  i \varepsilon \nabla q \right)
\leq \norm{i \varepsilon \nabla p}_{\mathfrak{S}^{\infty}} \norm{q i \varepsilon \nabla p}_{\mathfrak{S}^1}$, \eqref{eq:technical estimates using antisymmetry 1}, and \eqref{eq:technical estimates using antisymmetry 2} 
we estimate
\begin{align}
\abs{\eqref{eq:energy estimates 1 a} }
&\leq 
2
\abs{\scp{\Psi_N}{q_1 i \varepsilon \nabla_1 q_1 i \varepsilon \nabla_1 p_1 \Psi_N} }
+ 2
\abs{\scp{\Psi_N}{q_1 i \varepsilon \nabla_1 p_1 i \varepsilon \nabla_1 p_1 \Psi_N} }
\nonumber \\
&\leq  
2 \norm{ i \varepsilon \nabla_1 q_1 \Psi_N} \norm{ q_1 i \varepsilon \nabla_1 p_1 \Psi_N}
+ 2
\abs{\scp{\Psi_N}{q_1 i \varepsilon \nabla_1 p_1 i \varepsilon \nabla_1 p_1 \Psi_N} }
\nonumber \\
&\leq \frac{1}{4} \norm{ i \varepsilon \nabla_1 q_1 \Psi_N}^2
+ C N^{-1} \norm{q i \varepsilon \nabla p}_{\mathfrak{S}^2}^2
+ C N^{-1} \norm{q i \varepsilon \nabla p i \varepsilon \nabla p}_{\mathfrak{S}^1}
\nonumber \\
&\leq \frac{1}{4} \norm{ i \varepsilon \nabla_1 q_1 \Psi_N}^2
+ C N^{-1} \Big[ \norm{q i \varepsilon \nabla p}_{\mathfrak{S}^2}^2
+ \norm{i \varepsilon \nabla p}_{\mathfrak{S}^{\infty}}
\norm{q i \varepsilon \nabla p}_{\mathfrak{S}^1}
\Big] 
\nonumber \\
&\leq \frac{1}{4} \norm{ i \varepsilon \nabla_1 q_1 \Psi_N}^2
+ C N^{-1}  \norm{i \varepsilon \nabla p}_{\mathfrak{S}^{\infty}}
\norm{q i \varepsilon \nabla p}_{\mathfrak{S}^1} .
\end{align}
Note that 
\begin{align}
\label{eq:action of coherent states on quantized vector potential}
W^*(N^{2/3} \alpha) \varepsilon^2 \vAhk(x_1) W(N^{2/3} \alpha) = \varepsilon^2 \vAhk(x_1) + \vAk(x_1)
\end{align}
due to \eqref{eq:Weyl operators shifting property}.
Together with the unitarity of the Weyl operators, \eqref{eq:shifting property number operator}, \eqref{eq:Bounds for vector potential}, and \eqref{eq:technical estimates using antisymmetry 1}, this allows us to obtain
\begin{align}
\abs{\eqref{eq:energy estimates 1 b}}
&\leq 
4 \varepsilon^2
\abs{\scp{\left( \mathcal{N} + 2 \right)^{1/4} \xi_N}{q_1  \vAp(x_1) \cdot i \varepsilon \nabla_1 p_1 \left( \mathcal{N} + 1 \right)^{-1/4} \xi_N} }
\nonumber \\
&\quad +
4 \varepsilon^2
\abs{\scp{\mathcal{N}^{1/4} \xi_N}{q_1  \vAm(x_1) \cdot i \varepsilon \nabla_1 p_1 \left( \mathcal{N} + 1 \right)^{-1/4} \xi_N} }
\nonumber \\
&\quad 
+ 4
\abs{\scp{\xi_N}{q_1  \vAk(x_1) \cdot i \varepsilon \nabla_1 p_1 \xi_N} }
\nonumber \\
&\leq C \varepsilon^2 \big\|\abs{\cdot}^{-1/2} \mathcal{F}[\kappa] \big\|_{L^2(\mathbb{R}^3)} \norm{\left( \mathcal{N} + 1 \right)^{1/2} \xi_N}^2
+  N^{-1} \norm{q  \vAk \cdot i \varepsilon \nabla p}_{\mathfrak{S}^1}
\end{align}
By means of $1 = p+q$, \eqref{eq:estimate for A L-infty to h-1-2} and \eqref{eq:estimate for qAp}, we have
\begin{align}
\norm{q  \vAk \cdot i \varepsilon \nabla p}_{\mathfrak{S}^1}
&\leq \norm{q  \vAk q \cdot i \varepsilon \nabla p}_{\mathfrak{S}^1}
+ \norm{q  \vAk p \cdot i \varepsilon \nabla p}_{\mathfrak{S}^1}
\nonumber \\
&\leq  C B_{\kappa}
\norm{\alpha}_{\dot{\mathfrak{h}}_{1/2}}
\bigg[ \norm{q i \varepsilon \nabla p}_{\mathfrak{S}^1} 
+  \norm{i \varepsilon \nabla p}_{\mathfrak{S}^{\infty}}
\sup_{k \in \mathbb{R}^3}
\Big\{ \left(1 + \abs{k} \right)^{-1} \norm{q e^{ikx} p}_{\mathfrak{S}^1} 
\Big\} \bigg] ,
\end{align}
leading to
\begin{align}
\abs{\eqref{eq:energy estimates 1 b}}
&\leq 
C B_{\kappa}
\bigg[ \beta^{b,2}[\Psi_N,\alpha] + \varepsilon^2
+ \left( 1 + \norm{i \varepsilon \nabla p}_{\mathfrak{S}^{\infty}} \right)
\norm{\alpha}_{\dot{\mathfrak{h}}_{1/2}} 
\nonumber \\
&\qquad \qquad \qquad \qquad \qquad \times
N^{-1}
\sup_{k \in \mathbb{R}^3}
\Big\{ \left(1 + \abs{k} \right)^{-1} \norm{q e^{ikx} p}_{\mathfrak{S}^1} 
+  \norm{q i \varepsilon \nabla p}_{\mathfrak{S}^1} 
\Big\}
\bigg] .
\end{align}
Similarly, we get
\begin{align}
\abs{\eqref{eq:energy estimates 1 c}}
&\leq 2 
\abs{\scp{\xi_N}{q_1 \big( \varepsilon^2 \vAhk(x_1)   + \vAk(x_1) \big)^2  p_1 \xi_N} }
\nonumber \\
&\leq 4 \varepsilon^4
\abs{\scp{\xi_N}{q_1  \vAhk^2(x_1)  p_1 \xi_N} }
+ 4 \abs{\scp{\xi_N}{q_1  \vAk^2(x_1)  p_1 \xi_N} }
\nonumber \\
&\leq C \big\|\abs{\cdot}^{-1/2} \mathcal{F}[\kappa] \big\|_{L^2(\mathbb{R}^3)}^2 \varepsilon^4 \norm{\left( \mathcal{N} + 1 \right)^{1/2} \xi_N}^2 
+ C N^{-1} \norm{q_1  \vAk^2(x_1)  p_1}_{\mathfrak{S}^1}
\nonumber \\
&\leq C B_{\kappa}^2
\bigg[ \beta^{b,1}[\Psi_N,\alpha] + \varepsilon^4
+ \norm{\alpha}_{\dot{\mathfrak{h}}_{1/2}}^2 N^{-1}
\sup_{k \in \mathbb{R}^3}
\Big\{ \left(1 + \abs{k} \right)^{-1} \norm{q e^{ikx} p}_{\mathfrak{S}^1} 
\Big\} 
\bigg] 
\end{align}
by means of \eqref{eq:action of coherent states on quantized vector potential}, \eqref{eq:Bounds for vector potential}, \eqref{eq:technical estimates using antisymmetry 1}, $1 = p + q$, \eqref{eq:estimate for A L-infty to h-1-2} and \eqref{eq:estimate for qAp}.
Using 
\begin{align}
\big( - i \varepsilon \nabla_1 - \varepsilon^2 \vAhk(x_1) \big)^2
&= \big( - i \varepsilon \nabla_1 -  \vAk(x_1) \big)^2
+ 2    W(\varepsilon^{-2} \alpha)  \varepsilon^2 \vAhk(x_1) \cdot i \varepsilon \nabla_1
W^*(\varepsilon^{-2} \alpha)
\nonumber \\
&\quad +
W(\varepsilon^{-2} \alpha) 
\Big( \varepsilon^4 \vAhk^2(x_1) 
+ 2 \varepsilon^2 \vAhk(x_1) \cdot \vAk(x_1) 
\Big)
W^*(\varepsilon^{-2} \alpha) ,
\end{align}
\eqref{eq:Bounds for vector potential}
and \eqref{eq:estimate for A L-infty to h-1-2} we continue with
\begin{align}
\abs{\eqref{eq:energy estimates 1 d}}
&\leq 
\abs{\scp{\xi_N}{p_1 \Big( \varepsilon^4 \vAhk^2(x_1) 
+ 2 \varepsilon^2 \vAhk(x_1) \cdot \vAk(x_1) 
\Big) p_1 \xi_N}}
\nonumber \\
&\quad + 2 \varepsilon^2
\abs{\scp{\xi_N}{p_1 i \vAhk(x_1) \cdot i \varepsilon \nabla_1 p_1 \xi_N}}
\nonumber \\
&\quad + 
\abs{\scp{\Psi_N}{p_1 \big( - i \varepsilon \nabla_1 - \varepsilon^2 \vAk(x_1) \big)^2 p_1 \Psi_N} 
- N^{-1} \tr \big( p \left( - i \varepsilon \nabla - \kappa * \vA_{\alpha} \right)^2 \big) }
\nonumber \\
&\leq C
\big\| \abs{\cdot}^{-1/2} \mathcal{F}[\kappa] \big\|_{L^2}
\bigg[
\big( \norm{\vAk}_{\mathfrak{S}^{\infty}} + \norm{i \varepsilon \nabla p}_{\mathfrak{S}^{\infty}}
\big)
\varepsilon^2 \norm{\left( \mathcal{N} +1 \right)^{1/4} \xi_N}^2
\nonumber \\
&\qquad \qquad \qquad \qquad 
+ \big\| \abs{\cdot}^{-1/2} \mathcal{F}[\kappa] \big\|_{L^2}
\varepsilon^4 \norm{\left( \mathcal{N} +1 \right)^{1/2} \xi_N}^2
\bigg]
\nonumber \\
&\quad + 
\abs{\scp{\Psi_N}{p_1 \big( - i \varepsilon \nabla_1 - \varepsilon^2 \vAk(x_1) \big)^2 p_1 \Psi_N} 
- N^{-1} \tr \big( p \left( - i \varepsilon \nabla - \kappa * \vA_{\alpha} \right)^2 \big) } 
\nonumber \\
&\leq C
\left( 1 + \big\| (  \abs{\cdot}^{-1/2 } + \abs{\cdot}^{-1} ) \mathcal{F}[\kappa] \big\|_{L^2}^2 \right)
\big(  1 + \norm{\alpha}_{\dot{\mathfrak{h}}_{1/2}} + \norm{i \varepsilon \nabla p}_{\mathfrak{S}^{\infty}}
\big)
\nonumber \\
&\qquad \times
\left( \beta^{b,1}[\Psi_N,\alpha]
+ \beta^{b,2}[\Psi_N,\alpha] 
+ \varepsilon^2 \right)
\nonumber \\
&\quad + 
\abs{\scp{\Psi_N}{p_1 \big( - i \varepsilon \nabla_1 - \varepsilon^2 \vAk(x_1) \big)^2 p_1 \Psi_N} 
- N^{-1} \tr \big( p \left( - i \varepsilon \nabla - \kappa * \vA_{\alpha} \right)^2 \big) }  .
\end{align}
By the self-adjointness of $\big( - i \varepsilon \nabla - \varepsilon^2 \vAk \big)^2$ and Lemma~\ref{lemma:diagonalisation of the p-p term} we conclude the existence of orthonormal $\chi_{1}, \ldots \chi_{N}$ in the range of $p$ and real $\gamma_{1}, \ldots, \gamma_{N}$ such that $p = \sum_{j=1}^N \ket{\chi_{j}} \bra{\chi_{j}}$,
\begin{align}
\sup_{j \in \{1,\ldots, N \}} \abs{\gamma_j} 
&\leq 
\sup_{j \in \{1,\ldots, N \}}
\abs{\scp{\chi_{j}}{p \big( - i \varepsilon \nabla - \varepsilon^2 \vAk \big)^2 p \chi_{j}} }
\leq 
2 \left( \norm{i \varepsilon \nabla p}^2_{\mathfrak{S}^{\infty}}
+ \norm{\vAk}^2_{\mathfrak{S}^{\infty}} \right) ,
\nonumber \\
\sum_{j=1}^N \gamma_{j} Q^{\chi_{j}} 
&= -
\sum_{i=1}^N p_{i}  \big( - i \varepsilon \nabla_i - \varepsilon^2 \vAk(x_i) \big)^2 p_{i}
- \tr \left( \big( - i \varepsilon \nabla - \varepsilon^2 \vAk \big)^2 p \right) .
\end{align}
Together with the antisymmetry of the many-body wave function, \eqref{eq:relation for the sum over all Q-j}, and \eqref{eq:estimate for A L-infty to h-1-2} we obtain
\begin{align}
& \abs{\scp{\Psi_N}{p_1 \big( - i \varepsilon \nabla_1 - \varepsilon^2 \vAk(x_1) \big)^2 p_1 \Psi_N} 
- N^{-1} \tr \big( p \left( - i \varepsilon \nabla - \kappa * \vA_{\alpha} \right)^2 \big) }
\nonumber \\
&\quad \leq  N^{-1} \sum_{j=1}^N \abs{\gamma_j}
\abs{\scp{\Psi_N}{  Q^{\chi_j}  \Psi_N}  }
\nonumber \\
&\quad \leq 2  
\left( \norm{i \varepsilon \nabla p}^2_{\mathfrak{S}^{\infty}}
+ \norm{\vAk}^2_{\mathfrak{S}^{\infty}} \right) 
\scp{\Psi_N}{N^{-1} \sum_{j=1}^N Q^{\chi_j}  \Psi_N}
\nonumber \\
&\quad \leq  C  
\left( \norm{i \varepsilon \nabla p}^2_{\mathfrak{S}^{\infty}}
+ \big\| \abs{\cdot}^{-1} \mathcal{F}[\kappa] \big\|_{L^2}^2 \norm{\alpha}_{\dot{\mathfrak{h}}_{1/2}}^2 \right) 
\beta^a[\Psi_N,p] .
\end{align}
In total,
\begin{align}
\abs{\eqref{eq:energy estimates 1 d}}
&\leq C
\left( 1 + \big\| (  \abs{\cdot}^{-1/2 } + \abs{\cdot}^{-1} ) \mathcal{F}[\kappa] \big\|_{L^2}^2 \right)
\big(  1 + \norm{\alpha}_{\dot{\mathfrak{h}}_{1/2}}^2 + \norm{i \varepsilon \nabla p}_{\mathfrak{S}^{\infty}}^2
\big)
\nonumber \\
&\qquad \times
\left( 
\beta^{a}[\Psi_N,p]
+ \beta^{b,1}[\Psi_N,\alpha]
+ \beta^{b,2}[\Psi_N,\alpha] 
+ \varepsilon^2 \right)
\end{align}
and
\begin{align}
&\abs{ \scp{\Psi_N}{q_1 \big( - i \varepsilon \nabla_1 - \varepsilon^2 \vAhk(x_1) \big)^2 q_1 \Psi_N} 
-
\eqref{eq:energy estimates 1} }
\nonumber \\
&\quad \leq 
\frac{1}{4} \norm{ i \varepsilon \nabla_1 q_1 \Psi_N}^2
+ C
\left( 1 + \big\| (  \abs{\cdot}^{-1/2 } + \abs{\cdot}^{-1} ) \mathcal{F}[\kappa] \big\|_{L^2}^2 \right)
\big(  1 + \norm{\alpha}_{\dot{\mathfrak{h}}_{1/2}}^2 + \norm{i \varepsilon \nabla p}_{\mathfrak{S}^{\infty}}^2 \big)
\nonumber \\
&\qquad  \times 
\left[ \beta[\Psi_N,p,\alpha] + \varepsilon^2 
+ N^{-1}
\sup_{k \in \mathbb{R}^3}
\Big\{ \left(1 + \abs{k} \right)^{-1} \norm{q e^{ikx} p}_{\mathfrak{S}^1} 
+  \norm{q i \varepsilon \nabla p}_{\mathfrak{S}^1} 
\Big\}
\right] .
\end{align}
By \eqref{eq:action of coherent states on quantized vector potential}, \eqref{eq:Bounds for vector potential},  and \eqref{eq:estimate for A L-infty to h-1-2} we have
\begin{align}
&\abs{\norm{i \varepsilon \nabla_1 q_1 \Psi_N}^2  
- \scp{\Psi_N}{q_1 \big( - i \varepsilon \nabla_1 - \varepsilon^2 \vAhk(x_1) \big)^2 q_1 \Psi_N} }
\nonumber \\
&\quad = 
\abs{\norm{i \varepsilon \nabla_1 q_1 \Psi_N}^2  
- \scp{\xi_N}{q_1 \big( - i \varepsilon \nabla_1 - \varepsilon^2 \vAhk(x_1) - \vAk(x_1) \big)^2 q_1 \xi_N} }
\nonumber \\
&\quad \leq 
\frac{1}{4} \norm{i \varepsilon \nabla_1 q_1 \Psi_N}^2 + 
C \varepsilon^4 \norm{\vAhk(x_1) q_1 \xi_N}^2
+ 
C\norm{\vAk(x_1) q_1 \xi_N}^2
\nonumber \\
&\quad \leq 
\frac{1}{4} \norm{i \varepsilon \nabla_1 q_1 \Psi_N}^2 + 
C  \big\| \abs{\cdot}^{-1/2} \mathcal{F}[\kappa] \big\|_{L^2}^2
\varepsilon^4 \norm{\left( \mathcal{N} + 1 \right)^{1/2} \xi_N}^2
+ C \norm{\vAk}^2_{\mathfrak{S}^{\infty}} \norm{q_1 \Psi_N}^2
\nonumber \\
&\quad \leq 
\frac{1}{4} \norm{i \varepsilon \nabla_1 q_1 \Psi_N}^2 + 
C  \big\| (  \abs{\cdot}^{-1/2 } + \abs{\cdot}^{-1} ) \mathcal{F}[\kappa] \big\|_{L^2}^2 
\left<  \norm{\alpha}_{\dot{\mathfrak{h}}_{1/2}}^2 
\right>
\left( \beta^a[\Psi_N,p] + \beta^{b,1}[\Psi_N,\alpha] + \varepsilon^4 \right)
\end{align}
and therefore
\begin{align}
\label{eq:energy estimates estimate for the magentic laplacian term}
&\abs{\norm{i \varepsilon \nabla_1 q_1 \Psi_N}^2   -
\eqref{eq:energy estimates 1} }
\nonumber \\
&\quad \leq
\abs{\norm{i \varepsilon \nabla_1 q_1 \Psi_N}^2  
- \scp{\Psi_N}{q_1 \big( - i \varepsilon \nabla_1 - \varepsilon^2 \vAhk(x_1) \big)^2 q_1 \Psi_N} }
\nonumber \\
&\qquad +
\abs{ \scp{\Psi_N}{q_1 \big( - i \varepsilon \nabla_1 - \varepsilon^2 \vAhk(x_1) \big)^2 q_1 \Psi_N} 
-
\eqref{eq:energy estimates 1} }
\nonumber \\
&\quad \leq 
\frac{1}{2} \norm{ i \varepsilon \nabla_1 q_1 \Psi_N}^2
+ C
\left( 1 + \big\| (  \abs{\cdot}^{-1/2 } + \abs{\cdot}^{-1} ) \mathcal{F}[\kappa] \big\|_{L^2}^2 \right)
\big(  1 + \norm{\alpha}_{\dot{\mathfrak{h}}_{1/2}}^2 + \norm{i \varepsilon \nabla p}_{\mathfrak{S}^{\infty}}^2 \big)
\nonumber \\
&\qquad  \times 
\left[ \beta[\Psi_N,p,\alpha] + \varepsilon^2 
+ N^{-1}
\sup_{k \in \mathbb{R}^3}
\Big\{ \left(1 + \abs{k} \right)^{-1} \norm{q e^{ikx} p}_{\mathfrak{S}^1} 
+  \norm{q i \varepsilon \nabla p}_{\mathfrak{S}^1} 
\Big\}
\right] .
\end{align}

\noindent
\textbf{The term \eqref{eq:energy estimates 2}:}
Using the fact that 
$\varepsilon^4 W^*(\varepsilon^{-2} \alpha) H_f  W(\varepsilon^{-2} \alpha)
= \norm{\alpha}^2_{
\dot{\mathfrak{h}}_{1/2}}  
+ \varepsilon^4  H_f
+ \varepsilon^2 \big( a^*( \abs{\cdot} \alpha) + a ( \abs{\cdot} \alpha) \big)$ due to  \eqref{eq:Weyl operators shifting property},
together with \eqref{eq:standard estimate annihilation and creation operators}, we obtain
\begin{align}
\label{eq:energy estimates estimate for the field energy}
\abs{\varepsilon^4 
\scp{\xi_N}{H_f \xi_N}  - \eqref{eq:energy estimates 2} }
&=  \varepsilon^2 \abs{\scp{\xi_N}{a^*( \abs{\cdot} \alpha) + a ( \abs{\cdot} \alpha) \xi_N}}
\leq C \norm{\alpha}_{\mathfrak{h}_1} 
\beta^{b,2}[\Psi_N,\alpha] .
\end{align}

\noindent
\textbf{The term \eqref{eq:energy estimates 3}:}
Note that 
\begin{subequations}
\begin{align}
\label{eq:energy estimates direct interaction 1}
\abs{\eqref{eq:energy estimates 3} }
&\leq 
\abs{\scp{\Psi_N}{q_1 K(x_1 - x_2) q_1 \Psi_N}}
\\
\label{eq:energy estimates direct interaction 2}
&\quad + 2
\abs{\scp{\Psi_N}{q_1 K(x_1 - x_2) p_1 \Psi_N}}
\\
\label{eq:energy estimates direct interaction 3}
&\quad +
\abs{\scp{\Psi_N}{q_2 p_1 K(x_1 - x_2) p_1 q_2 \Psi_N}}
\\
\label{eq:energy estimates direct interaction 4}
&\quad + 2
\abs{\scp{\Psi_N}{q_2 p_1 K(x_1 - x_2) p_1 p_2 \Psi_N}}
\\
\label{eq:energy estimates direct interaction 5}
&\quad + \frac{1}{2N}
\abs{(N-1) \scp{\Psi_N}{p_2 p_1 K(x_1 - x_2) p_1 p_2 \Psi_N} - 
\tr \left( K * \rho_{p} p \right)}
\\
\label{eq:energy estimates direct interaction 6}
&\quad +
\frac{1}{2N} \abs{ \tr \left( X_{p} \, p \right)} .
\end{align}
\end{subequations}
By means of \eqref{eq:L-1 norm of the first moment of the potential in terms of kappa} and the antisymmetry of the many-body wave function we get
\begin{align}
\eqref{eq:energy estimates direct interaction 1}
+ \abs{\eqref{eq:energy estimates direct interaction 3}}
&\leq 2 \norm{K}_{L^{\infty}} \norm{q_1 \Psi_N}^2
\leq C \big\| (1 + \abs{\cdot}^{-1}) \mathcal{F}[\kappa] \big\|_{L^2}^2
\beta^{a}[\Psi_N, p] .
\end{align}
Using, in addition, the Fourier decomposition of the potential
and \eqref{eq:technical estimates using antisymmetry 1} gives
\begin{align}
\abs{\eqref{eq:energy estimates direct interaction 2}}
&\leq  \int_{\mathbb{R}^3} \abs{\mathcal{F}[K](k)} 
\abs{\scp{\Psi_N}{q_1 e^{ik x_1} p_1 e^{-ikx_2} \Psi_N}} \, dk
\nonumber \\
&\leq (N-1)^{-1} \big\| (1 + \abs{\cdot} ) \mathcal{F}[K](k) \big\|_{L^1}
\sup_{k \in \mathbb{R}^3} \Big\{ (1 + \abs{k})^{-1} \norm{q e^{ikx}  p}_{\mathfrak{S}^1}  \Big\}
\nonumber \\
&\leq C B_{\kappa}^2
N^{-1} \sup_{k \in \mathbb{R}^3} \Big\{ (1 + \abs{k})^{-1} \norm{q e^{ikx}  p}_{\mathfrak{S}^1}  \Big\}  
\end{align} 
and 
\begin{align}
\abs{\eqref{eq:energy estimates direct interaction 4}}
&\leq C B_{\kappa}^2
N^{-1} \sup_{k \in \mathbb{R}^3} \Big\{ (1 + \abs{k})^{-1} \norm{q e^{ikx}  p}_{\mathfrak{S}^1}  \Big\} . 
\end{align}
Note that
\begin{align}
&(N-1) \scp{\Psi_N}{p_2 p_1 K(x_1 - x_2) p_1 p_2 \Psi_N}
-  \tr \left( K * \rho_{p} \, p \right)  
\nonumber \\
&\quad =  (2 \pi)^{-3/2} \int_{\mathbb{R}^3} \mathcal{F}[K](k)
\scp{\Psi_N}{\Big( \sum_{j=2}^N p_j e^{ ik x_j} p_j - \tr \left( e^{ik \cdot} p \right) \Big) p_1 e^{- ik x_1} p_1 \Psi_N} \, dk
\nonumber \\
&\qquad +  (2 \pi)^{-3/2}  N^{-1} \int_{\mathbb{R}^3} \mathcal{F}[K](k)
\tr \left( e^{ik \cdot} p \right) 
\scp{\Psi_N}{\Big( \sum_{j=1}^N p_j e^{- ik x_j} p_j 
-  \tr \left( e^{-ik \cdot} p \right) \Big)
\Psi_N} \, dk . 
\end{align}
Thus if we use \eqref{eq:estimates of the p-p term 1} and \eqref{eq:estimates of the p-p term 2} with $O = e^{i k \cdot}$, \eqref{eq:L-1 norm of the first moment of the potential in terms of kappa}, $\norm{p e^{ik\cdot} p}_{\mathfrak{S}^{\infty}} \leq 1$, and $\norm{q e^{- i k \cdot} p}_{\mathfrak{S}^1} \leq N$, we get
\begin{align}
\abs{\eqref{eq:energy estimates direct interaction 5} }
&\leq C B_{\kappa}^2 \left( \beta^{a}[\Psi_N, p] + N^{-1} \right) .
\end{align}
Using the definition of the exchange term, \eqref{eq:L-1 norm of the first moment of the potential in terms of kappa}, and  $\norm{p}^2_{\mathfrak{S}^2} = \norm{p}_{\mathfrak{S}^1} = n$ let us obtain
\begin{align}
\eqref{eq:energy estimates direct interaction 6}
&\leq \frac{1}{2 N^2} \int_{\mathbb{R}^6}
\abs{K(x-y) } \abs{p(x;y)}^2  \, dx \, dy 
\leq  \frac{1}{2 N^2} \norm{K}_{L^{\infty}}
\norm{p}^2_{\mathfrak{S}^2}
\leq C B_{\kappa}^2 N^{-1}  .
\end{align}
In total, this gives
\begin{align}
\label{eq:energy estimates estimate for the direct interaction}
\abs{\eqref{eq:energy estimates 3} }
&\leq 
C \big\| (1 + \abs{\cdot}^{-1}) \mathcal{F}[\kappa] \big\|_{L^2}^2
\Big( \beta^{a}[\Psi_N, p] + N^{-1} 
+ N^{-1} \sup_{k \in \mathbb{R}^3} \Big\{ (1 + \abs{k})^{-1} \norm{q e^{ikx}  p}_{\mathfrak{S}^1}  \Big\}
\Big) .
\end{align}
Finally, we use 
$N^{-1} \big( \scp{\Psi_N}{H_N^{\rm{PF}} \Psi_N}
- \mathcal{E}^{\rm{MS}}[p,\alpha] 
\big) = \eqref{eq:energy estimates 1}
+ \eqref{eq:energy estimates 2} +\eqref{eq:energy estimates 3}$
as well as the estimates \eqref{eq:energy estimates estimate for the magentic laplacian term}, \eqref{eq:energy estimates estimate for the field energy} and \eqref{eq:energy estimates estimate for the direct interaction} to obtain
\begin{align}
&\norm{i \varepsilon \nabla_1 q_1 \Psi_N}^2 + \varepsilon^4 
\scp{\xi_N}{H_f \xi_N}
\nonumber \\
&\quad \leq 
N^{-1} \Big| \scp{\Psi_N}{H_N^{\rm{PF}} \Psi_N}
- \mathcal{E}^{\rm{MS}}[p,\alpha] \Big|
+
\abs{\norm{i \varepsilon \nabla_1 q_1 \Psi_N}^2   -
\eqref{eq:energy estimates 1} }
\nonumber \\
&\qquad 
+ \abs{\varepsilon^4 
\scp{\xi_N}{H_f \xi_N}  - \eqref{eq:energy estimates 2} }
+ \abs{\eqref{eq:energy estimates 3}}
\nonumber \\
&\quad \leq 
N^{-1} \Big|  \scp{\Psi_N}{H_N^{\rm{PF}} \Psi_N}
- \mathcal{E}^{\rm{MS}}[p,\alpha] \Big|
+  \frac{1}{2} \norm{ i \varepsilon \nabla_1 q_1 \Psi_N}^2
\nonumber \\
&\qquad 
+ C
\left( 1 + \big\| (  \abs{\cdot}^{-1/2 } + \abs{\cdot}^{-1} ) \mathcal{F}[\kappa] \big\|_{L^2}^2 \right)
\big(  1 
+ \norm{\alpha}_{\mathfrak{h}_1} 
+ \norm{\alpha}_{\dot{\mathfrak{h}}_{1/2}}^2 + \norm{i \varepsilon \nabla p}_{\mathfrak{S}^{\infty}}^2 \big)
\nonumber \\
&\qquad \quad  \times 
\left[ \beta[\Psi_N,p,\alpha] + \varepsilon^2 
+ N^{-1}
\sup_{k \in \mathbb{R}^3}
\Big\{ \left(1 + \abs{k} \right)^{-1} \norm{q e^{ikx} p}_{\mathfrak{S}^1} 
+  \norm{q i \varepsilon \nabla p}_{\mathfrak{S}^1} 
\Big\}
\right] .
\end{align}
Subtracting half of the left-hand side and multiplying by two then shows the claim.
\end{proof}

\subsection{Estimating the growth of correlations}

\label{subsection:Estimating the growth of correlations}

Within this section we provide estimates that allow us to control the growth of $\beta^a(t)$, $\beta^{b,1}(t)$, and $\beta^{b,2}(t)$ during the time evolution. Combining these results proves Lemma~\ref{lemma:estimating the time derivative of the functional}.

\subsubsection{Estimates for $\beta^a$}

\begin{lemma}
\label{lemma:estimating the time derivative of beta-a}
Let $\kappa$ satisfy Assumption~\ref{assumption:cutoff function}. Let $\Psi_{N} \in \mathcal{D} \left( (H_N^{\rm PF})^{1/2} \right) \cap \mathcal{D} \left( \mathcal{N}^{1/2} \right)$ and let $(p,\alpha) \in \textfrak{S}_{+}^{2,1} (L^2(\mathbb{R}^3)) \times \mathfrak{h}_{1} \cap \dot{\mathfrak{h}}_{-1/2}$ with $p$ being a rank-$N$ projection. Let $(p_t, \alpha_t)$ be the unique solution of \eqref{eq:Maxwell-Schroedinger equations} with initial data $(p,\alpha)$, and let $\Psi_{N,t} = e^{- i \varepsilon^{-1} H_N^{\rm PF} t} \Psi_N$.
Then, there exists a constant $C>0$ such that
\begin{align}
\label{eq:estimate growth of beta-a in time}
\beta^a(t) 
&\leq \beta^a(0) +  \int_{0}^t
C \left<  C_{\kappa}^2 \right>
\left< \norm{\alpha_s}_{\dot{\mathfrak{h}}_{1/2}} \right>
\Big< \norm{\varepsilon i \nabla p_s}_{\mathfrak{S}^{\infty}}
\Big>
\left( \beta(s) + \varepsilon^2 \right) 
\nonumber \\
&\qquad \qquad  \times
\varepsilon^{-1} N^{-1} 
\Big[ 1 
+ \sup_{k \in \mathbb{R}^3} \Big\{ (1 + \abs{k})^{-1} \norm{q_s e^{ikx}  p_s}_{\mathfrak{S}^1}  \Big\}  + \norm{q_s \varepsilon i \nabla p_s}_{\mathfrak{S}^1} 
\Big] 
\; ds  .
\end{align}
\end{lemma}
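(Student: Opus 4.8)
The strategy is the standard Grönwall-type argument for the functional $\beta^a$. First I compute $\frac{d}{dt}\beta^a(t)$ using the Schrödinger equation $i\varepsilon\partial_t\Psi_{N,t}=H_N^{\rm PF}\Psi_{N,t}$ and the first equation of the Maxwell--Schrödinger system \eqref{eq:Maxwell-Schroedinger equations} for $p_t$ (hence for $q_{t,1}$). This yields
\begin{align*}
\varepsilon\frac{d}{dt}\beta^a(t)
&= \varepsilon\frac{d}{dt}\scp{\Psi_{N,t}}{q_{t,1}\Psi_{N,t}}
= \big\langle\Psi_{N,t}, i\big[H_N^{\rm PF} - \varepsilon\,\widetilde H_{p_t,\alpha_t}^{(N)}, q_{t,1}\big]\Psi_{N,t}\big\rangle ,
\end{align*}
where $\widetilde H^{(N)}_{p_t,\alpha_t}=\sum_{j}\varepsilon^{-1}\big((-i\varepsilon\nabla_j-\kappa*\vA_{\alpha_t}(x_j))^2+K*\rho_{p_t}(x_j)-X_{p_t,j}\big)$ is the generator moving $q_{t,1}$ (only the one-body part contributes since $q_{t,1}$ acts on a single coordinate). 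The many-body Hamiltonian $H_N^{\rm PF}$ contains: the magnetic Laplacian with the \emph{quantized} field $\varepsilon^2\vAh$; the pair interaction $\frac{1}{2N}\sum_{i\neq j}K(x_i-x_j)$; and $\varepsilon H_f$. Since $[H_f,q_{t,1}]=0$, the field-energy term drops. The difference between the quantized magnetic term and its mean-field counterpart $(-i\varepsilon\nabla-\kappa*\vA_{\alpha_t})^2$, together with the difference between the pair interaction and $K*\rho_{p_t}-X_{p_t}$, constitutes the full commutator to estimate.

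\textbf{Key steps.} (i) Expand the commutator into a sum of terms, inserting $1=p_{t,1}+q_{t,1}$ (and $1=p_{t,2}+q_{t,2}$ for the interaction terms) between each factor to isolate the $q$-$p$ off-diagonal pieces — these are the ones controlled by the semiclassical quantity $\Xi[N,p_t]$ and the antisymmetry estimates. (ii) For the magnetic part, conjugate by the Weyl operator $W(\varepsilon^{-2}\alpha_t)$ to write $\varepsilon^2\vAh$ acting on $\xi_{N,t}=W^*(\varepsilon^{-2}\alpha_t)\Psi_{N,t}$ as $\varepsilon^2\vAh+\kappa*\vA_{\alpha_t}$; the classical part $\kappa*\vA_{\alpha_t}$ cancels against the mean-field term up to commutators, while the genuinely quantized part is bounded using \eqref{eq:estimate vector potential positive part with p and q within scalar product}--\eqref{eq:estimate vector potential squared with p and q within scalar product}, producing factors of $\norm{\mathcal{N}^{1/2}\xi_{N,t}}$ hence $\beta^{b,1},\beta^{b,2}$, and factors of $\Xi[N,p_t]$. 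The cross terms of the form $q_{t,1}\kappa*\vA_{\alpha_t}\cdot i\varepsilon\nabla p_{t,1}$ are handled via \eqref{eq:estimate for qAp}, \eqref{eq:estimate for A L-infty to h-1-2}. (iii) For the interaction, use the Fourier decomposition $K(x_1-x_2)=(2\pi)^{-3/2}\int\mathcal{F}[K](k)e^{ikx_1}e^{-ikx_2}dk$ and the antisymmetry estimates of Lemma~\ref{lemma:technical estimates using antisymmetry} and Lemma~\ref{lemma:estimates of the p-p term} (with $O=e^{ik\cdot}$), exactly as in the proof of Lemma~\ref{lemma:energy estimates}; the exchange term $X_{p_t}$ cancels the ``diagonal'' self-interaction contribution up to $O(N^{-1})$. (iv) Collect everything, bounding all $L^\infty$/$W^{1,\infty}$ norms of $\kappa*\vA_{\alpha_t}$, $K*\rho_{p_t}$, $X_{p_t}$ by $C_\kappa$ and $\norm{\alpha_t}_{\dot{\mathfrak h}_{1/2}}$ via the preliminary estimates, and estimating $\beta^{b,1}+\beta^{b,2}\le\beta$, $\beta^a\le\beta$. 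Dividing by $\varepsilon$ and using $\varepsilon^{-1}N^{-1}\le 1$ where needed, then integrating in time from $0$ to $t$, gives \eqref{eq:estimate growth of beta-a in time}.

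\textbf{Main obstacle.} The delicate point is the bookkeeping of powers of $N$ and $\varepsilon$: the commutator $i[H_N^{\rm PF}-\varepsilon\widetilde H,q_{t,1}]$ carries a prefactor $\varepsilon^{-1}$ after dividing by $\varepsilon$, and one must show that every term genuinely carries a compensating factor of $\varepsilon$ or $N^{-1}$ (or a small quantity $\beta$, $\varepsilon^2$) so that the final bound has the structure $\varepsilon^{-1}N^{-1}\Xi[N,p_s](\beta(s)+\varepsilon^2)$ with $\Xi$ of order $N^{2/3}=\varepsilon^{-2}$, making $\varepsilon^{-1}N^{-1}\Xi$ of order one. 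The antisymmetry-induced gain of $(N-j)^{-1}$ in Lemma~\ref{lemma:technical estimates using antisymmetry} and the $\norm{qe^{ikx}p}_{\mathfrak{S}^1}$ vs.\ $\norm{p}_{\mathfrak{S}^1}=N$ distinction in Lemma~\ref{lemma:estimates of the p-p term} are precisely what make this work; tracking which terms require the stronger off-diagonal estimate and which only the trace-norm bound is where care is needed. A secondary technical nuisance is justifying the differentiation of $\beta^a(t)$ rigorously (domain issues for $H_N^{\rm PF}$ and $\mathcal{N}$), which follows from the invariance of $\mathcal{D}((H_N^{\rm PF})^{1/2})\cap\mathcal{D}(\mathcal{N}^{1/2})$ stated earlier, together with a regularization of the gradient as in the proof of Lemma~\ref{lemma: semiclassical structure} if needed.
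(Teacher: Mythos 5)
Your proposal is correct and follows essentially the same route as the paper: differentiate $\beta^a$ against the difference of the Pauli--Fierz generator and the mean-field Hamiltonian, insert $1=p_{t,1}+q_{t,1}$ (and $p_{t,2}+q_{t,2}$ for the pair interaction), conjugate the quantized field by the Weyl operator, and close the estimate with the off-diagonal semiclassical quantities and the antisymmetry lemmas. The only minor inaccuracy is the remark that $X_{p_t}$ ``cancels'' the diagonal self-interaction — in the paper the commutator $[X_{p_t,1},q_{t,1}]$ is simply estimated directly as an $O(N^{-1})$ error via the Fourier representation of $K$, no cancellation being needed.
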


\begin{proof}[Proof of Lemma~\ref{lemma:estimating the time derivative of beta-a}]
Throughout the proof we use the notations $\xi_{N,t} = W^*(\varepsilon^{-2} \alpha_t) \Psi_{N,t}$ and $\vAc(x,t) = (\kappa * \vA_{\alpha_t})(x)$.
Using 
$\frac{d}{dt} q_{t} = - i \varepsilon^{-1} \left[ H_{p,\alpha}(t) , q_{t} \right]$ with $H_{p,\alpha}(t)$ being defined as in \eqref{eq:definition mean-field Hamiltonian}
and the antisymmetry of the many-body wave function, we get
\begin{align}
\frac{d}{dt} \beta^{a}(t) 
&= i \varepsilon^{-1}
\scp{\Psi_{N,t}}{\left[ \left( H_N^{\rm{PF}} - H_{p,\alpha}(t)   \otimes \id_{L^2(\mathbb{R}^{3(N-1)})} \otimes \id_{\mathcal{F}} \right) , q_{t,1} \right] \Psi_{N,t}}
\nonumber \\
&= - 4 \Re \,
\scp{\Psi_{N,t}}{\left( \varepsilon^2 \vAhk(x_1) - \vAk(x_1,t) \right) \cdot \nabla_1 q_{t,1}  \Psi_{N,t}}
\nonumber \\
&\quad - 2 \varepsilon^{-1} \Im \,
\scp{\Psi_{N,t}}{\left( \varepsilon^4 \vAhk^2(x_1) - \vAk^2(x_1,t) \right)  q_{t,1} \Psi_{N,t}}
\nonumber \\
&\quad - 2 \varepsilon^{-1}  \Im \,
\scp{\Psi_{N,t}}{ \left( (N-1)/N  K(x_1 - x_2) -  K * \rho_{p_t}(x_1) \right)   q_{t,1}  \Psi_{N,t}}
\nonumber \\
&\quad - i
\varepsilon^{-1}
\scp{\Psi_{N,t}}{\left[ X_{p_t,1} , q_{t,1} \right] \Psi_{N,t}} .
\end{align}
Inserting the identity $\id_{\mathcal{H}^{(N)}} = p_{t,1} + q_{t,1}$ and 
\begin{align}
\Re \,
\scp{\Psi_{N,t}}{q_{t,1} \left( \varepsilon^2 \vAhk(x_1) - \vAk(x_1,t) \right) \cdot \nabla_1 q_{t,1}  \Psi_{N,t}}
&= 0 , 
\nonumber \\
\Im \,
\scp{\Psi_{N,t}}{q_{t,1} \left( \varepsilon^4 \vAhk^2(x_1) - \vAk^2(x_1,t) \right)  q_{t,1} \Psi_{N,t}}
&= 0 , 
\nonumber \\
\Im \,
\scp{\Psi_{N,t}}{ q_{t,1}  \left( (N-1)/N  K(x_1 - x_2) -  K * \rho_{p_t}(x_1)   \right) q_{t,1}  \Psi_{N,t}}
&= 0  
\end{align}
let us obtain  
\begin{subequations}
\begin{align}
\label{eq:time derivative beta-a-1 A}
\frac{d}{dt} \beta^{a,1}(t)
&= - 4 \Re \,
\scp{\Psi_{N,t}}{p_{t,1} \left( \varepsilon^2 \vAhk(x_1) - \vAk(x_1,t) \right) \cdot \nabla_1 q_{t,1}  \Psi_{N,t}}
\\
\label{eq:time derivative beta-a-1 B}
&\quad - 2 \varepsilon^{-1} \Im \,
\scp{\Psi_{N,t}}{p_{t,1} \left( \varepsilon^4 \vAhk^2(x_1) - \vAk^2(x_1,t) \right)  q_{t,1} \Psi_{N,t}}
\\
\label{eq:time derivative beta-a-1 C}
&\quad - 2 \varepsilon^{-1}  \Im \,
\scp{\Psi_{N,t}}{p_{t,1} \left( (N-1)/N  K(x_1 - x_2) -  K * \rho_{p_t}(x_1) \right) q_{t,1}  \Psi_{N,t}}
\\
\label{eq:time derivative beta-a-1 D}
&\quad - i
\varepsilon^{-1}
\scp{\Psi_{N,t}}{p_{t,1} \left[ X_{p_t,1} , q_{t,1} \right] \Psi_{N,t}}  .
\end{align}
\end{subequations}

In the following we estimate each term separately.

\noindent
\textbf{The term \eqref{eq:time derivative beta-a-1 A}:}
By \eqref{eq:Weyl operators shifting property} and \eqref{eq:estimate vector potential with p and q within scalar product} with $O_1 = \varepsilon i \nabla_1$ and $m= 1/4$ we get
\begin{align}
\abs{\eqref{eq:time derivative beta-a-1 A}}
&= -4 \varepsilon \abs{
\scp{\xi_{N,t}}{p_{t,1}  \vAhk(x_1) \cdot \varepsilon \nabla_1 q_{t,1}  \xi_{N,t}} }
\nonumber \\
&=   4  \varepsilon \abs{\scp{ q_{t,1} \vAhk(x_1) \cdot \varepsilon i \nabla_1 p_{t,1} \xi_{N,t}}{ q_{t,1} \xi_{N,t}} }
\nonumber \\
&\leq 
C \varepsilon^{-1} N^{-1} \sup_{k \in \mathbb{R}^3} \Big\{ (1 + \abs{k})^{-1} \norm{q_t e^{ikx} \varepsilon i \nabla p_t}_{\mathfrak{S}^1}  \Big\} 
\nonumber \\
&\qquad \times
\big\| \big(\abs{\cdot}^{1/2} + \abs{\cdot}^{-1/2} \big) \mathcal{F}[\kappa] \big\|_{L^2} 
\left( \beta^{b,2}(t) + \varepsilon^2 \right) .
\end{align}
By means of $\id_{L^2(\mathbb{R}^3)}  = p_t + q_t$ we get 
$
\norm{q_t e^{ikx} \varepsilon i \nabla p_t}_{\mathfrak{S}^1}
\leq 
\norm{\varepsilon i \nabla p_t}_{\mathfrak{S}^{\infty}}
\norm{q_t e^{ikx} p_t}_{\mathfrak{S}^1}
+ \norm{q_t \varepsilon i \nabla p_t}_{\mathfrak{S}^1}$, 
which leads to
\begin{align}
\abs{\eqref{eq:time derivative beta-a-1 A}}
&\leq 
C \varepsilon^{-1} N^{-1} \Big[
\norm{q_t \varepsilon i \nabla p_t}_{\mathfrak{S}^1}
+ \norm{\varepsilon i \nabla p_t}_{\mathfrak{S}^{\infty}} \sup_{k \in \mathbb{R}^3} \Big\{ (1 + \abs{k})^{-1} \norm{q_t e^{ikx}  p_t}_{\mathfrak{S}^1}  \Big\} 
\Big]
\nonumber \\
&\quad \times
\big\| \big(\abs{\cdot}^{1/2} + \abs{\cdot}^{-1/2} \big) \mathcal{F}[\kappa] \big\|_{L^2} 
\left( \beta^{b,2}(t) + \varepsilon^2 \right) .
\end{align}

\noindent
\textbf{The term \eqref{eq:time derivative beta-a-1 B}:}
Using \eqref{eq:Weyl operators shifting property} again we obtain
\begin{align}
\label{eq:rewriting of the quadratic difference of the vector potentials by means of the Weyl operators}
\varepsilon^4 \vAhk^2(x_1) - \vAk^2(x_1,t)
&= W(\varepsilon^{-2} \alpha_t)  \Big( \varepsilon^4 \vAhk^2(x_1) 
+ 2 \vAk(x_1,t) \varepsilon^2 \vAhk(x_1) \Big) W^*(\varepsilon^{-2} \alpha_t)  .
\end{align}
Thus if we apply \eqref{eq:estimate vector potential squared with p and q within scalar product} with $m=1/2$ as well as 
\eqref{eq:estimate vector potential with p and q within scalar product} with $O = \vAk(\cdot,t)$ and $m=1/4$ we get
\begin{align}
\abs{\eqref{eq:time derivative beta-a-1 B}}
&\leq  2 \varepsilon^{3} \abs{
\scp{\xi_{N,t}}{p_{t,1}  \vAhk^2(x_1)   q_{t,1} \xi_{N,t}}}  + 4 \varepsilon 
\abs{ \scp{\xi_{N,t}}{p_{t,1}  \vAk(x_1,t)  \vAhk(x_1)   q_{t,1} \xi_{N,t}} }
\nonumber \\
&\leq 
C \varepsilon^{-1} N^{-1} \sup_{k \in \mathbb{R}^3} \Big\{ (1 + \abs{k})^{-1} \Big( \norm{q_t e^{ikx}  p_t}_{\mathfrak{S}^1}
+ \norm{q_t e^{ikx} \vAk(\cdot,t) p_t}_{\mathfrak{S}^1} \Big)
\Big\}
\nonumber \\
&\qquad \times
\bigg[
\varepsilon^4 \norm{\left( \mathcal{N} + 1 \right)^{1/2} \xi_{N,t}}^2
+ \varepsilon^2 \norm{\left( \mathcal{N} + 1 \right)^{1/4} \xi_{N,t}}^2
\bigg]  .
\end{align}
Due to $\id_{L^2(\mathbb{R}^3)} = p_t + q_t$, \eqref{eq:estimate for A L-infty to h-1-2} and \eqref{eq:estimate for qAp} we have
\begin{align}
\norm{q_t e^{ikx} \vAk(\cdot,t) p_t}_{\mathfrak{S}^1}
&\leq  \norm{q_t e^{ikx} p_t}_{\mathfrak{S}^1} \norm{\vAk(\cdot,t)}_{\mathfrak{S}^{\infty}}
+ \norm{q_t \vAk(\cdot,t) p_t}_{\mathfrak{S}^1}
\nonumber \\
&\leq B_{\kappa} \norm{\alpha_t}_{\dot{\mathfrak{h}}_{1/2}}
\bigg[
\norm{q_t e^{ikx}  p_t}_{\mathfrak{S}^1}
+
\sup_{l \in \mathbb{R}^3} \Big\{ (1 + \abs{l})^{-1}  \norm{q_t e^{ilx}  p_t}_{\mathfrak{S}^1}
\Big\}
\bigg] 
\end{align}
and
\begin{align}
\abs{\eqref{eq:time derivative beta-a-1 B}}
&\leq 
C B_{\kappa} \varepsilon^{-1} N^{-1} \sup_{k \in \mathbb{R}^3} \Big\{ (1 + \abs{k})^{-1} \norm{q_t e^{ikx}  p_t}_{\mathfrak{S}^1}
\Big\} \norm{\alpha_t}_{\dot{\mathfrak{h}}_{1/2}}
\left( \beta^{b,1}(t) + \beta^{b,2}(t) \right) .
\end{align}

\noindent
\textbf{The term \eqref{eq:time derivative beta-a-1 C}:}
The next term is estimated in analogy to \cite[Section 9]{PP2016}. Inserting the identity $\id_{\mathcal{H}^{(N)}} = p_{t,2} + q_{t,2}$ and using that
$\Im \scp{\Psi_{N,t}}{q_{t,1} p_{t,2} K(x_1 - x_2) p_{t,1} q_{t,2} \Psi_{N,t}}  = 0$
holds to the antisymmetry under exchange of $x_1$ and $x_2$ let us obtain
\begin{subequations}
\begin{align}
\eqref{eq:time derivative beta-a-1 C}
&=  2 \varepsilon^{-1}  \Im \,
\scp{\Psi_{N,t}}{q_{t,1} \left( (N-1)/N  K(x_1 - x_2) -  K * \rho_{p_t}(x_1) \right) p_{t,1}  \Psi_{N,t}}
\nonumber \\
\label{eq:time derivative beta-a-1 C 1}
&=  2 \varepsilon^{-1}  \Im \,
\scp{\Psi_{N,t}}{q_{t,1} \left( (N-1)/N  p_{t,2} K(x_1 - x_2) p_{t,2} -  K * \rho_{p_t}(x_1) \right) p_{t,1}  \Psi_{N,t}}
\\
\label{eq:time derivative beta-a-1 C 3}
&\quad +  2 \varepsilon^{-1} (N-1)/N  \Im \,
\scp{\Psi_{N,t}}{q_{t,1} q_{t,2}  K(x_1 - x_2)  p_{t,1} q_{t,2}  \Psi_{N,t}}
\\
\label{eq:time derivative beta-a-1 C 2}
&\quad +  2 \varepsilon^{-1} (N-1)/N  \Im \,
\scp{\Psi_{N,t}}{q_{t,1} q_{t,2}  K(x_1 - x_2)  p_{t,1} p_{t,2}  \Psi_{N,t}} .
\end{align}
\end{subequations}
Note that $K(x_1 - x_2) =
(2 \pi)^{-3/2} \int_{\mathbb{R}^3} \mathcal{F}[K](k) e^{ik(x_1 - x_2)}  \, dk$ and the antisymmetry of the many-body state  leads to
\begin{align}
\eqref{eq:time derivative beta-a-1 C 1}
&=  2 \varepsilon^{-1} N^{-1} (2 \pi)^{-3/2} \int_{\mathbb{R}^3}
\mathcal{F}[K](k)
\nonumber \\
&\quad \times
\bigg[
 \Im \,
\scp{\Psi_{N,t}}{\Big( \sum_{j=2}^N p_{t,j}  
\cos(k x_j) p_{t,j} - \tr \big( \cos(k \cdot) p_t \big)
\Big) q_{t,1} e^{ik x_1} p_{t,1}  \Psi_{N,t}}
\nonumber \\
&\qquad 
-  \Re \,
\scp{\Psi_{N,t}}{\Big( \sum_{j=2}^N p_{t,j}  
\sin(k x_j) p_{t,j} - \tr \big( \sin(k \cdot) p_t \big)
\Big) q_{t,1} e^{ik x_1} p_{t,1}  \Psi_{N,t}}
\bigg] .
\end{align}
Since $\norm{\cos(k \cdot)}_{\mathfrak{S}^{\infty}} \leq 1$ and $\norm{\sin(k \cdot)}_{\mathfrak{S}^{\infty}} \leq 1$ we obtain 
\begin{align}
\abs{\eqref{eq:time derivative beta-a-1 C 1} }
&\leq  C \varepsilon^{-1} N^{-1}  \int_{\mathbb{R}^3}
\abs{\mathcal{F}[K](k)} \norm{q_t e^{ikx} p}_{\mathfrak{S}^1}
\left( \norm{q_{t,1} \Psi_{N,t}}^2 + N^{-1} \right) \, dk
\nonumber \\
&\leq 
C \varepsilon^{-1} N^{-1} \sup_{k \in \mathbb{R}^3}
\Big\{
(1 + \abs{k})^{-1}
\norm{q_t e^{ikx} p}_{\mathfrak{S}^1}
\Big\} 
\norm{(1 + \abs{\cdot} ) \mathcal{F}[K]}_{L^1}
\left( \beta^a(t) + N^{-1} \right) 
\nonumber \\
&\leq 
C B_{\kappa}^2 \varepsilon^{-1} N^{-1} \sup_{k \in \mathbb{R}^3}
\Big\{
(1 + \abs{k})^{-1}
\norm{q_t e^{ikx} p}_{\mathfrak{S}^1}
\Big\} 
\left( \beta^a(t) + N^{-1} \right)  
\end{align}
by means of \eqref{eq:estimates of the p-p term 2} and \eqref{eq:L-1 norm of the first moment of the potential in terms of kappa}.
Using the Fourier decomposition of the potential, Lemma~\ref{lemma:technical estimates using antisymmetry} with $A_1 = p_{t,1} e^{i k x_1} p_{t,1}$ and \eqref{eq:L-1 norm of the first moment of the potential in terms of kappa}, we get 
\begin{align}
\eqref{eq:time derivative beta-a-1 C 3}
&\leq 
\varepsilon^{-1}  \int_{\mathbb{R}^3} \abs{\mathcal{F}[K](k)} \abs{\scp{q_{t,2} \Psi_{N,t}}{  q_{t,1} e^{ik x_1}  p_{t,1}
e^{- i k x_2} q_{t,2}  \Psi_{N,t}}}
\, dk
\nonumber \\
&\leq 
\varepsilon^{-1} (N-1)^{-1} \int_{\mathbb{R}^3} \abs{\mathcal{F}[K](k)}
\norm{q_{t,1} e^{ik x_1}  p_{t,1}}_{\mathfrak{S}^1}
\norm{q_{t,2} \Psi_{N,t}}^2 \, dk
\nonumber \\
&\leq 
C  B_{\kappa}^2 \varepsilon^{-1} N^{-1}
\sup_{k \in \mathbb{R}^3}
\Big\{ (1 + \abs{k})^{-1}  \norm{q_t e^{ik x} p_t}_{\mathfrak{S}^1}  \Big\}
\beta^{a}(t) .
\end{align}
Before estimating \eqref{eq:time derivative beta-a-1 C 2}, we first show that
\begin{align}
\label{eq:semiclassical structure and p-p-q-q term}
\norm{q_{1} q_{2}  K(x_1 - x_2) p_{1} p_{2} \Psi_{N} }
&\leq C B_{\kappa}^2 (N - (j+1))^{-1} 
\sup_{k \in \mathbb{R}^3}
\Big\{ (1 + \abs{k})^{-1}  \norm{q e^{ik \cdot} p}_{\mathfrak{S}^1}  \Big\} \norm{\Psi_N}
\end{align}
holds for all $\Psi_{N} \in L^2(\mathbb{R}^{3N}) \otimes \mathcal{F}$ which are antisymmetric in $x_1$, $x_2$ and all other electron variables except $x_{l_1}, \ldots, x_{l_j}$. Using the Fourier decomposition of $K$ and applying Lemma~\ref{lemma:technical estimates using antisymmetry} twice, we get
\begin{align}
\norm{q_{1} q_{2}  K(x_1 - x_2) p_{1} p_{2} \Psi_{N} }
&\leq (2 \pi)^{-3/2}
\int_{\mathbb{R}^3} \abs{\mathcal{F}[K](k)}
\norm{q_1 e^{ik x_1} p_1
q_2 e^{-ik x_2} p_2 \Psi_N} \, dk 
\nonumber \\
&\leq  (N - (j+1))^{-1}
\int_{\mathbb{R}^3} \abs{\mathcal{F}[K](k)}
\norm{q e^{ik \cdot} p}_{\mathfrak{S}^2}^2 
\norm{\Psi_N} \, dk .
\end{align}
Due to the boundedness of $p$, $q$ and $e^{ik \cdot}$ we have
$\norm{q e^{ik \cdot} p}_{\mathfrak{S}^2}^2 
= \tr \left( q e^{ik \cdot} p^2  e^{-ik \cdot} q \right) 
\leq \norm{q e^{ik \cdot} p}_{\mathfrak{S}^1}$.
Together with \eqref{eq:L-1 norm of the first moment of the potential in terms of kappa} this shows \eqref{eq:semiclassical structure and p-p-q-q term}.
This and the antisymmetry of the many-body wave function allows us to obtain
\begin{align}
\eqref{eq:time derivative beta-a-1 C 2}
&\leq 2 \varepsilon^{-1} N^{-1}  \abs{
\scp{\Psi_{N,t}}{q_{t,1} \sum_{m=2}^N q_{t,m}  K(x_1 - x_m) p_{t,1} p_{t,m} \Psi_{N,t}} }
\nonumber \\
&\leq 2 \varepsilon^{-1} N^{-1} \norm{q_{t,1} \Psi_{N,t}}
\big\| q_{t,1} \sum_{m=2}^N q_{t,m}  K(x_1 - x_m) p_{t,1} p_{t,m} \Psi_{N,t} \big\|
\nonumber \\
&\leq 2 \varepsilon^{-1} N^{-1} \norm{q_{t,1} \Psi_{N,t}}
\bigg[ N
\big\| q_{t,1} q_{t,2}  K(x_1 - x_2) p_{t,1} p_{t,2} \Psi_{N,t} \big\|^2
\nonumber \\
&\qquad \quad +
N^2 \scp{ q_{t,1} q_{t,2} K(x_1 - x_2) p_{t,1} p_{t,2}   \Psi_{N,t}}{ q_{t,1} q_{t,3} K(x_1 - x_3) p_{t,1}  p_{t,3}  \Psi_{N,t}}
\bigg]^{1/2}
\nonumber \\
&\leq C \varepsilon^{-1}  \norm{q_{t,1} \Psi_{N,t}}
\bigg[ \norm{ q_{t,1} q_{t,2} K(x_1 - x_2) p_{t,1} p_{t,2}  q_{t,3} \Psi_{N,t}}
\nonumber \\
&\qquad \qquad \qquad \qquad \quad
+ N^{-1/2}
\big\| q_{t,1} q_{t,2}  K(x_1 - x_2) p_{t,1} p_{t,2} \Psi_{N,t} \big\|
\bigg]
\nonumber \\
&\leq C B_{\kappa}^2 \varepsilon^{-1} N^{-1}
\sup_{k \in \mathbb{R}^3}
\Big\{ (1 + \abs{k})^{-1}  \norm{q_t e^{ik x} p_t}_{\mathfrak{S}^1} \Big\}
 \norm{q_{t,1} \Psi_{N,t}}
\Big( \norm{ q_{t,1} \Psi_{N,t}}
+ N^{-1/2}
\Big)
\nonumber \\
&\leq C B_{\kappa}^2 \varepsilon^{-1} N^{-1}
\sup_{k \in \mathbb{R}^3}
\Big\{ (1 + \abs{k})^{-1}  \norm{q_t e^{ik x} p_t}_{\mathfrak{S}^1}  \Big\}
\left( \beta^a(t) + N^{-1} \right).
\end{align}
In total, this gives
\begin{align}
\abs{\eqref{eq:time derivative beta-a-1 C}}
&\leq  C B_{\kappa}^2  \varepsilon^{-1} N^{-1}  \sup_{k \in \mathbb{R}^3} \Big\{ (1 + \abs{k})^{-1} 
\norm{q_t e^{ik x} p_t}_{\mathfrak{S}^1} \Big\}
\left( \beta^{a}(t) + N^{-1} \right) .
\end{align}

\noindent
\textbf{The term \eqref{eq:time derivative beta-a-1 D}:}
Using
\begin{align}
X_{p_t,1} \, p_{t,1}
&= N^{-1} (2 \pi)^{-3/2}
\int_{\mathbb{R}^3} \mathcal{F}[K](k) e^{ik x_1} p_{t,1} e^{- i k x_1} p_{t,1} \, dk,
\end{align}
\eqref{eq:L-1 norm of the first moment of the potential in terms of kappa}, and $\varepsilon^{-1} N^{-1} = N^{- 2/3}$, we obtain
\begin{align}
\abs{\eqref{eq:time derivative beta-a-1 D}}
&\leq 
\varepsilon^{-1}
\abs{\scp{\Psi_{N,t}}{q_{t,1}  X_{p_t,1}  p_{t,1}  \Psi_{N,t}}} 
\nonumber \\
&\leq \varepsilon^{-1}
N^{-1} 
\int_{\mathbb{R}^3} \abs{\mathcal{F}[K](k)} \abs{\scp{\Psi_{N,t}}{q_{t,1} e^{ik x_1} p_{t,1} e^{- i k x_1} p_{t,1} \Psi_{N,t}}} \, dk
\nonumber \\
&\leq \varepsilon^{-1} N^{-1} \big\| \mathcal{F}[K] \big\|_{L^1} \norm{q_{t,1} \Psi_{N,t}}
\nonumber \\
&\leq C  B_{\kappa}^2 
\big( \beta^{a}(t) + N^{-1} \big) .
\end{align}
Collecting the estimates and integrating in time finally leads to
\eqref{eq:estimate growth of beta-a in time}.
\end{proof}

\subsubsection{Estimates for $\beta^{b,1}$}

\begin{lemma}
\label{lemma:estimate for beta-b-1}
Let $\kappa$ satisfy Assumption~\ref{assumption:cutoff function}. Let $\Psi_{N} \in \mathcal{D} \left( (H_N^{\rm PF})^{1/2} \right) \cap \mathcal{D} \left( \mathcal{N}^{1/2} \right)$ and let $(p,\alpha) \in \textfrak{S}_{+}^{2,1} (L^2(\mathbb{R}^3)) \times \mathfrak{h}_{1} \cap \dot{\mathfrak{h}}_{-1/2}$ with $p$ being a rank-$N$ projection. Let $(p_t, \alpha_t)$ be the unique solution of \eqref{eq:Maxwell-Schroedinger equations} with initial data $(p,\alpha)$, and let $\Psi_{N,t} = e^{- i \varepsilon^{-1} H_N^{\rm PF} t} \Psi_N$.
Then, there exists a constant $C>0$ such that
\begin{align}
\label{eq:estimate for beta-b-1}
\beta^{b,1}(t)
&\leq  \beta^{b,1}(0)
+ \int_0^t  \bigg[ \norm{i \varepsilon \nabla_1 q_{s,1} \Psi_{N,s}}^2
+
 C \left( 1 + \big\| ( 1 + \abs{\cdot}^{-1} ) \mathcal{F}[\kappa] \big\|^2_{L^2} \right) 
\nonumber \\
&\qquad  \qquad  \qquad  \qquad  \times 
\Big<
\norm{ i \varepsilon \nabla p_s}_{\mathfrak{S}^{\infty}}  
+ \norm{\alpha_s}_{\dot{\mathfrak{h}}_{1/2}}^2  \Big> \left( \beta^a(s) + \beta^{b,1}(s) + \varepsilon^4 \right)
\bigg] \, ds .
\end{align}
\end{lemma}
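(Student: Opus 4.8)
The plan is to differentiate $\beta^{b,1}(t) = \varepsilon^4 \scp{\Psi_{N,t}}{W(\varepsilon^{-2}\alpha_t)\mathcal{N} W^*(\varepsilon^{-2}\alpha_t)\Psi_{N,t}}$ in time, obtaining two contributions: one from the Pauli--Fierz evolution of $\Psi_{N,t}$ and one from the Maxwell--Schr\"odinger evolution of $\alpha_t$. Writing $\xi_{N,t} = W^*(\varepsilon^{-2}\alpha_t)\Psi_{N,t}$, the derivative becomes $\frac{d}{dt}\beta^{b,1}(t) = i\varepsilon^3\scp{\xi_{N,t}}{[W^*(\varepsilon^{-2}\alpha_t) H_N^{\rm PF} W(\varepsilon^{-2}\alpha_t) - \varepsilon \dot{\mathcal{W}}_t, \mathcal{N}]\xi_{N,t}}$ where the second term accounts for $\partial_t\alpha_t$. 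Using the Weyl shift property \eqref{eq:Weyl operators shifting property}, conjugating $H_N^{\rm PF}$ by $W(\varepsilon^{-2}\alpha_t)$ replaces $\varepsilon^2\widehat{\vA}_{\kappa}(x_j)$ by $\varepsilon^2\widehat{\vA}_{\kappa}(x_j) + \vA_{\alpha_t}(x_j)$ and $H_f$ by $H_f + \varepsilon^{-2}(\text{linear terms}) + \varepsilon^{-4}\norm{\alpha_t}_{\dot{\mathfrak{h}}_{1/2}}^2$. Crucially, the Maxwell--Schr\"odinger equation \eqref{eq:Maxwell-Schroedinger equations} for $\alpha_t$ is designed so that the $c$-number and the terms linear in $a, a^*$ that commute trivially or cancel against $\partial_t\alpha_t$ leave only commutators of $\mathcal{N}$ with terms that are genuinely quadratic or higher in the photon operators, or that couple photons to the electron degrees of freedom.

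The key step is therefore to identify which terms survive the commutator with $\mathcal{N}$. Since $\mathcal{N}$ commutes with the $c$-number $\norm{\alpha_t}_{\dot{\mathfrak{h}}_{1/2}}^2$ and with $H_f$, and since the terms linear in the shifted field that remain after using the $\alpha_t$-equation produce expressions of the form $\scp{\xi_{N,t}}{[\text{electron current}\cdot(\vAp + \vAm), \mathcal{N}]\xi_{N,t}}$, I would organize the surviving contributions as: (i) the cross term $2\varepsilon^3 (i\varepsilon\nabla_j)\cdot\widehat{\vA}_\kappa(x_j)$ commuted with $\mathcal{N}$, which by $[a_{k,\lambda},\mathcal{N}] = a_{k,\lambda}$ produces $\sim \varepsilon^3 (i\varepsilon\nabla_j)\cdot(\vAp(x_j) - \vAm(x_j))$ (a difference, since creation and annihilation pick up opposite signs); and (ii) the cross term $2\varepsilon^3 \widehat{\vA}_\kappa(x_j)\cdot\vA_{\alpha_t}(x_j)$ similarly commuted. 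One then splits each electron variable's factor as $1 = p_{t,j} + q_{t,j}$ and uses the symmetry of the $N$-body wave function to reduce to the $j=1$ contribution. The $p_{t,1}$ piece combines with the current appearing in the $\alpha_t$-equation; a careful bookkeeping shows the mean-field current $\vJ_{p_t,\alpha_t}$ contribution cancels against the $\partial_t\alpha_t$ term up to a relative-size error, leaving only the $q_{t,1}$ piece, i.e. contributions from particles outside the Slater determinant.

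For the $q_{t,1}$ contributions I would apply the already-established vector potential estimates \eqref{eq:estimate vector potential positive part with p and q within scalar product}--\eqref{eq:estimate vector potential with p and q within scalar product} and \eqref{eq:Bounds for vector potential}, together with \eqref{eq:technical estimates using antisymmetry 1}, to bound everything by $\beta^a$, $\beta^{b,1}$, the kinetic energy $\norm{i\varepsilon\nabla_1 q_{t,1}\Psi_{N,t}}^2$, and powers of $\varepsilon$. The term of the form $\scp{\xi_{N,t}}{q_1 \varepsilon\nabla_1 \cdot (\vAp - \vAm)(x_1)\xi_{N,t}}$ after inserting $1 = p_1 + q_1$ on the right splits into a piece giving $\norm{i\varepsilon\nabla_1 q_{t,1}\Psi_{N,t}}^2$ plus photon-number terms, while the $p_1$ pieces are controlled via $\norm{i\varepsilon\nabla p_t}_{\mathfrak{S}^\infty}$ and the semiclassical quantity $\sup_k(1+\abs{k})^{-1}\norm{q_t e^{ikx}p_t}_{\mathfrak{S}^1}$, which is itself $\leq \Xi[N,p_t]$. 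The $\widehat{\vA}_\kappa\cdot\vA_{\alpha_t}$ contributions bring in the factor $\norm{\alpha_t}_{\dot{\mathfrak{h}}_{1/2}}$ via \eqref{eq:estimate for A L-infty to h-1-2}. The main obstacle is the careful cancellation in step (ii) above: one must verify that the terms generated by $\partial_t\alpha_t$ (via the $\vJ_{p_t,\alpha_t}$-dependent part of the Maxwell equation) precisely match the mean-field part of the electron current built from $p_{t,1}$, so that only the fluctuation (the $q_{t,1}$ part, plus $O(N^{-1})$ exchange-type remainders) is left; this requires tracking factors of $N$ and $\varepsilon$ through the conjugation and using that $\varepsilon^{-1}N^{-1} = N^{-2/3} \leq 1$. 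Integrating the resulting differential inequality in time yields \eqref{eq:estimate for beta-b-1}.
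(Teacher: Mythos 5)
Your proposal follows essentially the same route as the paper: differentiate $\varepsilon^4\scp{\xi_{N,t}}{\mathcal{N}\xi_{N,t}}$ with $\xi_{N,t}=W^*(\varepsilon^{-2}\alpha_t)\Psi_{N,t}$, observe that only the terms linear in the quantized field and the $\partial_t\alpha_t$ contribution survive the commutator with $\mathcal{N}$, cancel the $p$--$p$ diagonal part against the mean-field current $\vJ_{p_t,\alpha_t}$ via \eqref{eq:estimates of the p-p term 1}, and control the remaining fluctuation terms by $\beta^a$, $\beta^{b,1}$, the kinetic energy of the particles outside the Slater determinant, and powers of $\varepsilon$. Two details to watch: the paper regularizes the unbounded operator $\mathcal{N}$ as $\mathcal{N}e^{-\delta\mathcal{N}}$ to justify the differentiation, and for the mixed $q\cdots p$ cross terms it uses Cauchy--Schwarz with the operator norm $\norm{(i\varepsilon\nabla+\kappa*\vA_{\alpha_t})p_t}_{\mathfrak{S}^\infty}$ rather than the trace-norm semiclassical estimates you invoke, which is why no $\Xi[N,p_s]$ factor appears on the right-hand side of \eqref{eq:estimate for beta-b-1}.
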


\begin{proof}
We again use the notations $\xi_{N,t} = W^*(\varepsilon^{-2} \alpha_t) \Psi_{N,t}$ and $\vAc(x,t) = (\kappa * \vA_{\alpha_t})(x)$.
Since $\mathcal{N}$ is not a bounded operator, special care must be taken when computing the time derivatives of $\beta^{b,1}(t)$ and $\beta^{b,2}(t)$. Following the approach of \cite[Section III.3]{FL2023}, we define, for $\delta >0$, the bounded operator
$\mathcal{N}_{\delta}
= \mathcal{N} e^{- \delta \mathcal{N}}$. If $\Psi_N \in \mathcal{D} \left( \mathcal{N}^{1/2} \right)$ and $f \in \mathfrak{h}$, then the following holds
\begin{subequations}
\begin{align}
\label{eq:commutator of regularized number operator 1}
\lim_{\delta \rightarrow 0}
\norm{\left( \left[ \mathcal{N}_{\delta}, a(f) \right] + a(f) \right) \Psi_N} &= 0 ,
\\
\label{eq:commutator of regularized number operator 2}
\lim_{\delta \rightarrow 0}
\norm{\left( \left[ \mathcal{N}_{\delta}, a^*(f) \right] - a^*(f) \right) \Psi_N} &= 0 ,
\\
\label{eq:commutator of regularized number operator 3}
\lim_{\delta \rightarrow 0}
\norm{\left( \left[ \mathcal{N}_{\delta}^{1/2}, a(f) \right] + \left( \left( \mathcal{N} + 1 \right)^{1/2} - \mathcal{N}^{1/2} \right) a(f) \right) \Psi_N} &= 0 ,
\\
\label{eq:commutator of regularized number operator 4}
\lim_{\delta \rightarrow 0}
\norm{\left( \left[ \mathcal{N}_{\delta}^{1/2}, a^*(f) \right] + \left( \left( \mathcal{N} -1 \right)^{1/2} - \mathcal{N}^{1/2} \right) a^*(f) \right) \Psi_N} &= 0 .
\end{align}
\end{subequations}
Note that \eqref{eq:shifting property number operator} implies
$\left[ \mathcal{N}_{\delta}, a(f) \right] + a(f) = \mathcal{N} \big( 1 - e^{- \delta} \big) e^{- \delta \mathcal{N}} a(f) 
- e^{- \delta (\mathcal{N} + 1)} a(f)$.
By \eqref{eq:standard estimate annihilation and creation operators}, the spectral calculus of $\mathcal{N}$, and the monotone convergence theorem, we obtain \eqref{eq:commutator of regularized number operator 1} because
\begin{align}
&\lim_{\delta \rightarrow 0} \norm{\left( \left[ \mathcal{N}_{\delta}, a(f) \right] + a(f) \right) \Psi_N}^2
\nonumber \\
&\quad \leq 2 \norm{f}_{\mathfrak{h}} \lim_{\delta \rightarrow 0} \int_0^{\infty}
\left( \lambda^3 \big( 1 - e^{- \delta} \big)^2 e^{- 2 \delta \lambda} 
+ \big( 1 - e^{- \delta (\lambda +1)} \big)^2 \lambda \right) \scp{\Psi_N}{d E(\lambda) \Psi_N} 
= 0 .
\end{align}
The remaining relations are shown by similar means. Using the unitarity of the Weyl operators, \eqref{eq:Weyl operators shifting property} and the fact $W^*(\varepsilon^{-2} \alpha_t)$ with $\alpha \in C^1 \big( \mathbb{R}_{+} ; \mathfrak{h} \big)$  is strongly differentiable in $t$ from $\mathcal{D} \left( \mathcal{N}^{1/2} \right)$ to $\mathcal{H}^{(N)}$ (see \cite[Lemma 3.1]{GV19791} with
\begin{align}
\frac{d}{dt} W^*(\varepsilon^{-2} \alpha_t) = 
\Big( \varepsilon^{-2} \big( a(\dot{\alpha}_t) - a^*(\dot{\alpha}_t) \big) - \varepsilon^{-4} i \Im \, \scp{\alpha_t}{\dot{\alpha}_t} \Big) W^*(\varepsilon^{-2} \alpha_t)  
\end{align}
we compute
\begin{align}
\label{eq:time derivative of the fluctuation vector}
\frac{d}{dt} \xi_{N,t} 
&= \bigg( \frac{d}{dt}  W^*(\varepsilon^{-2} \alpha_t) \bigg) \Psi_{N,t} 
- i \varepsilon^{-1} W^*(\varepsilon^{-2} \alpha_t) H^{\rm{PF}}_N W(\varepsilon^{-2} \alpha_t) \xi_{N,t}
= - i \varepsilon^{-1} \mathcal{G}(t) \xi_{N,t} ,
\end{align}
where
\begin{align}
\label{eq:generator for fluctuation dynamics which cancels out the coeherent state}
\mathcal{G}(t)
&= 
\sum_{j=1}^N  \left(  i \varepsilon \nabla_j + \varepsilon^2 \vAhk(x_j) + \vAk(x_j,t) \right)^2
+ \frac{1}{2 N} \sum_{\substack{i,j=1\\ j\neq i}}^N  K(x_j - x_k)
+ \varepsilon H_f 
\nonumber \\
&\quad 
+ \varepsilon^{-1} \big( a(\abs{\cdot} \alpha_t - i \dot{\alpha}_t) + a^*(\abs{\cdot} \alpha_t - i \dot{\alpha}_t) \big)
+ \varepsilon^{-3} \norm{\alpha_t}^2_{\dot{\mathfrak{h}}_{1/2}}
+ \varepsilon^{-3} \Im \, \scp{\alpha_t}{\dot{\alpha}_t}_{\mathfrak{h}} .
\end{align}
Together with the anti-symmetry of the many-body wave function and $\left[ \mathcal{N}_{\delta}, H_f \right] = 0$, this gives
\begin{align}
\frac{d}{dt}
\varepsilon^4 \scp{\xi_{N,t}}{\mathcal{N}_{\delta} \xi_{N,t}}
&= i \varepsilon^3  \scp{\xi_{N,t}}{\left[ \mathcal{G}(t), \mathcal{N}_{\delta} \right] \xi_{N,t}}
\nonumber \\
&=   2 \varepsilon^2 \Im  \scp{\xi_{N,t}}{\left[ \mathcal{N}_{\delta} ,  \vAhk(x_1) \right] \left(  i \varepsilon \nabla_1 + \varepsilon^2 \vAhk(x_1) + \vAk(x_1,t) \right)  \xi_{N,t}}
\nonumber \\
&\quad +  \varepsilon^2  \Im 
\scp{\xi_{N,t}}{\left[ \mathcal{N}_{\delta} , \big( a(\abs{\cdot} \alpha_t - i \dot{\alpha}_t) + a^*(\abs{\cdot} \alpha_t - i \dot{\alpha}_t) \big)  \right] \xi_{N,t}} .
\end{align}
Using notation \eqref{eq:splitting of the vector potential} and \eqref{eq:commutator of regularized number operator 1} as well \eqref{eq:commutator of regularized number operator 2}, we obtain
\begin{align}
&\lim_{\delta \rightarrow 0} \frac{d}{dt}
\varepsilon^4 \scp{\xi_{N,t}}{\mathcal{N}_{\delta} \xi_{N,t}}
\nonumber \\
&\quad =  2 \varepsilon^2 \Im \, 
\scp{\xi_{N,t}}{  \big( \vAm(x_1) - \vAp(x_1) \big)   \left(  i \varepsilon \nabla_1 + \varepsilon^2 \vAhk(x_1) + \vAk(x_1,t) \right)  \xi_{N,t}}
\nonumber \\
&\qquad + 
\varepsilon^2  \Im
\scp{\xi_{N,t}}{\big( a^*(\abs{\cdot} \alpha_t - i \dot{\alpha}_t)
- a(\abs{\cdot} \alpha_t - i \dot{\alpha}_t) \big) \xi_{N,t}} 
\end{align}
By means of $\Im \, \scp{\xi_{N,t}}{\left[ \vAp(x_1) , \vAhk(x_1) \right] \xi_{N,t} } = 0$, $\Im(- \overline{z}) = \Im (z)$
and \eqref{eq:Maxwell-Schroedinger equations}, we get
\begin{subequations}
\begin{align}
&\lim_{\delta \rightarrow 0} \frac{d}{dt}
\varepsilon^4 \scp{\xi_{N,t}}{\mathcal{N}_{\delta} \xi_{N,t}}
\nonumber \\
\label{eq:time derivative beta-b A}
&\quad =  4 \varepsilon^2 \Im \, 
\scp{\xi_{N,t}}{  \vAm(x_1) \cdot \left(  i \varepsilon \nabla_1 + \varepsilon^2 \vAhk(x_1) + \vAk(x_1,t) \right)  \xi_{N,t}}
\\
\label{eq:time derivative beta-b B}
&\qquad 
+  2  \, \Im \, \sum_{\lambda=1,2} \int_{\mathbb{R}^3} dk \, 
\sqrt{\frac{4 \pi^3}{\abs{k}}} \mathcal{F}[\kappa](k) \vep_{\lambda}(k) \cdot \mathcal{F}[\vJ_{p_t, \alpha_t}](k)
\scp{\varepsilon^2 a_{k,\lambda} \xi_{N,t}}{  \xi_{N,t}} .
\end{align}
\end{subequations}
Inserting the identity $1 = p_{t,1} + q_{t,1}$ and $\left[ \vAm(x_1) \cdot , \left(  i \varepsilon \nabla_1  + \vAk(x_1,t) \right) \right] = 0$ lead to
\begin{subequations}
\begin{align}
\eqref{eq:time derivative beta-b A}
&=    4 \varepsilon^2 \Im \, 
\scp{\xi_{N,t}}{  \vAm(x_1) \cdot \left(  i \varepsilon \nabla_1 + \varepsilon^2 \vAhk(x_1) + \vAk(x_1,t) \right)  \xi_{N,t}}
\nonumber \\
\label{eq:time derivative beta-b A1}
&=   4 \varepsilon^4 \Im \, 
\scp{\xi_{N,t}}{  \vAm(x_1)  
\cdot \vAhk(x_1)  \xi_{N,t}}
\\
\label{eq:time derivative beta-b A2}
&\quad + 4 \varepsilon^2 \Im \, 
\scp{\xi_{N,t}}{ q_{t,1} \vAm(x_1) \cdot \left(  i \varepsilon \nabla_1  + \vAk(x_1,t) \right) p_{t,1} \xi_{N,t}}
\\
\label{eq:time derivative beta-b A3}
&\quad +  4 \varepsilon^2 \Im \, 
\scp{\xi_{N,t}}{ p_{t,1} \left(  i \varepsilon \nabla_1  + \vAk(x_1,t) \right) \cdot \vAm(x_1)  q_{t,1} \xi_{N,t}}
\\
\label{eq:time derivative beta-b A4}
&\quad +  4 \varepsilon^2 \Im \, 
\scp{\xi_{N,t}}{ q_{t,1} \vAm(x_1) \cdot \left(  i \varepsilon \nabla_1  + \vAk(x_1,t) \right) q_{t,1} \xi_{N,t}}
\\
\label{eq:time derivative beta-b A5}
&\quad +  4 \varepsilon^2 \Im \, 
\scp{\xi_{N,t}}{ p_{t,1} \vAm(x_1) \cdot \left(  i \varepsilon \nabla_1  + \vAk(x_1,t) \right) p_{t,1} \xi_{N,t}} .
\end{align}
\end{subequations}
Due to \eqref{eq:Bounds for vector potential} and \eqref{eq:estimate for A L-infty to h-1-2} we have
\begin{align}
\abs{\eqref{eq:time derivative beta-b A1}}
&\leq C \varepsilon^4 \big\| \abs{\cdot}^{-1/2} \mathcal{F}[\kappa] \big\|_{L^2}^2
\norm{\left( \mathcal{N} + 1 \right)^{1/2} \xi_{N,t}}^2
\nonumber \\
&\leq   C \big\| \abs{\cdot}^{-1/2} \mathcal{F}[\kappa] \big\|_{L^2}^2 \left( \beta^{b,1}(t) + \varepsilon^4 \right) ,
\end{align}
\begin{align}
\abs{\eqref{eq:time derivative beta-b A2} }
&\leq 4 \varepsilon^2
\norm{q_{t,1} \xi_{N,t}} 
\norm{ q_{t,1} \vAm(x_1) \cdot \left(  i \varepsilon \nabla_1  + \vAk(x_1,t) \right) p_{t,1} \xi_{N,t}}
\nonumber \\
&\leq C \varepsilon^2 \big\| \abs{\cdot}^{-1/2} \mathcal{F}[\kappa] \big\|_{L^2}
\norm{q_{t,1} \xi_{N,t}}
\norm{\left( \mathcal{N} + 1 \right)^{1/2}  \left(  i \varepsilon \nabla_1  + \vAk(x_1,t) \right) p_{t,1} \xi_{N,t}}
\nonumber \\
&\leq C \varepsilon^2 \big\| \abs{\cdot}^{-1/2} \mathcal{F}[\kappa] \big\|_{L^2} 
\norm{ \left(  i \varepsilon \nabla  + \vAk(\cdot,t) \right) p_{t}}_{\mathfrak{S}^{\infty}}
\norm{q_{t,1} \xi_{N,t}}
\norm{\left( \mathcal{N} + 1 \right)^{1/2} \xi_{N,t}}
\nonumber \\
&\leq C \big\| \abs{\cdot}^{-1/2} \mathcal{F}[\kappa] \big\|_{L^2} 
\norm{ \left(  i \varepsilon \nabla  + \vAk(\cdot,t) \right) p_{t}}_{\mathfrak{S}^{\infty}}
\sqrt{\beta^a(t)}
\sqrt{\beta^{b,1}(t) + \varepsilon^4} 
\nonumber \\
&\leq  C \big\| \abs{\cdot}^{-1/2} \mathcal{F}[\kappa] \big\|_{L^2} \left<   \big\| \abs{\cdot}^{-1} \mathcal{F}[\kappa] \big\|_{L^2} \right>
\left( \norm{  i \varepsilon \nabla  p_t}_{\mathfrak{S}^{\infty}}
+ \norm{\alpha_t}_{\dot{\mathfrak{h}}_{1/2}} \right)
\sqrt{\beta^a(t)}
\sqrt{\beta^{b,1}(t) + \varepsilon^4}  
\nonumber \\
&\leq  C \left< B_{\kappa} \right>^2  
\left( \norm{  i \varepsilon \nabla  p_t}_{\mathfrak{S}^{\infty}}
+ \norm{\alpha_t}_{\dot{\mathfrak{h}}_{1/2}} \right)
\left( \beta^a(t) + \beta^{b,1}(t) + \varepsilon^4 \right) ,
\end{align}
\begin{align}
\abs{\eqref{eq:time derivative beta-b A3} }
&\leq  4 \varepsilon^2 \abs{ 
\scp{\vAp(x_1) \cdot \left(  i \varepsilon \nabla_1  + \vAk(x_1,t) \right) p_{t,1} \xi_{N,t}}{  q_{t,1} \xi_{N,t}} }
\nonumber \\
&\leq C \varepsilon^2
\| \abs{\cdot}^{-1/2} \mathcal{F}[\kappa] \big\|_{L^2}
\norm{\mathcal{N}^{1/2} \left(  i \varepsilon \nabla_1  + \vAk(x_1,t) \right) p_{t,1} \xi_{N,t}}
\norm{q_{t,1} \xi_{N,t}}
\nonumber \\
&\leq C \varepsilon^2  \big\| \abs{\cdot}^{-1/2} \mathcal{F}[\kappa] \big\|_{L^2} 
\norm{ \left(  i \varepsilon \nabla  + \vAk(\cdot,t) \right) p_{t}}_{\mathfrak{S}^{\infty}}
\norm{\mathcal{N}^{1/2} \xi_{N,t}}  \norm{q_{t,1} \xi_{N,t}}
\nonumber \\
&\leq  C  \left< B_{\kappa} \right>^2
\left( \norm{  i \varepsilon \nabla  p_t}_{\mathfrak{S}^{\infty}}
+ \norm{\alpha_t}_{\dot{\mathfrak{h}}_{1/2}} \right)
\left( \beta^a(t) + \beta^{b,1}(t) \right) ,
\end{align}
and
\begin{align}
\abs{\eqref{eq:time derivative beta-b A4} }
&\leq  4 \varepsilon^2 \abs{
\scp{ \vAp(x_1) q_{t,1} \xi_{N,t}}{ \left(  i \varepsilon \nabla_1  + \vAk(x_1,t) \right) q_{t,1} \xi_{N,t}} }
\nonumber \\
&\leq C \varepsilon^2  \big\|\abs{\cdot}^{-1/2} \mathcal{F}[\kappa] \big\|_{L^2(\mathbb{R}^3)}
\norm{\mathcal{N}^{1/2} \xi_{N,t} }
\Big[ \norm{\vAk(\cdot,t)}_{\mathfrak{S}^{\infty}}
\norm{q_{t,1} \Psi_{N,t} }
+ \norm{i \varepsilon \nabla_1 q_{t,1} \Psi_{N,t}} \Big]
\nonumber \\
&\leq C \left( 1 + \big\| \abs{\cdot}^{-1/2} \mathcal{F}[\kappa] \big\|_{L^2(\mathbb{R}^3)}^2 
 +  \norm{\vAk(\cdot,t)}_{\mathfrak{S}^{\infty}}^2 \right)
\left( \beta^{a}(t) + \beta^{b,1}(t) \right)
+ \norm{i \varepsilon \nabla_1 q_{t,1} \Psi_{N,t}}^2 
\nonumber \\
&\leq C \left( B_{\kappa} \right>^2
\left( 1 +  \norm{\alpha_t}_{\dot{\mathfrak{h}}_{1/2}}^2 \right)
\left( \beta^{a}(t) + \beta^{b,1}(t) \right)
+ \norm{i \varepsilon \nabla_1 q_{t,1} \Psi_{N,t}}^2 .
\end{align}
Next, we use the antisymmetry of the many-body wave function and \eqref{eq:definition of G} to write \eqref{eq:time derivative beta-b A5} as 
\begin{align}
\eqref{eq:time derivative beta-b A5} 
&= 4 \varepsilon^2 N^{-1} \Im \,  \sum_{\lambda=1,2} \int_{\mathbb{R}^3} 
\scp{a_{k,\lambda} \xi_{N,t}}{ \sum_{j=1}^N p_{t,j} \Re[\boldsymbol{G}_{x_j}](k,\lambda) \cdot \left(  i \varepsilon \nabla_j  + \vAk(x_j,t) \right) p_{t,j} \xi_{N,t}} \, dk 
\nonumber \\
&\quad + 4 \varepsilon^2 N^{-1} \Re \,  \sum_{\lambda=1,2} \int_{\mathbb{R}^3} 
\scp{a_{k,\lambda} \xi_{N,t}}{\sum_{j=1}^N p_{t,j} \Im[\boldsymbol{G}_{x_j}](k,\lambda) \cdot \left(  i \varepsilon \nabla_j  + \vAk(x_j,t) \right) p_{t,j} \xi_{N,t}} \, dk  .
\end{align}
Due to \eqref{eq:Fourier transform of the vector current}
and $\left[ \boldsymbol{G}_{\cdot}(k,\lambda) \cdot , i \varepsilon \nabla \right] = 0$, we have
\begin{align}
\eqref{eq:time derivative beta-b B} 
&=
- 2 \varepsilon^2 N^{-1} \, \Im \, \sum_{\lambda=1,2} \int_{\mathbb{R}^3} 
\tr \left( \left( \left\{ \boldsymbol{G}_{\cdot}(k, \lambda) \cdot , i \varepsilon \nabla \right\} + 2 \boldsymbol{G}_{\cdot}(k, \lambda) \cdot \vAk(\cdot,t) \right) p_t \right) 
\nonumber \\
&\qquad \qquad \qquad \times
\scp{a_{k,\lambda} \xi_{N,t}}{  \xi_{N,t}}
\, dk
\nonumber \\
&=
- 4 \varepsilon^2 N^{-1} \, \Im \, \sum_{\lambda=1,2} \int_{\mathbb{R}^3} 
\tr \big( \Re[\boldsymbol{G}_{\cdot}](k, \lambda) \cdot \left(  i \varepsilon \nabla  + \vAk(\cdot,t) \right) p_t \big) 
\scp{a_{k,\lambda} \xi_{N,t}}{  \xi_{N,t}}  \, dk  
\nonumber \\
&\quad 
- 4 \varepsilon^2 N^{-1} \, \Re \, \sum_{\lambda=1,2} \int_{\mathbb{R}^3} 
\tr \big( \Im[\boldsymbol{G}_{\cdot}](k, \lambda) \cdot \left(  i \varepsilon \nabla  + \vAk(\cdot,t) \right) p_t \big) 
\scp{a_{k,\lambda} \xi_{N,t}}{  \xi_{N,t}}  \, dk  .
\end{align}
Application of \eqref{eq:estimates of the p-p term 1} with $J = \{ 1, 2, \ldots, N \}$,   \eqref{eq:estimate for A L-infty to h-1-2} and $\norm{\norm{ \Im[\boldsymbol{G}_{\cdot}](k, \lambda) }_{\mathfrak{S}^{\infty}}} \leq \abs{k}^{-1/2} \abs{\mathcal{F}[\kappa](k)}$ as well as $\norm{\norm{ \Re[\boldsymbol{G}_{\cdot}](k, \lambda) }_{\mathfrak{S}^{\infty}}} \leq \abs{k}^{-1/2} \abs{\mathcal{F}[\kappa](k)}$
then gives
\begin{align}
&\abs{\eqref{eq:time derivative beta-b A5}
+ \eqref{eq:time derivative beta-b B}}
\nonumber \\
&\quad \leq C \varepsilon^2 \norm{\left(  i \varepsilon \nabla  + \vAk(\cdot,t) \right) p_t}_{\mathfrak{S}^{\infty}}
\norm{q_{t,1} \Psi_{N,t}}
\sum_{\lambda = 1,2} \int_{\mathbb{R}^3}
\abs{k}^{-1/2} \abs{\mathcal{F}[\kappa](k)}
\norm{a_{k,\lambda} \xi_{N,t}}
\nonumber \\
&\quad \leq C  \big\| \abs{\cdot}^{-1/2} \mathcal{F}[\kappa] \big\|_{L^2}
\norm{\left(  i \varepsilon \nabla  + \vAk(\cdot,t) \right) p_t}_{\mathfrak{S}^{\infty}}
\sqrt{\beta^a(t)}
\sqrt{\beta^{b,1}(t)} 
\nonumber \\
&\quad \leq C  \left< B_{\kappa} \right>^2
\left( \norm{  i \varepsilon \nabla  p_t}_{\mathfrak{S}^{\infty}}
+ \norm{\alpha_t}_{\dot{\mathfrak{h}}_{1/2}} \right)
\left(  \beta^a(t) + \beta^{b,1}(t) \right)   .
\end{align}
In total, we obtain
\begin{align}
\abs{\lim_{\delta \rightarrow 0} \frac{d}{dt}
\varepsilon^4 \scp{\xi_{N,t}}{\mathcal{N}_{\delta} \xi_{N,t}} }
&\leq
\norm{i \varepsilon \nabla_1 q_{t,1} \Psi_{N,t}}^2
+ C \left< B_{\kappa} \right>^2
\left( 1 +
\norm{ i \varepsilon \nabla p_t}_{\mathfrak{S}^{\infty}}  
+ \norm{\alpha_t}_{\dot{\mathfrak{h}}_{1/2}}^2  \right)
\nonumber \\
&\qquad \times
\left( \beta^a + \beta^{b,1}(t) + \varepsilon^4 \right) .
\end{align}
Note that one can show by similar estimates as above that for $\delta >0$ sufficiently small and $T>0$ 
 the function $[0,T] \rightarrow \mathbb{R}$, $t \mapsto \frac{d}{dt}
\varepsilon^4 \scp{\xi_{N,t}}{\mathcal{N}_{\delta} \xi_{N,t}}$ has an integrable majorant uniform in~$\delta$.
Moreover, note that
\begin{align}
\lim_{\delta \rightarrow 0} 
\varepsilon^4 \scp{\xi_{N,t}}{\mathcal{N}_{\delta} \xi_{N,t}}
&=
\lim_{\delta \rightarrow 0}  \int_{0}^{\infty} \lambda e^{-  \delta \lambda} \scp{\xi_{N,t}}{ d E(\lambda) \xi_{N,t}} = \varepsilon^4 \scp{\xi_{N,t}}{\mathcal{N} \xi_{N,t}} = \beta^{b,1}(t)
\end{align}
because of the spectral theorem and monotone convergence theorem. Inequality \eqref{eq:estimate for beta-b-1} then follows by Duhamel's formula and the dominated convergence theorem.

\end{proof}

\subsubsection{Estimates for $\beta^{b,2}$}

\label{subsubsection:Estimates for beta-b-2}

\begin{lemma}
Let $\kappa$ satisfy Assumption~\ref{assumption:cutoff function}. Let $\Psi_{N} \in \mathcal{D} \left( (H_N^{\rm PF})^{1/2} \right) \cap \mathcal{D} \left( \mathcal{N}^{1/2} \right)$ and let $(p,\alpha) \in \textfrak{S}_{+}^{2,1} (L^2(\mathbb{R}^3)) \times \mathfrak{h}_{1} \cap \dot{\mathfrak{h}}_{-1/2}$ with $p$ being a rank-$N$ projection. Let $(p_t, \alpha_t)$ be the unique solution of \eqref{eq:Maxwell-Schroedinger equations} with initial data $(p,\alpha)$, and let $\Psi_{N,t} = e^{- i \varepsilon^{-1} H_N^{\rm PF} t} \Psi_N$.
Then, there exists a constant $C>0$ such that
\begin{align}
\beta^{b,2}(t)
&\leq  \beta^{b,2}(0)
+ \int_0^t 
\bigg[ \norm{i \varepsilon \nabla_1 q_{s,1} \Psi_{N,s}}^2
+  C   \left<  C_{\kappa}^2 \right>  \Big< \norm{\alpha_s}_{\dot{\mathfrak{h}}_{1/2}}^2
+ \norm{i \varepsilon \nabla p_s}_{\mathfrak{S}^{\infty}}^2 \Big>
\nonumber \\
&\qquad  
\times
\Big(  \beta(s) + \varepsilon^2 
+ N^{-1} \sup_{k \in \mathbb{R}^3} \Big\{ (1 + \abs{k})^{-1} \norm{q_s e^{ikx}  p_s}_{\mathfrak{S}^1} + \norm{q_s \varepsilon i \nabla p_s}_{\mathfrak{S}^1}  \Big\} 
\Big)
\bigg] \, ds  .
\end{align}
\end{lemma}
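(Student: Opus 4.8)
This is the square--root analogue of Lemma~\ref{lemma:estimate for beta-b-1}, and I would follow the same route. Writing $\xi_{N,t} = W^*(\varepsilon^{-2}\alpha_t)\Psi_{N,t}$ and $\vAc(x,t) = (\kappa * \vA_{\alpha_t})(x)$ and regularising the unbounded operator $\mathcal{N}^{1/2}$ by $\mathcal{N}_\delta^{1/2}$ with $\mathcal{N}_\delta = \mathcal{N}e^{-\delta\mathcal{N}}$, $\delta>0$, I would use the fluctuation dynamics \eqref{eq:time derivative of the fluctuation vector} with the generator $\mathcal{G}(t)$ of \eqref{eq:generator for fluctuation dynamics which cancels out the coeherent state}, together with the fact that $\mathcal{N}_\delta^{1/2}$ commutes with $H_f$ and with the Coulomb interaction, to obtain
\begin{align*}
\frac{d}{dt}\,\varepsilon^2\scp{\xi_{N,t}}{\mathcal{N}_\delta^{1/2}\xi_{N,t}}
&= i\varepsilon\,\scp{\xi_{N,t}}{\big[\mathcal{G}(t),\mathcal{N}_\delta^{1/2}\big]\xi_{N,t}} ,
\end{align*}
where the only surviving contributions stem from the magnetic cross term $2\varepsilon^2\vAhk(x_j)\cdot(i\varepsilon\nabla_j + \vAc(x_j,t))$, the diamagnetic term $\varepsilon^4\vAhk^2(x_j)$, and the linear source term $\varepsilon^{-1}(a(\abs{\cdot}\alpha_t - i\dot{\alpha}_t) + a^*(\abs{\cdot}\alpha_t - i\dot{\alpha}_t))$ in $\mathcal{G}(t)$.

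\textbf{Passing to the limit.} Using \eqref{eq:commutator of regularized number operator 3}--\eqref{eq:commutator of regularized number operator 4}, the commutators $[\mathcal{N}_\delta^{1/2},\vAhk(x_1)]$ and $[\mathcal{N}_\delta^{1/2}, a(f)+a^*(f)]$ converge as $\delta\to0$ to expressions of the form $D_+\vAp(x_1) - D_-\vAm(x_1)$ (resp.\ $-D_+ a(f) + D_- a^*(f)$), where $D_\pm$ are bounded, self-adjoint, commute with $\mathcal{N}$, satisfy $0\le D_\pm\le 1$, and --- crucially --- obey $\norm{D_+ a(f)\Psi} \le C\norm{f}_{\mathfrak{h}}\norm{\Psi}$ and $\norm{D_- a^*(f)\Psi} \le C\norm{f}_{\mathfrak{h}}\norm{\Psi}$. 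In contrast to the unbounded operators $\vAp,\vAm$ appearing in the $\beta^{b,1}$ computation, the combinations $D_\pm\vAp$, $D_\pm\vAm$ are thus bounded with $\norm{\abs{\cdot}^{-1/2}\mathcal{F}[\kappa]}_{L^2}$--type operator norm; this is precisely what makes the weaker prefactor $\varepsilon^2$ (rather than $\varepsilon^4$) admissible. One is then left with a sum of terms of the schematic form $\varepsilon^2\Im\scp{\xi_{N,t}}{(D_-\vAm(x_1)-D_+\vAp(x_1))\cdot(i\varepsilon\nabla_1 + \varepsilon^2\vAhk(x_1)+\vAc(x_1,t))\xi_{N,t}}$ plus the linear source contribution.

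\textbf{Estimating the pieces.} I would insert $\id = p_{t,1}+q_{t,1}$ on both sides of each scalar product. The $q$--$q$ contribution is absorbed --- after a Young inequality, as for \eqref{eq:time derivative beta-b A4} --- into $\norm{i\varepsilon\nabla_1 q_{t,1}\Psi_{N,t}}^2$ and a $C\left<C_\kappa^2\right>\left<\norm{\alpha_s}_{\dot{\mathfrak{h}}_{1/2}}^2 + \norm{i\varepsilon\nabla p_s}_{\mathfrak{S}^\infty}^2\right>\beta(s)$ term by means of \eqref{eq:Bounds for vector potential} and \eqref{eq:estimate for A L-infty to h-1-2}. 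The $q$--$p$ and $p$--$q$ pieces, and the part of the $p$--$p$ piece that survives the combination with the linear source term (carried out via the Maxwell equation in \eqref{eq:Maxwell-Schroedinger equations} and the representation \eqref{eq:Fourier transform of the vector current} of $\mathcal{F}[\vJ_{p_t,\alpha_t}]$ through $p_t$), are estimated with the antisymmetry bounds \eqref{eq:estimate vector potential negative part with p and q within scalar product}, \eqref{eq:estimate vector potential with p and q within scalar product} and the $p$--$p$ estimates \eqref{eq:estimates of the p-p term 1}, \eqref{eq:estimates of the p-p term 2}, all applied with $O = i\varepsilon\nabla + \vAc(\cdot,s)$ and with the bounded weights $D_\pm$ absorbed; the resulting factor $\sup_k\{(1+\abs{k})^{-1}\norm{q_s e^{ikx}(i\varepsilon\nabla+\vAc(\cdot,s))p_s}_{\mathfrak{S}^1}\}$ is then reduced by \eqref{eq:semiclassical structure for exponential times magnetic gradient} (with $\alpha\to\alpha_s$) and $\norm{(i\varepsilon\nabla+\vAc(\cdot,s))p_s}_{\mathfrak{S}^\infty}\le\norm{i\varepsilon\nabla p_s}_{\mathfrak{S}^\infty}+\norm{\vAc(\cdot,s)}_{\mathfrak{S}^\infty}$ to $\sup_k\{(1+\abs{k})^{-1}\norm{q_s e^{ikx}p_s}_{\mathfrak{S}^1}\} + \norm{q_s i\varepsilon\nabla p_s}_{\mathfrak{S}^1}$ times the stated prefactors; this is where the term $N^{-1}\sup_k\{\dots\}$ in the conclusion originates. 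The diamagnetic term $\varepsilon^4\vAhk^2(x_1)$ is controlled by $CB_\kappa^2(\beta(s)+\varepsilon^2)$ via \eqref{eq:Bounds for vector potential} and \eqref{eq:estimate vector potential squared with p and q within scalar product}. Collecting everything, checking --- as in the proof of Lemma~\ref{lemma:estimate for beta-b-1} --- that the integrands have a $\delta$--uniform integrable majorant on compact time intervals, and passing to the limit via Duhamel's formula together with dominated convergence would yield the claim.

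\textbf{Main obstacle.} The only genuinely new difficulty, compared with the $\beta^{b,1}$ estimate, is the handling of the regularised square root: one must verify that $D_\pm a(f)$, $D_\pm a^*(f)$ are uniformly bounded and that the exact cancellation between the $p$--$p$ part of the magnetic cross term and the linear source term --- the delicate point behind the bound \eqref{eq:time derivative beta-b A5}+\eqref{eq:time derivative beta-b B} --- is spoiled by the $D_\pm$ weights, so that the residual must be estimated directly, at the cost of the extra factor $N^{-1}\sup_k\{(1+\abs{k})^{-1}\norm{q_s e^{ikx}p_s}_{\mathfrak{S}^1} + \norm{q_s i\varepsilon\nabla p_s}_{\mathfrak{S}^1}\}$. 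This factor is harmless for the subsequent Gr\"onwall argument, being of order $\varepsilon$ once the semiclassical structure has been propagated. Everything else repeats, with the obvious modifications, the manipulations in the proof of Lemma~\ref{lemma:estimate for beta-b-1}.
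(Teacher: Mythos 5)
Your proposal is correct and follows essentially the same route as the paper: regularize $\mathcal{N}^{1/2}$ via $\mathcal{N}_\delta$, pass to the limit with \eqref{eq:commutator of regularized number operator 3}--\eqref{eq:commutator of regularized number operator 4}, exploit $\mathcal{N}^{1/2}-(\mathcal{N}\pm1)^{1/2}\leq\frac{1}{2}\mathcal{N}^{-1/2}$ to make the weighted creation/annihilation operators bounded, insert $p+q$, and reduce the resulting semiclassical trace norms with \eqref{eq:semiclassical structure for exponential times magnetic gradient}. Only your "main obstacle" diagnosis is slightly off: the cancellation between the $p$--$p$ magnetic cross term and the linear source term is not spoiled (it survives with $\Upsilon_{N,t}=(\mathcal{N}^{1/2}-(\mathcal{N}-1)^{1/2})\xi_{N,t}$ and is bounded by $\beta^a$ via \eqref{eq:estimates of the p-p term 1}); the $N^{-1}\sup_k\{\cdots\}$ contribution instead originates from the $q$--$p$ and $p$--$q$ cross terms, which lack the spare $\varepsilon^2$ available in the $\beta^{b,1}$ computation.
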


\begin{proof}
Let $\xi_{N,t}$, $\vAc(x,t)$, and $\mathcal{N}_{\delta}$ be defined as in the proof of Lemma~\ref{lemma:estimate for beta-b-1}.
By means of \eqref{eq:time derivative of the fluctuation vector} and the symmetry of the many-body wave function we compute 
\begin{align}
\frac{d}{dt}
\varepsilon^2 \scp{\xi_{N,t}}{\mathcal{N}_{\delta}^{1/2} \xi_{N,t}}
&= i \varepsilon \scp{\xi_{N,t}}{\left[ \mathcal{G}(t), \mathcal{N}_{\delta}^{1/2} \right] \xi_{N,t}}
\nonumber \\
&=  2 \Im 
\scp{\xi_{N,t}}{ \left[ \mathcal{N}_{\delta}^{1/2} ,  \vAhk(x_1)  \right] \cdot \left(  i \varepsilon \nabla_1 + \varepsilon^2 \vAhk(x_1) + \vAk(x_1,t) \right)  \xi_{N,t}} 
\nonumber \\
&\quad + \Im
\scp{\xi_{N,t}}{ \left[  \mathcal{N}_{\delta}^{1/2} ,
\big( a(\abs{\cdot} \alpha_t - i \dot{\alpha}_t) + a^*(\abs{\cdot} \alpha_t - i \dot{\alpha}_t) \big)  \right]  \xi_{N,t}} .
\end{align}
Using \eqref{eq:commutator of regularized number operator 3},  \eqref{eq:commutator of regularized number operator 4} and $\left[ \big( \left( \mathcal{N} + 1 \right)^{1/2} - \mathcal{N}^{1/2} \big) \vAp(x_1) \cdot , \left( i \varepsilon \nabla_1 + \vAc(x_1,t) \right) \right] = 0$ we get
\begin{subequations}
\begin{align}
&\lim_{\delta \rightarrow 0} \frac{d}{dt}
\varepsilon^2 \scp{\xi_{N,t}}{\mathcal{N}_{\delta}^{1/2} \xi_{N,t}}
\nonumber \\
\label{eq:time derivative beta-b-2 A}
&\quad = - 2 \varepsilon^2 \Im
\scp{\xi_{N,t}}{\big( \left( \mathcal{N} + 1 \right)^{1/2} - \mathcal{N}^{1/2} \big)  \vAp(x_1) \cdot  \vAhk(x_1)  \xi_{N,t}} 
\\
\label{eq:time derivative beta-b-2 B}
&\qquad + 2 \varepsilon^2
\Im \scp{\xi_{N,t}}{\big( \mathcal{N}^{1/2} - \left( \mathcal{N} - 1 \right)^{1/2} \big)  \vAm(x_1) \cdot  \vAhk(x_1)  \xi_{N,t}} 
\\
\label{eq:time derivative beta-b-2 C}
&\qquad + 4 \Im  
\scp{\xi_{N,t}}{\big( \mathcal{N}^{1/2} -  \left( \mathcal{N} - 1 \right)^{1/2} \big)  \vAm(x_1) \cdot \left(  i \varepsilon \nabla_1 + \vAk(x_1,t) \right)  \xi_{N,t}}  
\\
\label{eq:time derivative beta-b-2 D}
&\qquad  +  2 \Im   
\scp{\xi_{N,t}}{
\big( \mathcal{N}^{1/2} - \left( \mathcal{N} - 1 \right)^{1/2} \big) a^*(\abs{\cdot} \alpha_t - i \dot{\alpha}_t) \xi_{N,t}} . 
\end{align}
\end{subequations}
In the following we will repeatedly use $\mathcal{N}^{1/2} -  \left( \mathcal{N} - 1 \right)^{1/2} \leq \frac{1}{2} \, \mathcal{N}^{-1/2}$ which holds due to the spectral theorem and $(x+1)^{1/2} - x^{1/2} \leq  \frac{1}{2} x^{-1/2}$ with $x \geq 0$.

\noindent
\textbf{The terms \eqref{eq:time derivative beta-b-2 A} and \eqref{eq:time derivative beta-b-2 B}:}
By \eqref{eq:shifting property number operator} and 
\eqref{eq:Bounds for vector potential} we obtain
\begin{align}
\abs{\eqref{eq:time derivative beta-b-2 A}}
&\leq 
2 \varepsilon^2
\abs{\scp{\left( \mathcal{N}+ 3 \right)^{1/4}\xi_{N,t}}{\big( \left( \mathcal{N} + 1 \right)^{1/2} - \mathcal{N}^{1/2} \big)  \vAp(x_1) \cdot  \vAp(x_1) \left( \mathcal{N} + 1 \right)^{-1/4} \xi_{N,t}} }
\nonumber \\
&\quad +
2 \varepsilon^2
\abs{\scp{\left(\mathcal{N}+ 1 \right)^{1/4} \xi_{N,t}}{\big( \left( \mathcal{N} + 1 \right)^{1/2} - \mathcal{N}^{1/2} \big)  \vAp(x_1) \cdot  \vAm(x_1) \left( \mathcal{N} + 1 \right)^{-1/4} \xi_{N,t}} }
\nonumber \\
&\leq C \varepsilon^2 
\big\| \abs{\cdot}^{-1/2} \mathcal{F}[\kappa] \big\|^2_{L^2}
\norm{\left( \mathcal{N} + 1 \right)^{3/4} \big( \left( \mathcal{N} + 1 \right)^{1/2} - \mathcal{N}^{1/2} \big) \xi_{N,t} }
\norm{\left( \mathcal{N} + 1 \right)^{1/4} \xi_{N,t}}
\nonumber \\
&\leq C \varepsilon^2 
\big\| \abs{\cdot}^{-1/2} \mathcal{F}[\kappa] \big\|^2_{L^2}
\norm{\left( \mathcal{N} + 1 \right)^{1/4} \xi_{N,t} }^2
\nonumber \\
&\leq C 
\big\| \abs{\cdot}^{-1/2} \mathcal{F}[\kappa] \big\|^2_{L^2}  \left( \beta^{b,2}(t) + \varepsilon^2 \right)
\end{align}
and
\begin{align}
\abs{\eqref{eq:time derivative beta-b-2 B}}
&\leq 
2 \varepsilon^2
\abs{\scp{\left( \mathcal{N}+ 1 \right)^{1/4}\xi_{N,t}}{\big( \mathcal{N}^{1/2} - \left( \mathcal{N} - 1 \right)^{1/2} \big)   \vAm(x_1) \cdot  \vAp(x_1) \left( \mathcal{N} + 1 \right)^{-1/4} \xi_{N,t}} }
\nonumber \\
&\quad +
2 \varepsilon^2
\abs{\scp{\left(\mathcal{N}+ 1 \right)^{1/4} \xi_{N,t}}{\big( \mathcal{N}^{1/2} - \left( \mathcal{N} - 1 \right)^{1/2} \big)   \vAm(x_1) \cdot  \vAm(x_1) \left( \mathcal{N} + 3 \right)^{-1/4} \xi_{N,t}} }
\nonumber \\
&\leq C \varepsilon^2 
\big\| \abs{\cdot}^{-1/2} \mathcal{F}[\kappa] \big\|^2_{L^2}
\norm{\left( \mathcal{N} + 1 \right)^{3/4} \big( \mathcal{N}^{1/2} - \left( \mathcal{N} - 1 \right)^{1/2} \big) \id_{\mathcal{N} \geq 1}  \xi_{N,t} }
\norm{\left( \mathcal{N} + 1 \right)^{1/4} \xi_{N,t}} 
\nonumber \\
&\leq C \varepsilon^2 
\big\| \abs{\cdot}^{-1/2} \mathcal{F}[\kappa] \big\|^2_{L^2}
\norm{ \xi_{N,t} }
\norm{\left( \mathcal{N} + 1 \right)^{1/4} \xi_{N,t}} 
\nonumber \\
&\quad + C \varepsilon^2 
\big\| \abs{\cdot}^{-1/2} \mathcal{F}[\kappa] \big\|^2_{L^2}
\norm{\left( \mathcal{N} + 1 \right)^{3/4} \left( \mathcal{N} - 1 \right)^{-1/2} \id_{\mathcal{N} \geq 2}  \xi_{N,t} }
\norm{\left( \mathcal{N} + 1 \right)^{1/4} \xi_{N,t}} 
\nonumber \\
&\leq C \varepsilon^2 
\big\| \abs{\cdot}^{-1/2} \mathcal{F}[\kappa] \big\|^2_{L^2}
\norm{\left( \mathcal{N} + 1 \right)^{1/4} \xi_{N,t}}^2
\nonumber \\ 
&\leq 
C  \big\| \abs{\cdot}^{-1/2} \mathcal{F}[\kappa] \big\|^2_{L^2}
\left( \beta^{b,2}(t) + \varepsilon^2 \right) .
\end{align}

\noindent
\textbf{The terms \eqref{eq:time derivative beta-b-2 C} and \eqref{eq:time derivative beta-b-2 D}:}
Inserting $\id_{\mathcal{H}^{(N)}} = p_{t,1} + q_{t,1}$ we write \eqref{eq:time derivative beta-b-2 C} as 
\begin{subequations}
\begin{align}
\label{eq:time derivative beta-b-2 C 1}
\eqref{eq:time derivative beta-b-2 C}
&=  4 \Im  
\scp{\xi_{N,t}}{\big( \mathcal{N}^{1/2} - \big( \left( \mathcal{N} - 1 \right)^{1/2} \big) q_{t,1}  \vAm(x_1) \cdot \left(  i \varepsilon \nabla_1 + \vAk(x_1,t) \right)  p_{t,1} \xi_{N,t}}  
\\
\label{eq:time derivative beta-b-2 C 2}
&\quad +  4 \Im  
\scp{\xi_{N,t}}{\big( \mathcal{N}^{1/2} - \big( \left( \mathcal{N} - 1 \right)^{1/2} \big) p_{t,1} \vAm(x_1) \cdot \left(  i \varepsilon \nabla_1 + \vAk(x_1,t) \right) q_{t,1} \xi_{N,t}}  
\\
\label{eq:time derivative beta-b-2 C 3}
&\quad +  4 \Im  
\scp{\xi_{N,t}}{\big( \mathcal{N}^{1/2} - \big( \left( \mathcal{N} - 1 \right)^{1/2} \big) q_{t,1} \vAm(x_1) \cdot \left(  i \varepsilon \nabla_1 + \vAk(x_1,t) \right) q_{t,1} \xi_{N,t}}  
\\
\label{eq:time derivative beta-b-2 C 4}
&\quad +  4 \Im  
\scp{\xi_{N,t}}{\big( \mathcal{N}^{1/2} - \big( \left( \mathcal{N} - 1 \right)^{1/2} \big) p_{t,1} \vAm(x_1) \cdot \left(  i \varepsilon \nabla_1 + \vAk(x_1,t) \right)  p_{t,1} \xi_{N,t}}  .
\end{align}
\end{subequations}
By means of \eqref{eq:estimate vector potential negative part with p and q within scalar product} with $m=0$ and $O =  i \varepsilon \nabla + \vAk(\cdot,t) $ we get
\begin{align}
\abs{\eqref{eq:time derivative beta-b-2 C 1} }
&\leq 
C N^{-1} \sup_{k \in \mathbb{R}^3} \Big\{ (1 + \abs{k})^{-1} \norm{q e^{ikx} \left(  i \varepsilon \nabla + \vAk(\cdot,t) \right) p}_{\mathfrak{S}^1}  \Big\}
\norm{(\abs{\cdot}^{1/2} + \abs{\cdot}^{-1/2}) \mathcal{F}[\kappa] }_{L^2} 
\nonumber \\
&\qquad \times
\norm{\mathcal{N}^{1/2} \big( \mathcal{N}^{1/2} - \big( \left( \mathcal{N} - 1 \right)^{1/2} \big) \xi_{N,t}} \norm{\xi_{N,t}}
\nonumber \\
&\leq 
C N^{-1} \sup_{k \in \mathbb{R}^3} \Big\{ (1 + \abs{k})^{-1} \norm{q e^{ikx} \left(  i \varepsilon \nabla + \vAk(\cdot,t) \right) p}_{\mathfrak{S}^1}  \Big\}
\norm{(\abs{\cdot}^{1/2} + \abs{\cdot}^{-1/2}) \mathcal{F}[\kappa] }_{L^2} .
\end{align}
Similarly, 
\begin{align}
\abs{\eqref{eq:time derivative beta-b-2 C 2} }
&\leq 
4  \abs{\scp{ \xi_{N,t}}{ q_{t,1} \vAp(x_1) \cdot \left(  i \varepsilon \nabla_1 + \vAk(x_1,t) \right) p_{t,1} \big( \mathcal{N}^{1/2} - \big( \left( \mathcal{N} - 1 \right)^{1/2} \big) \xi_{N,t}}}
\nonumber \\
&\leq 
C N^{-1} \sup_{k \in \mathbb{R}^3} \Big\{ (1 + \abs{k})^{-1} \norm{q_t e^{ikx} \left(  i \varepsilon \nabla + \vAk(\cdot,t) \right) p_t}_{\mathfrak{S}^1}  \Big\}
\norm{(\abs{\cdot}^{1/2} + \abs{\cdot}^{-1/2}) \mathcal{F}[\kappa] }_{L^2} 
\nonumber \\
&\qquad \times
\norm{\mathcal{N}^{1/2} \big( \mathcal{N}^{1/2} - \big( \left( \mathcal{N} - 1 \right)^{1/2} \big) \xi_{N,t}} \norm{\xi_{N,t}}
\nonumber \\
&\leq 
C N^{-1} \sup_{k \in \mathbb{R}^3} \Big\{ (1 + \abs{k})^{-1} \norm{q_t e^{ikx} \left(  i \varepsilon \nabla + \vAk(\cdot,t) \right) p_t}_{\mathfrak{S}^1}  \Big\}
\norm{(\abs{\cdot}^{1/2} + \abs{\cdot}^{-1/2}) \mathcal{F}[\kappa] }_{L^2}  
\end{align}
by means of 
$\left[ \vAm(x_1) \cdot , \left(  i \varepsilon \nabla_1 + \vAk(x_1,t) \right) \right] = 0$
and \eqref{eq:estimate vector potential positive part with p and q within scalar product}.
Using the Cauchy--Schwarz inequality, \eqref{eq:Bounds for vector potential} and \eqref{eq:estimate for A L-infty to h-1-2}  we get
\begin{align}
\abs{\eqref{eq:time derivative beta-b-2 C 3} }
&\leq 
4  \norm{\vAp(x_1) \big( \mathcal{N}^{1/2} - \big( \left( \mathcal{N} - 1 \right)^{1/2} \big) q_{t,1} \xi_{N,t}}
\norm{\left(  i \varepsilon \nabla_1 + \vAk(x_1,t) \right) q_{t,1} \xi_{N,t}}  
\nonumber \\
&\leq 
4 \big\|\abs{\cdot}^{-1/2} \mathcal{F}[\kappa] \big\|_{L^2(\mathbb{R}^3)}
\norm{\mathcal{N}^{1/2} \big( \mathcal{N}^{1/2} - \big( \left( \mathcal{N} - 1 \right)^{1/2} \big) q_{t,1} \xi_{N,t}}
\nonumber \\
&\quad \times
\norm{\left(  i \varepsilon \nabla_1 + \vAk(x_1,t) \right) q_{t,1} \xi_{N,t}}  
\nonumber \\
&\leq 
4 \big\|\abs{\cdot}^{-1/2} \mathcal{F}[\kappa] \big\|_{L^2(\mathbb{R}^3)}
\norm{q_{t,1} \xi_{N,t}}
\Big[ \norm{\vAk(\cdot,t)}_{L^{\infty}} \norm{q_{t,1} \xi_{N,t}}
+ \norm{i \varepsilon \nabla_1  q_{t,1} \xi_{N,t}}
\Big]
\nonumber \\
&\leq C \left( 1 + \big\|\abs{\cdot}^{-1/2} \mathcal{F}[\kappa] \big\|^2_{L^2(\mathbb{R}^3)} 
 + \norm{\vAk(\cdot,t)}_{L^{\infty}}^2 \right) \norm{q_{t,1} \Psi_{N,t}}^2
+ \norm{i \varepsilon \nabla_1 q_{t,1} \Psi_{N,t}}^2 
\nonumber \\
&\leq C \left( 1 + \big\| ( \abs{\cdot}^{-1/2} + \abs{\cdot}^{-1} ) \mathcal{F}[\kappa] \big\|^2_{L^2(\mathbb{R}^3)} \right)  \beta^a(t)
+ \norm{i \varepsilon \nabla_1 q_{t,1} \Psi_{N,t}}^2 .
\end{align}
Next, we estimate the remaining terms  \eqref{eq:time derivative beta-b-2 C 4} and \eqref{eq:time derivative beta-b-2 D}.
Using \eqref{eq:Maxwell-Schroedinger equations}, \eqref{eq:definition of G}, \eqref{eq:Fourier transform of the vector current}, $\left[ \boldsymbol{G}_{\cdot}(k,\lambda), i \varepsilon \nabla \right] = 0$ and the shorthand notations
$\Upsilon_{N,t} = \big( \mathcal{N}^{1/2} -  \left( \mathcal{N} - 1 \right)^{1/2} \big)\xi_{N,t}$ 
and $O = i \varepsilon \nabla  + \vAk(\cdot,t)$, we obtain
\begin{align}
\eqref{eq:time derivative beta-b-2 D}
&= 2  \, \Im \, \sum_{\lambda=1,2} \int_{\mathbb{R}^3}  
\sqrt{\frac{4 \pi^3}{\abs{k}}} \mathcal{F}[\kappa](k) \vep_{\lambda}(k) \cdot \mathcal{F}[\vJ_{p_t, \alpha_t}](k)
\scp{ a_{k,\lambda}  \Upsilon_{N,t}}{  \xi_{N,t}}
\, dk
\nonumber \\
&= 
- 4  N^{-1} \, \Im \, \sum_{\lambda=1,2} \int_{\mathbb{R}^3} 
\tr \big( \boldsymbol{G}_{\cdot}(k, \lambda) \cdot \left(  i \varepsilon \nabla  + \vAk(\cdot,t) \right) p_t \big) 
\scp{ a_{k,\lambda}  \Upsilon_{N,t}}{  \xi_{N,t}}  \, dk  
\end{align}
and 
\begin{align}
&\eqref{eq:time derivative beta-b-2 C 4} + \eqref{eq:time derivative beta-b-2 D}
\nonumber \\
&\quad = 
4 \Im  \sum_{\lambda = 1,2} \int_{\mathbb{R}^3}
\scp{a_{k,\lambda} \Upsilon_{N,t}}{p_{t,1} \boldsymbol{G}_{x_1}(k,\lambda) \cdot O_1  p_{t,1} \xi_{N,t}} 
\, dk
\nonumber \\
&\qquad 
- 4  N^{-1} \, \Im \, \sum_{\lambda=1,2} \int_{\mathbb{R}^3} 
\tr \big( \boldsymbol{G}_{\cdot}(k, \lambda) \cdot O p_t \big) 
\scp{ a_{k,\lambda}  \Upsilon_{N,t}}{  \xi_{N,t}}  \, dk 
\nonumber \\
&\quad = 
4  N^{-1} \, \Im \, \sum_{\lambda=1,2} \int_{\mathbb{R}^3}
\scp{ a_{k,\lambda}  \Upsilon_{N,t}}{\Big( \sum_{j=1}^N p_{t,j} \boldsymbol{G}_{x_j}(k,\lambda) \cdot O_j  p_{t,j}
-  \tr \big( \boldsymbol{G}_{\cdot}(k, \lambda) \cdot O p_t \big) \Big)  \xi_{N,t}}  \, dk .
\end{align}
If we do the splitting $\boldsymbol{G}_{\cdot}(k,\lambda) = \Re[\boldsymbol{G}_{\cdot}](k, \lambda) + i \Im[\boldsymbol{G}_{\cdot}](k, \lambda)$ and apply \eqref{eq:estimates of the p-p term 1} to the terms containing the real and imaginary part of $\boldsymbol{G}_{\cdot}$ we get
\begin{align}
\abs{\eqref{eq:time derivative beta-b-2 C 4} + \eqref{eq:time derivative beta-b-2 D} }
&\leq C \sum_{\lambda = 1, 2} \int_{\mathbb{R}^3}
\left( \norm{\Re[\boldsymbol{G}_{\cdot}](k,\lambda)}_{\mathfrak{S}^{\infty}}
+ \norm{\Im[\boldsymbol{G}_{\cdot}](k,\lambda)}_{\mathfrak{S}^{\infty}} \right)
\norm{i \varepsilon \nabla  + \vAk(\cdot,t) p_t}_{\mathfrak{S}^{\infty}}
\nonumber \\
&\qquad \qquad \times 
\norm{q_{t,1} a_k \Upsilon_{N,t}} \norm{q_{t,1} \xi_{N,t}} \, dk
\nonumber \\
&\leq C \sum_{\lambda = 1, 2} \int_{\mathbb{R}^3}
\abs{k}^{-1/2} \abs{\mathcal{F}[\kappa](k)}
\norm{i \varepsilon \nabla  + \vAk(\cdot,t) p_t}_{\mathfrak{S}^{\infty}}
\nonumber \\
&\qquad \qquad \times 
\norm{q_{t,1} \mathcal{N}^{1/2} \big( \mathcal{N}^{1/2} - \big( \left( \mathcal{N} - 1 \right)^{1/2} \big)\xi_{N,t}} \norm{q_{t,1} \xi_{N,t}}
\, dk
\nonumber \\
&\leq C 
 \big\|\abs{\cdot}^{-1/2} \mathcal{F}[\kappa] \big\|_{L^2(\mathbb{R}^3)}
\left( \norm{i \varepsilon \nabla p_t}_{\mathfrak{S}^{\infty}}
+ \norm{\vAk(\cdot,t) }_{\mathfrak{S}^{\infty}}
\right)
\norm{q_{t,1} \xi_{N,t}}^2 .
\end{align}
Due to \eqref{eq:estimate for A L-infty to h-1-2} we have
\begin{align}
\abs{\eqref{eq:time derivative beta-b-2 C 4} + \eqref{eq:time derivative beta-b-2 D} }
&\leq C  \left(  \big\| ( \abs{\cdot}^{-1/2} + \abs{\cdot}^{-1} ) \mathcal{F}[\kappa] \big\|^2_{L^2(\mathbb{R}^3)} 
\left( 1 + \norm{\alpha_t}_{\dot{\mathfrak{h}}_{1/2}}^2 \right)
+ \norm{i \varepsilon \nabla p_t}_{\mathfrak{S}^{\infty}}^2 \right)  \beta^a(t) .
\end{align}
Collecting the estimates and \eqref{eq:semiclassical structure for exponential times magnetic gradient} then leads to
\begin{align}
&\abs{\eqref{eq:time derivative beta-b-2 C} + \eqref{eq:time derivative beta-b-2 D}}
\nonumber \\
&\quad \leq \abs{\eqref{eq:time derivative beta-b-2 C 1}}
+ \abs{\eqref{eq:time derivative beta-b-2 C 2}}
+ \abs{\eqref{eq:time derivative beta-b-2 C 3}} 
+ \abs{\eqref{eq:time derivative beta-b-2 C 4} + \eqref{eq:time derivative beta-b-2 D} }
\nonumber \\
&\quad \leq 
 \norm{i \varepsilon \nabla_1 q_{t,1} \Psi_{N,t}}^2
+ C   \left( 1 +  \big\| ( \abs{\cdot}^{-1/2} + \abs{\cdot}^{-1} ) \mathcal{F}[\kappa] \big\|^2_{L^2(\mathbb{R}^3)} 
\left< \norm{\alpha_t}_{\dot{\mathfrak{h}}_{1/2}}^2 \right>
+ \norm{i \varepsilon \nabla p_t}_{\mathfrak{S}^{\infty}}^2 
\right)  \beta^a(t)
\nonumber \\
&\qquad +
C N^{-1} \sup_{k \in \mathbb{R}^3} \Big\{ (1 + \abs{k})^{-1} \norm{q_t e^{ikx} \left(  i \varepsilon \nabla + \vAk(\cdot,t) \right) p_t}_{\mathfrak{S}^1}  \Big\}
\norm{(\abs{\cdot}^{1/2} + \abs{\cdot}^{-1/2}) \mathcal{F}[\kappa] }_{L^2} 
\nonumber \\
&\quad \leq 
 \norm{i \varepsilon \nabla_1 q_{t,1} \Psi_{N,t}}^2
+ C   \left( 1 +  \big\| ( \abs{\cdot}^{-1/2} + \abs{\cdot}^{-1} ) \mathcal{F}[\kappa] \big\|^2_{L^2(\mathbb{R}^3)}  \left< \norm{\alpha_t}_{\dot{\mathfrak{h}}_{1/2}}^2 \right>
+ \norm{i \varepsilon \nabla p_t}_{\mathfrak{S}^{\infty}}^2 
\right)  
\nonumber \\
&\qquad \qquad \times 
\left(  \beta^a(t) 
+ N^{-1} \sup_{k \in \mathbb{R}^3} \Big\{ (1 + \abs{k})^{-1} \norm{q_t e^{ikx}  p_t}_{\mathfrak{S}^1} + \norm{q_t \varepsilon i \nabla p_t}_{\mathfrak{S}^1}  \Big\} 
\right)
\end{align}
and
\begin{align}
&\abs{\lim_{\delta \rightarrow 0} \frac{d}{dt}
\varepsilon^2 \scp{\xi_{N,t}}{\mathcal{N}_{\delta}^{1/2} \xi_{N,t}}}
\nonumber \\
&\quad \leq \abs{\eqref{eq:time derivative beta-b-2 A}}
+ \abs{\eqref{eq:time derivative beta-b-2 B}}
+ \abs{\eqref{eq:time derivative beta-b-2 C}}
+ \abs{\eqref{eq:time derivative beta-b-2 D}}
\nonumber \\
&\quad \leq 
 \norm{i \varepsilon \nabla_1 q_{t,1} \Psi_{N,t}}^2
+ C   \left( 1 +  \big\| ( \abs{\cdot}^{-1/2} + \abs{\cdot}^{-1} ) \mathcal{F}[\kappa] \big\|^2_{L^2(\mathbb{R}^3)}  \left( 1 + \norm{\alpha_t}_{\dot{\mathfrak{h}}_{1/2}}^2 \right)
+ \norm{i \varepsilon \nabla p_t}_{\mathfrak{S}^{\infty}}^2 
\right)  
\nonumber \\
&\qquad \qquad \times 
\left(  \beta^a(t) + \beta^{b,2}(t) + \varepsilon^2
+ N^{-1} \sup_{k \in \mathbb{R}^3} \Big\{ (1 + \abs{k})^{-1} \norm{q_t e^{ikx}  p_t}_{\mathfrak{S}^1} + \norm{q_t \varepsilon i \nabla p_t}_{\mathfrak{S}^1}  \Big\} 
\right) .
\end{align}
The Lemma then follows by Duhamel's formula, 
$\lim_{\delta \rightarrow 0} 
\varepsilon^2 \scp{\xi_{N,t}}{\mathcal{N}_{\delta} \xi_{N,t}} = \beta^{b,2}(t)$, the fact that for $\delta >0$ sufficiently small and $T>0$ 
the function $[0,T] \rightarrow \mathbb{R}$, $t \mapsto \frac{d}{dt}
\varepsilon^2 \scp{\xi_{N,t}}{\mathcal{N}^{1/2}_{\delta} \xi_{N,t}}$ has an integrable majorant uniform in $\delta$ and the dominated convergence theorem.
\end{proof}

\section{Estimates concerning the Vlasov--Maxwell equations}

\label{section:estimates concerning the Vlasov--Maxwell equations}

In the following, we will prove Remark~\ref{remark:remark on the semiclassical structure with Sobolev trace norm}, Lemma~\ref{lemma:Vlasov-Maxwell distance between initial data and its regularization}, Lemma ~\ref{lemma:propagation estimates Vlasov-Maxwell}, and Proposition~\ref{proposition:comparison between Maxwell--Schroedinger and Vlasov--Maxwell}. 
Recall that within this section, we refrain from tracking the dependence on the density and use the letter $C$ to denote are generic constant that depends on the choice of $\kappa$. 
In the proofs, we will rely on the following estimates.
\begin{lemma}
\label{lemma:derivation Vlasov--Maxwell auxiliary estimates}
Let $j, k \in \mathbb{N}$, $l \in \mathbb{Z}$, $\sigma \in (0,1]$, $L \in (0, \infty)$, and $\Lambda \in [1, \infty)$. Moreover let, $\mathcal{E}^{\rm{VM}}$, $\mathscr{G}_{\sigma}$, and $\eta_L$ be defined as in  
\eqref{eq:Vlasov-Maxwell energy definition} and \eqref{eq:definition regularizing Gaussian}, and
\eqref{eq:definition smooth cutoff function}. Then,
\begin{subequations}
\begin{align}
\label{eq:estimate for regularized Wigner 1}
\sup \big\{
\norm{\eta_L \mathscr{G}_{\sigma} * f}_{W_k^{j,2}(\mathbb{R}^6)} ,
\norm{\mathscr{G}_{\sigma} * f}_{W_k^{j,2}(\mathbb{R}^6)} 
\big\}
&\leq C \sigma^{-j}  \norm{f}_{W_k^{0,2}(\mathbb{R}^6)} ,
\\
\label{eq:estimate for regularized Wigner 2}
\sup
\big\{
\norm{\eta_L \mathscr{G}_{\sigma} * f}_{W_2^{0,1}(\mathbb{R}^6)} ,
\norm{\mathscr{G}_{\sigma} * f}_{W_2^{0,1}(\mathbb{R}^6)}
\big\}
&\leq C \norm{f}_{W_2^{0,1}(\mathbb{R}^6)}
\leq C \norm{f}_{W_{6}^{0,2}(\mathbb{R}^6)}  ,
\\
\label{eq:estimate for alpha with cutoff}
\norm{\id_{\abs{\cdot} \leq \Lambda} \alpha}_{\dot{\mathfrak{h}}_l} &\leq 
\norm{ \alpha}_{\dot{\mathfrak{h}}_l} ,
\\
\label{eq:estimate for VM-energy with regularized data}
\abs{\mathcal{E}^{\rm{VM}}[\eta_L \mathscr{G}_{\sigma} * f , \id_{\abs{\cdot} \leq \Lambda} \alpha]} &\leq 
C \left( 1 + \norm{\alpha}_{\dot{\mathfrak{h}}_{1/2}}^2 \right)
\left( 1 + \norm{f}^2_{W_{6}^{0,2}(\mathbb{R}^6)} \right) 
\end{align}
\end{subequations}
for $f$ and $\alpha$ chosen appropriately such that the norms on the right-hand side are finite. Moreover, for $R >0$ and $f \in W_k^{0,2}(\mathbb{R}^6)$ such that $\supp f \subseteq \{ (x,v) \in \mathbb{R}^6, \abs{v} \leq R \}$ we have
\begin{align}
\label{eq:estimate Husimi with cutoff of the high velocities}
\norm{( 1 - \eta_{R}) \mathscr{G}_{\sigma} * f}_{W_k^{0,2}(\mathbb{R}^6)}
&\leq C e^{- \frac{1}{2 \sigma^2}} \norm{f}_{W_k^{0,2}(\mathbb{R}^6)} .
\end{align}
\end{lemma}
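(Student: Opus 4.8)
The plan is to reduce all five estimates to Young's inequality for convolutions together with the density bounds of the previous lemma. The key point is that $\mathscr{G}_{\sigma}$ is an $L^{1}$-normalized Gaussian obeying the scaling $D_z^{\beta}\mathscr{G}_{\sigma} = \sigma^{-|\beta|}\,\sigma^{-6}(D^{\beta}\mathscr{G})(\cdot/\sigma)$ with $\mathscr{G}=\mathscr{G}_{1}$, so that, combining this with Peetre's inequality $\langle z\rangle^{k}\leq C_{k}\langle z-z'\rangle^{|k|}\langle z'\rangle^{k}$ and the restriction $\sigma\leq 1$, one gets $\big\|\langle\cdot\rangle^{|k|}D_z^{\beta}\mathscr{G}_{\sigma}\big\|_{L^{1}(\mathbb{R}^{6})}\leq C\sigma^{-|\beta|}$ uniformly in $\sigma\in(0,1]$. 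To prove \eqref{eq:estimate for regularized Wigner 1} I would write $D_z^{\beta}(\mathscr{G}_{\sigma}*f)=(D_z^{\beta}\mathscr{G}_{\sigma})*f$, split the weight by Peetre's inequality, and apply Young's inequality in the form $L^{1}*L^{2}\to L^{2}$ to obtain $\big\|\langle\cdot\rangle^{k}D_z^{\beta}(\mathscr{G}_{\sigma}*f)\big\|_{L^{2}}\leq C\sigma^{-|\beta|}\big\|\langle\cdot\rangle^{k}f\big\|_{L^{2}}$; summing over $|\beta|\leq j$ gives the bound for $\mathscr{G}_{\sigma}*f$, and the bound for $\eta_{L}\mathscr{G}_{\sigma}*f$ follows from the Leibniz rule and $\norm{\eta_{L}}_{W_{0}^{j,\infty}(\mathbb{R}^{3})}\leq C$ from \eqref{eq:properties of the smooth cutoff function 1}.

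Estimate \eqref{eq:estimate for regularized Wigner 2} would be obtained the same way with $j=0$, $k=2$, now using Young in $L^{1}*L^{1}\to L^{1}$ and $0\leq\eta_{L}\leq 1$; the embedding $\norm{f}_{W_{2}^{0,1}(\mathbb{R}^{6})}\leq C\norm{f}_{W_{6}^{0,2}(\mathbb{R}^{6})}$ follows from Cauchy--Schwarz, writing $\langle z\rangle^{2}|f|=\langle z\rangle^{-4}\cdot\langle z\rangle^{6}|f|$ and using $\langle\cdot\rangle^{-4}\in L^{2}(\mathbb{R}^{6})$. Estimate \eqref{eq:estimate for alpha with cutoff} is immediate by restricting the domain of integration. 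For \eqref{eq:estimate for VM-energy with regularized data} I would decompose $\mathcal{E}^{\rm{VM}}$ as in \eqref{eq:Vlasov-Maxwell energy definition} and bound the three terms for $g\coloneqq\eta_{L}\mathscr{G}_{\sigma}*f$ and $\beta\coloneqq\id_{\abs{\cdot}\leq\Lambda}\alpha$: for the kinetic term, $\big|\int g\,(v-\kappa*\vA_{\beta})^{2}\big|\leq 2\big\|\langle\cdot\rangle^{2}g\big\|_{L^{1}}+2\norm{\kappa*\vA_{\beta}}_{L^{\infty}}^{2}\norm{g}_{L^{1}}$, controlled via \eqref{eq:estimate for regularized Wigner 2}, \eqref{eq:estimate for A L-infty to h-1-2}, and \eqref{eq:estimate for alpha with cutoff}; for the potential term, $\big|\int g\,K*\widetilde{\rho}_{g}\big|=\big|\int\widetilde{\rho}_{g}\,(K*\widetilde{\rho}_{g})\big|\leq\norm{K}_{L^{\infty}}\norm{\widetilde{\rho}_{g}}_{L^{1}}^{2}\leq C\norm{f}_{W_{6}^{0,2}}^{2}$ by \eqref{eq:L-1 norm of the first moment of the potential in terms of kappa} and $\norm{\widetilde{\rho}_{g}}_{L^{1}}\leq\norm{g}_{L^{1}}$; and for the field term, $\norm{\beta}_{\dot{\mathfrak{h}}_{1/2}}^{2}\leq\norm{\alpha}_{\dot{\mathfrak{h}}_{1/2}}^{2}$ by \eqref{eq:estimate for alpha with cutoff}. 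Collecting these and absorbing the products into the form $C(1+\norm{\alpha}_{\dot{\mathfrak{h}}_{1/2}}^{2})(1+\norm{f}_{W_{6}^{0,2}}^{2})$ then gives \eqref{eq:estimate for VM-energy with regularized data}.

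Finally, for \eqref{eq:estimate Husimi with cutoff of the high velocities}: since $\eta_{R}(v)=1$ for $\abs{v}\leq R+1$ by \eqref{eq:properties of the smooth cutoff function 2}, the factor $1-\eta_{R}$ is supported in $\{\abs{v}\geq R+1\}$, while $\supp f\subseteq\{\abs{v'}\leq R\}$, so only $\abs{v-v'}\geq 1$ contributes to the convolution on that set. The elementary inequality $\abs{v-v'}^{2}/\sigma^{2}\geq\abs{v-v'}^{2}/(2\sigma^{2})+1/(2\sigma^{2})$ yields the pointwise bound $\big|(1-\eta_{R})(\mathscr{G}_{\sigma}*f)(x,v)\big|\leq e^{-1/(2\sigma^{2})}(H_{\sigma}*\abs{f})(x,v)$, where $H_{\sigma}(x,v)=(\sigma^{2}\pi)^{-3}\exp[-\abs{x}^{2}/\sigma^{2}-\abs{v}^{2}/(2\sigma^{2})]$ still satisfies $\big\|\langle\cdot\rangle^{k}H_{\sigma}\big\|_{L^{1}}\leq C$ uniformly in $\sigma\in(0,1]$; applying the weighted Young estimate exactly as in the proof of \eqref{eq:estimate for regularized Wigner 1} gives the claim. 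The only genuine work is the weighted convolution estimate underpinning \eqref{eq:estimate for regularized Wigner 1}; the remaining parts are bookkeeping, the one mild subtlety being the uniformity in $\sigma\in(0,1]$ of the (weighted) Gaussian $L^{1}$-norms, which is precisely why the hypothesis $\sigma\leq 1$ is imposed, and no serious obstacle is anticipated.
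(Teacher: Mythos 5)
Your proposal is correct and follows essentially the same route as the paper: weighted Young inequalities with the Peetre splitting for \eqref{eq:estimate for regularized Wigner 1}--\eqref{eq:estimate for regularized Wigner 2}, pointwise restriction for \eqref{eq:estimate for alpha with cutoff}, and the same Gaussian-splitting trick (the paper factors out $\sqrt{\mathscr{G}_{\sigma}}$, you keep a half-rate Gaussian $H_{\sigma}$ — cosmetically different, same mechanism) for \eqref{eq:estimate Husimi with cutoff of the high velocities}. The only divergence is that for \eqref{eq:estimate for VM-energy with regularized data} the paper cites the ready-made energy bound \cite[(III.35c)]{LS2023}, whereas you re-derive it term by term; both are fine.
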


\begin{proof}[Proof of Lemma~\ref{lemma:derivation Vlasov--Maxwell auxiliary estimates}]
Note that 
$\big| \big( \left< \cdot \right>^{k} f*g \big)(z) \big|
\leq C \big( \big( \left< \cdot \right>^{k} \abs{f} \big) * \big( \left< \cdot \right>^{k} \abs{g} \big) \big)(z)$
for $z \in \mathbb{R}^6$, $k \in \mathbb{N}$ and $f,g: \mathbb{R}^6 \rightarrow \mathbb{R}$. Together with Young's inequality this gives
\begin{align}
\norm{\mathscr{G}_{\sigma} * f}_{W_k^{j,2}}^2 
&\leq  \sum_{\abs{\alpha} \leq j} \norm{\left( \left< \cdot \right>^k \abs{D_z^{\alpha} \mathscr{G}_{\sigma}} \right) * \left( \left< \cdot \right>^k \abs{f} \right)}_{L^2(\mathbb{R}^6)}^2
\nonumber \\
&\leq  \sum_{\abs{\alpha} \leq j} \norm{\left< \cdot \right>^k D^{\alpha} \mathscr{G}_{\sigma} }_{L^1(\mathbb{R}^6)}^2
\norm{ \left< \cdot \right>^k f }_{L^2(\mathbb{R}^6)}^2
\nonumber \\
&\leq C
\norm{\mathscr{G}_{\sigma}}_{W_k^{j,1}(\mathbb{R}^6)}^2
\norm{f}_{W_k^{0,2}(\mathbb{R}^6)}^2 .
\end{align}
Since
\begin{align}
\partial_{x_j} \mathscr{G}_{\sigma}(x,v) = - \left( \sigma^2 \pi \right)^{-3}  \frac{2 x_j}{\sigma^2} \exp \left[ -    \frac{x^2 + v^2}{\sigma^2} \right]
\end{align}
we get
$\norm{\mathscr{G}_{\sigma}}_{W_k^{j,1}(\mathbb{R}^6)} \leq C \sigma^{-j} \left< \sigma \right>^k$ 
and therefore
$\norm{\mathscr{G}_{\sigma} * f}_{W_k^{j,2}}
\leq C \sigma^{-j}   \norm{f}_{W_k^{0,2}(\mathbb{R}^6)}$. By means of \eqref{eq:properties of the smooth cutoff function 1} we obtain
\begin{align}
\norm{\eta_L \mathscr{G}_{\sigma} * f}_{W_k^{j,2}} &\leq \norm{\eta_L}_{W_0^{j,\infty}(\mathbb{R}^3)} \norm{\mathscr{G}_{\sigma} * f}_{W_k^{j,2}}
\leq C \norm{\mathscr{G}_{\sigma} * f}_{W_k^{j,2}} .
\end{align}
In total, this shows \eqref{eq:estimate for regularized Wigner 1}. Using \eqref{eq:estimate for regularized Wigner 1} and similar estimates we get
\begin{align}
\norm{\eta_L \mathscr{G}_{\sigma} * f}_{W_2^{0,1}(\mathbb{R}^6)}
&\leq 
\norm{\mathscr{G}_{\sigma} * f}_{W_2^{0,1}(\mathbb{R}^6)}
\leq 
 \norm{\mathscr{G}_{\sigma} }_{W_2^{0,1}(\mathbb{R}^6)}
\norm{f}_{W_2^{0,1}(\mathbb{R}^6)}
\leq C 
\norm{f}_{W^{0,1}_2(\mathbb{R}^6)} .
\end{align}
Due to the Cauchy--Schwarz inequality we, moreover, have
\begin{align}
\norm{f}_{W^{0,1}_2(\mathbb{R}^6)}
&\leq \norm{\left< \cdot \right>^{-4}}_{L^2(\mathbb{R}^6)} \norm{\left< \cdot \right>^{6} f}_{L^2(\mathbb{R}^6)}
\leq C \norm{f}_{W^{0,2}_6(\mathbb{R}^6)} ,
\end{align}
which proves \eqref{eq:estimate for regularized Wigner 2}. Inequality \eqref{eq:estimate for alpha with cutoff} is a  consequence of the pointwise estimates $\id_{\abs{k} \leq \Lambda} \abs{\alpha(k,\lambda)} \leq \abs{\alpha(k,\lambda)}$ for $k \in \mathbb{R}^3$ and $\lambda \in \{1,2\}$. Using the estimate \cite[(III.35c)]{LS2023}, i.e.
\begin{align}
\mathcal{E}^{\rm{VM}}[f,\alpha]
&\leq  2 \int_{\mathbb{R}^6} f(x,v) v^2 dx \, dv + C \norm{f}_{L^1(\mathbb{R}^6)}^2 
+ \norm{\alpha}_{\dot{\mathfrak{h}}_{1/2}}^2
\left( C \norm{f}_{L^1(\mathbb{R}^6)} + 1 \right) ,
\end{align}
in combination with \eqref{eq:estimate for regularized Wigner 2} and \eqref{eq:estimate for alpha with cutoff}
leads to \eqref{eq:estimate for VM-energy with regularized data}. For the proof of \eqref{eq:estimate Husimi with cutoff of the high velocities} we define the function $\chi_{\geq L}: \mathbb{R}^6 \rightarrow \mathbb{R}$ by $\chi_{\geq L}(x,v) = \id_{\abs{v} \geq L}$.
Note that $\supp f \subseteq \{ (x,v) \in \mathbb{R}^6, \abs{v} \leq R \}$ and \eqref{eq:properties of the smooth cutoff function 2} implies
\begin{align}
(1 - \eta_{R})(v) (\mathscr{G}_{\sigma} * f)(x,v)
&= \id_{\abs{v} \geq R+1} (1 - \eta_R)(v) \int_{\mathbb{R}^6}
\mathscr{G}_{\sigma}(x-y,v-u) f(y,u) \id_{\abs{u} \leq R} \, dy \, du
\nonumber \\
&= (1 - \eta_{R})(v) \big( ( \chi_{\geq 1} \mathscr{G}_{\sigma} ) * f \big)(x,v)
\end{align}
because $\abs{v-u} \geq 1$ if $\abs{v} \geq R+1$ and $\abs{u} \leq R$. Using similar estimates as in the proof of \eqref{eq:estimate for regularized Wigner 1}, \eqref{eq:properties of the smooth cutoff function 1},
$\mathscr{G}_{\sigma} \chi_{\geq 1} \leq e^{- \frac{1}{2 \sigma^2}} \sqrt{\mathscr{G}_{\sigma}}$
, and $\big\| \sqrt{\mathscr{G}_{\sigma}} \big\|_{W_k^{0,1}(\mathbb{R}^6)} \leq C \left< \sigma \right>^k$ we get
\begin{align}
\norm{( 1 - \eta_{R}) \mathscr{G}_{\sigma} * f}_{W_k^{0,2}(\mathbb{R}^6)}
&\leq \norm{( \chi_{\geq 1} \mathscr{G}_{\sigma} ) * f }_{W_k^{0,2}(\mathbb{R}^6)}
\nonumber \\
&\leq C \norm{\chi_{\geq 1} \mathscr{G}_{\sigma}  }_{W_k^{0,1}(\mathbb{R}^6)}
\norm{ f }_{W_k^{0,2}(\mathbb{R}^6)}
\nonumber \\
&\leq C e^{- \frac{1}{2 \sigma^2}} \big\| \sqrt{\mathscr{G}_{\sigma}} \big\|_{W_k^{0,1}(\mathbb{R}^6)}
\norm{ f }_{W_k^{0,2}(\mathbb{R}^6)}
\nonumber \\
&\leq C e^{- \frac{1}{2 \sigma^2}} 
\norm{ f }_{W_k^{0,2}(\mathbb{R}^6)} .
\end{align}
\end{proof}

\subsection{Proof of Remark~\ref{remark:remark on the semiclassical structure with Sobolev trace norm}}

\label{subsection:Proof of Remark:remark on the semiclassical structure with Sobolev trace norm}

\begin{proof}[\unskip\nopunct]

For $\nu >0$ we define the spectral projections $\Pi_{\leq \nu}: L^2(\mathbb{R}^3) \rightarrow L^2(\mathbb{R}^3)$ and  $\Pi_{\geq \nu}: L^2(\mathbb{R}^3) \rightarrow L^2(\mathbb{R}^3)$ by 
$\left( \Pi_{\leq \nu} \psi \right) = \mathcal{F}^{-1} \left[ \id_{\abs{\cdot} \leq \nu} \mathcal{F} \left[ \psi \right] \right]$ and $\Pi_{\geq \nu} = 1 - \Pi_{\leq \nu}$. Using the properties of the Fourier transform one easily verifies
\begin{align}
\left[ \nabla, \Pi_{\leq \nu} \right] = 0 ,
\quad 
e^{ikx} \Pi_{\leq \nu}
= \Pi_{\leq \nu + \abs{k}}  e^{ikx} \Pi_{\leq \nu},
\quad 
\norm{\sqrt{1 - \varepsilon^2 \Delta} \Pi_{\leq \nu}}_{\mathfrak{S}^{\infty}}
\leq \sqrt{1 + \varepsilon^2 \nu^2} .
\end{align} 
For $p = \sum_{j=1}^N \ket{\varphi_j} \bra{\varphi_j}$ with $\varphi_j \in L^2(\mathbb{R}^3)$ such that $\id_{\abs{k} \geq \widetilde{C} N^{1/3} } \mathcal{F}[\varphi_j](k) =0$ for all $j \in \{1,2, \ldots,N \}$ and $\nu = \widetilde{C} N^{1/3}$, we have $p = \Pi_{\leq \nu} p = p \Pi_{\leq \nu} $,
$\left[ \Pi_{\leq \nu}, p \right] = 0$ and 
$\left[ \Pi_{\leq \nu}, q \right] = 0$. This implies
\begin{align}
\norm{\sqrt{1 - \varepsilon^2 \Delta} q e^{ikx} p \sqrt{1 - \varepsilon^2 \Delta}}_{\mathfrak{S}^1} 
&= \norm{\sqrt{1 - \varepsilon^2 \Delta} \Pi_{\leq \nu + \abs{k}} q e^{ikx} p \Pi_{\leq \nu} \sqrt{1 - \varepsilon^2 \Delta}}_{\mathfrak{S}^1} 
\nonumber \\
&\leq 
\norm{\sqrt{1 - \varepsilon^2 \Delta} \Pi_{\leq \nu + \abs{k}}}_{\mathfrak{S}^{\infty}}
\norm{\sqrt{1 - \varepsilon^2 \Delta} \Pi_{\leq \nu }}_{\mathfrak{S}^{\infty}}
\norm{q e^{ikx} p}_{\mathfrak{S}^1}
\nonumber \\
&\leq \sqrt{1 + \varepsilon^2 (\nu + \abs{k})^2} \sqrt{1 + \varepsilon^2 \nu^2} \norm{q e^{ikx} p}_{\mathfrak{S}^1}
\nonumber \\
&\leq 2 \left( 1 + \varepsilon^2 \nu^2 \right) \norm{q e^{ikx} p}_{\mathfrak{S}^1}
+ 2 \varepsilon \abs{k} \sqrt{1 + \varepsilon^2 \nu^2} \norm{p}_{\mathfrak{S}^1}
\nonumber \\
&\leq 4 \big( 1 + \widetilde{C} \big)^3 \left( 1 + \abs{k} \right) N^{2/3}
\end{align}
and
\begin{align}
\norm{\sqrt{1 - \varepsilon^2 \Delta} q i \varepsilon \nabla p \sqrt{1 - \varepsilon^2 \Delta}}_{\mathfrak{S}^1}
&=  \norm{\sqrt{1 - \varepsilon^2 \Delta} \Pi_{\leq \nu} q i \varepsilon \nabla p \Pi_{\leq \nu}  \sqrt{1 - \varepsilon^2 \Delta}}_{\mathfrak{S}^1}
\nonumber \\
&\leq \norm{\sqrt{1 - \varepsilon^2 \Delta} \Pi_{\leq \nu}}_{\mathfrak{S}^{\infty}}^2 
\norm{q i \varepsilon \nabla p}_{\mathfrak{S}^1}
\nonumber \\
&\leq 
\left( 1 + \varepsilon^2 \nu^2 \right)  \norm{q i \varepsilon \nabla p}_{\mathfrak{S}^1} 
\nonumber \\
&\leq \big( 1 + \widetilde{C} \big)^3 N^{2/3} ,
\end{align}
which proves the claim.
\end{proof}

\subsection{Proof of Lemma \ref{lemma:Vlasov-Maxwell distance between initial data and its regularization}}

\label{subsection:Vlasov-Maxwell distance between initial data and its regularization}

\begin{proof}[\unskip\nopunct]
Recall that $m_{p,\sigma, R}(x,v) = \eta_R(v) \, m_{p,\sigma}(x,v)$. Due to the subadditivity of the trace norm we have
\begin{subequations}
\begin{align}
&\norm{ \sqrt{1 - \varepsilon^2 \Delta}
\left( \mathcal{W}^{-1}[m_{p,\sigma, R}] - p \right)  \sqrt{1 - \varepsilon^2 \Delta} }_{\mathfrak{S}^1}
\nonumber \\
\label{eq:estimates for the comparison between the weyl transform of the husimi measure with cutoff and p 1}
&\quad \leq 
\norm{ \sqrt{1 - \varepsilon^2 \Delta}
\mathcal{W}^{-1}[(1 - \eta_R) m_{p,\sigma}]  \sqrt{1 - \varepsilon^2 \Delta} }_{\mathfrak{S}^1}
\\
\label{eq:estimates for the comparison between the weyl transform of the husimi measure with cutoff and p 2}
&\qquad + 
\norm{ \sqrt{1 - \varepsilon^2 \Delta}
\left( \mathcal{W}^{-1}[m_{p,\sigma}] - p \right)  \sqrt{1 - \varepsilon^2 \Delta} }_{\mathfrak{S}^1} .
\end{align}
\end{subequations}
By means of (see \cite[Appendix]{LS2023} for its proof)
\begin{align}
\norm{ \sqrt{1 - \varepsilon^2 \Delta}
\mathcal{W}^{-1}[(1 - \eta_R) m_{p,\sigma}]  \sqrt{1 - \varepsilon^2 \Delta} }_{\mathfrak{S}^1}
&\leq C N \sum_{j=0}^6 \varepsilon^j \norm{(1 - \eta_R) m_{p,\sigma}}_{W_6^{j,2}(\mathbb{R}^6)} ,
\end{align}
\eqref{eq:estimate for regularized Wigner 1}, \eqref{eq:estimate Husimi with cutoff of the high velocities}, and the assumption $\varepsilon^{1/2} \leq \sigma \leq 1$ we get
\begin{align}
\eqref{eq:estimates for the comparison between the weyl transform of the husimi measure with cutoff and p 1}
&\leq C N \norm{\mathcal{W}[p]}_{W_6^{0,2}(\mathbb{R}^6)} \Big( e^{- \frac{1}{2 \sigma^2}} 
+ \sum_{j=1}^6 \varepsilon^j \sigma^{-j} \Big) 
\leq C N \sigma .
\end{align}
Concerning \eqref{eq:estimates for the comparison between the weyl transform of the husimi measure with cutoff and p 2}
note that 
\begin{align}
\mathcal{W}^{-1}[m_{p,\sigma}](x;y)
&=  \int_{\mathbb{R}^{6}} 
\mathscr{G}_{\sigma}(z,\xi)  
\left( e^{i \xi \cdot \frac{\hat{x}}{\varepsilon}}  p e^{- i \xi \cdot \frac{\hat{x}}{\varepsilon}} \right) \left( x - z ; y - z  \right)
\, dz \, d \xi   .
\end{align}
Using $\int_{\mathbb{R}^6} \mathscr{G}_{\sigma}(z, \xi) \, dz \, d \xi = 1$ and $p =  e^{i \xi \cdot \frac{\hat{x}}{\varepsilon}} \left[  e^{- i \xi \cdot \frac{\hat{x}}{\varepsilon}} , p \right] + e^{i \xi \cdot \frac{\hat{x}}{\varepsilon}}  p e^{- i \xi \cdot \frac{\hat{x}}{\varepsilon}}$
we get 
\begin{align}
p(x;y)
&= \int_{\mathbb{R}^6} \mathscr{G}_{\sigma}(z, \xi) \left[ \left( e^{i \xi \cdot \frac{\hat{x}}{\varepsilon}} \left[  e^{- i \xi \cdot \frac{\hat{x}}{\varepsilon}} , p \right]  \right) (x;y)
+ p(x;y) e^{ i  \xi \cdot \frac{x-y}{\varepsilon}}
\right] \, dz \, d \xi ,
\end{align}
which leads to
\begin{align}
&\sqrt{1 - \varepsilon^2 \Delta}
\left( \mathcal{W}^{-1}[m_{p,\sigma}] - p \right)  \sqrt{1 - \varepsilon^2 \Delta} 
\nonumber \\
&\quad =  - \int_{\mathbb{R}^6} \mathscr{G}_{\sigma}(z, \xi)  \left( \sqrt{1 - \varepsilon^2 \Delta} e^{i \xi \cdot \frac{\hat{x}}{\varepsilon}} \left[  e^{- i \xi \cdot \frac{\hat{x}}{\varepsilon}} , p \right]  \sqrt{1 - \varepsilon^2 \Delta} \right) \, dz \, d \xi
\nonumber \\
&\qquad 
+ \int_{\mathbb{R}^6} \mathscr{G}_{\sigma}(z, \xi) \Big( \sqrt{1 - \varepsilon^2 \Delta} e^{i \xi \cdot \frac{\hat{x}}{\varepsilon}} \big( p(\cdot -z; \cdot -z) -  p \big) e^{-i \xi \cdot \frac{\hat{x}}{\varepsilon}} \sqrt{1 - \varepsilon^2 \Delta} \Big)  \, dz \, d \xi .
\end{align}
Taking the trace norm then gives
\begin{subequations}
\begin{align}
\label{eq:comparison between regularized and normal p in Sobolev trace norm estimate 1}
\eqref{eq:estimates for the comparison between the weyl transform of the husimi measure with cutoff and p 2}
&\leq   \int_{\mathbb{R}^6} \mathscr{G}_{\sigma}(z, \xi)  \norm{ \sqrt{1 - \varepsilon^2 \Delta} e^{i \xi \cdot \frac{\hat{x}}{\varepsilon}} \left[  e^{- i \xi \cdot \frac{\hat{x}}{\varepsilon}} , p \right]  \sqrt{1 - \varepsilon^2 \Delta} }_{\mathfrak{S}^1} \, dz \, d \xi
\\
\label{eq:comparison between regularized and normal p in Sobolev trace norm estimate 2}
&\qquad 
+ \int_{\mathbb{R}^6} \mathscr{G}_{\sigma}(z, \xi) \norm{ \sqrt{1 - \varepsilon^2 \Delta} e^{i \xi \cdot \frac{\hat{x}}{\varepsilon}} \big( p(\cdot -z; \cdot -z) -  p \big) e^{-i \xi \cdot \frac{\hat{x}}{\varepsilon}} \sqrt{1 - \varepsilon^2 \Delta} }_{\mathfrak{S}^1}  \, dz \, d \xi .
\end{align}
\end{subequations}
Inserting the identities $1 = p + q$ and $1 = \left( 1 - \varepsilon^2 \Delta \right)^{-1/2} \sqrt{1 - \varepsilon^2 \Delta} $ let us estimate
\begin{align}
&\norm{ \sqrt{1 - \varepsilon^2 \Delta} e^{i \xi \cdot \frac{\hat{x}}{\varepsilon}} \left[  e^{- i \xi \cdot \frac{\hat{x}}{\varepsilon}} , p \right]  \sqrt{1 - \varepsilon^2 \Delta} }_{\mathfrak{S}^1}
\leq 
\norm{ \sqrt{1 - \varepsilon^2 \Delta} e^{i \xi \cdot \frac{\hat{x}}{\varepsilon}} \left( 1 - \varepsilon^2 \Delta \right)^{-1/2} }_{\mathfrak{S}^{\infty}} 
\nonumber \\
&\quad \times
\Big(
\norm{ \sqrt{1 - \varepsilon^2 \Delta} q  e^{- i \xi \cdot \frac{\hat{x}}{\varepsilon}}  p  \sqrt{1 - \varepsilon^2 \Delta} }_{\mathfrak{S}^1}
+  \norm{ \sqrt{1 - \varepsilon^2 \Delta} p  e^{- i \xi \cdot \frac{\hat{x}}{\varepsilon}}  q \sqrt{1 - \varepsilon^2 \Delta} }_{\mathfrak{S}^1}
\Big) .
\end{align}
Since
\begin{align}
\norm{ \sqrt{1 - \varepsilon^2 \Delta} q  e^{i \xi \cdot \frac{\hat{x}}{\varepsilon}}  p \sqrt{1 - \varepsilon^2 \Delta} }_{\mathfrak{S}^1} &\leq C N \varepsilon \left( 1 + \varepsilon^{-1} \abs{\xi} \right)^2 
\end{align}
by assumption and 
\begin{align}
\label{eq:comparison between regularized and normal p auxilliary estimate}
\norm{ \sqrt{1 - \varepsilon^2 \Delta} e^{i \xi \cdot \frac{\hat{x}}{\varepsilon}} \left( 1 - \varepsilon^2 \Delta \right)^{-1/2} }_{\mathfrak{S}^{\infty}} 
&\leq C \left( 1 + \abs{\xi} \right) 
\end{align}
we get
\begin{align}
\eqref{eq:comparison between regularized and normal p in Sobolev trace norm estimate 1}
&\leq C N   \int_{\mathbb{R}^6} \mathscr{G}_{\sigma}(z, \xi) \left( \varepsilon + (1 + \varepsilon) \abs{\xi} + \abs{\xi}^2 \right)   \, dz \, d \xi
\leq C N \left( \varepsilon +  \sigma + \sigma^2 \right)  .
\end{align}
By Duhamel's formula we obtain
\begin{align}
\sqrt{1 - \varepsilon^2 \Delta}
\big( p(\cdot -z; \cdot -z) -  p \big)\sqrt{1 - \varepsilon^2 \Delta}
&=
\sqrt{1 - \varepsilon^2 \Delta}
\big( e^{ z \cdot \nabla} p e^{- z \cdot \nabla}  -  p \big)\sqrt{1 - \varepsilon^2 \Delta}
\nonumber \\
&= 
z \cdot 
\int_0^1 
e^{ \lambda z \cdot \nabla}
\sqrt{1 - \varepsilon^2 \Delta} \left[ \nabla, p \right]    \sqrt{1 - \varepsilon^2 \Delta}   e^{- \lambda z \cdot \nabla} \, d \lambda .
\end{align}
Together with \eqref{eq:comparison between regularized and normal p auxilliary estimate} and \eqref{eq:Vlasov-Maxwell result more restricted initial semiclassical structure} we obtain
\begin{align}
\eqref{eq:comparison between regularized and normal p in Sobolev trace norm estimate 2}
&\leq
\int_{\mathbb{R}^6} \mathscr{G}_{\sigma}(z, \xi) \norm{ \sqrt{1 - \varepsilon^2 \Delta} e^{i \xi \cdot \frac{\hat{x}}{\varepsilon}} \left( 1 - \varepsilon^2 \Delta \right)^{-1/2} }_{\mathfrak{S}^{\infty}}^2
\nonumber \\
&\qquad \times
\norm{ \sqrt{1 - \varepsilon^2 \Delta}
\big( p(\cdot -z; \cdot -z) -  p \big)\sqrt{1 - \varepsilon^2 \Delta} }_{\mathfrak{S}^1}  \, dz \, d \xi
\nonumber \\
&\quad \leq C
\int_{\mathbb{R}^6} \mathscr{G}_{\sigma}(z, \xi) \left( 1 + \abs{\xi} \right)^2 
\abs{z} \int_0^1
\norm{e^{ \lambda z \cdot \nabla}
\sqrt{1 - \varepsilon^2 \Delta} \left[ \nabla, p \right]    \sqrt{1 - \varepsilon^2 \Delta}   e^{- \lambda z \cdot \nabla} }_{\mathfrak{S}^1} \, d \lambda  \, dz \, d \xi 
\nonumber \\
&\quad \leq C
\int_{\mathbb{R}^6} \mathscr{G}_{\sigma}(z, \xi) \left( 1 + \abs{\xi} \right)^2 
\abs{z} 
\norm{\sqrt{1 - \varepsilon^2 \Delta} \left[ \nabla, p \right]    \sqrt{1 - \varepsilon^2 \Delta}    }_{\mathfrak{S}^1}   \, dz \, d \xi 
\nonumber \\
&\quad \leq C
\int_{\mathbb{R}^6} \mathscr{G}_{\sigma}(z, \xi) \left( 1 + \abs{\xi} \right)^2 
\abs{z} 
\norm{\sqrt{1 - \varepsilon^2 \Delta}  q \nabla p \sqrt{1 - \varepsilon^2 \Delta}    }_{\mathfrak{S}^1}   \, dz \, d \xi 
\nonumber \\
&\quad \leq C N
\int_{\mathbb{R}^6} \mathscr{G}_{\sigma}(z, \xi) \left( 1 + \abs{\xi} \right)^2 
\abs{z}  \, dz \, d \xi 
\nonumber \\
&\quad \leq C N \left( 1 + \sigma^2 \right) \sigma .
\end{align}
In total, this shows $\eqref{eq:estimates for the comparison between the weyl transform of the husimi measure with cutoff and p 2} \leq C N \sigma$ and \eqref{eq:distance between p and regularized version in Sobolev-trace norm}. Inequality \eqref{eq:distance between alpha and its cutoff version} follows from the estimate
$\id_{\abs{k} \geq \Lambda}  \leq \Lambda^{- 1/2} \abs{k}^{1/2} \id_{\abs{k} \geq \Lambda}$. with $k \in \mathbb{R}^3$. The non-negativity of $m_{p,\sigma,R}$ is a consequence of \eqref{eq:properties of the smooth cutoff function 1} and the non-negativity of $m_{p,\sigma} \geq 0$. The latter is proven in direct analogy to \cite{C1975}, and the proof is included here primarily for completeness. Integrating by $du$ and changing the variables $q = z + \frac{\varepsilon y}{2}$, $r = z - \frac{\varepsilon y}{2}$ leads to
\begin{align}
m_{p,\sigma}(x,v) &= \left( \frac{\varepsilon}{2 \pi} \right)^3  \int_{\mathbb{R}^9} \mathscr{G}_{\sigma}(x-z,v-u) p \left(z + \frac{\varepsilon y}{2} ; z - \frac{\varepsilon y}{2}  \right) e^{- i u y} \, dy \,  dz \, du
\nonumber \\
&= \frac{1}{8 \pi^{\frac{9}{2}} \sigma^3} \int_{\mathbb{R}^6}
p(q;r) 
\exp \left[ - \frac{(2 x - (q+r))^2}{4 \sigma^2} - \frac{(q-r)^2 \sigma^2}{4 \varepsilon^2} - \frac{i v (q-r)}{\varepsilon} \right] \, dq \, dr \, .
\end{align}
If we define the function 
$f(q) = \exp \left[ - q^2 \left( \frac{1}{4 \sigma^2} + \frac{\sigma^2}{4 \varepsilon^2} \right) + q \left( \frac{x}{\sigma^2} + \frac{i v}{\varepsilon} \right) \right]$ and use that there exists a set of orthonormal $L^2(\mathbb{R}^3)$-functions $\{ \varphi_j \}_{j=1}^N$ such that $p(q;r) = \sum_{j=1}^N \varphi_j(q) \overline{\varphi_j(r)}$, we can write the right-hand side  as
\begin{align}
m_{p,\sigma}(x,v) &=
\frac{1}{8 \pi^{\frac{9}{2}} \sigma^3} e^{- \frac{x^2}{\sigma^2}} \sum_{j=1}^N
\int_{\mathbb{R}^6}
\varphi_j(q) f(q) \overline{\varphi_j(r)} \, \overline{f(r)}  e^{\gamma qr} \, dq \, dr 
\quad \text{with} \; \gamma = \frac{\sigma^2}{2 \varepsilon^2} - \frac{1}{2 \sigma^2} .
\end{align}
Note that $\gamma \geq 0$ because $\sigma \geq \varepsilon^{1/2}$. Writing the exponential in its series expansion gives
\begin{align}
m_{p,\sigma}(x,v) &=
\frac{1}{8 \pi^{9/2} \sigma^3} e^{- \frac{x^2}{\sigma^2}} \sum_{j=1}^N
\sum_{n \in \mathbb{N}} \frac{\gamma^n}{n!} \abs{\int_{\mathbb{R}^3} \varphi_j(q) f(q) q^n \, dq}^2 \geq 0,
\end{align}
which proves the claim.
\end{proof}

\subsection{Proof of Lemma~\ref{lemma:propagation estimates Vlasov-Maxwell}}

\label{subsection:propagation estimates Vlasov-Maxwell}

Within this subsection we will prove Lemma~\ref{lemma:propagation estimates Vlasov-Maxwell}. In doing thus we will rely on the following estimates from \cite{LS2023}.
\begin{lemma}
\label{lemma:propagation of regularity auxilliary lemma}
Let $R>0$ and $a, b, c \in \mathbb{N}$ satisfying $5 \leq a$, $3 \leq b$, and $c \leq b$. Let $(f_t,\alpha_t)$ be the unique solution of the Vlasov--Maxwell equations \eqref{eq: Vlasov-Maxwell different style of writing} with initial condition
 $(f_0, \alpha_0) \in H_a^{b}(\mathbb{R}^6) \times \mathfrak{h}_b \cap \dot{\mathfrak{h}}_{-1/2}$ such that $\supp f_0 \subseteq  \{ (x,v) \in \mathbb{R}^6 , \abs{v} \leq R \}$. For $t \in \mathbb{R}_{+}$ we have $\supp f_t \subseteq \{ (x,v) \in \mathbb{R}^6 , \abs{v} \leq R(t) \}$, where
\begin{align}
\label{eq:estimate for the support of f}
R(t) \leq C \left< R \right> \exp \left( C \int_0^t \left( \norm{f_s}_{W_4^{0,2}(\mathbb{R}^6)} + \norm{\alpha_s}_{\mathfrak{h}}^2 \right) \, ds \right) .
\end{align}
Moreover,
\begin{subequations}
\begin{align}
\label{eq:propagation of gradient moments estimate for W-a-0-2}
\norm{f_t}_{W_a^{0,2}(\mathbb{R}^6)}^2
&\leq e^{C \left( 1 + \mathcal{E}^{\rm{VM}}[f_0, \alpha_0] 
+ C \norm{f_0}_{L^1(\mathbb{R}^6)}^2  \right) \left< t \right>}
\norm{f_0}_{W_a^{0,2}(\mathbb{R}^6)}^2 , 
\\
\label{eq:propagation of gradient moments estimate for alpha-t in h-1}
\norm{\alpha_t}_{\mathfrak{h}_1 \, \cap \,  \dot{\mathfrak{h}}_{-1/2}}
&\leq 
\norm{\alpha_0}_{\mathfrak{h}_1 \, \cap \,  \dot{\mathfrak{h}}_{-1/2}} + C t \left( 1 + \mathcal{E}^{\rm{VM}}[f_0, \alpha_0] 
+  C \norm{f_0}_{L^1(\mathbb{R}^6)}^2  \right)
,
\\
\label{eq:propagation of gradient moments estimate for alpha-t in h-b}
\norm{\alpha_t}_{\mathfrak{h}_{c} \, \cap \,  \dot{\mathfrak{h}}_{-1/2}}
&\leq \norm{\alpha_0}_{\mathfrak{h}_{c} \, \cap \,  \dot{\mathfrak{h}}_{-1/2}}
+ C t  \left( 1 + \norm{\alpha}_{L_t^{\infty}\mathfrak{h}_{c-2}} \right) \norm{f}_{L_t^{\infty} H_5^{c-1}(\mathbb{R}^6)}
\end{align}
with $\norm{\alpha}_{L_t^{\infty}\mathfrak{h}_{c-2}} = \norm{\alpha}_{L_t^{\infty}\mathfrak{h}}$ in case $c \in \{1, 2 \}$.
\end{subequations}
\end{lemma}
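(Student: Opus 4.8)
The estimates of Lemma~\ref{lemma:propagation of regularity auxilliary lemma} are taken from \cite{LS2023}, and the proof reduces everything to the method of characteristics together with a handful of Gr\"onwall and Duhamel arguments. The genuinely time-uniform inputs are the conservation laws of Proposition~\ref{proposition:global solutions VM} (the $L^p$-norms of $f_t$ and the energy $\mathcal{E}^{\rm{VM}}$), while the coefficients appearing in the equations are controlled by the $\kappa$-regularized field estimates \eqref{eq:L-1 norm of the first moment of the potential in terms of kappa}--\eqref{eq:estimate vector potential W-1-infty norm} of Section~\ref{section:properties of the MS equations}. The structural fact used throughout is that the Coulomb gauge $\nabla\cdot(\kappa*\vA_{\alpha_t})=0$ makes the characteristic vector field $\big(2(v-\kappa*\vA_{\alpha_t}(x)),\vF_{f_t,\alpha_t}(x,v)\big)$ divergence-free on $\mathbb{R}^6$: this is why the $L^p$-norms of $f_t$ are conserved, and it will also cancel the zeroth-order term in the weighted $L^2$ energy estimate.

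For the support bound \eqref{eq:estimate for the support of f} I would use $f_t(Z_t(z))=f_0(z)$ along the flow $Z_t=(X_t,V_t)$ with $\dot X_t=2(V_t-\kappa*\vA_{\alpha_t}(X_t))$ and $\dot V_t=\vF_{f_t,\alpha_t}(X_t,V_t)$, so that $\supp f_t=Z_t(\supp f_0)$ and it suffices to bound $|V_t|$ for a characteristic issued from $\{|v|\le R\}$. Differentiating $|V_t|^2$ and writing $\vF_{f_t,\alpha_t}=\nabla_x(K*\widetilde{\rho}_{f_t})+\nabla_x\big((\kappa*\vA_{\alpha_t})^2\big)-2\,(\nabla_x(\kappa*\vA_{\alpha_t}))^{\mathsf T}v$, the first term is bounded using mass conservation and \eqref{eq:L-1 norm of the first moment of the potential in terms of kappa}, and the remaining two by $\|\kappa*\vA_{\alpha_t}\|_{W^{1,\infty}_0}$ via \eqref{eq:estimate for A L-infty to h-1-2}--\eqref{eq:estimate vector potential W-1-infty norm}; one obtains $\tfrac{d}{dt}|V_t|\le C\big(1+\|\alpha_t\|_{\mathfrak{h}}^2\big)\big(1+|V_t|\big)$ once $\|\alpha_t\|_{\mathfrak{h}}$ has been estimated from the field equation together with the bound $\|\mathcal{F}[\vJt_{f_t,\alpha_t}]\|_{L^2}=\|\vJt_{f_t,\alpha_t}\|_{L^2}\lesssim\langle\|\alpha_t\|_{\dot{\mathfrak{h}}_{1/2}}\rangle\,\|f_t\|_{W_4^{0,2}(\mathbb{R}^6)}$ obtained by Cauchy--Schwarz in $v$. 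Gr\"onwall then yields \eqref{eq:estimate for the support of f}; the two quantities $\|f_s\|_{W_4^{0,2}}$ and $\|\alpha_s\|_{\mathfrak{h}}^2$ on the right-hand side are exactly what enters through this current bound and the field ODE, and they are controlled separately by \eqref{eq:propagation of gradient moments estimate for W-a-0-2} and \eqref{eq:propagation of gradient moments estimate for alpha-t in h-1}, so the estimate is useful although not closed.

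For \eqref{eq:propagation of gradient moments estimate for W-a-0-2} I would compute $\tfrac{d}{dt}\|\langle z\rangle^a f_t\|_{L^2}^2=2\langle\langle z\rangle^a f_t,\langle z\rangle^a\partial_t f_t\rangle$, push $\partial_t f_t$ through the transport operator and integrate by parts: the divergence-free property kills the zeroth-order term, leaving the commutator of $\langle z\rangle^a$ with the vector field, bounded pointwise by $C\langle z\rangle^{a-1}\big(|v|+|\vF_{f_t,\alpha_t}|\big)\le C\langle z\rangle^a\big(1+\|\alpha_t\|_{\dot{\mathfrak{h}}_{1/2}}^2\big)$; hence $\tfrac{d}{dt}\|f_t\|_{W_a^{0,2}}^2\le C(1+\|\alpha_t\|_{\dot{\mathfrak{h}}_{1/2}}^2)\|f_t\|_{W_a^{0,2}}^2$, and since $\|\alpha_t\|_{\dot{\mathfrak{h}}_{1/2}}^2\le\mathcal{E}^{\rm{VM}}[f_t,\alpha_t]=\mathcal{E}^{\rm{VM}}[f_0,\alpha_0]$ (the first term of \eqref{eq:Vlasov-Maxwell energy definition} being nonnegative for the nonnegative $f_t$ considered, and the interaction term being nonnegative because $\mathcal{F}[K]\ge0$ by \eqref{eq:Fourier transform of the potential}) plus $C\|f_0\|_{L^1}^2$ slack, Gr\"onwall gives the claimed bound. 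The field estimates \eqref{eq:propagation of gradient moments estimate for alpha-t in h-1} and \eqref{eq:propagation of gradient moments estimate for alpha-t in h-b} follow from the Duhamel formula $\alpha_t=e^{-i|k|t}\alpha_0+i\int_0^t e^{-i|k|(t-s)}\sqrt{4\pi^3/|k|}\,\mathcal{F}[\kappa]\,\vep\cdot\mathcal{F}[\vJt_{f_s,\alpha_s}]\,ds$, in which the free evolution preserves every $\mathfrak{h}_c\cap\dot{\mathfrak{h}}_{-1/2}$ norm, so only the source matters: for \eqref{eq:propagation of gradient moments estimate for alpha-t in h-1} one bounds $\|\mathcal{F}[\vJt_{f_s,\alpha_s}]\|_{L^\infty}\lesssim\int|v|f_s\,dz+\|\kappa*\vA_{\alpha_s}\|_{L^\infty}\|f_s\|_{L^1}\le C\big(1+\mathcal{E}^{\rm{VM}}[f_0,\alpha_0]+C\|f_0\|_{L^1}^2\big)$ and pairs it with $(|k|^{1/2}+|k|^{-1/2})\mathcal{F}[\kappa]\in L^2$ from Assumption~\ref{assumption:cutoff function}, integrating in $s$; for \eqref{eq:propagation of gradient moments estimate for alpha-t in h-b} one instead needs $c-1$ spatial derivatives of $\vJt_{f_s,\alpha_s}$, which by the Leibniz rule and Cauchy--Schwarz in $v$ are controlled by $\langle\|\alpha_s\|_{\mathfrak{h}_{c-2}}\rangle\,\|f_s\|_{H_5^{c-1}}$ (splitting low and high frequencies of $|k|^{-1/2}\mathcal{F}[\kappa]$, using $\|\vJt_{f_s,\alpha_s}\|_{L^1}$ for the low-frequency singularity), which after integration in $s$ gives the stated linear-in-$t$ bound.

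The main difficulty is the genuine coupling of the three estimates: the force coefficients depend on $\alpha_t$, the field source depends both on weighted norms of $f_t$ and on $\kappa*\vA_{\alpha_t}$, and $\|\alpha_t\|_{\mathfrak{h}_1}$ in turn depends on the growing support and moments of $f_t$. This is resolved by exploiting the strict hierarchy of the steps: first record the truly time-uniform facts (conservation of $\|f_t\|_{L^p}$ and of $\mathcal{E}^{\rm{VM}}$, hence a uniform bound on $\|\alpha_t\|_{\dot{\mathfrak{h}}_{1/2}}$); then close \eqref{eq:propagation of gradient moments estimate for W-a-0-2} by Gr\"onwall, which needs only $\|\alpha_t\|_{\dot{\mathfrak{h}}_{1/2}}$; then feed the resulting bound on $\|f_t\|_{W_4^{0,2}}$ into the Duhamel estimate \eqref{eq:propagation of gradient moments estimate for alpha-t in h-1} for $\|\alpha_t\|_{\mathfrak{h}_1}$; and only afterwards feed both back into the characteristic estimate \eqref{eq:estimate for the support of f} for the support, with the higher-order bounds \eqref{eq:propagation of gradient moments estimate for alpha-t in h-b} and the higher-order moment bounds of $f$ obtained by induction on $c$. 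The delicate point is thus the bookkeeping of which norm each inequality truly requires so that no circularity creeps in; no analytic idea beyond characteristics, the Coulomb-gauge cancellation, the energy inequality, and the Duhamel formula is needed.
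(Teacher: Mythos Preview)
Your proposal is correct and follows essentially the same approach as the paper, which simply defers to \cite{LS2023}: the paper's proof states only that \eqref{eq:estimate for the support of f} follows from the characteristic representation $f_t(x,v)=f_0\big(X_t^{-1}(x,v),V_t^{-1}(x,v)\big)$ together with \cite[(VI.4)]{LS2023}, and that the remaining inequalities are proven in \cite[Section VI.4]{LS2023}. Your sketch---characteristics and Gr\"onwall for the support, the divergence-free commutator estimate for the weighted $L^2$ bound, Duhamel for the field norms, and the careful hierarchy avoiding circularity---is precisely the content of those cited passages, so you have effectively reconstructed the argument the paper outsources.
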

\begin{proof}[Proof of Lemma~\ref{lemma:propagation of regularity auxilliary lemma}]
Estimate \eqref{eq:estimate for the support of f} follows from the inequality \cite[(VI.4)]{LS2023} and the fact that $f_t$ can be written as  $f_t(x,v) = f_0 \big( X_t^{-1}(x,v), V_t^{-1}(x,v) \big)$ with the characteristics $(X_t,V_t)$ being defined as in \cite[(VI.1)]{LS2023}.
The remaining inequalities are proven in \cite[Section VI.4]{LS2023}.
\end{proof}

\begin{proof}[Proof of Lemma~\ref{lemma:propagation estimates Vlasov-Maxwell}]
We define the transport operator
\begin{align}
T_{f,\alpha} &= 2 \left( v - \kappa * \vA_{\alpha} \right) \cdot \nabla_x - \vF_{f,\alpha} \cdot \nabla_v 
\end{align}
with $\vF_{f,\alpha}$ being defined as in \eqref{eq:F-field of Vlasov equation}.
In analogy to \cite[Section VI.2]{LS2023} we get 
\begin{subequations}
\begin{align}
&\norm{\left( 1 - \Delta_z \right)^{k/2} \left< \cdot \right>^a f_t}^2_{L^2(\mathbb{R}^6)}
- \norm{\left( 1 - \Delta_z \right)^{k/2} \left< \cdot \right>^a f_0}^2_{L^2(\mathbb{R}^6)}
\nonumber \\
&\quad = 
- 2 \int_0^t \scp{\left( 1 - \Delta_z \right)^{k/2} \left< \cdot \right>^a f_s}{\left( 1 - \Delta_z \right)^{k/2} \left[ \left< \cdot \right>^a , T_{f_s,\alpha_s} \right] [f_s]} \, ds 
\nonumber \\
&\qquad  
- 2 \int_0^t \scp{\left( 1 - \Delta_z \right)^{k/2} \left< \cdot \right>^a f_s}{ \left[ \left( 1 - \Delta_z \right)^{k/2}  , T_{f_s,\alpha_s} \right] [ \left< \cdot \right>^a f_s]} \, ds 
\nonumber \\
&\quad \leq 2 \int_0^t \norm{\left( 1 - \Delta_z \right)^{k/2} \left< \cdot \right>^a f_s}^2_{L^2(\mathbb{R}^6)} \, ds
\\
\label{eq:propagation estimates for the Wigner transform 1}
&\qquad +  
\int_0^t \norm{\left( 1 - \Delta_z \right)^{k/2} \left[ \left< \cdot \right>^a , T_{f_s,\alpha_s} \right] [f_s]}^2_{L^2(\mathbb{R}^6)} \, ds
\\
\label{eq:propagation estimates for the Wigner transform 2}
&\qquad +  
\int_0^t \norm{ \left[ \left( 1 - \Delta_z \right)^{k/2}  , T_{f_s,\alpha_s} \right] [ \left< \cdot \right>^a f_s]}^2_{L^2(\mathbb{R}^6)} \, ds
\end{align}
\end{subequations}
by Duhamel's formula. Using 
\begin{align}
\nabla_x \left< z \right>^a = a  \left< z \right>^{a-2} x
\quad \text{and} \quad 
\nabla_v \left< z \right>^a = a  \left< z \right>^{a-2} v 
\end{align}
we obtain
\begin{align}
\left[ T_{f_s,\alpha_s} , \left< z \right>^a \right] &= a \left< z \right>^{a-2} \left( 2 (v - \kappa * \vA_{\alpha_s}) \cdot x - \vF_{f_s,\alpha_s} \cdot v \right) .
\end{align}
If we combine these with the inequalities $\norm{\kappa * \vA_{\alpha}}_{W_0^{\sigma,\infty}(\mathbb{R}^3)} \leq C \norm{\alpha}_{\mathfrak{h}_{\sigma -1}}$ and $\norm{K * \widetilde{\rho}_f}_{W_0^{\sigma,\infty}(\mathbb{R}^3)} \leq C \norm{f}_{H_4^{\sigma -3}(\mathbb{R}^6)}$ (see \cite[(III.2b) and (III.8b)]{LS2023}) we get
\begin{align}
\eqref{eq:propagation estimates for the Wigner transform 1}
&\leq  C \sum_{j=0}^k \int_0^t 
\left( \norm{K * \widetilde{\rho}_{f_s}}_{W_0^{j+1,\infty}(\mathbb{R}^6)} 
+ \norm{\kappa * \vA_{\alpha_s}}_{W_0^{j+1,\infty}(\mathbb{R}^6)}^2 \right)^2  \norm{f_s}_{H_a^{k-j}(\mathbb{R}^6)}^2 \, ds 
\nonumber \\
&\leq 
C \sum_{j=0}^k
\int_0^t 
\left( 
\norm{f_s}^2_{H_a^{j-2}(\mathbb{R}^6)} 
+ \norm{\alpha_s}^4_{\mathfrak{h}_{j}} \right)  \norm{f_s}^2_{H_a^{k-j}(\mathbb{R}^6)}  \, ds .
\end{align}
Using 
\begin{align}
\Big[  D_{z}  , T_{f_s, \alpha_s} \Big] 
&= 2 \left[ D_{z} , v \right] \cdot \nabla_x
- 2 \left[ D_{z} , \kappa * \vA_{\alpha_t}  \right] \cdot \nabla_x
+  2 \sum_{i=1}^3 \nabla \kappa * \vA_{\alpha_s}^i  \left[   D_{z} , v^i  \right] \cdot \nabla_v
\nonumber \\
&\quad + 2 \sum_{i=1}^3 \left[   D_{z} , \nabla \kappa * \vA_{\alpha_s}^i \right]  v^i  \cdot \nabla_v
- \left[ D_{z} , \left( \nabla  K * \widetilde{\rho}_{f_s} + \nabla (\kappa * \vA_{\alpha_s})^2 \right) \right] \cdot \nabla_v
\end{align}
and \eqref{eq:estimate for the support of f},
we estimate
\begin{align}
&\norm{ \left[ \left( 1 - \Delta_z \right)^{k/2}  , T_{f_s,\alpha_s} \right] [ \left< \cdot \right>^a f_s]}_{L^2(\mathbb{R}^6)}
\nonumber \\
&\quad \leq C \sum_{j=0}^k
\left( \norm{K * \widetilde{\rho}_{f_s}}_{W_0^{j+1,\infty}(\mathbb{R}^6)} 
+ \norm{\kappa * \vA_{\alpha_s}}_{W_0^{j+1,\infty}(\mathbb{R}^6)}^2 \right)  \norm{\left< v \right> (\nabla_x + \nabla_v) \left< \cdot \right>^a  f_s}_{H^{k-j-1}(\mathbb{R}^6)} 
\nonumber \\
&\quad \leq C \sum_{j=0}^k
\left( \norm{f_s}_{H_a^{j-2}(\mathbb{R}^6)} 
+ \norm{\alpha_s}^2_{\mathfrak{h}_j} \right)  \norm{\left< v \right> (\nabla_x + \nabla_v) \left< \cdot \right>^a  f_s}_{H^{k-j-1}(\mathbb{R}^6)} .
\nonumber \\
&\quad \leq C 
\left< R \right> \exp \left( C \int_0^s \left( \norm{f_u}_{W_4^{0,2}(\mathbb{R}^6)} + \norm{\alpha_u}_{\mathfrak{h}}^2 \right) \, du \right)
\nonumber \\
&\qquad \times
\sum_{j=0}^k
\left( \norm{f_s}_{H_a^{j-2}(\mathbb{R}^6)} 
+ \norm{\alpha_s}^2_{\mathfrak{h}_j} \right)  \norm{ f_s}_{H_a^{k-j}(\mathbb{R}^6)} ,
\end{align}
leading to 
\begin{align}
\eqref{eq:propagation estimates for the Wigner transform 2}
&\leq 
C \left< R \right> 
\int_0^t 
\exp \left( C \int_0^s \left( \norm{f_u}_{W_4^{0,2}(\mathbb{R}^6)} + \norm{\alpha_u}_{\mathfrak{h}}^2 \right) \, du \right)
\nonumber \\
&\qquad \times
\sum_{j=0}^k
\left( 
\norm{f_s}^2_{H_a^{j-2}(\mathbb{R}^6)} 
+ \norm{\alpha_s}^4_{\mathfrak{h}_{j}} \right)  \norm{f_s}^2_{H_a^{k-j}(\mathbb{R}^6)}  \, ds .
\end{align}
Collecting the estimates and using
\begin{align}
&\sum_{j=0}^k
\left( 
\norm{f_s}^2_{H_a^{j-2}(\mathbb{R}^6)} 
+ \norm{\alpha_s}^4_{\mathfrak{h}_{j}} \right)  \norm{f_s}^2_{H_a^{k-j}(\mathbb{R}^6)} 
\nonumber \\
&\quad \leq C 
\left( 
\norm{f_s}^2_{W_a^{0,2}(\mathbb{R}^6)} 
+ \norm{\alpha_s}^4_{\mathfrak{h}_1} \right)  \norm{f_s}^2_{H_a^{k}(\mathbb{R}^6)} 
+ C \sum_{j=2}^k
\left( 
\norm{f_s}^2_{H_a^{j-2}(\mathbb{R}^6)} 
+ \norm{\alpha_s}^4_{\mathfrak{h}_{j}} \right)  \norm{f_s}^2_{H_a^{k-j}(\mathbb{R}^6)}  
\end{align}
we obtain
\begin{align}
\norm{f_t}^2_{H_a^{k}(\mathbb{R}^6)}  
&\leq \norm{f_0}^2_{H_a^{k}(\mathbb{R}^6)}  
+ C \left< R \right> \int_0^t
e^{C \left< s \right> \widetilde{C}(s) }
\norm{f_s}^2_{H_a^{k}(\mathbb{R}^6)}  \, ds 
\nonumber \\
&\quad +
C \left< R \right> \int_0^t
e^{C \left< s \right> \widetilde{C}(s) }
\sum_{j=2}^b
\left( 
\norm{f_s}^2_{H_a^{j-2}(\mathbb{R}^6)} 
+ \norm{\alpha_s}^4_{\mathfrak{h}_{j}} \right)  \norm{f_s}^2_{H_a^{k-j}(\mathbb{R}^6)} 
\, ds 
\end{align}
with
$\widetilde{C}(s)
= \norm{f}_{L_s^{\infty} W_a^{0,2}(\mathbb{R}^6)} + \norm{\alpha}_{L_s^{\infty} \mathfrak{h}_1}^2$.
By Gr\"onwall's lemma we get
\begin{align}
\norm{f_t}^2_{H_a^{k}(\mathbb{R}^6)}  
&\leq C e^{\left< R \right> e^{C \left< t \right> \left< \widetilde{C}(t) \right>}}
\bigg[ \norm{f_0}^2_{H_a^{k}(\mathbb{R}^6)} 
\nonumber \\
&\qquad \qquad 
+ \int_0^t
\sum_{j=2}^k
\left( 
\norm{f_s}^2_{H_a^{j-2}(\mathbb{R}^6)} 
+ \norm{\alpha_s}^4_{\mathfrak{h}_{j}} \right)  \norm{f_s}^2_{H_a^{k-j}(\mathbb{R}^6)} 
\, ds 
\bigg]
\end{align}
and
\begin{align}
\norm{f}^2_{L_t^{\infty} H_a^{k}(\mathbb{R}^6)} 
&\leq C e^{\left< R \right> e^{C \left< t \right> \left< \widetilde{C}(t) \right>}}
\bigg[ \norm{f_0}^2_{H_a^{k}(\mathbb{R}^6)}  
+ \left( 
\norm{f}^2_{L_t^{\infty} H_a^{k-2,2}(\mathbb{R}^6)} 
+ \norm{\alpha}^4_{L_t^{\infty} \mathfrak{h}_{k}}
\right) \norm{f}^2_{L_t^{\infty} W_a^{0,2}(\mathbb{R}^6)} 
\nonumber \\
&\qquad \qquad 
+
\sum_{j=2}^{k-1}
\left( 
\norm{f}^2_{L_t^{\infty} H_a^{j-2}(\mathbb{R}^6)} 
+ \norm{\alpha}^4_{L_t^{\infty} \mathfrak{h}_{j}} \right)  \norm{f}^2_{L_t^{\infty} H_a^{k-j}(\mathbb{R}^6)} 
\bigg] 
\nonumber \\
&\leq C e^{\left< R \right> e^{C \left< t \right> \left< \widetilde{C}(t) \right>}}
\bigg[ 1 + \norm{f_0}^2_{H_a^{k}(\mathbb{R}^6)}  
+ \norm{\alpha}^4_{L_t^{\infty} \mathfrak{h}_{k}}
\nonumber \\
&\qquad \qquad \qquad \quad 
+ 
\norm{f}^4_{L_t^{\infty} H_a^{k-2}(\mathbb{R}^6)} 
+ \norm{f}^2_{L_t^{\infty} H_a^{k-2}(\mathbb{R}^6)}   \norm{\alpha}^4_{L_t^{\infty} \mathfrak{h}_{k-1}}
\bigg] .
\end{align}
Together with \eqref{eq:propagation of gradient moments estimate for alpha-t in h-b} this leads to
\begin{align}
&\norm{f}^2_{L_t^{\infty} H_a^{k}(\mathbb{R}^6)} + \norm{\alpha}^4_{L_t^{\infty} \mathfrak{h}_{k}}
\nonumber \\
&\quad \leq C e^{\left< R \right> e^{C \left< t \right> \left< \widetilde{C}(t) \right>}}
\bigg[ 1 + \norm{f_0}^2_{H_a^{k}(\mathbb{R}^6)}  
+ \norm{\alpha_0}^4_{ \mathfrak{h}_{k} \, \cap \,  \dot{\mathfrak{h}}_{-1/2}}
+ \left( 1 + \norm{\alpha}_{L_t^{\infty} \mathfrak{h}_{k-2}}^4 \right) \norm{f}^4_{L_t^{\infty} H_5^{k-1}(\mathbb{R}^6)}
\nonumber \\
&\qquad \qquad  \qquad \qquad \quad 
+ 
\norm{f}^4_{L_t^{\infty} H_a^{k-2}(\mathbb{R}^6)} 
+ \norm{f}^2_{L_t^{\infty} H_a^{k-2}(\mathbb{R}^6)}   \norm{\alpha}^4_{L_t^{\infty} \mathfrak{h}_{k-1}}
\bigg] 
\nonumber \\
&\quad \leq C e^{\left< R \right> e^{C \left< t \right> \left< \widetilde{C}(t) \right>}}
\bigg[ 1 + \norm{f_0}^2_{H_a^{k}(\mathbb{R}^6)}  
+ \norm{\alpha_0}^4_{ \mathfrak{h}_{k} \, \cap \,  \dot{\mathfrak{h}}_{-1/2}}
+ \left(  \norm{f}^2_{L_t^{\infty} H_a^{k-1}(\mathbb{R}^6)} + \norm{\alpha}^4_{L_t^{\infty}  \mathfrak{h}_{k-1} } \right)^3
\bigg] .
\end{align}
Iterating the inequality gives
\begin{align}
\norm{f}^2_{L_t^{\infty} H_a^{k}(\mathbb{R}^6)} + \norm{\alpha}^4_{L_t^{\infty} \mathfrak{h}_{k} }
&\leq  C e^{\left< R \right> e^{C \left< t \right> \left< \widetilde{C}(t) \right>}}
\bigg[ 1 + 
\sum_{j=0}^{b} \left(  \norm{f_0}^2_{ H_a^{k-j}(\mathbb{R}^6)} + \norm{\alpha_0}^4_{ \mathfrak{h}_{k-j} \, \cap \,  \dot{\mathfrak{h}}_{-1/2}} \right)^{3^j}
\bigg] .
\end{align}
Note that due to \eqref{eq:propagation of gradient moments estimate for W-a-0-2} 
and \eqref{eq:propagation of gradient moments estimate for alpha-t in h-1} we have
\begin{align}
\widetilde{C}(s)
&\leq 
e^{C \left( 1 + \mathcal{E}^{\rm{VM}}[f_0, \alpha_0] 
+ C \norm{f_0}_{L^1(\mathbb{R}^6)}^2  \right) \left< s \right>}
\Big[ 1 + 
\norm{f_0}_{W_a^{0,2}(\mathbb{R}^6)}^2
+ \norm{\alpha_0}_{\mathfrak{h}_1 \, \cap \,  \dot{\mathfrak{h}}_{-1/2}}^2 \Big] .
\end{align}
In total, this proves the claim.
\end{proof}

\subsection{Proof of Proposition~\ref{proposition:comparison between Maxwell--Schroedinger and Vlasov--Maxwell}}

\begin{proof}[\unskip\nopunct]

Note that \eqref{eq:properties of the smooth cutoff function 3}, \eqref{eq:positivity of the Husimi measure}, \eqref{eq:estimate for regularized Wigner 1}, and \eqref{eq:estimate for regularized Wigner 2} imply
$ \supp m_{p,\sigma,R} \subseteq  \{ (x,v) \in \mathbb{R}^6 , \abs{v} \leq R + 3 \}$,
$m_{p,q,R} \geq 0$. and $m_{p,q,R} \in W_2^{0,1}(\mathbb{R}^6) \cap W_{7}^{8,2}(\mathbb{R}^6)$. Moreover, note that  $\id_{\abs{\cdot} \leq \Lambda} \alpha \in \mathfrak{h}_5 \cap \dot{\mathfrak{h}}_{- 1/2}$ because of \eqref{eq:estimate for alpha with cutoff} and $\norm{ \id_{\abs{\cdot} \leq \Lambda} \alpha}_{\mathfrak{h}_5} \leq C \left< \Lambda \right>^4 \norm{\alpha}_{\mathfrak{h}_1}$.
Let $(\widetilde{p}_t, \widetilde{\alpha}_t)$ be the unique solution of the Maxwell--Schr\"odinger equations~\eqref{eq:Maxwell-Schroedinger equations} with initial datum $(p,\alpha)$, and $(W_{N,t},\alpha_t)$ be the unique solution of the Vlasov--Maxwell equations~\eqref{eq: Vlasov-Maxwell different style of writing} with regularized initial datum $(m_{p,\sigma,R}, \id_{\abs{\cdot} \leq \Lambda} \alpha)$. The global existence of the solutions is guaranteed  by Proposition~\ref{proposition:well-posedness of Maxwell-Schroedinger} and
Proposition~\ref{proposition:global solutions VM}. Using \cite[Theorem II.1]{LS2023}, we obtain 
\begin{align}
\label{eq:comparison between Maxwell--Schroedinger and Vlasov--Maxwell auxilliary estimate}
& N^{-1} \norm{\sqrt{1 -  \varepsilon^2 \Delta} \left( \widetilde{p}_t - \mathcal{W}^{-1}[W_{N,t}] \right)  \sqrt{1 -  \varepsilon^2 \Delta}}_{\mathfrak{S}^1} + \norm{\widetilde{\alpha}_t - \alpha_t}_{\mathfrak{h}_{1/2} \cap \, \dot{\mathfrak{h}}_{- 1/2}}
\nonumber \\
&\leq
\left[ N^{-1} \norm{\sqrt{1 -  \varepsilon^2 \Delta} \left( p - \mathcal{W}^{-1}[m_{p,\sigma,R}] \right)  \sqrt{1 -  \varepsilon^2 \Delta}}_{\mathfrak{S}^1} + \norm{\id_{\abs{\cdot} \geq \Lambda} \alpha}_{\mathfrak{h}_{1/2} \cap \, \dot{\mathfrak{h}}_{- 1/2}} + \varepsilon \widetilde{K}(t) \right]
e^{\mathcal{K}(t)} ,
\end{align}
where
\begin{align}
\widetilde{\mathcal{K}}(t) &\leq 
\int_0^t ds \,
\sum_{k=0}^{6} \varepsilon^k \norm{W_{N,s}}_{W_{7}^{k+2,2}}
\bigg( 1 +   \sum_{k=0}^3 \varepsilon^{2 + k} \norm{W_{N,s}}_{W_{4}^{k+1,2}} 
+ \sum_{k=0}^4 \varepsilon^k  \norm{\alpha_s}_{\mathfrak{h}_{k+1}}^2
\bigg) ,
\\
\mathcal{K}(t) &= \mathcal{K} \left( 1 + N^{-1} \mathcal{E}^{\rm{MS}}[p, \alpha] + \mathcal{E}^{\rm{VM}}[m_{p,\sigma,R}, \id_{\abs{\cdot} \leq \Lambda} \alpha] + 
\norm{m_{p,\sigma,R}}_{L^1(\mathbb{R}^6)} \right)^2
\nonumber \\
&\quad \times
\bigg( \left< t \right> + \int_0^t ds \, \sum_{k=0}^{7} \varepsilon^k \norm{W_{N,s}}_{W_{7}^{k+1,2}} \bigg)  ,
\end{align}
and $\mathcal{K}$ is a numerical constant which depends on the specific choice of the cutoff function $\kappa$. Note that for $k \in \{1,2, \ldots, 8 \}$ the function $h: [1,k] \rightarrow \mathbb{R}$, $j \mapsto h(j) = 2 (k-j) 3^j$ attains its maximum at $j_{\rm{max}} = k - \frac{1}{\ln(3)}$ with $h(j_{\rm{max}}) = \frac{2 \cdot 3 ^k}{\ln(3) 3^{\frac{1}{\ln(3)}}} \leq 3^k$. Using this 
together with Lemma \eqref{lemma:propagation estimates Vlasov-Maxwell} and Lemma \eqref{lemma:derivation Vlasov--Maxwell auxiliary estimates} gives
\begin{align}
&\norm{W_{N,\cdot}}^2_{L_t^{\infty} W_7^{k,2}(\mathbb{R}^6)} + \norm{\alpha_{\cdot}}^4_{L_t^{\infty} \mathfrak{h}_{k}}
\nonumber \\
&\quad \leq  C e^{\left< R \right> e^{C \left< t \right>  C(t) }}
\bigg[ 1 + 
\sum_{j=0}^{k} \left(  \norm{m_{p,\sigma}}^2_{ W_7^{k-j,2}(\mathbb{R}^6)} + \norm{\id_{\abs{\cdot} \leq \Lambda} \alpha}^4_{ \mathfrak{h}_{k-j} \cap \, \dot{\mathfrak{h}}_{- 1/2} } \right)^{3^j}
\bigg] 
\nonumber \\
&\quad \leq  C e^{\left< R \right> e^{C \left< t \right>  C(t) }}
\bigg[ 1 + 
\sum_{j=0}^{k} \left(  \sigma^{- 2 (k - j)} \norm{\mathcal{W}[p]}^2_{ W_7^{0,2}(\mathbb{R}^6)} + \Lambda^{4(k-j)} \norm{ \alpha}^4_{ \mathfrak{h} \cap \, \dot{\mathfrak{h}}_{- 1/2} } \right)^{3^j}
\bigg] 
\nonumber \\
&\quad \leq  C \left( \sigma^{-1}+ \Lambda^2 \right)^{ 3^k} e^{\left< R \right> e^{C \left< t \right>  C(t) }}
\left(   \norm{\mathcal{W}[p]}^2_{ W_7^{0,2}(\mathbb{R}^6)} + \norm{ \alpha}^4_{ \mathfrak{h} \cap \, \dot{\mathfrak{h}}_{- 1/2} } \right)^{3^k}
\end{align}
with 
$C(s) \leq e^{C \left( 1 + \norm{\alpha}_{\mathfrak{h}_1 \, \cap \,  \dot{\mathfrak{h}}_{-1/2}}^2 \right)
\left( 1 + \norm{\mathcal{W}[p]}^2_{W_{7}^{0,2}(\mathbb{R}^6)} \right) \left< s \right>}$.
This and the estimates from Lemma~\ref{lemma:derivation Vlasov--Maxwell auxiliary estimates} then prove the existence of a constant $\mathcal{C}$ which depends on $\kappa$, $\norm{\mathcal{W}[p]}_{W_7^{0,2}(\mathbb{R}^6)}$, $\norm{\alpha}_{\mathfrak{h}_1 \, \cap \,  \dot{\mathfrak{h}}_{-1/2}}$, and $N^{-1} \mathcal{E}^{\rm{MS}}[p, \alpha]$ such that
\begin{align}
\widetilde{\mathcal{K}}(t) &\leq \exp \left[ \left< R \right> \exp \left[ \exp \left[ \mathcal{C} \left< t \right> \right] \right] \right]  \sum_{k=0}^6 \varepsilon^{k} 
\left( \sigma^{- 1/2}+ \Lambda \right)^{ 3^{k+2}}
\bigg[
1 + \sum_{l=0}^4 \varepsilon^l 
\left( \sigma^{- 1/2}+ \Lambda \right)^{ 3^{l+1}}
\bigg]
\end{align}
and 
\begin{align}
\mathcal{K}(t)  &\leq \exp \left[ \left< R \right> \exp \left[ \exp \left[ \mathcal{C} \left< t \right> \right] \right] \right] \sum_{k=0}^7 \varepsilon^{k} 
\left( \sigma^{- 1/2}+ \Lambda \right)^{ 3^{k+1}} .
\end{align}
The choice $\sigma = \Lambda^{-2}$ and $\Lambda = \varepsilon^{- \frac{1}{1094}}$ ensures 
$\max \{ \widetilde{\mathcal{K}}(t) , \mathcal{K}(t) \} \leq \exp \left[ \left< R \right> \exp \left[ \exp \left[ \mathcal{C} \left< t \right> \right] \right] \right]$.
Plugging this into \eqref{eq:comparison between Maxwell--Schroedinger and Vlasov--Maxwell auxilliary estimate} and using Lemma~\ref{lemma:Vlasov-Maxwell distance between initial data and its regularization} proves the claim.
\end{proof}


\noindent{\it Acknowledgments.}
N.L. gratefully acknowledges funding from the Swiss National Science Foundation through the SNSF Eccellenza project PCEFP2 181153 and partial support by the NCCR SwissMAP, the Swiss State Secretariat for Research and Innovation through the project P.530.1016 (AEQUA), and the European Union's Horizon 2020 research and innovation programme through the Marie Sk{\l}odowska-Curie Action EFFECT (grant agreement No. 101024712).


{}


\end{document}